\newtheorem{definition}{Definition}[section]
\newtheorem{lemma}[definition]{Lemma}
\newtheorem{cor}[definition]{Corollary}
\newtheorem{thm}[definition]{Theorem}
\newtheorem{alg}{Algorithm}
\newtheorem{prop}[definition]{Proposition}
\newtheorem{claim}[definition]{Claim}
\newtheorem{assumption}[definition]{Assumption}
\newcommand{\np}{\textrm{\sf NP}}
\newcommand{\dec}{\textrm{\sf Dec}}
\newcommand{\opt}{\textrm{\sf Opt}}
\newcommand{\svp}{\textrm{\sc Svp}}
\newcommand{\cvp}{\textrm{\sc Cvp}}
\newcommand{\optcvp}{\textrm{\sc OptCvp}}
\newcommand{\lmp}{\textrm{\sc Lmp}}
\DeclareMathOperator{\vol}{vol}
\DeclareMathOperator{\Span}{span}
\DeclareMathOperator{\sign}{sign}
\DeclareMathOperator{\size}{size}
\DeclareMathOperator{\out}{out}
\definecolor{blau}{cmyk}{1,0.72,0,0.38}
\definecolor{grey}{rgb}{0.5,0.5,0.5}
\definecolor{orange}{cmyk}{0,0.51,1.0,0}
\definecolor{hellblau}{cmyk}{0.3,0.18,0.06,0}
\definecolor{graublau}{cmyk}{0,0,0,0.2}
\definecolor{violett}{cmyk}{0.6,1,0,0}
\author{Johannes Bl\"omer\thanks{Department of Computer Science, University of Paderborn, {\tt bloemer@upb.de}}
\and Stefanie Naewe\thanks{Department of Computer Science, University of Paderborn,
{\tt stefanie.naewe@upb.de}.
This research was partially supported by German Research Foundation (DFG), Research Training Group GK-693 of the Paderborn Institute for Scientific Computation (PaSCo) and the Heinz Nixdorf Institute.}}
\title{Solving the Closest Vector Problem with respect to $\ell_p$ Norms}
\date{Tuesday 12$^{\mbox{th}}$ July, 2011}
\begin{document}

\maketitle

\begin{abstract}
We present deterministic polynomially space bounded algorithms for the closest vector problem for all $\ell_p$-norms, $1 < p < \infty$, and all polyhedral norms,
in particular for the $\ell_1$-norm and the $\ell_{\infty}$-norm.
For all $\ell_p$-norms with $1 < p < \infty$ the running time of the algorithm is 
$p \cdot \log_2 ( r )^{\mathcal{O} ( 1 )} n^{(2 + o ( 1 ) )n}$, where $r$ is an upper bound on the size of the coefficients of the target vector and the lattice basis and $n$ is the dimension of the vector space.
For polyhedral norms, we obtain an algorithm with running time 
$(s \log_2 ( r ) )^{\mathcal{O} ( 1 )} n^{(2+o(1))n}$, where $r$ and $n$ are defined as above and $s$ is the number of constraints defining the polytope.
In particular, for the $\ell_1$-norm and the $\ell_{\infty}$-norm, we obtain a deterministic algorithm for the closest vector problem with running time $\log_2 ( r )^{\mathcal{O} ( 1 )} n^{(2 + o(1))n}$.\\
We achieve our results by introducing a new lattice problem, the lattice membership problem:
For a given full-dimensional bounded convex set and a given lattice, the goal is to decide whether the convex set contains a lattice vector or not.
The lattice membership problem is a generalization of the integer programming feasibility problem from polyhedra to bounded convex sets.
In this paper, we describe a deterministic algorithm for the lattice membership problem, which is a generalization of Lenstra's algorithm for integer programming.
We also describe a polynomial time reduction from the closest vector problem to the lattice membership problem.
This approach leads to a deterministic algorithm that solves the closest vector problem in polynomial space for all $\ell_p$-norms, $1 < p < \infty$, and all polyhedral norms.
\end{abstract}

\section{Introduction}

\noindent In the closest vector problem ($\cvp$), we are given a lattice $L$ and some target vector $t$ in the $\mathbbm{R}$-vector space $\Span ( L )$ spanned by the vectors in $L$. We are asked to find a vector $u \in L$, whose distance to $t$ is as small as possible.
Since this problem can be defined for any norm on $\mathbbm{R}^n$, we stated this problem without referring to a specific norm.
Often $\cvp$ as well as other lattice problems are considered with respect to the $\ell_2$-norm.
However, it is also common to consider $\cvp$ with respect to other non-Euclidean norms, for example in cryptography, see \cite{pp_nguyen_dark_side_hidden_number}, or in integer programming, see \cite{pp_lenstraoptimization}.
The most commonly used non-Euclidean norms are arbitrary $\ell_p$-norms with $1 \leq p \leq \infty$.
The $\ell_p$-norm of a vector $x \in \mathbbm{R}^n$ is defined by
$\| x \|_p = ( \sum_{i=1}^n | x_i |^p )^{1/p}$ for $1 \leq p < \infty$
and $\| x \|_{\infty} = \max \{ | x_i | | 1 \leq i \leq n \}$.
In general, a norm is defined by a convex body $\mathcal{C}$ symmetric about the origin via the function
$p_{\mathcal{C}} : \mathbbm{R}^n \to \mathbbm{R}$, $p_{\mathcal{C}} ( x ) = \inf \{ \lambda \geq 0 | x \in \lambda \cdot \mathcal{C} \}$.
If the convex body is a bounded polyhedron $P$, i.e., a polytope, then we call the corresponding norm a polyhedral norm, denoted by $\| \cdot \|_P$.
Especially, the $\ell_1$-norm and the $\ell_{\infty}$-norm are polyhedral norms.
Therefore, we distinguish in the following between $\ell_p$-norms with $1 < p < \infty$ and polyhedral norms.

\paragraph*{Algorithms for $\cvp$}
In the last 30 years, the complexity of $\cvp$ has been studied intensively.
It is known that $\cvp$ with respect to all $\ell_p$-norms is \np-hard and even hard to approximate, see \cite{bk_EmdeBoas}, \cite{pp_abbs93}, \cite{pp_dinur98approximatingcvp}, \cite{pp_dinur03}, \cite{pp_rr06}, \cite{pp_peikert07}.
Furthermore, for any $\epsilon > 0$, there is a randomized reduction from $\cvp$ with approximation factor $1 + \epsilon$ with respect to the $\ell_2$-norm  to the exact version of $\cvp$ with respect to the $\ell_p$-norm, see \cite{pp_rr06}.
This suggests that $\cvp$ with respect to the $\ell_2$-norm is easier than $\cvp$ with respect to any other $\ell_p$-norm.\\

\noindent The best polynomial time approximation algorithms are based on the LLL-algorithm and achieve single exponential approximation factors. Basically, they work for the $\ell_2$-norm, but using Hölder's inequality, we obtain results for all $\ell_p$-norms, see \cite{pp_LLL}, \cite{pp_babai}, \cite{pp_schnorr}, \cite{pp_schnorrBlockReduced}.\\

\noindent In this paper, we focus on deterministic and exact algorithms for $\cvp$ with respect to arbitrary $\ell_p$-norms.
Therefore, in the sequel we ignore all probabilistic algorithms for $\cvp$ like the results based on the AKS sampling technique.
Instead, we briefly review the existing deterministic algorithms for $\cvp$ with respect to the $\ell_2$-norm and discuss why it may be difficult or even impossible to generalize them to non-Euclidean norms. For a survey on these algorithms see \cite{pp_survey_hps11}.\\

\noindent In a breakthrough paper, Micciancio and Voulgaris describe a deterministic, single exponential time algorithm that solves $\cvp$ with respect to the $\ell_2$-norm exactly, see \cite{pp_MV10}.
It is based on the computation of Voronoi cells of a lattice.
The algorithm can be generalized to all norms whose unit ball is an ellipsoid as remarked in \cite{pp_dpv}.
Unfortunately, it seems that the $\cvp$-algorithm of \cite{pp_MV10} cannot be generalized to other norms, since then the Voronoi cell of a lattice need not be convex.
Moreover, the algorithm of Micciancio and Voulgaris requires exponential space.\\

\noindent Basically, there exist two other deterministic algorithms for $\cvp$ with respect to the $\ell_2$-norm.
Both algorithms require polynomial space, in particular, they work with numbers whose bit size is polynomially bounded in the input size.
The algorithm of Kannan \cite{pp_kannan} with its improvements by Helfrich \cite{pp_helfrich} and Hanrot and Stehlé \cite{pp_hast07} uses at most $2^{\mathcal{O} ( n )} n^{n/2} \cdot \log_2 ( r )^{\mathcal{O} ( 1 )}$ arithmetic operations, where $n$ is the rank of the lattice and $r$ is an upper bound on the coefficients used to describe the basis.
Another algorithm that solves $\cvp$ optimally is due to Blömer \cite{pp_bloemerhkz}.
It uses $n! \log_2 ( r )^{\mathcal{O} ( 1 )}$ arithmetic operations.
It may be difficult to generalize these two algorithms to non-Euclidean norms (although Kannan claims the opposite in his paper), since they both use orthogonal projections:
At some point during the algorithm, they have to work with a target vector, which is not contained in the vector space spanned by the lattice. 
In this situation, they consider the orthogonal projection of the target vector onto the subspace spanned by the lattice.
However, unlike the $\ell_2$-norm, for arbitrary $\ell_p$-norms the closest lattice vector to the target vector is not a closest lattice vector to the orthogonal projection of the target vector or vice versa.
Also, if we use norm projections as defined in \cite{pp_Mangasarian99} or \cite{pp_ls}, this is not true.
We present a counterexample for both cases in the appendix, see Section \ref{appendix_counterexample}.

\paragraph*{In this paper,}
we consider the lattice membership problem ($\lmp$), where we are given a full-dimensional bounded convex set together with a lattice and we want to decide whether the convex set contains a lattice vector.
First, we show that for all $\ell_p$-norms, $1 < p < \infty$, and all polyhedral norms, e. g. the $\ell_1$-norm and the $\ell_{\infty}$-norm, there exists a polynomial time reduction from $\cvp$ to $\lmp$.
The reduction also preserved the dimension and the rank of the input lattice.
Furthermore, we show that there exists a deterministic algorithm that solves $\lmp$ in polynomial space for all $\ell_p$-balls and polytopes.
If we consider $\ell_p$-norms, $1 < p < \infty$, we obtain an algorithm that uses 
$p \cdot \log_2 ( r )^{\mathcal{O} ( 1 )} n^{(2 + o ( 1 ) )n}$ arithmetic operations
and for all polyhedral norms an algorithm that uses 
$(s \cdot \log_2 ( r ) )^{\mathcal{O} ( 1 )} n^{(2 + o ( 1 )) n}$ arithmetic operations, 
where $s$ is the number of constraints defining the polytope.
For the $\ell_1$-norm, we have $s = 2^n$ and for the $\ell_{\infty}$-norm, we have $s = 2n$.
Hence for these norms, we obtain a deterministic polynomial space algorithm using $\log_2 ( r )^{\mathcal{O} ( 1 )} n^{(2 + o ( 1 ) )n}$ arithmetic operations.
Together with the reduction from $\cvp$ to $\lmp$, we obtain a deterministic algorithm that solves $\cvp$ for all $\ell_p$-norms and all polyhedral norms exactly in polynomial space.
To the best of our knowledge, this is the first result of this type.\\

\noindent The lattice membership problem is a generalization of the integer programming feasibility problem and our algorithm is a variant of Lenstra's algorithm for integer programming used together with a variant of the ellipsoid method, see \cite{pp_lenstraoptimization}. To guarantee that the algorithm runs in polynomial space, we use a preprocessing technique from Frank and Tardos \cite{pp_ft87} developed for Lenstra's algorithm for integer programming.
To put our results in perspective, we shortly review in the following the major results based on Lenstra's technique.

\paragraph*{Lenstra's algorithm for integer programming and related results} In 1979, Lenstra presented a polynomial time algorithm that solves the integer programming feasibility problem in fixed dimension \cite{pp_lenstraoptimization}, which was improved by Kannan in 1987 \cite{pp_kannan}.
Using a further improvement by Frank and Tardos \cite{pp_ft87}, the algorithm requires polynomial space and the number of arithmetic operations of this algorithm is $\mathcal{O} ( n^{5/2n} \log_2 ( r ) )$, where $r$ is an upper bound on the size of the polyhedron.
Hence, our result improves the running time of Lenstra's algorithm by the factor $n^{n/2}$.\\
In 2005, Heinz generalized Lenstra's algorithm to obtain an algorithm for integer optimization over quasiconvex polynomials, which was improved by Hildebrand and Köppe (see \cite{pp_heinz05}, \cite{pp_hk10}). Their results can be used to decide whether the set $\{ x \in \mathbbm{R}^n | \| x \|_p^p - \alpha < 0 \}$ contains a lattice vector, if $p$ is an even number, since for $p$ even, the function $x \mapsto \| x \|_p^p$ is a quasiconvex polynomial. In this case, we obtain an algorithm for the lattice membership problem using at most $\log_2 ( r )^{\mathcal{O} ( 1 )} p^{\mathcal{O} ( n )} n^{(2+o(1))n}$ arithmetic operations.
By comparison, the number of arithmetic operations of our algorithm depends only linearly on the parameter $p$ defining the norm.
If $p$ is not an even number, the function $x \mapsto \| x \|_p^p$ is not even a polynomial and thus, the result of Heinz cannot be applied directly to achieve our results.\\
Recently, Dadush, Peikert and Vempala presented in \cite{pp_dpv} a randomized algorithm for $\lmp$ for well-bounded convex bodies given by a separation oracle. The expected number of arithmetic operations of this algorithm is $\mathcal{O} ( n^{4/3n} ) \log_2 ( r )^{\mathcal{O} ( 1 )}$.
This is also the case, if the convex bodies are generated by an $\ell_p$-norm or a polyhedral norm, see Theorem 4.7 in \cite{pp_dpv_preprint}.
In \cite{pp_dpv_ellipsoid}, Dadush and Vempala derandomize this result without increasing the number of arithmetic operations.
The number of arithmetic operations of their algorithm is better than ours.
But the algorithm require exponential space whereas our algorithm requires only polynomial space.

\paragraph*{Organization}
The paper is organized as follows. In Section \ref{section_basics} we state some basic definitions and important facts used in this paper.
In Section \ref{section_main_result}, we formally define the lattice membership problem and present a polynomial time reduction from $\cvp$ to $\lmp$ for all $\ell_p$-norms with $1 < p < \infty$ and all polyhedral norms.
In Section \ref{sec_algorithm}, we describe Lenstra's algorithm as a general framework for algorithmic solutions of $\lmp$.
Then, we adapt this framework to concrete classes of convex sets:
In Section \ref{sec_LM_alg_polytope}, we consider polytopes and in 
Section \ref{sec_LM_alg_lp}, we consider the class of $\ell_p$-bodies, where $1 < p < \infty$.
In the description of the lattice membership algorithms we assume that we have access to an algorithm that compute a flatness direction of the bounded convex set.
In Section \ref{sec_flatness_algorithm}, we describe how we can compute a flatness direction of a polytope or an $\ell_p$-body,
which completes the description of the algorithm for $\lmp$. We view this section as the main technical contribution of our paper.

\section{Basic definitions and facts} \label{section_basics}

\noindent A polyhedron $P$ is the solution set of a system of inequalities given by a matrix $A \in \mathbbm{R}^{s \times n}$ and a vector $\beta \in \mathbbm{R}^s$,
$P = \{ x \in \mathbbm{R}^n | A x \leq \beta \}$.
In the following, we always assume that a polyhedron is given in this way.
A bounded polyhedron is called a polytope.

\noindent Every vector $d \in \mathbbm{R}^n \backslash \{ 0 \}$ defines a family of hyperplanes in $\mathbbm{R}^n$ by
$H_{k,d} := \{ x \in \mathbbm{R}^n | \langle x , d \rangle = k \}$,
where $k \in \mathbbm{R}$.
For any norm $ \| \cdot \|$ on $\mathbbm{R}^n$, every vector $x \in \mathbbm{R}^n$ and $\alpha >0$, we set $B_n^{(\| \cdot \|)} ( x , \alpha ) : = \{ y \in \mathbbm{R}^n | \| x - y \| < \alpha \}$.
We call this the ball generated by the norm $\| \cdot \|$ with radius $\alpha$ centered at $x$.
By $\bar{B}_n^{(\| \cdot \|)} ( x , \alpha )$ we denote the corresponding closed ball.
The Euclidean norm induces a matrix norm by
$\| A \| := \max \{ \| A x \|_2 | x \in \mathbbm{R}^n \mbox{ with } \| x \|_2 = 1 \}
 = \sqrt{\eta_n ( A^T A )}$,
where $\eta_n ( A^T A )$ is the square root of the largest eigenvalue of the matrix $A^T A$. 
It is called the spectral norm of a matrix.\\
 
\noindent A special case of Hölder's inequality \label{eq_Hölder} relates the $\ell_2$-norm to arbitrary $\ell_p$-norms: For all $x \in \mathbbm{R}^n$, we have $\| x \|_2 \leq \| x \|_p \leq n^{1/p-1/2} \| x \|_2$, if $1 \leq p \leq 2$, and
$n^{1/p-1/2} \| x \|_2 \leq \| x \|_p \leq \| x \|_2$, if $2 < p < \infty$.
For the $\ell_{\infty}$-norm, it holds that $n^{-1/2} \| x \|_2 \leq \| x \|_{\infty} \leq \| x \|_2$.\\

\noindent A lattice $L$ is a discrete subgroup of $\mathbbm{R}^n$.
Each lattice has a basis, i.e., a sequence $b_1, \hdots, b_m$ of $m$ elements of $L$ that generate $L$ as an abelian group.
We denote this by $L = \mathcal{L} ( B )$, where $B = [b_1, \hdots, b_m]$ is the matrix which consists of the columns $b_i$.
We call $m$ the rank of $L$ and $n$ the dimension of $L$.
If $m = n$, the lattice is full-dimensional.
The dual lattice $L^*$ of $L$ is defined as the set $\{ x \in \Span ( L ) | \langle x , v \rangle \in \mathbbm{Z}$ for all $v \in L \}$.
If $B$ is a basis of the full-dimensional lattice $L$, then $(B^T)^{-1}$ is a basis of $L^*$.
By $\lambda_1^{(2)} ( L )$ we denote the Euclidean length of a shortest non-zero vector in $L$.\\

\noindent Since we are interested in computational statements, we always assume that all numbers we are dealing with are rationals.
The size of a rational number $\alpha = p / q$ with $\gcd ( p , q ) = 1$ is defined as the maximum of the numerator and denominator in absolute values,
$\size ( \alpha ) := \max \{| p | , |q| \}$.
The size of a matrix or respectively a vector is the maximum of the size of its coordinates.
If we consider a polyhedron $P$ given by a matrix $A \in \mathbbm{Q}^{s \times n}$ and a vector $\beta \in \mathbbm{Q}^s$, then we denote by the size of $P$ the maximum of $n$, $s$, and the size of the coordinates of $A$ and $\beta$.
The size of a lattice $L \subseteq \mathbbm{Q}^n$ with respect to a basis $B$ is the maximum of $n$, $m$, and the length of the numerators and denominators of the coordinates of the basis vectors.
By the bit size or the representation size of a number $\alpha$, we mean $\log_2 ( \size ( \alpha ) )$.

\section{The lattice membership problem, main result, and reduction to $\cvp$} \label{section_main_result}

\begin{definition}
Given a lattice $L \subset \mathbbm{R}^n$ and a bounded convex set $\mathcal{C} \subseteq \Span ( L )$, we call the problem to decide whether $\mathcal{C}$ contains a vector from $L$, the lattice membership problem ($\lmp$).
\end{definition}

\noindent The lattice membership problem is a generalization of the integer programming feasibility problem from polyhedra to general bounded convex sets.
In Section \ref{sec_algorithm}, we will show that there exists a deterministic polynomial space algorithm, that solves $\lmp$, if the underlying convex set is an $\ell_p$-ball or a polytope.

\begin{thm} \label{state_main_result}
There exists a deterministic polynomial space algorithm that solves the lattice membership problem for all convex sets generated by an $\ell_p$-norm, $1 < p < \infty$, or a polyhedral norm.
\begin{itemize}
 \item If the convex set is generated by an $\ell_p$-norm, $1 < p < \infty$, the number of arithmetic operations is at most
 	$p \log_2 ( r )^{\mathcal{O} ( 1 )} n^{(2 + o ( 1 )) n}$.
 	Each number produced by the algorithm has bit size at most $p \cdot n^{\mathcal{O} ( 1 )} \log_2 ( r )$, where $r$ is an upper bound on the size of the convex set.
 \item If the convex set is a full-dimensional polytope given by $s$ constraints, then the number of arithmetic operations is at most
 $( s \cdot \log_2 ( r ) )^{\mathcal{O} ( 1 )} n^{(2+o(1)) n }$.
 Each number produced by the algorithm has bit size at most $n^{\mathcal{O} ( 1 )} \log_2 ( r )$, where $r$ is an upper bound on the size of the convex set.
\end{itemize}
\end{thm}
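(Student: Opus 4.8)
\noindent\emph{Proof plan.} The plan is to decide $\lmp$ for these bodies by a recursive, Lenstra-style procedure \cite{pp_lenstraoptimization}, in which the flatness theorem drives the branching, the coordinate changes keep the current body inside the same class, and the preprocessing of Frank and Tardos \cite{pp_ft87} keeps every intermediate number of polynomial bit size; the ingredients are developed in Sections \ref{sec_algorithm} (general framework), \ref{sec_LM_alg_polytope} (polytopes), \ref{sec_LM_alg_lp} ($\ell_p$-bodies) and \ref{sec_flatness_algorithm} (the flatness-direction subroutine). First I would restrict everything to $\Span(L)$ and then, multiplying by $B^{-1}$ for a basis $B$ of $L$, reduce to $L=\mathbbm{Z}^n$. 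This turns the ball generated by an $\ell_p$-norm into a linear image of an $\ell_p$-ball, an $\ell_p$-body, and leaves a polytope a polytope with the same number $s$ of facets; the description grows only to $n^{\mathcal{O}(1)}\log_2(r)$ bits, with an extra factor $p$ in the $\ell_p$-case since the body is manipulated through $p$-th powers.

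At a recursion node in dimension $k$, with current body $\mathcal{C}_k$ and lattice $\mathbbm{Z}^k$, I would call the subroutine of Section \ref{sec_flatness_algorithm} to obtain a vector $d\in\mathbbm{Z}^k\setminus\{0\}$ together with its width $w_d(\mathcal{C}_k):=\max_{x\in\mathcal{C}_k}\langle x,d\rangle-\min_{x\in\mathcal{C}_k}\langle x,d\rangle$, where $w_d(\mathcal{C}_k)$ is guaranteed to be within a factor $k^{\mathcal{O}(1)}$ of the minimal width of $\mathcal{C}_k$ over all nonzero integer directions. If $w_d(\mathcal{C}_k)$ exceeds $k^{\mathcal{O}(1)}$ times the flatness constant $\omega(k)$, then the minimal width exceeds $\omega(k)$, so by the flatness theorem $\mathcal{C}_k$ contains a lattice point and the node returns \yes. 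Otherwise $\mathcal{C}_k\cap\mathbbm{Z}^k$ lies in at most $\lfloor w_d(\mathcal{C}_k)\rfloor+1\le k^{\mathcal{O}(1)}$ parallel hyperplanes $H_{j,d}$; for each of these I would choose a basis of the rank-$(k-1)$ sublattice $\mathbbm{Z}^k\cap H_{j,d}$ in which $d$ becomes a coordinate direction, so that $\mathcal{C}_k\cap H_{j,d}$ is again an $\ell_p$-body (resp. a polytope with at most $s$ facets) of dimension $k-1$, apply the Frank--Tardos preprocessing to its description, and recurse; the node answers \yes{} iff some child does. The base case $k\le 1$ is decided directly, and correctness of the whole procedure is immediate from the flatness theorem as used above.

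For the complexity, the recursion tree has depth $n$ and branching $k^{\mathcal{O}(1)}$ at a node of dimension $k$, so the number of nodes is at most the product of the flatness widths over the levels, $\prod_{k\le n}\omega(k)=n^{(3/2+o(1))n}$ with the general bound $\omega(k)=\mathcal{O}(k^{3/2})$. At each node the work is dominated by the single call to the flatness subroutine, which I would implement by passing to a John ellipsoid of $\mathcal{C}_k$, computing an exact shortest vector of the dual lattice in the induced Euclidean structure by enumeration, and correcting the resulting width by the $n^{\mathcal{O}(1)}$ distortion between $\mathcal{C}_k$ and that ellipsoid; such an enumeration runs deterministically in polynomial space with at most $k^{(1/2+o(1))k}$ arithmetic operations, times a factor $p$ (resp. $(s\log_2(r))^{\mathcal{O}(1)}$) for handling the $\ell_p$-body (resp. the polytope). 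Multiplying the node count by the per-node cost and tracking the $\log_2(r)^{\mathcal{O}(1)}$ blow-up from the arithmetic gives the stated bounds $p\log_2(r)^{\mathcal{O}(1)}n^{(2+o(1))n}$ and $(s\log_2(r))^{\mathcal{O}(1)}n^{(2+o(1))n}$. Since the recursion is explored depth-first only one root-to-leaf path is in memory at a time, and the Frank--Tardos step bounds every number on it by $n^{\mathcal{O}(1)}\log_2(r)$ bits (with the extra factor $p$ in the $\ell_p$-case), so the algorithm uses polynomial space.

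I expect the main obstacle to be the flatness-direction subroutine of Section \ref{sec_flatness_algorithm} --- indeed the introduction already flags it as the technical heart of the paper. The delicate points are: (i)~that for an $\ell_p$-body a near-optimal integer direction \emph{and} its width can be produced deterministically, in polynomial space, with only a \emph{linear} dependence on $p$ (the naive route through the quasiconvex polynomial $x\mapsto\|x\|_p^p$ costs $p^{\mathcal{O}(n)}$ and fails outright for non-even $p$); and (ii)~controlling the bit-size growth through the repeated coordinate changes and slicings, for which the Frank--Tardos preprocessing is precisely the right tool. Everything else --- the branching bound, the reductions between dimensions, the arithmetic bookkeeping --- should be routine once these two points are in place.
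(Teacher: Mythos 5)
Your architecture is the paper's: reduce to $\mathbbm{Z}^n$, run a Lenstra-style recursion in which an (approximate) flatness direction drives the branching, realize that direction by rounding the body to an ellipsoid and solving an exact $\svp$ instance with Kannan's polynomial-space enumeration, and control bit sizes with the Frank--Tardos replacement procedure. The gap is that you defer exactly the step the theorem hinges on. For a polytope a $2n$-approximate L\"owner--John ellipsoid is classical, but for an $\ell_p$-body with $p$ not an even integer there is no off-the-shelf ``John ellipsoid'' you can ``pass to'': the paper has to build one via a shallow-cut ellipsoid method whose separation oracle is an explicit subgradient of $x \mapsto \alpha_d^p\|V^{-1}((x^T,0)^T-t)\|_p^p-\alpha_n^p$ (Lemma \ref{lemma_subgradient_Fp}), and whose termination requires a volume lower bound for any $\ell_p$-body containing an integer point. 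That lower bound in turn rests on a denominator bound $K\le S^{2n^2p}$ for the values of this function at integer points (Claim \ref{claim_upper_bound_size}) together with a bound on the subgradient's length; it is precisely this chain that makes the cost depend only linearly on $p$ and keeps every number at $p\cdot n^{\mathcal{O}(1)}\log_2(r)$ bits. You correctly name this as the main obstacle, but naming it is not overcoming it, so the proposal does not yet prove the statement.

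Two smaller points in your complexity accounting. First, you cannot charge the branching at level $k$ to the general flatness constant $\omega(k)=\mathcal{O}(k^{3/2})$: your subroutine only returns a direction whose width is within $k^{\mathcal{O}(1)}$ of optimal, so the number of slabs is $k^{\mathcal{O}(1)}\cdot\omega(k)$ with an unspecified exponent, and the claimed node count $n^{(3/2+o(1))n}$ does not follow. The paper instead combines Banaszczyk's flatness theorem for ellipsoids (width $\ge n$ forces an integer point) with the $2n$- resp.\ $4m$-approximation factor of the computed ellipsoid to get intervals of length $2n^2$ resp.\ $4m^2$, hence $m^{(2+o(1))m}$ recursive calls, which combined with the $m^{m/(2e)+o(m)}$ per-node $\svp$ cost still lands at $n^{(2+o(1))n}$. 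Second, the Frank--Tardos step needs an explicit circumscribed $\ell_1$-ball radius $N$ for the \emph{original} body (so that recursive subspaces have size depending only on the body, not on the accumulated hyperplanes); this is where Lemma \ref{lemma_circumscribed_convex_body} and Lemma \ref{lemma_polyhedron_out_box} enter, and it is worth making explicit rather than folding into ``arithmetic bookkeeping.''
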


\noindent In the remainder of this section, we show that there exists a polynomial time reduction from $\cvp$ to $\lmp$ for all $\ell_p$-norms and all polyhedral norms.
For the reduction, we observe a relation between $\lmp$ and the decisional variant of $\cvp$.
In the decisional closest vector problem, we are given a lattice $L \subseteq \mathbbm{R}^n$, some target vector $t \in \Span ( L )$ and a parameter $\alpha > 0$. The goal is to decide whether the distance between $t$ and the lattice is at most $\alpha$ or not.
Obviously, the decisional closest vector problem is a special case of $\lmp$, where the corresponding convex body is the closed ball $\bar{B}_n^{(\| \cdot \|)} ( t , \alpha )$.
Micciancio and Goldwasser showed that $\cvp$ and its decisional variant are equivalent if one considers $\cvp$ with respect to the Euclidean norm, see \cite{bk_latticeproblems} and \cite{ln_micc_cvp}. 
Their result can be generalized to arbitrary $\ell_p$-norms, $1 < p < \infty$,
and to polyhedral norms.
Since we are interested in algorithmic solutions for this problem, we can always assume that $L \subseteq \mathbbm{Z}^n$ and $t \in \mathbbm{Z}^n$.

\begin{thm} \label{statement_cvp_self_lp} \label{statement_cvp_self_polyhedron}
Let $\| \cdot \|$ be a norm on $\mathbbm{R}^n$.
Assume that there exists an algorithm $\mathcal{A}$ that for all lattices $\mathcal{L} ( B' ) \subset \mathbbm{Z}^n$ of rank $m$ and all target vectors $t' \in \Span ( B ) \cap \mathbbm{Z}^n$ solves the
lattice membership problem for the ball $B_n^{(\| \cdot \|)} ( t' , r )$
in time
$T_{m,n}^{(\| \cdot \|)} ( S' , r )$, where $S'$ is an upper bound on the size of the basis $B'$ and the target vector $t'$.
\begin{itemize}
	\item
		If the norm is an $\ell_p$-norm, $1 \leq p \leq \infty$, then there exists an algorithm $\mathcal{A}'$, that solves the closest lattice vector problem for all lattices $\mathcal{L} ( B ) \subseteq \mathbbm{Z}^n$ and target vectors $t \in \Span ( B ) \cap \mathbbm{Z}^n$ in time
		$$k \cdot n^{\mathcal{O} ( 1 )} \log_2 ( S )^2 \cdot T ( 16 m^3 n^2 S^3 , m  n S ),$$
		where $k = p$ for $1 \leq p < \infty$ and $k = 1$ for $p = \infty$.
	\item 
		If the norm is given by a full-dimensional polytope symmetric about the origin given by $s$ constraints, then there exists an algorithm that solves the closest vector problem for all lattices $\mathcal{L} ( B ) \subseteq \mathbbm{Z}^n$ and target vector $t \in \Span ( B ) \cap \mathbbm{Z}^n$ in time
		$$s \cdot n^{\mathcal{O} ( 1 )} \log_2 ( \size ( P ) \cdot S ) 
		\cdot T_{m,n}^{(P)} ( 16 m^3 n^{n+2}\size ( P )^{n+1} 
		\cdot S^3 , n  m  S \size ( P )).$$
\end{itemize}
Here, $S$ is an upper bound on the size of the basis $B$ and the target vector $t$.
\end{thm}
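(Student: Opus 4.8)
The plan is to generalize the Micciancio--Goldwasser search-to-decision reduction for $\cvp$ \cite{bk_latticeproblems,ln_micc_cvp}: its only interaction with an oracle is through queries of the form ``$\dd(t',\mathcal{L}(B'))<\rho$?'', and each such query is precisely an instance of $\lmp$ for the ball $B_n^{(\|\cdot\|)}(t',\rho)$ --- the decisional closest vector problem being the special case of $\lmp$ in which the body is a ball. Thus it suffices to (i) solve the search version of $\cvp$ with a polynomial number of decision queries whose parameters stay polynomially bounded, and (ii) control the radii and sizes of the balls that actually occur. We may assume $t\notin\mathcal{L}(B)$ (otherwise output $t$); as preprocessing we LLL-reduce $B$ \cite{pp_LLL} and apply the rounding of Frank and Tardos \cite{pp_ft87} --- the same device the paper uses to keep its $\lmp$-algorithm in polynomial space --- so that a closest vector $v^{*}=Bx^{*}$ has a coefficient vector $x^{*}$ bounded as in the theorem. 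All lattices arising below will be full-rank sublattices of $\mathcal{L}(B)$ lying in $\Span(B)$, and all shifted targets will stay in $\Span(B)\cap\mathbb{Z}^{n}$, so every query is a legal instance of the form demanded by the hypothesis (rank $m$, target in $\Span(B)$); in particular the dimension $n$ and the rank $m$ are preserved.

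First I would compute $\mu:=\dd(t,\mathcal{L}(B))$. Rounding the coordinates of $t$ in the basis $B$ yields a lattice vector within distance $\tfrac12\sum_i\|b_i\|\le mnS$ (respectively $\le nmS\,\size(P)$, by the usual norm comparisons), so $0<\mu\le mnS$; a binary search over a rational radius, each step a single $\lmp$-query for a ball of radius at most $mnS$ centred at $t$, then brackets $\mu$ in an interval of any prescribed length. The quantitative input is a \emph{separation bound}: among all distances $\|w\|$ from the shifts of $t$ occurring in the reduction to lattice vectors with coefficient vectors bounded as above, any two distinct values differ by at least $\delta$, where $\log_2(1/\delta)\le p\cdot n^{\mathcal{O}(1)}\log_2 S$ for an $\ell_p$-norm and $\log_2(1/\delta)\le n^{\mathcal{O}(1)}\log_2(\size(P)\cdot S)$ for a polyhedral norm. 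For polyhedral norms this is immediate, since each such $\|w\|$ is a rational whose denominator is bounded polynomially in $\size(P)$ and $\size(w)$; for $\ell_p$-norms it is a root-separation estimate for $(\sum_i|w_i|^{p})^{1/p}$, whose algebraic degree and logarithmic height grow only linearly in the size of $p$ --- this linear dependence is the origin of the factor $k=p$ in the statement. Running the binary search until the bracket has length $<\delta$ yields a rational $\rho^{*}\in(\mu,\mu+\delta)$; since no admissible distance lies in $(\mu,\mu+\delta)$ and every $\dd(t',\mathcal{L}(B'))$ with $\mathcal{L}(B')\subseteq\mathcal{L}(B)$ satisfies $\dd(t',\mathcal{L}(B'))\ge\mu$, every later query ``$\dd(t',\mathcal{L}(B'))\le\mu$?'' is answered by the single $\lmp$-query ``$B_n^{(\|\cdot\|)}(t',\rho^{*})\cap\mathcal{L}(B')\neq\emptyset$?'', which also resolves the open/closed-ball discrepancy.

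Next I would peel off $x^{*}=(x^{*}_{1},\dots,x^{*}_{m})$ digit by digit. For the least significant bit of $x^{*}_{1}$, write $\mathcal{L}(B)=\mathcal{L}(B^{(1)})\cup(\mathcal{L}(B^{(1)})+b_{1})$, where $B^{(1)}$ is $B$ with $b_{1}$ replaced by $2b_{1}$; since $\dd(t,\mathcal{L}(B))=\mu$, at least one of these two cosets contains a vector at distance $\mu$ from $t$, and the $\lmp$-query ``$\dd(t,\mathcal{L}(B^{(1)}))\le\mu$?'' (a ball query, by the previous paragraph) tells us which. If the answer is \yes\ we keep $t$; otherwise we replace $t$ by $t-b_{1}$; in both cases we continue with the full-rank sublattice $\mathcal{L}(B^{(1)})$, whose distance to the current target is again $\mu$. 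Iterating --- successively splitting off the binary digits of $x^{*}_{1},\dots,x^{*}_{m}$ by doubling the relevant basis vector and shifting the target by $0$ or by that vector --- recovers all of $x^{*}$, and the doubled basis vectors never exceed roughly $\max_i|x^{*}_i|$ in magnitude, so by the preprocessing all intermediate lattices and targets stay within the size bounds $16m^{3}n^{2}S^{3}$ (respectively $16m^{3}n^{n+2}\size(P)^{n+1}S^{3}$) and the radii stay at most $mnS$ (respectively $nmS\,\size(P)$). The algorithm then outputs $v^{*}=Bx^{*}$, which is a closest vector to the original $t$.

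Summing up, the number of oracle calls is $\mathcal{O}(1)$ for the initial upper bound, plus the calls spent on the binary search for $\mu$, plus $m$ times the number of digits per coefficient for the peeling; this is $k\cdot n^{\mathcal{O}(1)}\log_2(S)^{2}$ in the $\ell_p$ case (the two logarithmic factors reflecting the precision to which $\mu$ must be located and the bit length of the coefficients) and $s\cdot n^{\mathcal{O}(1)}\log_2(\size(P)\cdot S)$ in the polyhedral case, the factor $s$ entering because each queried ball $t'+\rho P$ is described by $s$ constraints; each call is on a lattice and target of the sizes just listed and on a ball of the radius just listed. All remaining arithmetic is polynomial and every number produced stays polynomially bounded, so the reduction runs in polynomial space. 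I expect the real work to lie in this accounting: proving the separation bound $\delta$ with its linear dependence on $p$ (and its $\size(P)$-dependence for polyhedral norms), and verifying that LLL-reduction together with the Frank--Tardos rounding genuinely caps the coefficients of a closest vector --- hence the doubled basis vectors arising in the peeling --- at the polynomial sizes claimed, so that the intermediate instances do not outgrow the bounds in the theorem; a naive peeling would let these sizes blow up.
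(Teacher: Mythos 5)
Your proposal follows essentially the same route as the paper: reduce search-$\cvp$ to decision queries (each of which is an $\lmp$ instance for a ball), locate $\mu=\dist(t,\mathcal{L}(B))$ by binary search whose termination rests on a separation bound for the achievable norm values of integer vectors (the paper packages this as ``$(k,K)$-enumerability'', with $k=p$ for $\ell_p$ and $K=\prod_j\beta_j$ for a polytope, which is exactly your $\log_2(1/\delta)$ accounting), and then recover the closest vector by the coset-splitting step that replaces $b_j$ by $2b_j$ and shifts the target by $0$ or $b_j$ according to a decision query. Your refinement of fixing a single threshold $\rho^{*}\in(\mu,\mu+\delta)$ once and answering all later comparisons with one decision query each is sound and if anything saves a logarithmic factor over the paper, which re-runs the full binary search (via an intermediate ``optimization $\cvp$'' oracle) at every peeling step.

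The one place where your argument does not deliver what you claim is the termination of the peeling and the resulting size bound $16m^3n^2S^3$ on the queried instances. You stop after extracting \emph{all} binary digits of the coefficient vector $x^{*}$ of a closest vector. For a skew basis of an integer lattice the coefficients of a closest vector can have bit length $\Theta(n\log_2 S)$ (think of bases like $[(1,0),(N,1)]$ chained through $n$ coordinates), and neither LLL reduction nor the Frank--Tardos rounding --- which operates on linear constraints, not on the basis --- caps them at $\mathrm{poly}(n)\cdot S^{\mathcal{O}(1)}$; so your doubled basis vectors $2^{j}b_i$ grow to size $S^{\Theta(n)}$, which breaks the claimed bound $16m^3n^2S^3$ (though it still yields a polynomial-time reduction with a weaker size guarantee, which suffices for the intended application). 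The paper avoids this entirely: it never determines $x^{*}$. It doubles only until $2^{i}>2\mu/c\le m\max_j\|b_j\|/c$, i.e.\ for $i_{\max}=\mathcal{O}(\log_2(mnS))$ rounds --- a count governed by $\mu$ and $\max_j\|b_j\|$, not by the coefficients of $x^{*}$ --- and then invokes a separate lemma: once the current lattice sits inside $2^{i_{\max}}\mathbbm{Z}^n$ with $2^{i_{\max}-1}c>\mu$, the unique vector of $2^{i_{\max}}\mathbbm{Z}^n$ within distance $\mu$ of the target is obtained by rounding coordinates, and it must be the sought closest vector. Replacing your ``peel every digit of $x^{*}$'' termination by this ``stop when the lattice is sparser than $2\mu/c$ and round'' termination is exactly what makes the instance sizes $16m^3n(Cc^{-1})^2S^3$ and radii $m\max_j\|b_j\|$ in the theorem come out.
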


\noindent For the proof of this theorem, it does not matter whether the algorithm solves $\lmp$ either for the open ball $B_n^{(\| \cdot \|)} ( t , r )$ or the corresponding closed balls. The proof of this theorem appears in the appendix, (see Section \ref{sec_appendix_selfreducibility} in the appendix).\\

\noindent Theorem \ref{statement_cvp_self_lp} together with Theorem \ref{state_main_result} implies a deterministic algorithm that solves the closest vector problem with respect to any $\ell_p$-norm, $1 < p < \infty$, and any polyhedral norm, e.g. the $\ell_1$-norm and the $\ell_{\infty}$-norm.
Furthermore, combining Theorem \ref{statement_cvp_self_polyhedron} with the inapproximability results for $\cvp$ from \cite{pp_dinur03} and \cite{pp_dinur02}, we get the following inapproximability result for $\lmp$.

\begin{thm}
For all bounded convex sets generated by an $\ell_p$-norm, $1 \leq p \leq \infty$, there is some constant $c > 0$ such that $\lmp$ is $\np$-hard to approximate within a factor $n^{c/\log \log n}$.
\end{thm}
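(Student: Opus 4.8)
The plan is to observe that the gap (promise) version of $\lmp$, restricted to balls generated by an $\ell_p$-norm, is literally the gap version of $\cvp$ with respect to $\|\cdot\|_p$, and then to quote the known near-polynomial inapproximability of $\gapcvp$. Concretely, I would first fix the meaning of ``approximating $\lmp$ within a factor $\gamma=\gamma(n)$'': given a lattice $L\subseteq\mathbbm{R}^n$ and a bounded convex set $\mathcal{C}\subseteq\Span(L)$ with center $c$, one must decide between the \yes\ case $\mathcal{C}\cap L\neq\emptyset$ and the \no\ case $\bigl(c+\gamma\cdot(\mathcal{C}-c)\bigr)\cap L=\emptyset$, under the promise that one of the two holds. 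When $\mathcal{C}$ is a closed ball $\bar B_n^{(\|\cdot\|_p)}(t,r)$ its center is $t$ and its $\gamma$-dilation about $t$ is $\bar B_n^{(\|\cdot\|_p)}(t,\gamma r)$, so the \yes\ case says that the $\|\cdot\|_p$-distance from $t$ to $L$ is at most $r$ while the \no\ case says it exceeds $\gamma r$. This is exactly $\gapcvp_{\gamma}$ with respect to $\|\cdot\|_p$. Hence the identity map sends a $\gapcvp_{\gamma}$ instance to a gap-$\lmp$ instance of the required shape, carrying \yes\ to \yes\ and \no\ to \no; this is simply the gap-preserving form of the observation made just before Theorem~\ref{statement_cvp_self_lp} that $\deccvp$ is a special case of $\lmp$, and the convex sets that occur are precisely the $\ell_p$-balls treated in Theorem~\ref{statement_cvp_self_polyhedron}.

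Next I would invoke the hardness of $\gapcvp$. By \cite{pp_dinur03} there is a constant $c>0$ such that, for every fixed $p$ with $1\le p<\infty$, $\gapcvp_{\gamma}$ with respect to $\|\cdot\|_p$ is $\np$-hard for $\gamma=n^{c/\log\log n}$; for $p=\infty$ the corresponding hardness follows from \cite{pp_dinur02}. Composing this with the identity embedding from the previous paragraph yields $\np$-hardness of approximating $\lmp$ within $n^{c/\log\log n}$ for all bounded convex sets generated by $\|\cdot\|_p$, for every $p\in[1,\infty]$, which is the assertion of the theorem.

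The one point that needs care — and essentially the only place I anticipate friction — is the full-dimensionality requirement built into the definition of $\lmp$: the ball $\bar B_n^{(\|\cdot\|_p)}(t,r)$ is full-dimensional in $\mathbbm{R}^n$, so the lattice $L$ in the starting $\gapcvp$ instance must have full rank $n$. This is harmless, because the lattices produced by the standard $\gapcvp$ hardness constructions (through set cover / label cover) are already full-dimensional, and in any case $\cvp$ in a rank-$m$ lattice reduces in polynomial time to $\cvp$ in a full-rank lattice of dimension $m$, preserving both the norm and the gap, by rewriting everything in an integral basis of $\Span(L)$. With that settled the reduction is immediate, and nothing remains beyond bookkeeping the parameters and the constant $c$.
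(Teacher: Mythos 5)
Your proposal is correct and matches the paper's (one-line) argument in substance: the paper likewise obtains this theorem by transferring the $n^{c/\log\log n}$-inapproximability of $\cvp$ from \cite{pp_dinur03} and \cite{pp_dinur02} to $\lmp$ via the observation that decisional $\cvp$ is the special case of $\lmp$ where the convex set is an $\ell_p$-ball. Your formulation via the identity embedding of $\gapcvp_\gamma$ into the gap version of $\lmp$ is the clean way to make that one line precise (and correctly sidesteps the search-to-decision machinery of Theorem~\ref{statement_cvp_self_lp}, which is not needed for a promise-problem hardness statement), so nothing further is required.
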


\section{A general algorithm for the lattice membership problem} \label{sec_algorithm}

\noindent In this section, we describe a general framework for algorithms that solve the lattice membership problem.
Essentially, the algorithm is a variant of Lenstra's algorithm for integer programming \cite{pp_lenstraoptimization}, its improvements by Kannan \cite{pp_kannan}, and by Frank and Tardos \cite{pp_ft87}.\\


\noindent Our lattice membership algorithm is a recursive algorithm which works for classes of bounded convex sets, which are closed under bijective affine transformation and under intersection with hyperplanes orthogonal to the unit vectors.
In the following, we consider such a class $\mathcal{K}$ and call it \emph{suitable}.\\

\noindent Since $\mathcal{K}$ is closed under bijective affine transformation, it is enough to solve the lattice membership problem for instances, where the corresponding lattice is the integer lattice $\mathbbm{Z}^n$.
Since every vector from a lattice $L = \mathcal{L} ( B )$ is an integer linear combination of the basis vectors of $B$, any bounded convex set $\mathcal{C} \subseteq \Span ( L )$ contains a lattice vector from $L$ if and only if the bounded convex set $B^{-1} \mathcal{C}$ contains an integer vector.

\subsection{The main idea of the lattice membership algorithm}

\noindent The main idea of our lattice membership algorithm is to use the concept of branch and bound.
To decide, whether a given bounded convex set $\mathcal{C}$ from the class $\mathcal{K}$ contains an integer vector, we consider a family $\{ H_{k, \tilde{d}} \}_{k \in \mathbbm{Z}}$ of hyperplanes given by a vector $\tilde{d} \in \mathbbm{Z}^n$.
Obviously, every integer vector $v \in \mathbbm{Z}^n$, which is contained in $\mathcal{C}$, satisfies $\langle \tilde{d} , v \rangle = k$ for some integer value $k \in \mathbbm{Z}$
and $k$ is contained in the interval
\begin{equation} \label{eq_interval_recursive_instances}
\inf \{ \langle \tilde{d} , x \rangle | x \in \mathcal{C} \} 
\leq k \leq \sup \{ \langle \tilde{d} , x \rangle | x \in \mathcal{C} \}.
\end{equation}
Hence, to decide whether the bounded convex set $\mathcal{C}$ contains an integer vector, it is sufficient to consider all integer values $k$, which are contained in the interval (\ref{eq_interval_recursive_instances})
and check recursively whether the convex sets $\mathcal{C} \cap H_{k,\tilde{d}}$ contain an integer vector.\\

\noindent In the following, we will call an algorithm which realizes this idea a lattice membership algorithm.
Furthermore, we will assume the following for the class $\mathcal{K}$.

\begin{assumption} \label{assumption_flatness_algorithm}
Let $\mathcal{K}$ be a class of full-dimensional bounded convex sets and $f : \mathbbm{N} \to \mathbbm{R}^{> 0}$ be a nondecreasing function.
We assume that there exists a deterministic algorithm $\mathcal{A}_{\mathcal{K},f}$ that on input a convex set $\mathcal{C} \in \mathcal{K}$ of dimension $n$ outputs one of the following:
\begin{itemize}
 \item Either it outputs that $\mathcal{C}$ contains an integer vector or 
 \item it outputs a vector $\tilde{d} \in \mathbbm{Z}^n$ and an interval $I_{\mathcal{C}}$ of length at most $f ( n )$ such that $\mathcal{C}$ contains an integer vector if and only if there exists $k \in \mathbbm{Z} \cap I_{\mathcal{C}}$ such that $\mathcal{C} \cap H_{k,\tilde{d}}$ contains an integer vector.
\end{itemize}
\end{assumption}

\noindent We call such an algorithm $\mathcal{A}_{\mathcal{K},f}$ a flatness algorithm.
In Section \ref{sec_flatness_algorithm}, we will show that for certain classes of convex bodies we can realize a flatness algorithm.\\

\noindent Then using the idea of a membership algorithm, we obtain a recursive algorithm for the lattice membership problem, whose recursive instances are given by a full-dimensional bounded convex set $\mathcal{C}$ and an affine subspace $H$.
We start with $H := \mathbbm{R}^n$. Later, $H$ is given by a set of affine hyperplanes $H_{k_i,d_i}$, $m + 1 \leq i \leq n$ for some $m \leq n$.\\

\noindent Since the convex set $\mathcal{C} \cap H$ is not full-dimensional,
we construct a bijective affine mapping which maps the convex set $\mathcal{C} \cap H$ to a convex set in $\mathbbm{R}^n \cap (\bigcap_{i=m+1}^n H_{0,e_i})$ such that every integer vector in $\mathcal{C} \cap H$ is mapped to an integer vector in $\tau ( \mathcal{C} \cap H )$.
Such a convex set can be identified with a full-dimensional convex set in $\mathbbm{R}^m$.
Additionally, this transformation is constructed in such a way such that it guarantees that $\mathcal{C} \cap H$ contains an integer vector if and only if the corresponding convex set in $\mathbbm{R}^n \cap \bigcap_{i=m+1}^n H_{0,e_i}$ contains an integer vector.
Such a transformation is described in the following.

\noindent First of all, we use an integer vector $v \in H$ to map the affine subspace $H$ to the subspace $H - v$ which is given as the intersection of the affine hyperplanes $H_{0,d_i}$, $m+1 \leq i \leq n$.
Since the normal vectors $d_i$ of this subspace are linearly independent, they can be extended to a basis of the whole space $\mathbbm{R}^n$, $B = [ b_1, \hdots, b_m , d_{m+1} , \hdots, d_n]$.
Obviously, every vector $x \in ( H - v )$ satisfies
$B^T x = ( \bar{x}^T , 0^{n-m} )^T$, where $\bar{x} \in \mathbbm{R}^m$.
That means, the function $x \mapsto B^T x$ maps the subspace $( H - v ) = \bigcap_{i=m+1}^n H_{0,d_i}$ to the subspace $\bigcap_{i=m+1}^n H_{0,e_i}$.
To guarantee that we obtain a bijection between the integer vectors in $H - v$ and $\bigcap_{i=m+1}^n H_{0,e_i}$, we construct a basis of the lattice $\mathcal{L} ( B^T ) \cap \bigcap_{i=m+1}^n H_{0,e_i}$ and map every vector in this lattice to its corresponding integer coefficient vector.

\begin{claim} \label{claim_properties_transformation}
Let $\mathcal{C} \subseteq \mathbbm{R}^n$ be a full-dimensional bounded convex set.
For $m \in \mathbbm{N}$, $m < n$, let $H := \bigcap_{i=m+1}^n H_{k_i,d_i}$ be an affine subspace given by $d_i \in \mathbbm{Z}^n$ linearly independent and $k_i \in \mathbbm{Z}$.
Let $v \in \mathbbm{Z} \cap H$ and
$B = [ b_1 , \hdots, b_m , d_{m+1} , \hdots , d_n] \in \mathbbm{Z}^{n \times n}$ be a basis of $\mathbbm{R}^n$ which contains the vectors $d_i$, $m+1 \leq i \leq n$.
Let $\bar{D} \in \mathbbm{Z}^{n \times m}$ be a basis of the lattice $\mathcal{L} ( B^T ) \cap \bigcap_{i=m+1}^n H_{0,e_i}$ and
$\hat{D} := [ \bar{D} , e_{m+1} , \hdots, e_n ] \in \mathbbm{Z}^{n \times n}$.\\
Then, the bijective affine transformation
\begin{align*}
\tau : \mathbbm{R}^n \to \mathbbm{R}^n , ~ x \mapsto \hat{D}^{-1} B^T ( x - v )
\end{align*}
satisfies the following properties:
\begin{itemize}
 \item The transformation $\tau$ is a bijective transformation between the affine subspace $H$ and the subspace $\bigcap_{i=m+1}^n H_{0,e_i}$,
 	$\tau ( H ) = \bigcap_{i=m+1}^n H_{0,e_i}$.
 	\item The transformation $\tau$ is a bijective mapping between $\mathbbm{Z}^n \cap H$ and $\mathbbm{Z}^n \cap \bigcap_{i=m+1}^n H_{0,e_i}$.
\end{itemize}
\end{claim}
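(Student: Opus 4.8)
The plan is to prove the claim in three steps: (i) show that $\hat D$ is invertible, so that $\tau$ is a well-defined bijective affine transformation of $\mathbbm{R}^n$; (ii) establish the subspace identity $\tau ( H ) = \bigcap_{i=m+1}^n H_{0,e_i}$; and (iii) upgrade (ii) to the statement about integer points. Throughout I write $U := \bigcap_{i=m+1}^n H_{0,e_i} = \{ x \in \mathbbm{R}^n \mid x_{m+1} = \dots = x_n = 0 \}$ and use that $\mathcal{L} ( B^T ) = B^T \mathbbm{Z}^n$ because $B^T$ is an integral basis matrix.

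For step (i), I would exploit the block structure of $\hat D$. Every column of $\bar D$ lies in $\mathcal{L} ( B^T ) \cap U$, hence in $U$, so its last $n-m$ entries vanish; therefore $\hat D = \bigl( \begin{smallmatrix} D_1 & 0 \\ 0 & I_{n-m} \end{smallmatrix} \bigr)$ where $D_1 \in \mathbbm{Z}^{m \times m}$ is the top part of $\bar D$. Since $\bar D$ is a basis of a rank-$m$ lattice, its columns are linearly independent, so $D_1$ is nonsingular; hence $\hat D$ is invertible with $\hat D^{-1} = \bigl( \begin{smallmatrix} D_1^{-1} & 0 \\ 0 & I_{n-m} \end{smallmatrix} \bigr)$, and in particular $\hat D^{-1} ( U ) = U$. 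Consequently $\tau$ is the composition of the invertible linear map $x \mapsto \hat D^{-1} B^T x$ with the translation by $- \hat D^{-1} B^T v$, so it is a bijective affine transformation of $\mathbbm{R}^n$.

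For step (ii), since $v \in H$ we have $\langle d_i , v \rangle = k_i$ for $m+1 \leq i \leq n$, so $H - v = \bigcap_{i=m+1}^n H_{0,d_i}$. The $i$-th row of $B^T$ is the $i$-th column of $B$, which for $i > m$ is $d_i$, so the $i$-th coordinate of $B^T x$ equals $\langle d_i , x \rangle$ for $i > m$; hence $B^T$ maps $H - v$ into $U$, and since $B^T$ is invertible and $\dim ( H - v ) = m = \dim U$, in fact $B^T ( H - v ) = U$. Applying $\hat D^{-1}$, which fixes $U$ setwise by step (i), gives $\tau ( H ) = \hat D^{-1} B^T ( H - v ) = U$, which is the first bullet; since $\tau$ is bijective this also yields $\tau^{-1} ( U ) = H$.

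For step (iii) — the main point — I would use $\mathcal{L} ( B^T ) = B^T \mathbbm{Z}^n$ together with the defining property of $\bar D$. Given $x \in \mathbbm{Z}^n \cap H$, the vector $x - v$ is integral, so $y := B^T ( x - v ) \in \mathcal{L} ( B^T )$, and $y \in U$ by step (ii); hence $y \in \mathcal{L} ( B^T ) \cap U = \mathcal{L} ( \bar D )$, say $y = \bar D w$ with $w \in \mathbbm{Z}^m$. Since $\hat D \, ( w^T , 0^{n-m} )^T = \bar D w = y$, we get $\tau ( x ) = \hat D^{-1} y = ( w^T , 0^{n-m} )^T \in \mathbbm{Z}^n \cap U$. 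Conversely, for $y' = ( w^T , 0^{n-m} )^T \in \mathbbm{Z}^n \cap U$ put $x := v + ( B^T )^{-1} \hat D y' = v + ( B^T )^{-1} \bar D w$; as $\bar D w \in \mathcal{L} ( \bar D ) \subseteq \mathcal{L} ( B^T ) = B^T \mathbbm{Z}^n$, the vector $( B^T )^{-1} \bar D w$ is integral, so $x \in \mathbbm{Z}^n$, and $\tau ( x ) = \hat D^{-1} B^T ( x - v ) = \hat D^{-1} \bar D w = y'$, which forces $x \in \tau^{-1} ( U ) = H$. Thus $\tau$ restricts to a bijection $\mathbbm{Z}^n \cap H \to \mathbbm{Z}^n \cap U$, the second bullet. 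The step I would be most careful about is this last one: the role of $\hat D$ is precisely to re-express the lattice $\mathcal{L} ( B^T ) \cap U$ in standard coordinates, and one must check that adjoining $e_{m+1}, \dots, e_n$ to a basis of that lattice makes $\hat D^{-1}$ carry $\mathcal{L} ( B^T ) \cap U$ exactly onto $\mathbbm{Z}^m \times \{ 0 \}^{n-m}$ — even though $\hat D$ need not be unimodular and $\hat D^{-1}$ does not preserve $\mathbbm{Z}^n$ — which the block structure from step (i) makes transparent.
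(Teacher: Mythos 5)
Your proof is correct and follows essentially the same route as the paper: the same use of the block structure of $\hat{D}$ (which the paper invokes implicitly via $\hat{D}^T e_i = e_i$ for $i > m$, where you make it explicit to justify invertibility), the same identification $B^T ( H - v ) = \bigcap_{i=m+1}^n H_{0,e_i}$, and the same two-directional argument through $\mathcal{L} ( B^T ) \cap \bigcap_{i=m+1}^n H_{0,e_i} = \mathcal{L} ( \bar{D} )$ for the integer-point bijection. Your step (iii) is in fact slightly more complete than the paper's, which stops at producing the integer preimage $w$ without explicitly noting that $\tau^{-1} ( y' ) \in H$.
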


\noindent The transformation $\tau$ can be constructed efficiently:
Using the Hermite normal form, we can decide in polynomial time, if there exists an integer vector in the affine subspace $H$ and, if so, compute one, see Theorem 1.4.21 in \cite{bk_gls}.
The basis $\bar{D}$ of the lattice $\mathcal{L} ( B^T ) \cap \bigcap_{i=m+1}^n H_{0,e_i}$ can be constructed efficiently using a polynomial algorithm from Micciancio, see \cite{pp_MiccRed}.

\begin{proof}
Obviously, the transformation $\tau$ is well-defined.\\
We start with the proof of the first statement.
By definition of $\tau$, for all $x \in \mathbbm{R}^n$ and $m+1 \leq i \leq n$ we have that
\begin{align} \label{eq_mem_alg_general_1}
\langle \tau ( x ) , e_i \rangle
= \langle \hat{D}^{-1} B^T ( x - v ) , e_i \rangle
= \langle B^T ( x - v ) , ( \hat{D}^T )^{-1} e_i \rangle.
\end{align}
Since the columns of $\bar{D}$ are vectors in $\mathbbm{R}^n \cap \bigcap_{j=m+1}^n H_{0,e_j}$, we have
$\bar{D}^T e_i = 0$ for all $m+1 \leq i \leq n$.
Furthermore, $\hat{D}^T e_i = e_i$ for all $m+1 \leq i \leq n$.
Combining this with (\ref{eq_mem_alg_general_1}), it follows that
\begin{align*}
\langle \tau ( x ) , e_i \rangle
= \langle B^T ( x - v ) , e_i \rangle
= \langle x - v , B \cdot e_i \rangle
= \langle x - v , d_i \rangle.
\end{align*}
Since $v \in H = \bigcap_{j=m+1}^n H_{k_j,d_j}$, we have
$\langle \tau ( x ) , e_i \rangle
= \langle x , d_i \rangle - \langle v , d_i \rangle
= 0$
for all $m+1 \leq i \leq n$ and $x \in H$.
This shows that $\tau ( x ) \in \bigcap_{j=m+1}^n H_{0,e_j}$.
Since $\tau$ is bijective and the (affine) subspaces $H$ and $\bigcap_{j=m+1}^n H_{0,e_j}$ have the same dimension, it follows that $\tau ( H ) = \bigcap_{j=m+1}^n H_{0,e_j}$.
This proves the first statement.\\

\noindent We show the second statement in two steps.
First, we show that $\tau$ maps every integer vector in $H$ to an integer vector in $\mathbbm{R}^n \cap \bigcap_{i=m+1}^n H_{0,e_i}$.
Furthermore, we show that the inverse transformation $\tau^{-1}$ maps every integer vector in $\mathbbm{R}^n \cap \bigcap_{i=m+1}^n H_{0,e_i}$ to an integer vector in $H$.\\
For every integer vector $x \in \mathbbm{Z}^n$, we have $x - v \in \mathbbm{Z}^n$ and $B^T ( x - v ) \in \mathcal{L} ( B^T )$.
As both $x$ and $v$ are contained in $H$, it follows that
\begin{align*}
\langle B^T ( x - v ) , e_i \rangle
= \langle x - v , B e_i \rangle
= \langle x - v , d_i \rangle = 0
\end{align*}
for all $m+1 \leq i \leq n$.
This shows that $B^T ( x - v )$ is a vector in the lattice $\mathcal{L} ( B^T ) \cap \bigcap_{j=m+1}^n H_{0,e_j}$.
Since $\bar{D} \in \mathbbm{Z}^{n \times m}$ is a basis of this lattice, there exists an integer vector $z \in \mathbbm{Z}^m$ such that
$\bar{D} z = B^T ( x - v )$.
Obviously, the vector $z' = (z^T , 0^{n-m} )^T \in \mathbbm{Z}^n$ satisfies
$\hat{D} z' = B^T ( x - v )$.
From this, it follows that $\hat{D}^{-1} B^T ( x - z ) \in \mathbbm{Z}^n$.\\
The inverse of the bijective affine transformation $\tau$ is given by
\begin{align*}
\tau^{-1} : \mathbbm{R}^n \to \mathbbm{R}^n, ~ y \mapsto (B^T)^{-1} \hat{D} y + v.
\end{align*}
To show that $\tau^{-1} ( y ) \in \mathbbm{Z}^n$ for all integer vectors $y \in \mathbbm{Z}^n \cap \bigcap_{j=m+1}^n H_{0,e_j}$, it is enough to show that $(B^T)^{-1} \hat{D} y \in \mathbbm{Z}^n$.
Every integer vector $y' \in \mathbbm{Z}^n \cap \bigcap_{j=m+1}^n H_{0,e_j}$ is of the form $y' = ( y^T , 0^{n-m} )^T$ with $y \in \mathbbm{Z}^m$.
Obviously, we have $\hat{D} y' = \bar{D} y$.
Since $\bar{D}$ is a basis of the lattice $\mathcal{L} ( B^T ) \cap \bigcap_{j=m+1}^n H_{0,e_j}$, it follows that
\begin{align*}
\bar{D} y \in \mathcal{L} ( B^T ) \cap \bigcap_{j=m+1}^n H_{0,e_j} \subseteq \mathcal{L} ( B^T ).
\end{align*}
Hence, there exists an integer vector $w \in \mathbbm{Z}^n$ such that
$\bar{D} y = B^T w$.
\end{proof}

\noindent With this transformation $\tau$, we are able to identify the bounded convex set $\mathcal{C} \cap H$ with a full-dimensional bounded convex set in $\mathbbm{R}^m$.
Since $\mathcal{K}$ is closed under bijective affine transformation and intersection with hyperplanes orthogonal to the unit vectors, we have $\tau ( \mathcal{C} \cap H ) \in \mathcal{K}$ and we can apply the flatness algorithm $\mathcal{A}_{\mathcal{K},f}$ with input $\tau ( \mathcal{C} \cap H )$.
If the algorithm outputs that $\tau ( \mathcal{C} \cap H )$ contains an integer vector, we output that $\mathcal{C} \cap H$ contains an integer vector.
Otherwise, we obtain a vector $\tilde{d} \in \mathbbm{Z}^n$ and an interval $I_{\tau ( \mathcal{C} \cap H )}$ of length at most $f ( m )$ such that we need to search only in the hyperplane $H_{k,\tilde{d}}$, $k \in \mathbbm{Z} \cap I_{\tau ( \mathcal{C} \cap H )}$.
In this case the recursive instances of our membership algorithm are given by the bounded convex set $\mathcal{C}$ and the affine subspace $H \cap \tau^{-1} ( H_{k,\tilde{d}} )$.
For a complete description of the algorithm, see Algorithm \ref{alg_membership}.

\begin{figure}[!ht]
\framebox{
\begin{minipage}[b]{15.5cm} \small
\begin{alg} \label{alg_membership} {\bf Membership algorithm for bounded convex sets}\\
{\tt 
{\bf Input:}
	\begin{itemize}
	 \item A full-dimensional bounded convex set $\mathcal{C}$ from a suitable class $\mathcal{K}$.
	 \item An affine subspace $H:=\bigcap_{i=m+1}^n H_{k_i,d_i}$, where $d_i \in \mathbbm{Z}^n$ linearly independent and $k_i \in \mathbbm{Z}$ for all $m+1 \leq i \leq n$; alternatively, $H:= \mathbbm{R}^n$.
	\end{itemize}
{\bf Used Subroutine:} Flatness algorithm $\mathcal{A}_{\mathcal{K},f}$ satisfying Assumption \ref{assumption_flatness_algorithm}.\\[0.25cm]
	{\bf If} $m=0$, check if there exists $z \in \mathbbm{Z}^n \cap H$ satisfying $z \in \mathcal{C}$.\\
	{\bf Otherwise,} 
	\begin{enumerate}
	 	\item 
	 			\begin{description}
	 				\item[If] $m=n$, set $v := 0$ and $\bar{V} := I_n$.
	 				\item[Otherwise,] compute $v \in \mathbbm{Z}^n \cap H$, a basis $B := [b_1, \hdots, b_m , d_{m+1} , \hdots, d_n] \in \mathbbm{Z}^{n \times n}$ of $\mathbbm{R}^n$.\\
			Compute a lattice basis $\bar{D} \in \mathbbm{Z}^{n \times m}$ of $\mathcal{L} ( B^T ) \cap \bigcap_{i=m+1}^n H_{0,e_i}$.\\
			Set $\hat{D}:=[\bar{D} , e_{m+1}, \hdots, e_n] \in \mathbbm{Z}^n$ and $\bar{V} := \hat{D}^{-1} B^T$.
			\end{description}
			Define the bijective mapping $\tau: \mathbbm{R}^n \to \mathbbm{R}^n$, $x \mapsto \bar{V} ( x - v )$.
	\item Apply the algorithm $\mathcal{A}_{\mathcal{K},f}$ with input $\tau ( \mathcal{C} \cap H )$.
 				\begin{description}
 				\item[If] the algorithm outputs that $\tau ( \mathcal{C} \cap H )$ contains an integer vector, output this.
 				\item[Otherwise,] the result is a vector $\tilde{d} \in \mathbbm{Z}^m$ together with an interval $I_{\tau ( \mathcal{C} \cap H )}$.
 				\begin{enumerate}
 				 \item Set $d_m := \bar{V}^T ( \tilde{d}^T , 0^{n-m} )^T \in \mathbbm{Z}^n$.
 			 	 \item For all $k \in \mathbbm{Z} \cap I_{\tau ( \mathcal{C} \cap H )}$,
 								\begin{quote}
 								 apply the membership algorithm to the convex set $\mathcal{C}$ and the affine subspace $H \cap H_{k+\langle v , d_m \rangle,d_m}$.\\
												The algorithm outputs whether the convex set $\mathcal{C} \cap H \cap H_{k+ \langle v , d_m \rangle,d_m}$ contains an integer vector or not.
								 \end{quote}
 								 \item \label{alg_step_decision} If there exists an index $k$ such that  $\mathcal{C} \cap H \cap H_{k + \langle v , d_m \rangle,d_m}$ contains an integer vector, output that $\mathcal{C} \cap H$ contains an integer vector.\\
 								 	Otherwise, output that $\mathcal{C} \cap H$ does not contain an integer vector.
 				\end{enumerate}	
 					\end{description}
\end{enumerate}
}
\end{alg}
\end{minipage}}
\end{figure}

\begin{thm} \label{alg_mem_result}
Let $\mathcal{K}$ be a suitable class of bounded convex sets satisfying Assumption \ref{assumption_flatness_algorithm}. 
Given a full-dimensional convex set $\mathcal{C} \in \mathcal{K}$ and an affine subspace $H$ of dimension $m$, the membership algorithm, Algorithm \ref{alg_membership}, decides correctly whether $\mathcal{C} \cap H$ contains an integer vector.
The number of recursive calls of the algorithm is at most $( 2 f ( m ))^m$.
\end{thm}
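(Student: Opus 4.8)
The plan is to prove the two assertions — correctness and the bound $(2f(m))^{m}$ on the number of recursive calls — simultaneously, by induction on the dimension $m$ of the affine subspace $H$. For the base case $m=0$ the subspace $H$ is either empty or the single point solving the $n$ linearly independent integer equations $\langle x,d_i\rangle=k_i$, and the algorithm decides correctly by the explicit check ``$\exists\,z\in\mathbbm{Z}^n\cap H$ with $z\in\mathcal{C}$''; no recursive call is made, so $0\le(2f(0))^0$. For the inductive step, assuming the statement for all subspaces of dimension $<m$ and $\dim H=m\ge1$, I would first observe that the map $\tau$ constructed in Step~1 of Algorithm~\ref{alg_membership} is precisely the transformation of Claim~\ref{claim_properties_transformation} (the identity when $m=n$); hence $\tau$ restricts to a bijection between $\mathbbm{Z}^n\cap H$ and $\mathbbm{Z}^n\cap\bigcap_{i=m+1}^n H_{0,e_i}$, and, identifying that subspace with $\mathbbm{R}^m$, the set $\tau(\mathcal{C}\cap H)$ is a bounded convex set whose lattice points correspond one-to-one with those of $\mathcal{C}\cap H$. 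Since $\mathcal{K}$ is suitable it is closed under the affine bijection and the coordinate-hyperplane intersections used to form $\tau(\mathcal{C}\cap H)$, so $\tau(\mathcal{C}\cap H)\in\mathcal{K}$ and the flatness algorithm $\mathcal{A}_{\mathcal{K},f}$ can be applied to it; if it reports a lattice point the algorithm answers ``yes'' correctly with no recursion, and otherwise it returns $\tilde d\in\mathbbm{Z}^m$ and an interval $I$ of length $\le f(m)$ satisfying Assumption~\ref{assumption_flatness_algorithm}.

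The key computation of the inductive step is the pull-back identity: with $\delta:=(\tilde d^{\,T},0^{n-m})^{T}$ and $d_m:=\bar V^{T}\delta$ (as in the algorithm), one has $\langle\tau(x),\delta\rangle=\langle\bar V(x-v),\delta\rangle=\langle x-v,d_m\rangle=\langle x,d_m\rangle-\langle v,d_m\rangle$ for every $x\in\mathbbm{R}^n$, so $\tau(x)\in H_{k,\delta}$ if and only if $x\in H_{k+\langle v,d_m\rangle,d_m}$. Combining this with Claim~\ref{claim_properties_transformation} shows that, for each integer $k$, the set $\tau(\mathcal{C}\cap H)\cap H_{k,\tilde d}$ contains a lattice point if and only if $\mathcal{C}\cap H\cap H_{k+\langle v,d_m\rangle,d_m}$ does. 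Chaining the three equivalences — Claim~\ref{claim_properties_transformation} (lattice points of $\mathcal{C}\cap H$ versus $\tau(\mathcal{C}\cap H)$), Assumption~\ref{assumption_flatness_algorithm} (reduction to the slices $H_{k,\tilde d}$ with $k\in\mathbbm{Z}\cap I$), and the induction hypothesis applied to each recursive instance $(\mathcal{C},H\cap H_{k+\langle v,d_m\rangle,d_m})$ — yields that Step~\ref{alg_step_decision} outputs the correct answer for $\mathcal{C}\cap H$. For the count, in this branch the algorithm recurses on one instance per integer in $I$, i.e.\ on at most $f(m)+1\le 2f(m)$ subspaces of dimension $m-1$, so iterating down the recursion tree and using that $f$ is nondecreasing bounds the number of calls by $\prod_{i=1}^{m}2f(i)\le(2f(m))^{m}$.

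The step I expect to be the main obstacle is verifying that each recursive instance is genuinely admissible, i.e.\ that $H\cap H_{k+\langle v,d_m\rangle,d_m}$ is again an affine subspace of dimension exactly $m-1$ given by integer data, so that the induction hypothesis applies. The dimension part is straightforward: $d_m=\bar V^{T}\delta$ lies in $\Span(b_1,\dots,b_m)$ because $\delta$ has vanishing last $n-m$ coordinates and $\bar V^{T}=B\hat D^{-T}$, hence $d_m$ is nonzero (as $\tilde d\ne0$ and $\bar V$ is invertible) and linearly independent of $d_{m+1},\dots,d_n$, so the new hyperplane meets $H$ in a proper $(m-1)$-flat. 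Integrality is the delicate point, because $d_m$ itself need not lie in $\mathbbm{Z}^n$: here I would use that, by Claim~\ref{claim_properties_transformation}, $\bar V$ maps $\mathbbm{Z}^n\cap(H-v)$ onto $\mathbbm{Z}^m\times\{0^{n-m}\}$, whence $\langle d_m,z\rangle\in\mathbbm{Z}$ for every $z\in\mathbbm{Z}^n\cap(H-v)$; consequently $d_m$ differs from some $\hat d\in\mathbbm{Z}^n$ by a vector in $\Span(d_{m+1},\dots,d_n)$, and replacing $d_m$ by $\hat d$ (correcting the offset accordingly, which stays in $\mathbbm{Z}$ because $v\in\mathbbm{Z}^n\cap H$) leaves the subspace $H\cap H_{\cdot,d_m}$ unchanged and produces legitimate integer data for the recursive call. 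It also remains to note — again a consequence of $\mathcal{K}$ being suitable — that $\tau(\mathcal{C}\cap H)$ is of the type $\mathcal{A}_{\mathcal{K},f}$ accepts, with any degenerate intersections (an empty or lower-dimensional $\mathcal{C}\cap H$) that arise handled by the base case.
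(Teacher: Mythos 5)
Your proof is correct and follows essentially the same route as the paper's: induction on $m$, the chain of equivalences through Claim \ref{claim_properties_transformation}, the pull-back identity $\tau^{-1}(H_{k,(\tilde d^T,0^{n-m})^T})=H_{k+\langle v,d_m\rangle,d_m}$, and the bound $\prod_{i=1}^m(f(i)+1)\leq (2f(m))^m$. Your additional discussion of why the recursive instances have integer data (replacing $d_m$ by an integer vector congruent to it modulo $\Span(d_{m+1},\dots,d_n)$) addresses a point the paper silently asserts, and is a welcome refinement rather than a deviation.
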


\noindent Given as input a full-dimensional bounded convex set $\mathcal{C} \subseteq \mathbbm{R}^n$ and as subspace the whole vector space $\mathbbm{R}^n$, the algorithm solves the lattice membership problem correctly.

\begin{proof}
Obviously, if $m = 0$, the affine subspace $H$ consists of a single vector. Hence, the algorithm can decide correctly, whether this vector is an integer vector which is contained in $\mathcal{C}$.\\
For $m \geq 1$, the membership algorithm computes the bijective affine mapping $\tau$ as described in Claim \ref{claim_properties_transformation} and applies the algorithm $\mathcal{A}_{\mathcal{K},f}$ to the full-dimensional bounded convex set $\tau ( \mathcal{C} \cap H ) \subseteq \mathbbm{R}^m$.
Depending on the output, the algorithm distinguishes between two cases:\\

\noindent If the algorithm outputs that $\tau ( \mathcal{C} \cap H )$ contains an integer vector, it follows directly from Claim \ref{claim_properties_transformation} that $\mathcal{C} \cap H$ contains an integer vector.\\

\noindent Otherwise, the algorithm works recursively and checks for each $k \in \mathbbm{Z} \cap I_{\tau ( \mathcal{C} \cap H )}$, whether the convex set $\mathcal{C} \cap H \cap H_{k+\langle v , d_m \rangle , d_m}$ contains an integer vector.
We have seen in Claim \ref{claim_properties_transformation} that $\mathcal{C} \cap H$ contains an integer vector if and only if $\tau ( \mathcal{C} \cap H )$ contains an integer vector, i.e., a vector from $\mathbbm{Z}^n \cap \bigcap_{i=m+1}^n H_{0,e_i}$.\\
If we interpret $\tau ( \mathcal{C} \cap H )$ as a full-dimensional convex set in $\mathbbm{R}^m$, it is guaranteed by Assumption \ref{assumption_flatness_algorithm} that that $\tau ( \mathcal{C} \cap H )$ contains an integer vector if and only if there exists an integer value $k \in \mathbbm{Z} \cap I_{\tau ( \mathcal{C} \cap H)}$ such that
		$\tau ( \mathcal{C} \cap H ) \cap H_{k,\tilde{d}}$ contains an integer vector.
Obviously, this is equivalent to the statement that 
$\tau ( \mathcal{C} \cap H ) \cap H_{k,(\tilde{d}^T,0^{n-m})^T}$ contains an integer vector from $\mathbbm{Z}^n \cap \bigcap_{i=m+1}^n H_{0,e_i}$, if we interpret $\tau ( \mathcal{C} \cap H )$ as a convex set in $\mathbbm{R}^n \cap \bigcap_{i=m+1}^n H_{0,e_i}$.
Since $\tau$ is a bijective affine transformation which maps an integer vector in $\mathcal{C} \cap H$ to an integer vector in $\tau ( \mathcal{C} \cap H )$, this is equivalent to the statement that
		$\mathcal{C} \cap H \cap \tau^{-1} ( H_{k,(\tilde{d}^T,0^{n-m})^T} )$ contains an integer vector.
		Since 
		$\tau^{-1} ( H_{k,(\tilde{d}^T,0^{n-m} )^T} ) = H_{k + \langle v , d_m \rangle, d_m}$,
		it follows that
		$\mathcal{C} \cap H$ contains an integer vector if and only if there exists an index $k \in \mathbbm{Z} \cap I_{\tau ( \mathcal{C} \cap H )}$ such that
		$\mathcal{C} \cap H \cap H_{k+\langle v , d_m \rangle , d_m}$ contains an integer vector.\\

\noindent If we are given as input a convex set in $\mathbbm{R}^n$ together with an affine subspace of dimension $m$, we need at most $f(m)+1$
solutions of recursive instances, where the dimension of the subspace is $m-1$, since the length of the interval $I_{\tau ( \mathcal{C} \cap H )}$ is at most $f ( m )$. 
Hence, the overall number of recursive calls is at most
$$\prod_{i=1}^m ( f(i) + 1 ) \leq 2^m f(m)^m.$$
\end{proof}

\noindent Obviously, our lattice membership algorithm runs in polynomial space if the bit size of each number computed by the algorithm is polynomial in the bit size of the input instance.
Similarly to the algorithms by Lenstra and Kannan, this cannot be guaranteed for the outline of our lattice membership algorithm presented so far.
In fact, the size of the newly constructed affine hyperplane depends not only on the size of the convex set $\mathcal{C}$ but also on the size of the affine subspace.
To avoid this problem, we use a replacement procedure due to Frank and Tardos, see \cite{pp_ft87},
which we describe in the following section.
	
	\subsection{Modification of the Lattice Membership Algorithm}

\noindent The replacement procedure from Frank and Tardos presented in \cite{pp_ft87} is a polynomial algorithm that on input an affine subspace $H \subseteq \mathbbm{R}^n$ and an additional hyperplane $H_{k,d}$ computes a set of new hyperplanes $H_{\bar{k}_i,\bar{d}_i}$, $i \in J$, with small size.
If the parameters are chosen appropriate depending on the shape of the convex set, then it can be guaranteed that each integer vector in the convex set is contained in the affine subspace $H \cap H_{k,d}$ if and only if it is contained in the intersection $H \cap \bigcap_{i \in J} H_{\bar{k}_i,\bar{d}_i}$.
The following result is a slightly generalization of Lemma 5.1 in \cite{pp_ft87}.
The proof of it together with a short description of the procedure appears in the full version of this paper.

\begin{prop} \label{prop_replacement_procedure}
There exists a replacement procedure, which satisfies the following properties:\\
Given as input a parameter $N \in \mathbbm{N}$, an affine subspace $H$ and an additional affine hyperplane $H_{k,d}$ the replacement procedure computes a set of hyperplanes $H_{\bar{k}_i, \bar{d}_i}$, $i \in J \not= \emptyset$, such that the following holds:
Every integer vector $z \in \bar{B}_n^{(1)} ( 0 , N - 1 ) \cap H$ satisfies $\langle d , z \rangle = k$ if and only if it satisfies $\langle \bar{d}_i , z \rangle = \bar{k}_i$ for all $i \in J$.\\
The size of the vectors $\bar{d}_i \in \mathbbm{Z}^n$ and the numbers $\bar{k}_i \in \mathbbm{Z}$ is at most
		$2^{(n+2)^2} N^n$.
The number of arithmetic operations of the replacement procedure is at most $(n \cdot \log_2 ( N ) )^{\mathcal{O} ( 1 )}$.
\end{prop}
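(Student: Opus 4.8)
The plan is to reduce to the simultaneous Diophantine approximation lemma of Frank and Tardos (Lemma~5.1 in \cite{pp_ft87}, itself based on lattice basis reduction \cite{pp_LLL}) and to recurse on the dimension of $H$. Preserving the affine equation $\langle d,z\rangle=k$ on the integer points of $\bar B_n^{(1)}(0,N-1)\cap H$ is, after appending a coordinate, the same as preserving the linear equation $\langle (d,-k),(z,1)\rangle=0$ on the integer points $(z,1)$ with $\|(z,1)\|_1\le N$; so it suffices to show how to replace a single rational linear form by one with small coefficients defining the same zero set on the relevant integer points, and to carry the subspace $H$ along through the recursion. This is exactly why the procedure is stated relative to an arbitrary affine subspace $H$ and not only for $H=\mathbbm{R}^n$: in the application $H$ is the conjunction of the hyperplanes already fixed by the membership algorithm, and one wants to exchange the newly produced, possibly huge, hyperplane for small ones without changing which integer points of the relevant box lie in $H$ and in it.

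The core is one round of Diophantine approximation. Given the current form $(d,k)$ and accuracy $\varepsilon:=1/(N+1)$, lattice basis reduction returns an integer multiplier $q\ge 1$ and an integer vector $(\bar d_1,-\bar k_1)$ with $\|q(d,-k)-(\bar d_1,-\bar k_1)\|_\infty\le\varepsilon$, where $q$ — and hence the size of $(\bar d_1,\bar k_1)$ — is bounded in terms of $N$ and $n$ only, independently of $\size(d)$ and $\size(k)$. For every integer $z\in\bar B_n^{(1)}(0,N-1)\cap H$ one then has
\[
\bigl|\,\langle \bar d_1,z\rangle-\bar k_1-q\,(\langle d,z\rangle-k)\,\bigr|
\;\le\;\varepsilon\,\|z\|_1+\varepsilon\;\le\;\varepsilon N\;<\;1 ,
\]
so if $\langle d,z\rangle=k$ then $\langle \bar d_1,z\rangle-\bar k_1$ is an integer of absolute value $<1$, i.e.\ $\langle \bar d_1,z\rangle=\bar k_1$. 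Thus the small hyperplane $H_{\bar k_1,\bar d_1}$ is implied by $H_{k,d}$ on the box; the converse may fail, but the integer points of the box satisfying $\langle \bar d_1,z\rangle=\bar k_1$ and violating $\langle d,z\rangle=k$ all lie in the affine subspace $H':=H\cap H_{\bar k_1,\bar d_1}$, which (outside a degenerate case, discussed below) has strictly smaller dimension than $H$. I would recurse on $H'$ with the same $d$, $k$, $N$ to obtain hyperplanes $H_{\bar k_i,\bar d_i}$, $i\ge 2$, and output $J$ consisting of these together with $H_{\bar k_1,\bar d_1}$.

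Correctness is an induction on $\dim H$: for $z$ in the box, $\langle d,z\rangle=k$ forces $\langle \bar d_1,z\rangle=\bar k_1$, hence $z\in H'$, hence by the inductive hypothesis on $H'$ all the equations $\langle \bar d_i,z\rangle=\bar k_i$; conversely, if $z$ satisfies all of them then in particular $z\in H'$, so the inductive hypothesis on $H'$ gives $\langle d,z\rangle=k$. The recursion has depth at most $n$, so $|J|\le n$ and the procedure makes $(n\log_2 N)^{\mathcal{O}(1)}$ arithmetic operations (dominated by the at most $n$ lattice reductions). For the size bound one multiplies the multipliers $q$ over the at most $n$ levels; since each level's approximation depends on $N$ and the current dimension only, this product — and hence the size of every $\bar d_i,\bar k_i$ — is bounded by a function of $N$ and $n$ alone, and with the coordinate-by-coordinate organization of \cite{pp_ft87}, where each level contributes only a factor polynomial in $N$ (times $2^{\mathcal{O}(n)}$), one reaches the claimed bound $2^{(n+2)^2}N^n$.

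\emph{The main obstacle} is this size analysis. First, the output size must be governed by $N$ and $n$ and never by $\size(d)$ or $\size(k)$, which is exactly what rules out a one-shot simultaneous approximation of all coordinates (whose bound would already be of order $N^{\dim}$ per level) and forces the iterated rounding scheme of \cite{pp_ft87}. Second, one must treat the degenerate steps in which $\langle\bar d_1,\cdot\rangle$ is already constant on $H$, so that $H_{\bar k_1,\bar d_1}$ is redundant and $H'=H$: there one re-does the approximation modulo the linear span of the normals defining $H$, so that the dimension genuinely drops and the recursion terminates within $n$ steps while still meeting the coefficient bound. These are the technical points deferred to the full version.
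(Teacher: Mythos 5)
The paper itself does not prove this proposition: it is stated as a slight generalization of Lemma 5.1 of Frank and Tardos \cite{pp_ft87} and the proof is explicitly deferred to the full version, so there is no in-paper argument to compare against. Measured against the Frank--Tardos technique that the statement is built on, your proposal has the right ingredients -- homogenizing $\langle d,z\rangle=k$ to $\langle (d,-k),(z,1)\rangle=0$, one round of simultaneous Diophantine approximation with accuracy $\varepsilon=1/(N+1)$, and the estimate $|\langle\bar d_1,z\rangle-\bar k_1-q(\langle d,z\rangle-k)|\le\varepsilon N<1$ giving the forward implication -- but the iteration is set up in a way that does not work as stated. If you recurse ``on $H'=H\cap H_{\bar k_1,\bar d_1}$ with the same $d$, $k$, $N$,'' the next level performs the identical approximation of the identical vector and returns the identical hyperplane, so no progress is made; and the termination argument via $\dim H'<\dim H$ fails precisely when $\bar d_1$ lies in the span of the normals of $H$, which you relegate to a ``degenerate case'' but which is the whole difficulty. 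The correct iteration (and the one in \cite{pp_ft87}) does not keep $(d,k)$ fixed: it passes to the residual $a_2:=q\,(d,-k)-(\bar d_1,-\bar k_1)$, which on $H_{\bar k_1,\bar d_1}$ represents the same linear form up to the factor $q$, and termination in at most $n+1$ rounds comes from an algebraic invariant -- after normalizing so that $\|a_i\|_\infty=1$, the coordinate attaining the maximum is rounded exactly, so the support of the residual strictly decreases -- not from the geometry of the subspace $H$.

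This confusion also infects your size analysis. You propose to ``multiply the multipliers $q$ over the at most $n$ levels,'' which in dimension $n+1$ with $q\le 2^{\mathcal{O}(n^2)}N^{n+1}$ per level would give a bound of order $N^{n^2}$, far beyond the claimed $2^{(n+2)^2}N^n$; in the residual scheme no such product ever appears, because each output vector satisfies $\|(\bar d_i,\bar k_i)\|_\infty\le q\|a_i\|_\infty+\varepsilon\le q+1$ for a single multiplier $q$, and that is exactly where the stated bound comes from. To repair the proof you should replace the geometric recursion on $H'$ by the residual recursion, prove the support-decrease termination claim, and observe that the subspace $H$ plays no role in the construction at all -- it only weakens the equivalence one must verify (the statement is quantified over $z\in\bar B_n^{(1)}(0,N-1)\cap H\cap\mathbbm{Z}^n$, and the construction already delivers the equivalence for all integer $z$ in the $\ell_1$-ball), so no ``approximation modulo the normals of $H$'' is needed anywhere.
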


\noindent We will use this replacement procedure in the lattice membership algorithm directly before the recursive call of the algorithm with a suitable computed parameter $N$.
This guarantees that we obtain additional hyperplanes whose size depend only on the size of the convex set $\mathcal{C}$, or to be precise on the parameter $N$ defining the radius of a circumscribed $\ell_1$-ball, and not on the size of the affine subspace $H$.
We call this algorithm the modified membership algorithm.
For completeness, a formal description of this algorithm appears in Algorithm \ref{alg_ModMem}.

\begin{figure}[!ht]
\framebox{
\begin{minipage}[b]{15.5cm} \small
\begin{alg} \label{alg_ModMem} {Modified membership algorithm for bounded convex sets}\\
{\tt 
{\bf Input:}
	\begin{itemize}
	 \item A full-dimensional bounded convex set $\mathcal{C}$ from a suitable class $\mathcal{K}$ and
	 \item an affine subspace $H:=\bigcap_{i=m+1}^n H_{k_i,d_i}$, where $d_i \in \mathbbm{Z}^n$ linearly independent and $k_i \in \mathbbm{Z}$ for all $m+1 \leq i \leq n$; alternatively, $H:= \mathbbm{R}^n$.
	\end{itemize}
{\bf Used Subroutine:} Flatness algorithm $\mathcal{A}_{\mathcal{K},f}$ satisfying Assumption \ref{assumption_flatness_algorithm}, replacement procedure.\\[0.25cm]
	{\bf If} $m=0$, check if there exists $z \in \mathbbm{Z}^n \cap H$ satisfying $z \in \mathcal{C}$.\\
	{\bf Otherwise, }
	\begin{enumerate}
	\item \begin{description}
				\item[If] $m=n$, set $v := 0$ and $\bar{V} := I_n$.
				\item[Otherwise,] compute $v \in \mathbbm{Z}^n \cap H$, a basis $B := [b_1, \hdots, b_m , d_{m+1} , \hdots, d_n] \in \mathbbm{Z}^{n \times n}$ of $\mathbbm{R}^n$.\\
			Compute a lattice basis $\bar{D} \in \mathbbm{Z}^{n \times m}$ of $\mathcal{L} ( B^T ) \cap \bigcap_{i=m+1}^n H_{0,e_i}$.\\
			Set $\hat{D}:=[\bar{D} , e_{m+1}, \hdots, e_n] \in \mathbbm{Z}^n$ and $\bar{V} := \hat{D}^{-1} B^T$.
			\end{description}
			Define the bijective mapping $\tau: \mathbbm{R}^n \to \mathbbm{R}^n$, $x \mapsto \bar{V} ( x - v )$.
	\item Apply the algorithm $\mathcal{A}_{\mathcal{K},f}$ with input $\tau ( \mathcal{C} \cap H )$.
 				\begin{description}
 				\item[If] the algorithm outputs that $\tau ( \mathcal{C} \cap H )$ contains an integer vector, output this.
 				\item[Otherwise,] the result is a vector $\tilde{d} \in \mathbbm{Z}^m$ together with an interval $I_{\tau ( \mathcal{C} \cap H )}$.
 				\begin{enumerate}
 				 \item Set $d_m := \bar{V}^T ( \tilde{d}^T , 0^{n-m} )^T \in \mathbbm{Z}^n$.\\
 				 			Compute a parameter $N \in \mathbbm{N}$ such that $\mathcal{C} \subseteq \bar{B}_n^{(1)} ( 0 , N - 1 )$.
 			 	 \item For all $k \in \mathbbm{Z} \cap I_{\tau ( \mathcal{C} \cap H )}$,
 								\begin{itemize}
 									\item apply the replacement procedure to the affine subspace $H$, the hyperplane given by $d_m$ and $k + \langle v , d_m \rangle$ and the parameter $N$.\\
 											The result is an index set $J_k$ and an affine subspace $\bigcap_{i \in I_k} H_{\bar{k}_i,\bar{d}_i}$.
 								  \item Apply the modified membership algorithm the convex set $\mathcal{C}$ and the affine subspace $H \cap 												\bigcap_{i\in J_k} H_{\bar{k}_i,\bar{d}_i}$.\\
												As a result, we get the information if $\mathcal{C} \cap H \cap \bigcap_{i\in J_k} H_{\bar{k}_i,\bar{d}_i}$ contains an integer vector or not.
								 \end{itemize}
 								 \item \label{alg_step_decision} If there exists an index $k$ such that $\mathcal{C} \cap H \cap \bigcap_{i\in J_k} H_{\bar{k}_i,\bar{d}_i}$ contains an integer vector, output that $\mathcal{C} \cap H$ contains an integer vector.\\
 								 	Otherwise, output that $\mathcal{C} \cap H$ does not contain an integer vector.			
 				\end{enumerate}			
 					\end{description}
\end{enumerate}
}\end{alg}
\end{minipage} }
\end{figure}

\begin{thm} \label{alg_mod_mem_result}
Let $\mathcal{K}$ be a suitable class of bounded convex sets satisfying Assumption \ref{assumption_flatness_algorithm}.\\
Given a full-dimensional bounded convex set $\mathcal{C} \subseteq \mathbbm{R}^n$ from the class $\mathcal{K}$ and an affine subspace $H$, the modified lattice membership algorithm, Algorithm \ref{alg_ModMem}, decides correctly whether $\mathcal{C} \cap H$ contains an integer vector or not.
Each recursive instance consists of the original convex set $\mathcal{C}$ and an affine subspace of size at most 
	$\max \{ \size ( H ) , 2^{(n+2)^2} N^n \}$, where $\mathcal{C} \subseteq \bar{B}_n^{(1)} ( 0 , N - 1 )$.
\end{thm}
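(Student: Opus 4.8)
The plan is to prove both assertions of the theorem --- correctness and the size bound --- simultaneously by induction on the dimension $m$ of the affine subspace $H$, reusing the analysis of Algorithm~\ref{alg_membership} (Theorem~\ref{alg_mem_result}) for everything Algorithm~\ref{alg_ModMem} inherits unchanged and invoking Proposition~\ref{prop_replacement_procedure} for the single new step. For $m=0$ the two algorithms coincide, so there is nothing to prove. For $m\geq 1$, I would first observe that Algorithm~\ref{alg_ModMem} runs exactly like Algorithm~\ref{alg_membership} through step~2: it builds the bijective affine map $\tau$ of Claim~\ref{claim_properties_transformation} (so that $\tau(\mathcal{C}\cap H)\in\mathcal{K}$, as argued before Algorithm~\ref{alg_membership}) and calls the flatness algorithm on $\tau(\mathcal{C}\cap H)$. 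If $\mathcal{A}_{\mathcal{K},f}$ reports an integer point, Claim~\ref{claim_properties_transformation} gives correctness at once; otherwise we get $\tilde d\in\mathbbm{Z}^m\setminus\{0\}$ and an interval $I_{\tau(\mathcal{C}\cap H)}$ of length at most $f(m)$, and, reusing verbatim the identity $\tau^{-1}(H_{k,(\tilde d^T,0^{n-m})^T})=H_{k+\langle v,d_m\rangle,d_m}$ from the proof of Theorem~\ref{alg_mem_result}, I conclude that $\mathcal{C}\cap H$ contains an integer vector if and only if $\mathcal{C}\cap H\cap H_{k+\langle v,d_m\rangle,d_m}$ does for some integer $k\in I_{\tau(\mathcal{C}\cap H)}$.

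The only new point is the replacement step. Having computed $N$ with $\mathcal{C}\subseteq\bar{B}_n^{(1)}(0,N-1)$, I would apply Proposition~\ref{prop_replacement_procedure} to $N$, the subspace $H$, and the hyperplane $H_{k+\langle v,d_m\rangle,d_m}$ to obtain hyperplanes $H_{\bar k_i,\bar d_i}$, $i\in J_k$. Since every integer point of $\mathcal{C}\cap H$ lies in $\bar{B}_n^{(1)}(0,N-1)\cap H$, the proposition gives
\[
\mathbbm{Z}^n\cap\mathcal{C}\cap H\cap H_{k+\langle v,d_m\rangle,d_m}\;=\;\mathbbm{Z}^n\cap\mathcal{C}\cap H\cap\bigcap_{i\in J_k}H_{\bar k_i,\bar d_i},
\]
so the recursive call on $(\mathcal{C},\,H\cap\bigcap_{i\in J_k}H_{\bar k_i,\bar d_i})$ answers exactly the question a call on $(\mathcal{C},\,H\cap H_{k+\langle v,d_m\rangle,d_m})$ would, and the final decision step then combines the $k$-indexed answers correctly. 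To close the induction I must check that the new affine subspace has dimension at most $m-1$: since $\tilde d\neq 0$ induces a nonzero linear functional on $\tau(H)=\bigcap_{i=m+1}^nH_{0,e_i}\cong\mathbbm{R}^m$, the hyperplane $H_{k+\langle v,d_m\rangle,d_m}$ does not contain $H$, so $H\cap H_{k+\langle v,d_m\rangle,d_m}$ has dimension exactly $m-1$; provided $N$ is chosen large enough that $\bar{B}_n^{(1)}(0,N-1)\cap H$ already sees two distinct values of $\langle d_m,\cdot\rangle$ on integer points, the displayed equality forces $H\not\subseteq\bigcap_{i\in J_k}H_{\bar k_i,\bar d_i}$. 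Degenerate replacement outputs are harmless: redundant constraints are pruned to a linearly independent consistent system, and an inconsistent system is detected when an integer point of the new subspace is sought (via the Hermite normal form), in which case ``no integer vector'' is reported, consistently with the right-hand side above being empty. The induction hypothesis then yields correctness for all recursive calls.

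For the size bound, I would use that the parameter $N$ depends on $\mathcal{C}$ alone and $\mathcal{C}$ never changes along the recursion, so the bound $2^{(n+2)^2}N^n$ of Proposition~\ref{prop_replacement_procedure} on the size of every newly produced hyperplane is uniform over all recursive instances. Hence if the incoming subspace has size at most $\max\{\size(H),2^{(n+2)^2}N^n\}$ then so does the outgoing one, being cut out by the incoming defining hyperplanes together with hyperplanes of size at most $2^{(n+2)^2}N^n$; a one-line induction starting from the original $H$ propagates the bound $\max\{\size(H),2^{(n+2)^2}N^n\}$ everywhere. The step I expect to demand the most care is precisely the dimension drop after replacement --- coordinating the choice of $N$ with Proposition~\ref{prop_replacement_procedure} so that the family $\{H_{\bar k_i,\bar d_i}\}_{i\in J_k}$ genuinely lowers the dimension rather than merely preserving the relevant integer points, and verifying that degenerate outputs are routed to the correct answer; the rest is bookkeeping on top of Theorem~\ref{alg_mem_result} and Proposition~\ref{prop_replacement_procedure}.
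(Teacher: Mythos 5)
Your proposal is correct and follows essentially the same route as the paper: the paper's (three-line) proof likewise invokes Proposition \ref{prop_replacement_procedure} to identify the integer points of $\mathcal{C}\cap H\cap H_{k+\langle v,d_m\rangle,d_m}$ with those of $\mathcal{C}\cap H\cap\bigcap_{i\in J_k}H_{\bar k_i,\bar d_i}$ (using $\mathcal{C}\subseteq\bar B_n^{(1)}(0,N-1)$), and then defers both correctness and the size bound to Theorem \ref{alg_mem_result} and Proposition \ref{prop_replacement_procedure}. Your additional care about the dimension drop after replacement and about degenerate outputs is a legitimate point the paper does not spell out, but it elaborates rather than alters the argument.
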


\begin{proof}
Since $\mathcal{C} \subseteq \bar{B}_n^{(1)} ( 0 , N-1)$, for all $k \in \mathbbm{Z}$, the convex set $\mathcal{C}$ contains an integer vector from $H \cap H_{k+\langle v , d_m \rangle , d_m}$, if and only if it contains an integer vector from
$H \cap \bigcap_{i\in I_k} H_{\bar{k}_i,\bar{d}_i}$, (see Proposition \ref{prop_replacement_procedure}).
Hence, the correctness of the algorithm follows directly Theorem \ref{alg_mem_result}.
Also, the upper bound on the size of the recursive instances follows directly from Proposition \ref{prop_replacement_procedure}.
\end{proof}
	
\noindent  Obviously, we are able to adapt this general framework for all classes of bounded convex sets for which there exists a flatness algorithm.
For polytopes and $\ell_p$-balls we will see that we are able to do this using so called Löwner-John ellipsoids.

\section{A lattice membership algorithm for polytopes} \label{sec_LM_alg_polytope}

\noindent In this section, we consider full-dimensional polytopes given by a matrix $A \in \mathbbm{Z}^{s\times n}$ and a vector $\beta \in \mathbbm{Z}^s$.
Obviously, the class of all full-dimensional polytopes is closed under intersection with hyperplanes and under bijective affine transformation.
Furthermore, in Section \ref{sec_flatness_algorithm_polytopes}, we will show that there exists a flatness algorithm for polytopes.

	\begin{thm} \label{thm_flatness_algorithm_polytopes}
There exists a flatness algorithm that for all full-dimensional polytopes $P \subseteq \mathbbm{R}^n$ outputs one of the following:
Either it outputs that $P$ contains an integer vector or it outputs a vector $\tilde{d} \in \mathbbm{Z}^n$ and an interval $I_P$ of length at most $2n^2$ such that $P$ contains an integer vector if and only if there exists $k \in \mathbbm{Z} \cap I_P$ such that $P \cap H_{k,\tilde{d}}$ contains an integer vector.\\
The number of arithmetic operations of the flatness algorithm is 
$$s^{\mathcal{O} ( 1 )} \log_2 ( r ) n^{n/(2e)+o(n)}$$
and each number computed by the algorithm has size at most $r^{n^{\mathcal{O} ( 1 )}}$, where $r$ is an upper bound on the size of the polytope and $s$ is the number of constraints defining the polytope.
\end{thm}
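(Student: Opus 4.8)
The plan is to realize the flatness algorithm for a full-dimensional polytope $P = \{x \in \mathbbm{R}^n : Ax \le \beta\}$ via the classical Lenstra-type strategy built on a Löwner–John ellipsoid. First I would compute (using the ellipsoid method, as the polytope is presented by $s$ explicit linear constraints, so a separation oracle is immediate) a pair of concentric ellipsoids $E(c, M/(n+1)^2) \subseteq P \subseteq E(c, M)$, i.e. an ellipsoid $E = \{x : (x-c)^T M^{-1} (x-c) \le 1\}$ such that the shrunk copy by factor $1/(n+1)$ lies inside $P$ and $E$ itself contains $P$ (this is the standard weak Löwner–John guarantee achievable in polynomial time; if $P$ turns out to be lower-dimensional or empty this is detected along the way, and if $P$ is ``fat enough'' to already contain a lattice point one can read that off). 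Writing $M = T T^T$, the linear map $x \mapsto T^{-1}(x-c)$ sends $E$ to a Euclidean ball of radius $1$ and the inner ellipsoid to a ball of radius $1/(n+1)$; under this map $\mathbbm{Z}^n$ becomes a lattice $L' = T^{-1}\mathbbm{Z}^n$.

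Next I would invoke a lattice-basis-reduction / shortest-nonzero-dual-vector routine on $L'$ together with Khinchine's flatness theorem: either the width of $P$ (equivalently of the transformed body) in some integer direction $\tilde d$ is small — at most roughly the flatness constant, which for convex bodies sandwiched between homothetic ellipsoids one can take to be $O(n^2)$ after accounting for the $(n+1)$-factor loss in the Löwner–John step — in which case we output $\tilde d$ and the interval $I_P = [\min_{x\in P}\langle \tilde d, x\rangle, \max_{x\in P}\langle \tilde d, x\rangle]$ (whose endpoints are computed by two linear programs over $P$), or else the inner ball is large enough in every lattice direction that it must contain a lattice point of $L'$, whence $P$ contains an integer vector and we output that. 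The crucial quantitative point is that the flatness bound coming out of this construction is a polynomial in $n$ — indeed $2n^2$ as claimed — which is what feeds into Theorem~\ref{alg_mem_result} to give the $n^{(2+o(1))n}$ overall count; the $n^{n/(2e)+o(n)}$ appearing in the statement of this particular theorem is the cost of the one HKZ-type reduction needed to find the flat direction, via the Hanrot–Stehlé analysis of Kannan's enumeration.

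For the bit-size bookkeeping I would track that the ellipsoid method, run to the required precision, produces $c$ and $M$ of size $r^{n^{O(1)}}$; the matrix $T$ (or a sufficiently good rational approximation of a Cholesky factor) and hence the reduced basis of $L'$ and the output vector $\tilde d$ inherit size $r^{n^{O(1)}}$; and the interval endpoints, being LP optima over $P$, have size polynomial in $r$ and $s$. The number of arithmetic operations is dominated by the lattice reduction, giving the $s^{O(1)}\log_2(r)\, n^{n/(2e)+o(n)}$ bound, the $s^{O(1)}$ and $\log_2(r)$ factors absorbing the ellipsoid-method iterations and the arithmetic precision. The main obstacle — and the point deferred to Section~\ref{sec_flatness_algorithm_polytopes} — is making the flatness argument genuinely constructive with a clean polynomial constant: Khinchine's theorem is existential, so one must combine it with an explicit reduction step so that when no short lattice direction is found one can certify that the inscribed ball is large enough to contain a lattice point, and one must control the constant through the Löwner–John approximation loss to land exactly at $2n^2$.
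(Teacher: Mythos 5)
Your proposal follows essentially the same route as the paper: compute an approximate L\"owner--John ellipsoid for $P$, find the ellipsoid's flatness direction by a Kannan-style SVP computation (your ``shortest nonzero dual vector'' of $L'=T^{-1}\mathbbm{Z}^n$ is exactly the paper's shortest vector in $\mathcal{L}(Q)$ with $D=Q^TQ$), and then either certify an integer point via the flatness theorem for ellipsoids (which the paper gets from Banaszczyk's transference bound) or output the direction with an interval of length at most the approximation factor times the ellipsoid width, yielding $2n^2$. The only cosmetic differences are that the paper uses a $2n$-approximate L\"owner--John ellipsoid with threshold $w\ge n$ and reads the interval endpoints off the circumscribed ellipsoid in closed form rather than via two LPs.
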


\noindent Using this result, we can adapt the algorithmic framework from Section \ref{sec_algorithm}, to solve the lattice membership problem for polytopes and we obtain a lattice membership algorithm for polytopes.
To compute the parameter $N$, which defines a circumscribed $\ell_1$-ball of the polytope, we use that the vertices of every full-dimensional polytope given by integral constraints are at most $n^{(n+1)/2} \size ( P )^n$ (in absolute value).
Hence, we set $N$ as $r^{(n+3)/2} r^n$, where $r$ is an upper bound on the size of the polytope.
A detailed description of the algorithm is given in Algorithm \ref{alg_membership_polyhedron}.

\begin{figure}[!ht]
\framebox{
\begin{minipage}[b]{15.5cm} \small
\begin{alg} \label{alg_membership_polyhedron} {\bf Lattice membership algorithm for polytopes}\\
{\tt
{\bf Input:}
	\begin{itemize}
		\item A full-dimensional polytope $P$ given by $A \in \mathbbm{Z}^{s \times n}$ and $\beta \in \mathbbm{Z}^s$ with size $r_P$ and
		\item an affine subspace $H := \bigcap_{i=m+1}^n H_{k_i,d_i}$ given by $d_i \in \mathbbm{Z}^n$ linearly independent and $k_i \in \mathbbm{Z}$, $m+1 \leq i \leq n$; alternatively, $H:= \mathbbm{R}^n$.
	\end{itemize}
{\bf Used Subroutines:} flatness algorithm for polytopes, replacement procedure.\\[0.25cm]
	{\bf If} $m=0$, check if there exists $z \in \mathbbm{Z}^n \cap H$ satisfying $z \in P$.\\
	{\bf Otherwise, }
	\begin{enumerate}
	\item \begin{description}
					\item[If] $m=n$, set $v := 0$ and $\bar{V} := I_n$.
	 				\item[Otherwise,] compute $v \in \mathbbm{Z}^n \cap H$, a basis $B := [b_1, \hdots, b_m , d_{m+1} , \hdots, d_n] \in \mathbbm{Z}^{n \times n}$ of $\mathbbm{R}^n$.\\
			Compute a lattice basis $\bar{D} \in \mathbbm{Z}^{n \times m}$ of $\mathcal{L} ( B^T ) \cap \bigcap_{i=m+1}^n H_{0,e_i}$.\\
			Set $\hat{D}:=[\bar{D} , e_{m+1}, \hdots, e_n] \in \mathbbm{Z}^n$ and $\bar{V} : = \hat{D}^{-1} B^T$.
			\end{description}
	\item Apply the flatness algorithm for polytopes to the polytope $\tilde{P}$ given by $\tilde{A} \in \mathbbm{Z}^{s \times m}$ and $\beta - A v \in \mathbbm{Z}^s$, where $\tilde{A}$ is the matrix which consists of the first $m$ columns of the matrix $A \bar{V}^{-1}$. 
 				\begin{description}
 				\item[If] the algorithm outputs that $\tilde{P}$ contains an integer vector, output that $P \cap H$ contains an integer vector.
 				\item[Otherwise,] the result is a vector $\tilde{d} \in \mathbbm{Z}^m$ together with an interval $I_{\tilde{P}}$.
 				\begin{enumerate}
 				 \item Set $d_m := B ( \tilde{D}^T )^{-1} ( \tilde{d}^T , 0^{n-m} )^T \in \mathbbm{Z}^n$
 				  			and $N:= n^{(n+3)/2} r_P^n+1$.
 			 	 \item For all $k \in \mathbbm{Z} \cap I_{\tilde{P}}$,
 								\begin{itemize}
 									\item apply the replacement procedure to the affine subspace $H$, the hyperplane given by $d_m$ and $k + \langle v , d_m \rangle$ and the parameter $N$.\\
 											The result is an index set $J_k$ and an affine subspace $\bigcap_{i \in J_k} H_{\bar{k}_i,\bar{d}_i}$.
 								  \item Apply the modified membership algorithm to the polytope $P$ and the affine subspace $H \cap 												\bigcap_{i\in J_k} H_{\bar{k}_i,\bar{d}_i}$.\\
												As a result, we get the information if $P \cap H \cap \bigcap_{i\in J_k} H_{\bar{k}_i,\bar{d}_i}$ contains an integer vector or not.
								 \end{itemize}
 								 \item \label{alg_step_decision} If there exists an index $k$ such that $P \cap H \cap \bigcap_{i\in J_k} H_{\bar{k}_i,\bar{d}_i}$ contains an integer vector, output that $P \cap H$ contains an integer vector.\\
 								 	Otherwise, output that $P \cap H$ does not contain an integer vector.
 					
 				\end{enumerate}			
 					\end{description}
\end{enumerate}}
\end{alg}
\end{minipage}}
\end{figure}

\begin{thm} \label{result_membership_polyhedron}
Let $P \subseteq \mathbbm{R}^n$ be a full-dimensional polytope given by a matrix $A \in \mathbbm{Z}^{s \times n}$ and a vector $\beta \in \mathbbm{Z}^s$.
Let $H \subseteq \mathbbm{R}^n$ be an affine subspace of dimension $m \leq n$.\\
Given as input $P$ and $H$, the lattice membership algorithm for polytopes, Algorithm \ref{alg_membership_polyhedron}, decides correctly whether $P \cap H$ contains an integer vector.
The number of arithmetic operations of the algorithm is 
$$( n \cdot s  \log_2 ( r ))^{\mathcal{O} ( 1 )} m^{(2+o(1))m},$$
where $r$ is an upper bound on the size of the polytope $P$ and the affine subspace $H$.
The algorithm runs in polynomial space and each number computed by the algorithm has size at most $r^{n^{\mathcal{O} ( 1 )}}$, that means bit size at most $n^{\mathcal{O} ( 1 )} \log_2 ( r )$.
\end{thm}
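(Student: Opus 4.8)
The plan is to verify that the algorithm for polytopes is simply the modified membership algorithm (Algorithm \ref{alg_ModMem}) instantiated with the class $\mathcal{K}$ of all full-dimensional polytopes and with the flatness algorithm of Theorem \ref{thm_flatness_algorithm_polytopes}, whose flatness function is $f(m) = 2m^2$. Correctness therefore follows immediately from Theorem \ref{alg_mod_mem_result} once we check three things: (i) the class of full-dimensional polytopes is suitable, i.e.\ closed under bijective affine transformations and under intersection with hyperplanes $H_{0,e_i}$ — this is routine, since $A x \le \beta$ pulls back under $x \mapsto \bar V(x-v)$ to $A\bar V^{-1} y \le \beta - Av$, and intersecting with $H_{0,e_i}$ just drops the $i$-th coordinate; (ii) the polytope $\tilde P$ actually computed in Step~2 of Algorithm \ref{alg_membership_polyhedron} is the correct representation of $\tau(\mathcal{C}\cap H)$ viewed as a full-dimensional polytope in $\mathbbm{R}^m$ — this needs one to trace through Claim \ref{claim_properties_transformation}, noting $\bar V = \hat D^{-1}B^T$, so $A\mathcal{C} = A\bar V^{-1}\bar V\mathcal{C}$ and the constraint set of $\tau(\mathcal{C}\cap H)$ is given by the first $m$ columns of $A\bar V^{-1}$ together with the right-hand side $\beta - Av$; and (iii) the choice $N := n^{(n+3)/2} r_P^n + 1$ indeed satisfies $P \subseteq \bar B_n^{(1)}(0,N-1)$, which follows from the standard bound that the vertices of a full-dimensional rational polytope given by integral data of size $r_P$ have coordinates bounded by $n^{(n+1)/2}r_P^n$ (a Cramer's-rule/Hadamard estimate), hence $\ell_1$-norm at most $n\cdot n^{(n+1)/2}r_P^n = n^{(n+3)/2}r_P^n$.

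For the running-time bound I would argue as follows. By Theorem \ref{alg_mem_result} the recursion tree has at most $2^m f(m)^m = 2^m(2m^2)^m = (4m^2)^m = m^{(2+o(1))m}$ leaves, and each internal node spawns at most $f(m)+1 \le 2m^2+1$ children, so the total number of recursive calls is $m^{(2+o(1))m}$. At each node the work consists of: the linear-algebra preprocessing in Step~1 (computing $v$ via Hermite normal form, extending to a basis $B$, computing the lattice basis $\bar D$ via Micciancio's algorithm \cite{pp_MiccRed}, and the matrix products defining $\bar V$ and $\tilde A$), one call to the flatness algorithm for polytopes, and for each of the $O(m^2)$ values of $k$ one call to the replacement procedure (Proposition \ref{prop_replacement_procedure}) with parameter $N = n^{(n+3)/2}r_P^n+1$, so $\log_2 N = n^{\mathcal{O}(1)}\log_2 r$. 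The flatness algorithm costs $s^{\mathcal{O}(1)}\log_2(r)\, n^{n/(2e)+o(n)}$ arithmetic operations by Theorem \ref{thm_flatness_algorithm_polytopes}, which is dominated by $m^{(2+o(1))m}$ times a polynomial factor; the replacement procedure costs $(n\log_2 N)^{\mathcal{O}(1)} = (n\log_2 r)^{\mathcal{O}(1)}$ each; and the preprocessing is polynomial in the current problem size. Multiplying the per-node cost by the number of nodes gives the claimed bound $(n\cdot s\cdot \log_2 r)^{\mathcal{O}(1)} m^{(2+o(1))m}$.

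For the space and bit-size bounds I would invoke Theorem \ref{alg_mod_mem_result}: every recursive instance keeps the \emph{original} polytope $P$ (size $r_P\le r$) and an affine subspace of size at most $\max\{\size(H), 2^{(n+2)^2}N^n\}$. Since $N = n^{(n+3)/2}r_P^n+1 \le r^{\mathcal{O}(n)}$, we get $2^{(n+2)^2}N^n \le r^{n^{\mathcal{O}(1)}}$, so after the first level every subspace has size $r^{n^{\mathcal{O}(1)}}$, i.e.\ bit size $n^{\mathcal{O}(1)}\log_2 r$; the initial $H$ may be assumed of comparable size without loss of generality. All numbers produced inside a node — by Hermite normal form computation, by the basis extension, by Micciancio's algorithm, by the matrix inversions defining $\bar V^{-1}$ and $d_m = B(\tilde D^T)^{-1}(\tilde d^T, 0^{n-m})^T$, by the flatness algorithm (size $r^{n^{\mathcal{O}(1)}}$ by Theorem \ref{thm_flatness_algorithm_polytopes}), and by the replacement procedure (size $2^{(n+2)^2}N^n = r^{n^{\mathcal{O}(1)}}$) — are polynomially bounded in the bit size of the current instance, hence of bit size $n^{\mathcal{O}(1)}\log_2 r$. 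Since the recursion depth is $n$ and each level stores only polynomially many such numbers, the total space is polynomial in $n^{\mathcal{O}(1)}\log_2 r$. I expect the main obstacle to be the bookkeeping in point (ii): one must be careful that the matrix $\bar V$ built in Algorithm~\ref{alg_membership_polyhedron} (which uses $\bar D$, not $\hat D$, in the formula for $d_m$) really does correspond to the transformation $\tau$ of Claim~\ref{claim_properties_transformation}, and that pulling the inequality description of $\mathcal{C}$ through $\tau^{-1}$ and then restricting to the first $m$ coordinates yields exactly $\tilde P$ — getting the transpose/inverse placements right and confirming that the dropped coordinates contribute nothing (because $\tau(\mathcal{C}\cap H)\subseteq\bigcap_{i>m}H_{0,e_i}$) is the only genuinely delicate point; everything else is a direct appeal to Theorems~\ref{alg_mem_result} and~\ref{alg_mod_mem_result} together with the stated bounds on the subroutines.
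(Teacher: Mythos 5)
Your proposal is correct and follows essentially the same route as the paper: correctness via Theorem \ref{alg_mod_mem_result} after identifying $\tau(P\cap H)$ with $\tilde P$ and justifying the choice of $N$ by the Cramer/Hadamard vertex bound, the running time by multiplying the $m^{(2+o(1))m}$ recursion count against the per-node cost of the flatness algorithm and replacement procedure, and the bit-size/space bound by observing that the replacement procedure makes every recursive subspace's size depend only on $r_P$, so all numbers stay of size $r^{n^{\mathcal{O}(1)}}$. The only cosmetic slip is writing the flatness cost as $n^{n/(2e)+o(n)}$ rather than $m^{m/(2e)+o(m)}$ (it is invoked on the $m$-dimensional polytope $\tilde P$), which is exactly what makes it dominated by $m^{(2+o(1))m}$ as you claim.
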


\begin{proof}
Since $P$ is a polytope with size at most $r$, it is contained in the ball
$\bar{B}_n^{(\infty)} ( 0 , t )$ with $t = n^{(n+1)/2} r^n$, see Lemma \ref{lemma_polyhedron_out_box} in the Appendix.
Hence, $P \subseteq \bar{B}_n^{(1)} ( 0 , n^{n/2+1} r^n )$ and the parameter $N$ computed by the algorithm satisfies
$P \subseteq \bar{B}_n^{(1)} ( 0 , N-1 )$.\\
The transformation $\tau: x \mapsto \bar{V} ( x - v )$ maps the intersection $P \cap H$ to the polytope
$\{ x \in \mathbbm{R}^n | A \bar{V}^{-1}  x \leq \beta - A v \}
\cap \bigcap_{i=m+1}^n H_{0,e_i}$
which can be identified with the full-dimensional polytope
$\tilde{P} = \{ x \in \mathbbm{R}^m | \tilde{A} x \leq \beta - A v \}$,
where $\tilde{A} \in \mathbbm{Z}^{s \times m}$ consists of the first $m$ columns of the matrix $A \bar{V}^{-1}$.
Hence, the correctness of the algorithm follows from Theorem \ref{alg_mod_mem_result}.\\

\noindent The lattice membership algorithm for polytopes gets as input a polytope $P \subseteq \mathbbm{R}^n$ and an affine subspace $H$ of size of at most $r$.
It is easy to see that each number computed by the algorithm in one reduction step has size at most $r^{n^{\mathcal{O} ( 1 ) }}$.
The recursive instances of the lattice membership algorithm consist of the original input polytope $P$ and a new affine subspace.
The new affine subspace is the intersection of the original subspace $H$ and another subspace $\bigcap_{i \in J_k} H_{\bar{k}_i,\bar{d}_i}$.
Therefore, we denote the size of the polytope $P$ separately by $r_P$.
Obviously, $r \geq r_P$.
According to Theorem \ref{alg_mod_mem_result}, the size of the affine subspace used for the recursive calls of the algorithm is at most 
$$\max \{ \size ( H ) , r_P^{n^{\mathcal{O} ( 1 )}} \}
= \max \{ r , r_P^{n^{\mathcal{O} ( 1 )}} \}.$$
Especially, the replacement procedure guarantees that the size of the additional hyperplanes $H_{\bar{k}_i,\bar{d}_i}$, $i \in J_k$, depends only on the size of the polytope $r_P$.
Hence, it follows that
$$\max \{r_P^{n^{\mathcal{O} ( 1 )}} , r \}^{n^{\mathcal{O} ( 1 )}} = r^{n^{\mathcal{O} ( 1 )}}$$
is an upper bound on the size of each number computed by the lattice membership algorithm.

Finally, we give an upper bound on $T ( m , n , s , r_P , r )$, the number of arithmetic operations of the lattice membership algorithm.
Obviously, $T ( 0 , n , s , r_P , r ) = ( s \cdot n )^{\mathcal{O} ( 1 )}$.
Given a full-dimensional polytope in $\mathbbm{R}^n$ together with an affine subspace of dimension $m > 0$, the computation of the affine bijective transformation $\tau$ can be done using at most $n^{\mathcal{O} ( 1 )}$ arithmetic operations.
The number of arithmetic operations of the flatness algorithm depends on the size of the input polytope $\tilde{P}$.
Hence, the number of arithmetic operations of the flatness algorithm is at most 
$$( s \cdot n )^{\mathcal{O} ( 1 )} \log_2 ( r ) m^{m/(2e) + o(m)},$$
as stated in Theorem \ref{thm_flatness_algorithm_polytopes}.
The number of arithmetic operations of the replacement procedure is polynomial in $n$ and $\log_2 ( N )$.
By our definition of $N$, we have
$\log_2 ( N ) \leq (n^2+1) \log_2 ( 2 m r )$.
This shows that the number of arithmetic operations of the replacement procedure is at most
$(n \cdot \log_2 ( r_P ) )^{\mathcal{O} ( 1 )}$, see Proposition \ref{prop_replacement_procedure}.
The number of recursive calls of the lattice membership algorithm is determined by the length of the interval computed by the flatness algorithm.
The length of this interval is at most $2m^2$, see Theorem \ref{alg_mem_result}.
Hence, we obtain the following recursion of the number of arithmetic operations
\begin{equation*}
T ( m , n , s , r_P , r )
 \leq (n s \log_2 ( r ))^{\mathcal{O} ( 1 )} m^{m/(2e)+o(m)} + ( 2 m^2 + 1 ) \cdot T ( m - 1 , n , s , r_P , \max \{ r , r_P^{n^{\mathcal{O} ( 1 )}} \}  ).
\end{equation*}
Here, the main observation is that the size of the polytope does not change, whereas the size of the new subspace is the maximum of the size of the original input subspace and $r_P^{n^{\mathcal{O} ( 1 )}}$.
Hence, it follows by induction, that for all $m \geq 0$,
$$T ( m , n , s , r_P , r ) \leq
 ( n s \log_2 (r))^{\mathcal{O} ( 1 )}  \cdot m^{(2+o(1))m}.$$
\end{proof}

\noindent If we choose as subspace $H = \mathbbm{R}^n$, we obtain an algorithm for the lattice membership problem.

\begin{cor}
The lattice membership algorithm for polytopes, Algorithm \ref{alg_membership_polyhedron}, solves $\lmp$ for all full-dimensional polytopes given by a matrix $A \in \mathbbm{Z}^{s \times n}$ and a vector $\beta \in \mathbbm{Z}^s$ correctly.
The number of arithmetic operations of the algorithm is at most
$s^{\mathcal{O} (1)} \log_2 ( r )^{\mathcal{O} ( 1 )} n^{(2+o(1))n}$.
The algorithm runs in polynomial space and each number produced by the algorithm has bit size at most $n^{\mathcal{O} ( 1 )} \log_2 ( r )$.
\end{cor}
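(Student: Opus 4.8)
The plan is to derive this corollary as the special case $H = \mathbbm{R}^n$ of Theorem \ref{result_membership_polyhedron}, combined with the reduction to the integer lattice recorded at the beginning of Section \ref{sec_algorithm}. There is essentially no new content; the work is bookkeeping.

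\textbf{Correctness.} Given a full-dimensional polytope $P \subseteq \mathbbm{R}^n$ specified by $A \in \mathbbm{Z}^{s \times n}$ and $\beta \in \mathbbm{Z}^s$, I would run Algorithm \ref{alg_membership_polyhedron} on $P$ and $H := \mathbbm{R}^n$. Then $m = n$, the algorithm takes the branch $v = 0$, $\bar V = I_n$, and $P \cap H = P$, so Theorem \ref{result_membership_polyhedron} immediately yields that the algorithm correctly decides whether $P$ contains an integer vector; this is exactly $\lmp$ for the lattice $\mathbbm{Z}^n$ and the convex set $P$. For a general lattice $L = \mathcal{L}(B)$ and a bounded polytope $\mathcal{C} \subseteq \Span(L)$, I would first pass to $B^{-1}\mathcal{C}$: the class of full-dimensional polytopes is closed under bijective affine transformation, so $B^{-1}\mathcal{C}$ is again an integral polytope of size polynomially bounded in the input, and by the remark opening Section \ref{sec_algorithm} it contains an integer vector if and only if $\mathcal{C}$ contains a vector of $L$; the previous case then applies.

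\textbf{Complexity.} Since $H = \mathbbm{R}^n$ carries no data, an upper bound $r$ on the size of the instance is simply an upper bound on $\size(P)$. Substituting $m = n$ into the bound of Theorem \ref{result_membership_polyhedron} gives $(n \cdot s \cdot \log_2(r))^{\mathcal{O}(1)} \, n^{(2 + o(1))n}$ arithmetic operations; absorbing the factor $n^{\mathcal{O}(1)}$ into $n^{o(n)}$, hence into the $o(1)$ in the exponent, this is $s^{\mathcal{O}(1)} \log_2(r)^{\mathcal{O}(1)} n^{(2 + o(1))n}$, as claimed. For the space bound, Theorem \ref{result_membership_polyhedron} guarantees that each number computed has size at most $r^{n^{\mathcal{O}(1)}}$, i.e. bit size at most $n^{\mathcal{O}(1)} \log_2(r)$, which is polynomial in the input size; since the recursion decreases the subspace dimension $m$ by one at each level, the recursion depth is $n$, and only polynomially many numbers of this bit size are stored per level, so the whole computation runs in polynomial space.

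\textbf{Main obstacle.} I do not expect a genuine difficulty: the statement is a direct instantiation of the preceding theorem. The only points requiring care are (i) confirming that setting $H = \mathbbm{R}^n$ really removes the dependence of the size parameter on the subspace, so that $r$ may be read as $\size(P)$ alone, and (ii) the bookkeeping that lets the polynomial-in-$n$ prefactors be swallowed by the $n^{(2+o(1))n}$ term while keeping all stored numbers polynomially sized along a depth-$n$ recursion — and (ii) is precisely what Theorem \ref{result_membership_polyhedron} already delivers.
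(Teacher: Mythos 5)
Your proposal is correct and follows exactly the route the paper takes: the corollary is stated as the immediate specialization of Theorem \ref{result_membership_polyhedron} to $H = \mathbbm{R}^n$ (so $m = n$ and the size parameter reduces to $\size(P)$), with the polynomial prefactor absorbed into the $n^{(2+o(1))n}$ term; the paper gives no further argument beyond this instantiation. Your additional remarks on reducing a general lattice to $\mathbbm{Z}^n$ and on the recursion depth are consistent with the discussion at the start of Section \ref{sec_algorithm} and add nothing that conflicts with the paper.
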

	
\section{A lattice membership algorithm for $\ell_p$-balls, $1 < p < \infty$} \label{sec_LM_alg_lp}

	\noindent Next, we use the algorithm framework presented in Section \ref{sec_algorithm} to obtain an algorithm that solves $\lmp$ for $\ell_p$-balls with $1 < p < \infty$.
Since the set of all $\ell_p$-balls is not closed under bijective affine transformation,
we consider in the following a generalization of them.
We consider norms, whose unit balls are the linear map of the $\ell_p$-unit ball.

\begin{definition}
Let $V \in \mathbbm{R}^{n \times n}$ be nonsingular.
For a vector $x \in \mathbbm{R}^n$, we define
$$\| x \|_p^V := \| V^{-1} x \|_p.$$
\end{definition}

\noindent Obviously, the mapping $\| \cdot \|_p^V$ defines a norm on $\mathbbm{R}^n$.
We denote the balls generated by such a norm by $B_n^{(p,V)} ( t , \alpha )$, i. e., for $t \in \mathbbm{R}^n$ and $\alpha > 0$ we define
$$B_n^{(p,V)} ( t , \alpha ) := 
 \{ x \in \mathbbm{R}^n | \| x - t \|_p^V < \alpha \}.$$
If the matrix $V$ is an orthogonal matrix, the unit ball of this norm is just the rotation of the $\ell_p$-unit ball by the matrix $V$.
If we consider the standard $\ell_p$-norm, we omit the matrix $I_n$ and write $B_n^{(p)} ( t , \alpha )$ instead.\\

\noindent To use these convex sets in the lattice membership algorithm, we need to consider their intersection with hyperplanes orthogonal to the unit vectors.
To be precise, for $m \in \mathbbm{N}$, $m \leq n$, we define
$$B_{m,n}^{(p,V)} ( t , \alpha ) := B_n^{(p,V)} ( t , \alpha ) \cap \bigcap_{i=m+1}^n H_{0,e_i}.$$
We will call these convex sets \emph{$\ell_p$-bodies}\footnote{Obviously, $\ell_p$-bodies are not convex bodies but bounded convex sets.}.
In the following if we speak of an $\ell_p$-body, we assume that we are given a nonsingular matrix $V \in \mathbbm{R}^{n \times n}$, a vector $t \in \mathbbm{R}^n$, parameter $m \in \mathbbm{N}$, $m \leq n$, and $\alpha > 0$ and we consider the convex set $B_{m,n}^{(p,V)} ( t , \alpha )$.
The size of such an $\ell_p$-body is the maximum of $m$, $n$, $\alpha$ and the size of the coordinates of $V^{-1}$ and $t$.\\

\noindent Formally, the $\ell_p$-body $B_{m,n}^{(p,V)} ( t , \alpha )$ is a $m$-dimensional bounded convex set in the subspace $\Span ( e_1, \hdots, e_m )$ of the vector space $\mathbbm{R}^n$.
But in the following, we will neglect this and we will interpret $B_{m,n}^{(p,V)} ( t , \alpha )$ as a full-dimensional bounded convex set in the vector space $\mathbbm{R}^m$.
Then, we say that a vector $x \in \mathbbm{R}^m$ is contained in $B_{m,n}^{(p,V)} ( t , \alpha )$ if and only if $( x^T , 0^{n-m} )^T \in B_n^{(p,V)} ( t , \alpha )$.
In Section \ref{sec_flatness_algorithm_lp}, we will show that for all $\ell_p$-bodies, there exists a flatness algorithm.
	

\begin{thm} \label{thm_flatness_algorithm_lp}
(Flatness algorithm for $\ell_p$-bodies)
There exists an algorithm that for all $\ell_p$-bodies $B_{m,n}^{(p,V)} ( t , \alpha )$ outputs one of the following:
\begin{itemize}
 \item Either it outputs that $B_{m,n}^{(p,V)} ( t , \alpha )$ does not contain an integer vector, or
 \item it outputs that $B_{m,n}^{(p,V)} ( t , \alpha )$ contains an integer vector, or
 \item it outputs a vector $\tilde{d} \in \mathbbm{Z}^m$ and an interval $I_{\mathcal{B}}$ of length at most $4 m^2$ such that $B_{m,n}^{(p,V)} ( t , \alpha )$ contains an integer vector if and only if there exists $k \in \mathbbm{Z} \cap I_{\mathcal{B}}$ such that $B_{m,n}^{(p,V)} ( t , \alpha ) \cap H_{k,\tilde{d}}$ contains an integer vector.
\end{itemize}
The number of arithmetic operations of the algorithm is
 		$$p \cdot ( n \log_2 ( r ) )^{\mathcal{O} ( 1 )} m^{m/(2e)+o(m)}$$
The algorithm runs in polynomial space and each number computed by the algorithm has size at most
 		$r^{p n^{\mathcal{O} ( 1 )}}$,
 		where $r$ is an upper bound on the size of the $\ell_p$-body.
\end{thm}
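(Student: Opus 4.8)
The plan is to realise the flatness algorithm as a variant of Lenstra's integer-programming algorithm, the only genuinely new ingredient being an efficient rational computation of a L\"owner--John ellipsoid of an $\ell_p$-body. Write $W\in\mathbbm{Q}^{n\times m}$ for the matrix formed by the first $m$ columns of $V^{-1}$ and put $c:=V^{-1}t$; since $V$ is nonsingular, $W$ has full column rank $m$, and after identifying $B_{m,n}^{(p,V)}(t,\alpha)$ with the bounded convex set $\mathcal{B}:=\{x\in\mathbbm{R}^m:\|Wx-c\|_p\le\alpha\}$ (by the remark after Theorem~\ref{statement_cvp_self_lp} we may take $\mathcal{B}$ closed), the task is to decide membership of $\mathbbm{Z}^m$ in $\mathcal{B}$, or to reduce it to hyperplane slices. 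First I would solve the convex program $\min_x\|Wx-c\|_p$ and compare its value with $\alpha$; if this value is $\ge\alpha$, then $\mathcal{B}$ is empty or a single point and the answer is read off directly, so assume henceforth that $\mathcal{B}$ is full-dimensional. Here $p$ enters only through $\ell_p$-norm evaluations and through convex programs over $\ell_p$-bodies, and a careful implementation keeps this dependence linear in $p$, as opposed to the $p^{\mathcal{O}(n)}$ behaviour one incurs by treating $x\mapsto\|Wx-c\|_p^p$ as a polynomial. I would also record a quantitative gap estimate: when $\mathcal{B}\neq\emptyset$ it contains a Euclidean ball of radius $\ge 2^{-p\,n^{\mathcal{O}(1)}\log_2 r}$ and lies inside one of radius $\le 2^{p\,n^{\mathcal{O}(1)}\log_2 r}$; these a priori bounds are what keep the subsequent steps in polynomial space.

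The central step is to compute a positive definite $Q\in\mathbbm{Q}^{m\times m}$ and a centre $x_0\in\mathbbm{Q}^m$ such that the ellipsoid $E:=\{x:(x-x_0)^TQ(x-x_0)\le1\}$ satisfies $E\subseteq\mathcal{B}\subseteq\gamma E$ with $\gamma\le 2m$ (the dilation being about $x_0$). John's theorem guarantees such an ellipsoid with $\gamma=m$; using the a priori bounds above and a variant of the ellipsoid method adapted to the $\ell_p$-membership oracle of $\mathcal{B}$, I would compute a $(1+\epsilon)$-approximation of it, so $\gamma\le(1+\epsilon)m<2m$, with $(p\,n\log_2 r)^{\mathcal{O}(1)}$ arithmetic operations on numbers of size at most $r^{p\,n^{\mathcal{O}(1)}}$. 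This is the $\ell_p$-specific, technically demanding part of the argument.

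Given $E$, the rest is the Lenstra--Kannan machinery. For $d\in\mathbbm{R}^m$ the width of $\gamma E$ in direction $d$ equals $2\gamma\sqrt{d^TQ^{-1}d}$, so I would take for $\tilde d$ a shortest nonzero vector of $\mathbbm{Z}^m$ with respect to the positive definite quadratic form $Q^{-1}$, computed by Kannan's enumeration algorithm; by the refined analysis of Hanrot and Stehl\'e this costs $m^{m/(2e)+o(m)}$ arithmetic operations. Put $\lambda:=\sqrt{\tilde d^TQ^{-1}\tilde d}$ and $\Lambda:=Q^{1/2}\mathbbm{Z}^m$, so that $\mu(\Lambda)$ (the covering radius) equals the covering radius of $\mathbbm{Z}^m$ in the norm $\|x\|:=\sqrt{x^TQx}$ and $\lambda=\lambda_1^{(2)}(\Lambda^*)$. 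If $\lambda\ge m$, then the transference inequality $\mu(\Lambda)\lambda_1^{(2)}(\Lambda^*)\le m$ gives $\mu(\Lambda)\le 1$, hence $\dist(x_0,\mathbbm{Z}^m)\le 1$ in the norm $\|\cdot\|$, so the closed ellipsoid $E\subseteq\mathcal{B}$ contains an integer vector; in that case I would output that $\mathcal{B}$ contains an integer vector. Otherwise $\lambda<m$, so the width of $\mathcal{B}$ in direction $\tilde d$ is at most $2\gamma\lambda<4m^2$ (using $\gamma\le 2m$); solving the convex programs $\min_{x\in\mathcal{B}}\langle\tilde d,x\rangle$ and $\max_{x\in\mathcal{B}}\langle\tilde d,x\rangle$ to sufficient precision and rounding the endpoints outward by a negligible amount yields an interval $I_{\mathcal{B}}$ of length at most $4m^2$ with $I_{\mathcal{B}}\supseteq\{\langle\tilde d,x\rangle:x\in\mathcal{B}\}$, which I would output together with $\tilde d$ (optionally outputting that $\mathcal{B}$ contains no integer vector when $\mathbbm{Z}\cap I_{\mathcal{B}}=\emptyset$). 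Correctness follows at once: every integer $v\in\mathcal{B}$ satisfies $\langle\tilde d,v\rangle\in\mathbbm{Z}\cap I_{\mathcal{B}}$ and lies in $\mathcal{B}\cap H_{\langle\tilde d,v\rangle,\tilde d}$, and the converse implication is trivial; moreover, if $\mathcal{B}\cap\mathbbm{Z}^m=\emptyset$ then $E\cap\mathbbm{Z}^m=\emptyset$, so $\dist(x_0,\mathbbm{Z}^m)>1$, whence $\mu(\Lambda)>1$ and $\lambda<m$, so the algorithm lands in the second case and never wrongly claims a lattice point.

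For the resource bounds: the emptiness test, the ellipsoid computation, and the two interval-generating programs cost $(p\,n\log_2 r)^{\mathcal{O}(1)}$ operations; Kannan's enumeration costs $m^{m/(2e)+o(m)}$; every number arising has size at most $r^{p\,n^{\mathcal{O}(1)}}$, i.e.\ bit size $p\,n^{\mathcal{O}(1)}\log_2 r$; and each subroutine (ellipsoid method, lattice enumeration, convex optimisation) runs in space polynomial in its input size, so the whole algorithm runs in polynomial space and uses $p\cdot(n\log_2 r)^{\mathcal{O}(1)}\,m^{m/(2e)+o(m)}$ arithmetic operations, as claimed. I expect the main obstacle to be precisely the central step: the efficient, rational, polynomial-space computation of a near-optimal L\"owner--John ellipsoid of the $\ell_p$-body, together with the $\ell_p$-specific gap and perturbation estimates needed to guarantee both that the computed $E$ still satisfies $E\subseteq\mathcal{B}\subseteq 2m\,E$ and that $\ell_p$-norm evaluations inflate bit sizes by only a factor $\mathcal{O}(p)$. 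By contrast, the transference dichotomy, the width bound, and the appeal to Kannan's $m^{m/(2e)+o(m)}$ enumeration are standard.
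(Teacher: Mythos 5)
Your overall architecture is the same as the paper's: compute an $O(m)$-approximate L\"owner--John ellipsoid of the $\ell_p$-body, find a shortest nonzero integer vector for the associated quadratic form with Kannan's enumeration (cost $m^{m/(2e)+o(m)}$ by Hanrot--Stehl\'e), use Banaszczyk's transference bound to conclude "contains an integer vector" when that vector is long, and otherwise output the flatness direction together with the interval cut out by the circumscribed ellipsoid, of length at most $4m^2$. That part, and the resource accounting, are fine.

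The genuine gap is in how you justify running the (shallow-cut) ellipsoid method at all, i.e.\ your "a priori gap estimate." You assert that whenever $\mathcal{B}=\{x:\|Wx-c\|_p\le\alpha\}$ is nonempty it contains a Euclidean ball of radius $2^{-p\,n^{\mathcal{O}(1)}\log_2 r}$, and you propose to detect emptiness by exactly solving $\min_x\|Wx-c\|_p$ and comparing with $\alpha$. Neither step is available: the minimum of an $\ell_p$-distance to an affine subspace is in general irrational (for non-even $p$ not even obviously algebraic), so it cannot be computed and compared exactly in rational arithmetic, and $\alpha-\min_x\|Wx-c\|_p$ can in principle be positive but astronomically small, so no unconditional inner-radius bound in terms of the input size follows. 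The paper circumvents exactly this by proving only a \emph{conditional} volume lower bound (Lemma \ref{cor_lower_bound_volume_final}): if $\mathcal{B}$ contains an \emph{integer} point $\hat x$, then $F(\hat x)=\alpha_d^p\|V^{-1}((\hat x^T,0)^T-t)\|_p^p-\alpha_n^p$ is a rational with denominator at most $S^{2n^2p}$ (Claim \ref{claim_upper_bound_size}), hence $F(\hat x)\le -S^{-2n^2p}$, and a subgradient/Lipschitz estimate (Lemmas \ref{lemma_subgradient_l_p_potenziert}--\ref{statement_upper_bound_coordinate_subgradient}) then yields a ball of controlled radius around $\hat x$ inside $\mathcal{B}$. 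This is precisely why the first output option of the theorem is "does not contain an integer vector" rather than "is empty": when the rounding method cannot certify the volume lower bound, it may legitimately conclude there is no lattice point, even though the body may be nonempty. Your proposal is missing this device, and without it the ellipsoid computation has no valid starting guarantee.

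A second, more minor, issue: you claim a $(1+\epsilon)$-approximation of the John ellipsoid, giving $\gamma\le 2m$. No deterministic polynomial-space method achieving a near-optimal John ellipsoid for such bodies is known; the paper gets only $\gamma=4m$ from the shallow-cut ellipsoid method combined with the Kochol/Hildebrand--K\"oppe improvement (Theorem \ref{statement_approx_LJ_oracle}), and compensates by taking the transference threshold at width $m$ of the \emph{inscribed} ellipsoid (i.e.\ $\lambda_1\ge m/2$), which still yields the interval bound $4m\cdot m=4m^2$. Your constants can be repaired the same way, but as written the $\gamma\le 2m$ claim is unsupported.
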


\noindent Using this algorithm and combining it with the ideas and methods presented in Section \ref{sec_algorithm}, we are able to show that there exists an algorithm that solves the lattice membership problem for the class of $\ell_p$-bodies with $1 < p < \infty$.
Especially, we obtain an algorithm that solves the lattice membership problem for balls generated by an $\ell_p$-norm.\\

\noindent Substantially, the algorithm works in the same way as the general algorithmic framework presented in Section \ref{sec_algorithm}.
The algorithm gets as input a full-dimensional $\ell_p$-body $B_n^{(p,V)} ( t , \alpha )$ and an affine subspace.
During the algorithm, we have to take into account that it is possible, that the flatness algorithm outputs that the $\ell_p$-body does not contain an integer vector.
For the computation of the parameter $N$, which defines a circumscribed $\ell_1$-ball of the $\ell_p$-body, we use the following result, which computes for a given $\ell_p$-body a circumscribed Euclidean ball.
The idea of this construction is that for a given $\ell_p$-body $B_{m,n}^{(p,V)} ( t , \alpha )$, we use Hölder's inequality to construct an ellipsoid, which contains $B_n^{(p,V)} ( t , \alpha )$.
This ellipsoid is contained in an Euclidean ball whose radius is the largest eigenvalue of $V$. Then, we intersect all with the subspace $\bigcap_{i=m+1}^n H_{0,e_i}$.

\begin{lemma} \label{lemma_circumscribed_convex_body}
Let $B_{m,n}^{(p,V)} ( t , \alpha )$ be an $\ell_p$-body given by $V \in \mathbbm{Q}^{n \times n}$ nonsingular,  $t \in \mathbbm{Q}^n$, $\alpha > 0$ and $1 < p < \infty$.
Then $B_{m,n}^{(p,V)} ( t , \alpha )$ is contained in an $m$-dimensional Euclidean ball with radius $\alpha \sqrt{n} \| V \|$.
The center of this ball is given by the orthogonal projection of $t$ onto $\Span ( e_1, \hdots, e_m )$.
\end{lemma}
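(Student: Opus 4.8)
The claim is that $B_{m,n}^{(p,V)}(t,\alpha) \subseteq \bar B_m^{(2)}\bigl(\pi(t), \alpha\sqrt n \|V\|\bigr)$, where $\pi$ is the orthogonal projection onto $\Span(e_1,\dots,e_m)$. I would split this into two pieces: first bound the full-dimensional $\ell_p$-ball $B_n^{(p,V)}(t,\alpha)$ by a Euclidean ball of radius $\alpha\sqrt n\|V\|$ centered at $t$, and then handle the intersection with the coordinate subspace $\bigcap_{i=m+1}^n H_{0,e_i}$ by projecting.

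\emph{Step 1 (the full-dimensional estimate).} Let $x \in B_n^{(p,V)}(t,\alpha)$, so $\|V^{-1}(x-t)\|_p < \alpha$. Write $y = V^{-1}(x-t)$, hence $x - t = Vy$. I want to bound $\|x-t\|_2 = \|Vy\|_2 \le \|V\|\cdot\|y\|_2$, where $\|V\|$ is the spectral norm. It remains to pass from $\|y\|_p < \alpha$ to a bound on $\|y\|_2$. This is exactly H\"older's inequality as recalled in Section~\ref{section_basics}: for $1 \le p \le 2$ one has $\|y\|_2 \le \|y\|_p$, and for $2 < p < \infty$ one has $\|y\|_2 \le n^{1/2-1/p}\|y\|_p \le \sqrt n\,\|y\|_p$. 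In either case $\|y\|_2 \le \sqrt n\,\|y\|_p < \sqrt n\,\alpha$. Therefore $\|x-t\|_2 < \alpha\sqrt n\,\|V\|$, i.e.\ $B_n^{(p,V)}(t,\alpha) \subseteq \bar B_n^{(2)}(t,\alpha\sqrt n\|V\|)$.

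\emph{Step 2 (intersecting with the subspace).} Set $U := \Span(e_1,\dots,e_m)$ and let $\pi\colon\mathbbm R^n\to U$ be the orthogonal projection. By definition $B_{m,n}^{(p,V)}(t,\alpha) = B_n^{(p,V)}(t,\alpha)\cap\bigcap_{i=m+1}^n H_{0,e_i} = B_n^{(p,V)}(t,\alpha)\cap U$. Take any $x$ in this set. By Step~1, $\|x-t\|_2 < \alpha\sqrt n\|V\|$. Since $x \in U$ we have $\pi(x) = x$, so $\|x - \pi(t)\|_2 = \|\pi(x) - \pi(t)\|_2 = \|\pi(x-t)\|_2 \le \|x-t\|_2 < \alpha\sqrt n\|V\|$, using that orthogonal projections are nonexpansive. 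Hence $x \in \bar B_m^{(2)}(\pi(t),\alpha\sqrt n\|V\|)$ (viewed inside $U \cong \mathbbm R^m$), which is precisely the assertion.

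\emph{Main obstacle.} There is no real obstacle; the only point requiring a little care is getting the direction of H\"older's inequality right — we need an upper bound on $\|y\|_2$ in terms of $\|y\|_p$, which holds with constant $1$ for $p\le 2$ and with constant $n^{1/2-1/p}\le\sqrt n$ for $p\ge 2$, so the uniform constant $\sqrt n$ covers all $1<p<\infty$ — and making sure the center is identified correctly as $\pi(t)$ rather than $t$, which is handled by the nonexpansiveness of $\pi$. One should also note $\alpha\sqrt n\|V\|$ is finite since $V$ is nonsingular, so the bounding ball is genuine. The rest is bookkeeping about identifying the affine slice with $\mathbbm R^m$.
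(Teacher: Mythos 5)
Your proof is correct and follows essentially the same route as the paper: Hölder's inequality to pass from $\|V^{-1}(x-t)\|_p<\alpha$ to $\|V^{-1}(x-t)\|_2\le\alpha\sqrt n$, the spectral norm of $V$ to bound $\|x-t\|_2$, and then the observation that intersecting the resulting Euclidean ball with $\Span(e_1,\dots,e_m)$ yields an $m$-dimensional ball of no larger radius centered at the orthogonal projection of $t$. The only cosmetic difference is that the paper phrases the middle step in the language of the ellipsoid $\alpha\sqrt n\star E(VV^T,t)$ and its circumscribed radius, whereas you bound $\|Vy\|_2\le\|V\|\,\|y\|_2$ directly; your explicit nonexpansiveness argument for the projection step is a fine substitute for the paper's ``obviously''.
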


\begin{proof}
Using Hölder's inequality, we obtain that the $\ell_p$-body $B_n^{(p,V)} ( t , \alpha )$ is contained in the set 
$\{ x \in \mathbbm{R}^n | \| V^{-1} ( x- t ) \|_2 \leq \alpha \sqrt{n} \}$,
which is the open ellipsoid $\alpha \sqrt{n} \star E ( V V^T , t )$.
The circumscribed radius of an ellipsoid is given by the square root of the largest eigenvalue of the matrix defining it.
The square root of the largest eigenvalue of $V V^T$ is the spectral norm of the matrix $V$.
Hence, we obtain that
$$B_n^{(p,V)} ( t , \alpha ) \subseteq B_n^{(2)} ( t , \alpha \sqrt{n} \| V \| ).$$
Obviously, it follows that the $\ell_p$-body $B_{m,n}^{(p,V)} ( t , \alpha )$ is contained in the intersection of the Euclidean ball $B_n^{(2)} ( t , \alpha \sqrt{n} \| V \| )$ with the  hyperspace $\cap_{i=m+1}^n H_{0,e_i}$, which is an $m$-dimensional ball with radius at most 
$\alpha \sqrt{n} \| V \|$.
The center of this ball is given by the orthogonal projection of $t$ onto $\Span ( e_1, \hdots, e_m )$.
\end{proof}

\noindent Using this result, we can define the parameter $N$ as $2 n r \| V \| + 1$, where $\| V \|$ denotes the spectral norm of the matrix $V$.
A detailed description of the algorithm is given in Algorithm \ref{alg_membership_lp}.

\begin{figure}[!ht]
\framebox{
\begin{minipage}[b]{15.5cm} \small
\begin{alg} \label{alg_membership_lp} {\bf Lattice membership algorithm for $\ell_p$-bodies}\\
{\tt
{\bf Input:}
	\begin{itemize}
		\item An $\ell_p$-body $B_{n}^{(p,V)} ( t , \alpha )$ given by a nonsingular matrix $V \in \mathbbm{Q}^{n \times n}$, a vector $t \in \mathbbm{Q}^n$ and a parameter $\alpha > 0$ with size $r_{\mathcal{B}}$ and 
		\item an affine subspace $H := \bigcap_{i=m+1}^n H_{k_i,d_i}$ given by $d_i \in \mathbbm{Z}^n$ linearly independent and $k_i \in \mathbbm{Z}$, $m+1 \leq i \leq n$; alternatively, $H:= \mathbbm{R}^n$.
	\end{itemize}
{\bf Used Subroutines:} Flatness algorithm for $\ell_p$-bodies, replacement procedure.\\[0.25cm]
	{\bf If} $m=0$, check if there exists $z \in \mathbbm{Z}^n \cap H$ satisfying $z \in B_n^{(V)} ( t , \alpha )$.\\
	{\bf Otherwise, }
	\begin{enumerate}
	\item \begin{description}
					\item[If] $m=n$, set $v := 0$ and $\bar{V} := I_n$.
					\item[Otherwise,] compute $v \in \mathbbm{Z}^n \cap H$, a basis $B := [b_1, \hdots, b_m , d_{m+1} , \hdots, d_n] \in \mathbbm{Z}^{n \times n}$ of $\mathbbm{R}^n$.\\
			Compute a lattice basis $\bar{D} \in \mathbbm{Z}^{n \times m}$ of $\mathcal{L} ( B^T ) \cap \bigcap_{i=m+1}^n H_{0,e_i}$.\\
			Set $\hat{D}:=[\bar{D} , e_{m+1}, \hdots, e_n] \in \mathbbm{Z}^n$ and $\bar{V} := \hat{D}^{-1} B^T$.
			\end{description}
	\item \label{alg_mem_lp_step1} Apply the flatness algorithm with $B_{m,n}^{(p,\bar{V} V)} ( \bar{V} ( t - v ) , \alpha )$.
			\begin{description}
 				\item[If] it outputs that $B_{m,n}^{(p,\bar{V} V)} ( \bar{V} (  t - v ) , \alpha )$ does not contain an integer vector, then output that $B_n^{(p,V)} ( t , \alpha ) \cap H$ does not contain an integer vector.
 				\item[If] it outputs that $B_{m,n}^{(p,\bar{V} V)} ( \bar{V} (  t - v ) , \alpha )$ contains an integer vector, then output that $B_n^{(p,V)} ( t , \alpha ) \cap H$ contains an integer vector.
 				\item[Otherwise,] the result is a vector $\tilde{d} \in \mathbbm{Z}^m$ together with an interval $I_{\mathcal{B}}$.
 				\begin{enumerate}
 				 \item Set $d_m := \bar{V}^T ( \tilde{d}^T , 0^{n-m} )^T \in \mathbbm{Z}^n$
 				  			and $N:= 2 n r_{\mathcal{B}} \| V \| + 1$.
 			 	 \item For all $k \in \mathbbm{Z} \cap I_{\mathcal{B}}$,
 								\begin{itemize}
 									\item apply the replacement procedure to the affine subspace $H$, the hyperplane given by $d_m$ and $k + \langle v , d_m \rangle$ and the parameter $N$.\\
 											The result is an index set $J_k$ and an affine subspace $\bigcap_{i \in J_k} H_{\bar{k}_i,\bar{d}_i}$.
 								  \item Apply the membership algorithm to the $\ell_p$-body $B_n^{(p,V)} ( t , \alpha )$ and the affine subspace $H \cap 												\bigcap_{i\in J_k} H_{\bar{k}_i,\bar{d}_i}$.\\
												As a result, we get the information if $B_n^{(p,V)} ( t , \alpha ) \cap H \cap \bigcap_{i\in J_k} H_{\bar{k}_i,\bar{d}_i}$ contains an integer vector or not.
								 \end{itemize}
 								 \item If there exists an index $k$ such that $B_n^{(p,V)} ( t , \alpha ) \cap H \cap \bigcap_{i\in J_k} H_{\bar{k}_i,\bar{d}_i}$ contains an integer vector, output this.
 								 	Otherwise, output that $B_n^{(p,V)} ( t , \alpha ) \cap H$ does not contain an integer vector.
 				\end{enumerate}			
 					\end{description}
\end{enumerate}}
\end{alg}
\end{minipage}}
\end{figure}

\begin{thm}
Let $B_n^{(p,V)} ( t , \alpha )$ be an $\ell_p$-body given by $V \in \mathbbm{Q}^{n \times n}$ nonsingular, $t \in \mathbbm{Q}^n$, $\alpha > 0$ and $1 < p < \infty$ and let $H$ be an affine subspace of dimension $m \leq n$.
Given as input $B_n^{(p,V)} ( t , \alpha )$ and $H$, the membership algorithm for $\ell_p$-bodies, Algorithm \ref{alg_membership_lp}, decides correctly whether $B_n^{(p,V)}( t , \alpha ) \cap H$ contains an integer vector.
The number of arithmetic operations of the algorithm is at most
$p (n \log_2 ( r ) )^{\mathcal{O} ( 1 )} m^{(2+o(1))m}$,
where $r$ is an upper bound on the size of $B_{n}^{(p,V)} ( t , \alpha )$ and the affine subspace $H$.
The algorithm runs in polynomial space and each number computed by the algorithm has size at most $r^{p \cdot n^{\mathcal{O} ( 1 )}}$,
that means bit size at most $p n^{\mathcal{O} ( 1 )} \log_2 ( r )$.
\end{thm}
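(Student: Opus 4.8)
The plan is to mirror the proof of Theorem \ref{result_membership_polyhedron} almost verbatim, with polytopes replaced by $\ell_p$-bodies and with the flatness algorithm of Theorem \ref{thm_flatness_algorithm_lp} playing the role of the flatness algorithm for polytopes. First I would observe that the class of $\ell_p$-bodies is \emph{suitable}: an affine image of $B_n^{(p,V)}(t,\alpha)$ is again of the form $B_n^{(p,V')}(t',\alpha)$ (since under $y = Ax+c$ the quantity $\|V^{-1}(x-t)\|_p$ becomes $\|(AV)^{-1}(y-(At+c))\|_p$), and intersecting a full-dimensional $\ell_p$-body with the coordinate hyperplanes $H_{0,e_{m+1}},\dots,H_{0,e_n}$ is, by definition, the $\ell_p$-body $B_{m,n}^{(p,\cdot)}(\cdot,\alpha)$. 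Hence Algorithm \ref{alg_membership_lp} is exactly the instantiation of the modified membership framework, Algorithm \ref{alg_ModMem}, for this class, and correctness will follow from Theorem \ref{alg_mod_mem_result} once two small points are settled. First, the flatness algorithm of Theorem \ref{thm_flatness_algorithm_lp} admits a third output, namely that the body contains no integer vector, which Assumption \ref{assumption_flatness_algorithm} does not allow; this case is handled by reporting that the body contains no integer vector and terminating the branch, which is trivially correct and only shortens the run. Second, one checks that the affine map $\tau\colon x\mapsto\bar V(x-v)$ of step 1 sends $B_n^{(p,V)}(t,\alpha)\cap H$ to $B_{m,n}^{(p,\bar V V)}(\bar V(t-v),\alpha)$; this follows from $\|x-t\|_p^{V}=\|(\bar V V)^{-1}(\tau(x)-\bar V(t-v))\|_p$ together with $\tau(H)=\bigcap_{i=m+1}^n H_{0,e_i}$ and the fact that $\tau$ restricts to a bijection $\mathbbm{Z}^n\cap H\to\mathbbm{Z}^n\cap\bigcap_{i=m+1}^n H_{0,e_i}$, the latter two being Claim \ref{claim_properties_transformation}.

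Next I would justify the choice $N:=2n r_{\mathcal{B}}\|V\|+1$. By Lemma \ref{lemma_circumscribed_convex_body} the $\ell_p$-body lies in a Euclidean ball of radius $\alpha\sqrt{n}\,\|V\|$ about a projection of its center; bounding the center by $n r_{\mathcal{B}}$, using $\|x\|_1\le\sqrt{n}\,\|x\|_2$, and $\alpha\le r_{\mathcal{B}}$ yields $B_n^{(p,V)}(t,\alpha)\subseteq\bar B_n^{(1)}(0,N-1)$. (Although $\|V\|$ is not among the data defining the body, it is at most $r_{\mathcal{B}}^{n^{\mathcal{O}(1)}}$, since $|\det V^{-1}|$ is not too small and $\|V^{-1}\|$ not too large; so $\log_2 N = n^{\mathcal{O}(1)}\log_2 r$.) With this containment, Proposition \ref{prop_replacement_procedure} guarantees that an integer point of the body lies in $H\cap H_{k+\langle v,d_m\rangle,d_m}$ if and only if it lies in $H\cap\bigcap_{i\in J_k}H_{\bar k_i,\bar d_i}$, so the replacement step preserves the set of integer points of the body and the recursion is valid — this is the same bookkeeping already carried out in the proof of Theorem \ref{alg_mod_mem_result}.

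For the polynomial-space claim I would check that in a single reduction step the construction of $\tau$ (Hermite normal form, Micciancio's lattice-intersection algorithm, one integer matrix inversion) produces numbers of size $r^{n^{\mathcal{O}(1)}}$, the flatness algorithm produces numbers of size $r^{p\,n^{\mathcal{O}(1)}}$ by Theorem \ref{thm_flatness_algorithm_lp}, and the replacement procedure produces hyperplanes of size at most $2^{(n+2)^2}N^n = r^{n^{\mathcal{O}(1)}}$. The decisive point, exactly as in the polytope case, is that every recursive call leaves the $\ell_p$-body itself unchanged (only the affine subspace grows) and the hyperplanes handed down by the replacement procedure have size depending only on $N$, hence on $r_{\mathcal{B}}$, and not on $\size(H)$; so by induction every affine subspace that occurs has size at most $\max\{\size(H),r_{\mathcal{B}}^{n^{\mathcal{O}(1)}}\}$ and every number produced has size at most $r^{p\,n^{\mathcal{O}(1)}}$, i.e. bit size $p\,n^{\mathcal{O}(1)}\log_2 r$.

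Finally, for the running time I would set up the recursion $T(m)\le p\,(n\log_2 r)^{\mathcal{O}(1)}m^{m/(2e)+o(m)}+(4m^2+1)\,T(m-1)$, the first term being the cost of building $\tau$, running the flatness algorithm (Theorem \ref{thm_flatness_algorithm_lp}), and running the replacement procedure (which costs $(n\log_2 N)^{\mathcal{O}(1)}=(n^{\mathcal{O}(1)}\log_2 r)^{\mathcal{O}(1)}$), and the branching factor $4m^2+1$ being the number of integers in the length-$\le 4m^2$ interval returned by the flatness algorithm; the base case $T(0)$ is a single membership test costing $p\,(n\log_2 r)^{\mathcal{O}(1)}$. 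Unrolling gives $T(m)\le p\,(n\log_2 r)^{\mathcal{O}(1)}\sum_{i=1}^{m}\bigl(\prod_{j=i+1}^{m}(4j^2+1)\bigr)\,i^{i/(2e)+o(i)}$, and since $\prod_{j=1}^{m}(4j^2+1)\le 5^m (m!)^2 = m^{(2+o(1))m}$ this product dominates, giving the claimed bound $p\,(n\log_2 r)^{\mathcal{O}(1)}m^{(2+o(1))m}$. The only genuinely non-routine ingredient is Theorem \ref{thm_flatness_algorithm_lp}, the flatness algorithm for $\ell_p$-bodies, established later in Section \ref{sec_flatness_algorithm_lp}; granted that, the present theorem is essentially a re-run of the polytope argument, and I expect the care to be needed chiefly in the two points above, in the circumscribed-ball estimate feeding the replacement procedure, and in carrying the extra factor $p$ consistently through the size and time bounds.
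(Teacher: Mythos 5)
Your proposal is correct and follows essentially the same route as the paper's proof: instantiate the modified membership framework (Theorem \ref{alg_mod_mem_result}) for the suitable class of $\ell_p$-bodies, use Lemma \ref{lemma_circumscribed_convex_body} to justify the choice of $N$ for the replacement procedure, observe that the recursive instances keep the $\ell_p$-body fixed so all sizes stay bounded by $r^{p\,n^{\mathcal{O}(1)}}$, and unroll the recursion $T(m)\le p\,(n\log_2 r)^{\mathcal{O}(1)}m^{m/(2e)+o(m)}+(4m^2+1)\,T(m-1)$ to get $m^{(2+o(1))m}$. Your explicit remark that $\|V\|$ must itself be bounded by $r_{\mathcal{B}}^{n^{\mathcal{O}(1)}}$ in terms of the given data $V^{-1}$ is a point the paper passes over more quickly, but otherwise the two arguments coincide.
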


\begin{proof}
We have seen in Lemma \ref{lemma_circumscribed_convex_body} that $B_n^{(p,V)} ( t , \alpha )$ is contained in an Euclidean ball with radius $\alpha \sqrt{n} \| V \|$.
Hence,
$$B_n^{(p,V)} ( t , \alpha ) \subseteq \bar{B}_n^{(1)} ( t , \alpha n \| V \| ) \subseteq \bar{B}_n^{(1)} ( 0 , n r ( 1 + \|V \| ).$$
By definition of $N$ this shows that $B_n^{(p,V)} ( t , \alpha ) \subseteq \bar{B}_n^{(1)} ( 0 , N-1)$.
Since the transformation $\tau : x \mapsto \bar{V} ( x - v )$ maps $B_n^{(p)} ( t , \alpha ) \cap H$ to the $\ell_p$-body
$B_{m,n}^{(p,\bar{V} V)} ( \bar{V} ( t - v ) , \alpha )$,
it follows from Theorem \ref{alg_mod_mem_result} that the membership algorithm for $\ell_p$-bodies decides correctly whether $B_n^{(p,V)} ( t , \alpha ) \cap H$ contains an integer vector.\\

\noindent It is obvious, that each number computed by the lattice membership algorithm in one reduction step has size at most
$r^{n^{\mathcal{O} ( 1 )}}$.
The recursive instances of the lattice membership algorithm consist of the original $\ell_p$-body $B_n^{(p,V)} ( t , \alpha )$ and a new subspace.
Therefore, we denote the size of the $\ell_p$-body $B_n^{(p,V)} ( t , \alpha )$ separately by $r_{\mathcal{B}}$.
Obviously $r \geq r_{\mathcal{B}}$.\\
According to Lemma \ref{alg_mod_mem_result}, the size of the affine subspaces used for the recursive calls of the algorithm is at most
$\max \{ r_{\mathcal{B}}^{n^{\mathcal{O} ( 1 )}} , r \}$.
Especially, the replacement procedure guarantees that the size of these subspaces depend only on the size of the $\ell_p$-body $r_{\mathcal{B}}$ and not on the size of the affine subspace $H$.
Hence, it follows that
$\max \{ r_{\mathcal{B}}^{n^{\mathcal{O} ( 1 )}} , r \}^{n^{\mathcal{O} ( 1 )}} = r^{n^{\mathcal{O} ( 1 )}}$
is an upper bound on the size of each number computed by the lattice membership algorithm.\\

\noindent Now we give an upper bound on the number of arithmetic operations of the algorithm, denoted by
$T ( m , n , p , r_{\mathcal{B}} , r )$.
For $m = 0$, we obtain that 
$T ( 0 , n , p , r_{\mathcal{B}} , r ) = n^{\mathcal{O} ( 1 )}$.
For $m \geq 1$, the algorithm constructs the bijective affine transformation according to the construction described in Claim \ref{claim_properties_transformation}.
This needs at most $n^{\mathcal{O} ( 1 )}$ arithmetic operations.
According to Theorem \ref{thm_flatness_algorithm_lp}, the number of arithmetic operations of the flatness algorithm $\mathcal{A}_{\mathcal{B}}$ is at most
$$p \cdot ( n \cdot  \log_2 ( \size ( B_{m,n}^{(p,\bar{V} V)} ( \bar{V} ( t - v ) , \alpha ) ) ) )^{\mathcal{O} ( 1 )}
m^{m/(2e) + o(m)}
\leq p ( n \cdot \log_2 ( r ) )^{\mathcal{O} ( 1 )} m^{m/(2e) + o(m)}.$$
The number of arithmetic operations of the replacement procedure is polynomial in the dimension $n$ and
$\log_2 ( N )$.
Since $N$ is at most $4 n r \| V \|_2 \leq r^{n^{\mathcal{O} ( 1 )}}$ this is at most
$n^{\mathcal{O} ( 1 )} \log_2 ( r )^{\mathcal{O} ( 1 )}$.\\
The number of recursive calls of the algorithm is determined by the length of the interval $I_{\mathcal{B}}$, which is at most $4 m^2$, see Theorem \ref{thm_flatness_algorithm_lp}.
This shows that for $m \geq 1$, the number of arithmetic operations of the lattice membership algorithm can be upper bounded using the following recursion,
\begin{equation*} 
T ( m , n , p, r_{\mathcal{B}} , r )
\leq p \cdot ( n \log_2 ( r ) )^{\mathcal{O} ( 1 )} m^{m/(2e) + o(m)} + ( 4 m^{2} + 1 ) \cdot T (  m - 1 , n , p , r_{\mathcal{B}} , \max \{ r , r_{\mathcal{B}}^{n^{\mathcal{O} ( 1 )}} \}).
\end{equation*}
As in the case of polytopes, we observe that the size of the new subspace is the maximum of the size of the input subspace and $r_{\mathcal{B}}^{n^{\mathcal{O} ( 1 )}}$, whereas the size of the $\ell_p$-body does not change. Hence, 
it follows by induction, that for all $m \geq 0$:
$$T ( m , n , p , r_{\mathcal{B}} , r ) \leq p ( n \log_2 ( r ))^{\mathcal{O} ( 1 )} m^{(2 +o(1))m}.$$
\end{proof}

\noindent If we apply the lattice membership algorithm with input an $\ell_p$-body and as subspace the whole vector space, we obtain an algorithm for the lattice membership problem.

\begin{cor}
The membership algorithm for $\ell_p$-bodies, Algorithm \ref{alg_membership_lp}, solves the lattice membership problem for all $\ell_p$-bodies $B_n^{(p,V)} ( t , \alpha )$ correctly in polynomial space.
The number of arithmetic operations of the algorithm is at most
$p \log_2 ( r )^{\mathcal{O} ( 1 )} n^{(2+o(1))n}$,
where $r$ is an upper bound on the size of the $\ell_p$-body.
Each number computed by the algorithm has size at most
$r^{p \cdot  n^{\mathcal{O} ( 1 )}}$, that means bit size at most
$p \cdot n^{\mathcal{O} ( 1 )} \log_2 ( r )$.
\end{cor}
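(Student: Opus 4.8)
The plan is to derive this corollary as an immediate specialization of the preceding theorem on the membership algorithm for $\ell_p$-bodies, taking the affine subspace to be the whole ambient space. Concretely, I would invoke that theorem with $H := \mathbbm{R}^n$, so that $m = n$ and $B_n^{(p,V)}(t,\alpha) \cap H = B_n^{(p,V)}(t,\alpha)$. By the definition of $\lmp$ this instance asks exactly whether the $\ell_p$-body contains a vector of $\mathbbm{Z}^n$, which (after the reduction to the integer lattice described in Section \ref{sec_algorithm}) is what Algorithm \ref{alg_membership_lp} decides; correctness is therefore inherited verbatim from the theorem.

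For the running-time bound, the theorem gives at most $p\,(n\log_2 r)^{\mathcal{O}(1)}\, m^{(2+o(1))m}$ arithmetic operations, where $r$ bounds the size of the $\ell_p$-body and the subspace; with $H = \mathbbm{R}^n$ the size of the subspace is trivial, so $r$ is just an upper bound on the size of the $\ell_p$-body, and substituting $m = n$ yields $p\,(n\log_2 r)^{\mathcal{O}(1)}\, n^{(2+o(1))n}$. The only point to check is the arithmetic of the exponents: the polynomial overhead $(n\log_2 r)^{\mathcal{O}(1)}$ splits as $\log_2(r)^{\mathcal{O}(1)}\cdot n^{\mathcal{O}(1)}$, and the factor $n^{\mathcal{O}(1)}$ is absorbed into the $n^{o(n)}$ term hidden inside $n^{(2+o(1))n}$, leaving the claimed bound $p\,\log_2(r)^{\mathcal{O}(1)}\, n^{(2+o(1))n}$.

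For the space bound, the theorem states that every number computed by the algorithm has size at most $r^{p\cdot n^{\mathcal{O}(1)}}$, equivalently bit size at most $p\cdot n^{\mathcal{O}(1)}\log_2 r$; since $r$, $p$, and $n$ are part of the input description, this bit size is polynomial in the input size, and hence the algorithm runs in polynomial space. This is precisely the content of the final two sentences of the corollary. I do not expect any genuine obstacle here: the entire argument is a specialization plus the routine observation that $n^{\mathcal{O}(1)}$ collapses into the $o(n)$ exponent; no new estimates on flatness, on the replacement procedure, or on the affine transformations $\tau$ are needed.
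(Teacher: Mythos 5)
Your proposal is correct and coincides with the paper's own (essentially implicit) argument: the corollary is obtained exactly by applying the preceding theorem with $H = \mathbbm{R}^n$, hence $m = n$, so that correctness, the operation count, and the size bounds specialize directly, with the polynomial factor $n^{\mathcal{O}(1)}$ absorbed into $n^{(2+o(1))n}$. Nothing further is needed.
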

	
	
	\section{An algorithm for computing a flatness direction} \label{sec_flatness_algorithm}

\noindent In this section, we consider constructive versions of so-called flatness theorems.
The fundamental statement of the flatness theorems is that every bounded convex set $\mathcal{C}$ which does not contain an integer vector has at least one direction where it is flat.
This means that there exists a vector $\tilde{d} \in \mathbbm{Z}^n$ such that the number of hyperplanes $H_{k,\tilde{d}}$, $k \in \mathbbm{Z}$, which intersect $\mathcal{C}$ is bounded.
The first result in this area was due to Khinchin, \cite{pp_khinchin}. For an overview about the existing variants see \cite{bk_barvinok}.\\

\noindent To formalize the idea how many hyperplanes intersect a bounded convex set, we use the notion of the width of a convex set $\mathcal{C} \subseteq \mathbbm{R}^n$ along a vector $\tilde{d} \in \mathbbm{R}^n \backslash \{ 0 \}$, which is defined as the number
$$w_{\tilde{d}} ( \mathcal{C} ) := \sup \{ \langle \tilde{d} , x \rangle | x \in \mathcal{C} \}
- \inf \{ \langle \tilde{d} , x \rangle | x \in \mathcal{C} \}.$$
If $\mathcal{C}$ is closed, we have 
$w_{\tilde{d}} ( \mathcal{C} ) = \max \{ \langle \tilde{d} , x \rangle | x \in \mathcal{C} \}
- \min \{ \langle \tilde{d} , x \rangle | x \in \mathcal{C} \}$.
The width of $\mathcal{C}$ is defined as the minimal value $w_{\tilde{d}} ( \mathcal{C})$, where $\tilde{d} \in \mathbbm{Z}^n \backslash \{ 0 \}$. A vector $\tilde{d}$, which minimizes $w_{\tilde{d}} ( \mathcal{C})$ is called a flatness direction of $\mathcal{C}$.\\
The flatness theorems guarantee that the width of every convex body, which does not contain an integer vector, is bounded by a number which depends only on the dimension.\\

\noindent In the following, we will show that for certain bounded convex sets, we are able to compute such a vector $\tilde{d} \in \mathbbm{Z}^n$.
First, we show this result for special convex bodies, ellipsoids.
Then we will generalize this result to polytopes and $\ell_p$-bodies.

\subsection{A flatness algorithm for ellipsoids}

Ellipsoids are special convex sets.
Formally, a set $E \subset \mathbbm{R}^n$ is called an ellipsoid, if there exists a vector $c \in \mathbbm{R}^n$ and a positive definite matrix $D \in \mathbbm{R}^{n \times n}$ such that
$$E = \{ x \in \mathbbm{R}^n | ( x- c )^T D^{-1} ( x - c ) \leq 1 \}.$$
The vector $c$ is called the center of the ellipsoid and we denote by $E ( D , c )$ the ellipsoid given by the matrix $D$ and the vector $c$.
The ellipsoid is uniquely determined by the symmetric positive definite matrix $D$ and the center $c$.\\

\noindent For every symmetric positive definite matrix $D$, there exists a decomposition $D = Q^T \cdot Q$. Such a matrix $Q$ gives us a bijective affine transformation, that maps the Euclidean unit ball to the ellipsoid $E ( D , c )$.
To be precise, a set $E \subset \mathbbm{R}^n$ is an ellipsoid $E = E ( D ,c )$ for a symmetric positive definite matrix $D \in \mathbbm{R}^{n \times n}$ and a vector $c \in \mathbbm{R}^n$ if and only if $E$ is the affine image of the Euclidean unit ball, i.e., 
$$E = Q^T \cdot B_n^{(2)} ( 0 , 1 ) + c,$$
where $D = Q^T \cdot Q$.
Observe that this affine transformation is not uniquely determined, since the decomposition of a symmetric positive definite matrix is not unique.
Nevertheless, this relation between ellipsoids and the Euclidean unit ball is fundamental in the understanding of ellipsoids. Nearly every property of an ellipsoid can be deduced from the corresponding property of the Euclidean unit ball by applying the bijective transformation $Q^T$.\\

\noindent For example, we can show that for an ellipsoid $E  = E ( D , c ) \subseteq \mathbbm{R}^n$ and a vector $\tilde{d} \in \mathbbm{R}^n \backslash \{ 0 \}$, we have 
$\max \{ \langle \tilde{d} , x \rangle | x \in E ( D , c ) \} = \langle \tilde{d} , c \rangle + \sqrt{\tilde{d}^T D \tilde{d}}$
and
$\min \{ \langle \tilde{d}, x \rangle | x \in E ( D , c ) \} = \langle \tilde{d} , c \rangle - \sqrt{\tilde{d}^T D \tilde{d}}$.
Hence, the width of the ellipsoid along $\tilde{d}$ is
$w_{\tilde{d}} ( E ) = 2 \sqrt{\tilde{d}^T D \tilde{d}}$.\\

\noindent The next proposition characterizes a flatness direction and the width of an ellipsoid.
Additionally, we are able to show which hyperplanes of a family of hyperplanes have a non-empty intersection with an ellipsoid.

\begin{prop} \label{prop_flatness_direction}
Let $E = E ( D , c ) \subseteq \mathbbm{R}^n$ be an ellipsoid and $D = Q^T Q$ be an arbitrary decomposition of the matrix $D$.
Then a vector $\tilde{d} \in \mathbbm{Z}^n$ is a flatness direction of the ellipsoid if and only if $Q \tilde{d}$ is a shortest non-zero vector in the lattice $\mathcal{L} ( Q )$.
That means, we have
$$w ( E ) = w_{\tilde{d}} ( E ) = 2 \lambda_1^{(2)} ( \mathcal{L} ( Q ) )$$
and for $d = Q \tilde{d} \in \mathcal{L} ( Q )$ we obtain, 
$$\max \{\langle \tilde{d} , x \rangle | x \in E \} = \langle \tilde{d} , c \rangle + \| d \|_2
 	\mbox{ and }
 	\min \{ \langle \tilde{d} , x \rangle | x \in E\} = \langle \tilde{d} , c \rangle - \| d \|_2.$$
\end{prop}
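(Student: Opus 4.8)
The plan is to reduce everything to the corresponding statement for the Euclidean unit ball by using the affine transformation $Q^T$, as suggested by the discussion preceding the proposition. Concretely, fix the decomposition $D = Q^T Q$, so that $E = E(D,c) = Q^T \cdot \bar B_n^{(2)}(0,1) + c$. The key observation is a change of variables for linear functionals: for any $\tilde d \in \mathbbm{R}^n\setminus\{0\}$ and any $x = Q^T y + c$ with $y$ in the Euclidean unit ball, we have $\langle \tilde d, x\rangle = \langle \tilde d, Q^T y\rangle + \langle \tilde d, c\rangle = \langle Q\tilde d, y\rangle + \langle \tilde d, c\rangle$. Hence as $x$ ranges over $E$, i.e.\ as $y$ ranges over the Euclidean unit ball, the quantity $\langle \tilde d, x\rangle$ ranges over $\langle \tilde d, c\rangle + \langle Q\tilde d, y\rangle$; and $\sup_{\|y\|_2 \le 1}\langle Q\tilde d, y\rangle = \|Q\tilde d\|_2$, attained at $y = Q\tilde d / \|Q\tilde d\|_2$, with the infimum being $-\|Q\tilde d\|_2$. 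Writing $d := Q\tilde d$, this gives $\max\{\langle\tilde d,x\rangle \mid x\in E\} = \langle\tilde d,c\rangle + \|d\|_2$ and $\min\{\langle\tilde d,x\rangle\mid x\in E\} = \langle\tilde d,c\rangle - \|d\|_2$, which is exactly the last displayed formula in the statement, and hence $w_{\tilde d}(E) = 2\|d\|_2 = 2\|Q\tilde d\|_2$. (This also recovers the formula $w_{\tilde d}(E) = 2\sqrt{\tilde d^T D \tilde d}$ stated in the text, since $\|Q\tilde d\|_2^2 = \tilde d^T Q^T Q\tilde d = \tilde d^T D\tilde d$.)

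Next I would use this to identify the flatness direction. By definition, $w(E) = \min\{w_{\tilde d}(E) \mid \tilde d \in \mathbbm{Z}^n\setminus\{0\}\} = \min\{2\|Q\tilde d\|_2 \mid \tilde d\in\mathbbm{Z}^n\setminus\{0\}\}$. Since $Q$ is nonsingular (as $D$ is positive definite), the map $\tilde d \mapsto Q\tilde d$ is a bijection from $\mathbbm{Z}^n$ onto the lattice $\mathcal{L}(Q)$, carrying $\mathbbm{Z}^n\setminus\{0\}$ onto $\mathcal{L}(Q)\setminus\{0\}$. Therefore the minimum over $\tilde d\in\mathbbm{Z}^n\setminus\{0\}$ of $\|Q\tilde d\|_2$ equals $\min\{\|v\|_2 \mid v\in\mathcal{L}(Q)\setminus\{0\}\} = \lambda_1^{(2)}(\mathcal{L}(Q))$, so $w(E) = 2\lambda_1^{(2)}(\mathcal{L}(Q))$. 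Moreover $\tilde d$ attains this minimum, i.e.\ $w_{\tilde d}(E) = w(E)$, if and only if $\|Q\tilde d\|_2 = \lambda_1^{(2)}(\mathcal{L}(Q))$, i.e.\ if and only if $Q\tilde d$ is a shortest non-zero vector of $\mathcal{L}(Q)$. This establishes the ``if and only if'' characterization and the value $w(E) = w_{\tilde d}(E) = 2\lambda_1^{(2)}(\mathcal{L}(Q))$.

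I do not expect any genuine obstacle here; the statement is essentially a bookkeeping exercise once the change-of-variables identity $\langle \tilde d, Q^T y\rangle = \langle Q\tilde d, y\rangle$ is in hand. The only point requiring a little care is to make sure the transformation $x \mapsto Q^{-T}(x-c)$ is applied consistently so that the roles of $Q$ and $Q^T$ are not swapped, and to note that the claim is independent of the chosen decomposition $D = Q^T Q$ (different choices of $Q$ differ by an orthogonal matrix on the left, which changes neither $\|Q\tilde d\|_2$ nor the set of lengths of lattice vectors of $\mathcal{L}(Q)$, so $\lambda_1^{(2)}(\mathcal{L}(Q))$ is well defined). One should also recall that since $E$ is compact the suprema and infima are attained, justifying writing $\max$ and $\min$.
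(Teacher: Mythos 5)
Your proposal is correct and follows essentially the same route as the paper: both rest on the identity $w_{\tilde d}(E) = 2\sqrt{\tilde d^T D \tilde d} = 2\|Q\tilde d\|_2$ and the observation that minimizing this over $\tilde d \in \mathbbm{Z}^n\setminus\{0\}$ amounts to finding a shortest non-zero vector of $\mathcal{L}(Q)$. The only difference is that you derive the max/min formulas for $\langle \tilde d, x\rangle$ over $E$ explicitly via the substitution $x = Q^T y + c$, whereas the paper cites them from the discussion preceding the proposition; this is a matter of presentation, not substance.
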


\noindent We observe, that it follows from this proposition that the width of an ellipsoid can be computed using an arbitrary decomposition of the matrix defining the ellipsoid.

\begin{proof}
As we have seen, the width of an ellipsoid along a vector $\tilde{d} \in \mathbbm{Z}^n \backslash \{0 \}$ is given by
$w_{\tilde{d}} ( E ( D , c ) ) = 2 \sqrt{\tilde{d}^T D \tilde{d}}$.
Hence, for every decomposition $D = Q^T Q$ of the matrix $D$, we have
\begin{equation} \label{eq_flatness_direction_1}
\sqrt{\tilde{d}^T D \tilde{d}} = \sqrt{(Q\tilde{d})^T (Q\tilde{d})} = \| Q \tilde{d} \|_2.
\end{equation}
which shows that the width $w_{\tilde{d}} ( E ( D , c ) )$ is is minimized for $\tilde{d} \in \mathbbm{Z}^n \backslash \{ 0 \}$, if $Q \tilde{d}$ is a shortest non-zero vector in the lattice $\mathcal{L} ( Q )$ generated by the matrix $Q$.
This proves the first statement.
The proof of the other statements follows directly from (\ref{eq_flatness_direction_1}).
\end{proof}

\noindent With this statement, we are able to prove the flatness theorem for ellipsoids using the well-known transference bound for lattices proven by Banaszczyk, see \cite{pp_banaszczykbounds}.
For completeness, the proof appears in the appendix, see Section \ref{sec_appendix_technical_flatness}.

\begin{thm} \label{thm_flatness_ellipsoid} {\bf (Flatness Theorem for Ellipsoids)}
Let $E \subset \mathbbm{R}^n$ be an ellipsoid.
If the width of the ellipsoid is at least $n$,
$w ( E ) \geq n$,
then the ellipsoid contains an integer vector.
\end{thm}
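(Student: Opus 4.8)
The plan is to deduce the statement from Proposition~\ref{prop_flatness_direction} together with Banaszczyk's transference inequality. Write $E = E(D,c)$ with $D \in \mathbbm{R}^{n \times n}$ symmetric positive definite (the ellipsoid is full-dimensional by definition), and fix any decomposition $D = Q^T Q$ with $Q$ nonsingular. The first step is to rephrase membership of an integer vector in $E$ as a covering-radius condition. Since $D^{-1} = Q^{-1} Q^{-T}$, for $z \in \mathbbm{R}^n$ we have $(z-c)^T D^{-1}(z-c) = \|Q^{-T}(z-c)\|_2^2$, so a vector $z \in \mathbbm{Z}^n$ lies in $E$ if and only if $\|Q^{-T} z - Q^{-T} c\|_2 \le 1$. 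The set $\{\, Q^{-T} z \mid z \in \mathbbm{Z}^n \,\}$ is exactly the lattice $\mathcal{L}(Q^{-T})$, and since $(Q^T)^{-1}$ is a basis of the dual of $\mathcal{L}(Q)$, this lattice equals $\mathcal{L}(Q)^*$. Hence $E$ contains an integer vector whenever the Euclidean covering radius $\mu(\mathcal{L}(Q)^*)$ is at most $1$: then every point of $\mathbbm{R}^n$, in particular $Q^{-T} c$, lies within Euclidean distance $1$ of $\mathcal{L}(Q)^*$.

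Next I would feed in the hypothesis through Proposition~\ref{prop_flatness_direction}, which gives $w(E) = 2\lambda_1^{(2)}(\mathcal{L}(Q))$. Thus the assumption $w(E) \ge n$ is equivalent to $\lambda_1^{(2)}(\mathcal{L}(Q)) \ge n/2$, and it remains only to show that this lower bound on the shortest vector of the \emph{primal} lattice $\mathcal{L}(Q)$ forces $\mu(\mathcal{L}(Q)^*) \le 1$.

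This is precisely what Banaszczyk's transference theorem \cite{pp_banaszczykbounds} provides: for every full-rank lattice $L \subset \mathbbm{R}^n$ one has $2\,\mu(L)\,\lambda_1^{(2)}(L^*) \le n$. Applying this to $L = \mathcal{L}(Q)^*$, whose dual is $\mathcal{L}(Q)$, and using $\lambda_1^{(2)}(\mathcal{L}(Q)) \ge n/2$, we obtain
$$\mu\bigl(\mathcal{L}(Q)^*\bigr) \le \frac{n}{2\,\lambda_1^{(2)}(\mathcal{L}(Q))} \le \frac{n}{2 \cdot (n/2)} = 1,$$
so $E$ contains an integer vector, as desired. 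I do not expect a genuine obstacle here; the only points that need care are the identification $\mathcal{L}(Q^{-T}) = \mathcal{L}(Q)^*$ and the observation that, because $E$ is a \emph{closed} ellipsoid, the nonstrict inequality ``$\mu \le 1$'' is exactly what is needed. Everything else is a direct chain of the two cited results, which is why the proof can be relegated to the appendix.
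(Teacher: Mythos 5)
Your proof is correct and follows essentially the same route as the paper's: both reduce membership of an integer point in $E$ to the covering radius of $\mathcal{L}((Q^T)^{-1}) = \mathcal{L}(Q)^*$ being at most $1$, and both combine Proposition~\ref{prop_flatness_direction} with Banaszczyk's transference bound $\mu^{(2)}(\mathcal{L}(Q)^*)\,\lambda_1^{(2)}(\mathcal{L}(Q)) \le n/2$. The only cosmetic difference is that the paper argues by contraposition while you argue directly; your remarks on $\mathcal{L}(Q^{-T}) = \mathcal{L}(Q)^*$ and on the closedness of $E$ are exactly the points that need care.
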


\noindent Combining Proposition \ref{prop_flatness_direction} together with the flatness theorem for ellipsoids we obtain a flatness algorithm for ellipsoids:
Given an ellipsoid, we compute its width and a corresponding flatness direction $\tilde{d} \in \mathbbm{Z}^n$ by computing a shortest non-zero lattice vector in the lattice $\mathcal{L} ( Q )$.
If the width is larger than $n$, we output that the ellipsoid contains an integer vector.
Otherwise, we output the flatness direction $\tilde{d} \in \mathbbm{Z}^n$ together with an interval
$I_E = [ \min \{ \langle \tilde{d}, x \rangle | x \in E \} , \max \{ \langle \tilde{d} , x \rangle | x \in E \}]$.
In this case, the interval $I_E$ satisfies that $E$ contains an integer vector if and only if there exists $k \in \mathbbm{Z} \cap I_E$ such that $E \cap H_{k,\tilde{d}}$ contains an integer vector.
To compute a shortest non-zero lattice vector in $\mathcal{L} ( Q )$, we cannot use the single exponential time algorithm by Micciancio and Voulgaris \cite{pp_MV10} since it requires exponential space.
Instead we use Kannan's polynomial space algorithm \cite{pp_kannan} with its improvement by Hanrot and Stehlé \cite{pp_hast07}.
For a complete description of the algorithm see Algorithm \ref{alg_flatness_ellipsoid}.

\begin{figure}[ht]
\framebox{
\begin{minipage}[b]{15.5cm} \small
\begin{alg} \label{alg_flatness_ellipsoid} {\bf Flatness Algorithm for Ellipsoids}\\
{\tt 
{\bf Input:}
	Ellipsoid $E:= E ( D , c )$
\begin{itemize}
	\item Compute a decomposition $D = Q^T Q$ of the matrix $D$.
	\item Compute a shortest non-zero lattice vector $d \in \mathcal{L} ( Q )$ using Kannan's algorithm for $\svp$.
		Let $\tilde{d} := Q^{-1} d \in \mathbbm{Z}^n$.
	\item Set $w := 2 \| d \|_2$.\\
	 {\bf If} $w \geq n$, output that $E$ contains an integer vector.\\
	 {\bf Otherwise} output $\tilde{d} \in \mathbbm{Z}^n$ together with
	 	$k_{\min} := \lceil \langle \tilde{d} , c \rangle - \| d \|_2 \rceil$ and
	 	$k_{\max} := \lfloor \langle \tilde{d} , c \rangle + \| d \|_2 \rfloor$.
\end{itemize}
}
\end{alg}
\end{minipage}}
\end{figure}

\begin{prop} \label{prop_flatness_alg_result}
Given an ellipsoid $E \subseteq \mathbbm{R}^n$, the flatness algorithm for ellipsoids, Algorithm \ref{alg_flatness_ellipsoid}, outputs one of the following:
Either it outputs that $E$ contains an integer vector or it outputs a vector $\tilde{d} \in \mathbbm{Z}^n$ and an interval $I_E$ of length at most $n$ such that $E$ contains an integer vector if and only if there exists $k \in \mathbbm{Z} \cap I_E$ such that $E \cap H_{k,\tilde{d}}$ contains an integer vector.
The number of arithmetic operations of the algorithm is $n^{n/(2e)+o(n)}$,
the algorithm has polynomial space complexity, 
and each number computed by the algorithm has size at most
$r^{n^{\mathcal{O} ( 1 )}}$, where $r$ is an upper bound on the size of $E$.
\end{prop}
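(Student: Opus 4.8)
The plan is to derive correctness from Proposition~\ref{prop_flatness_direction} and Theorem~\ref{thm_flatness_ellipsoid}, and then to bound the running time, the space, and the sizes of the numbers involved. Since $d$ is a shortest nonzero vector of $\mathcal{L}(Q)$, Proposition~\ref{prop_flatness_direction} shows that $\tilde{d}=Q^{-1}d\in\mathbbm{Z}^n$ is a flatness direction of $E$, that $w=2\|d\|_2=w(E)$, and that $\min\{\langle\tilde{d},x\rangle\mid x\in E\}=\langle\tilde{d},c\rangle-\|d\|_2$ and $\max\{\langle\tilde{d},x\rangle\mid x\in E\}=\langle\tilde{d},c\rangle+\|d\|_2$. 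If $w\geq n$, then $w(E)\geq n$, so the flatness theorem for ellipsoids guarantees that $E$ contains an integer vector and the first output is correct. If $w<n$, then every integer vector $v\in E$ satisfies $\langle\tilde{d},v\rangle\in\mathbbm{Z}\cap[\langle\tilde{d},c\rangle-\|d\|_2,\langle\tilde{d},c\rangle+\|d\|_2]$, hence $k_{\min}\leq\langle\tilde{d},v\rangle\leq k_{\max}$ and $v\in E\cap H_{k,\tilde{d}}$ for some $k\in\mathbbm{Z}\cap I_E$ with $I_E=[k_{\min},k_{\max}]$; conversely any integer vector lying in some $E\cap H_{k,\tilde{d}}$ lies in $E$. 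Thus $E$ contains an integer vector iff some $E\cap H_{k,\tilde{d}}$, $k\in\mathbbm{Z}\cap I_E$, does, and $I_E$ has length $k_{\max}-k_{\min}\leq 2\|d\|_2=w(E)<n$.

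For the running time, the dominant step is the shortest vector computation. I would not compute an explicit, possibly irrational, factor $Q$ with $D=Q^TQ$; instead I would run Kannan's enumeration-based algorithm for $\svp$ (with the improvement of Hanrot and Stehl\'{e}, \cite{pp_kannan,pp_hast07}) on the rational Gram matrix $D$, exploiting that, by Proposition~\ref{prop_flatness_direction}, a flatness direction is exactly a nonzero $\tilde{d}\in\mathbbm{Z}^n$ minimizing $\tilde{d}^TD\tilde{d}$, so that the enumeration returns this coefficient vector directly. This takes $n^{n/(2e)+o(n)}$ arithmetic operations in polynomial space; forming $D$ from the description of $E$, comparing $w$ with $n$, and computing $k_{\min},k_{\max}$ are all polynomial.

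For the size guarantee I would bound the output together with all auxiliary quantities. From $\tilde{d}^TD\tilde{d}=\min_{0\neq y\in\mathbbm{Z}^n}y^TDy\leq D_{11}\leq\lambda_{\max}(D)$ and $\tilde{d}^TD\tilde{d}\geq\lambda_{\min}(D)\|\tilde{d}\|_2^2$ one gets $\|\tilde{d}\|_2^2\leq\lambda_{\max}(D)/\lambda_{\min}(D)$, and the condition number of a positive definite rational matrix of size $r^{\mathcal{O}(1)}$ is $r^{n^{\mathcal{O}(1)}}$, as one checks from the characteristic polynomial; hence $\tilde{d}$ has size $r^{n^{\mathcal{O}(1)}}$. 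The integers $k_{\min},k_{\max}$ are obtained without ever forming the irrational $\|d\|_2$, by locating the extreme integers $k$ that satisfy $(k-\langle\tilde{d},c\rangle)^2\leq\tilde{d}^TD\tilde{d}$ on the appropriate side of $\langle\tilde{d},c\rangle$, using only rational arithmetic; they are bounded by $|\langle\tilde{d},c\rangle|+\|d\|_2+1\leq r^{n^{\mathcal{O}(1)}}$. Together with the standard bit-size bound for Kannan's algorithm run on an input of size $r^{n^{\mathcal{O}(1)}}$, every number produced has size at most $r^{n^{\mathcal{O}(1)}}$, which in particular yields the polynomial space bound.

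The step I expect to be the main obstacle is this last one: reconciling the idealized formulation of Algorithm~\ref{alg_flatness_ellipsoid} — which speaks of an exact factorization $D=Q^TQ$ and of the real number $\|d\|_2$ — with a genuinely rational implementation, and invoking Kannan's $\svp$ algorithm, normally stated for lattice bases, on the Gram matrix while carrying through the bit-size and polynomial-space bookkeeping and confirming the $n/(2e)$ exponent from the analysis of Hanrot and Stehl\'{e}.
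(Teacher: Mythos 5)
Your proof is correct and follows the same skeleton as the paper's: correctness from Proposition~\ref{prop_flatness_direction} and Theorem~\ref{thm_flatness_ellipsoid}, running time dominated by Kannan's $\svp$ computation with the Hanrot--Stehl\'{e} bound $n^{n/(2e)+o(n)}$, and then a size bound on $\tilde{d}$ and on $k_{\min},k_{\max}$. You deviate in two places, both to your advantage. First, for the size of $\tilde{d}$ the paper invokes Lemma~\ref{lemma_appendix_change_repsize_mem_alg}, which goes through Minkowski's theorem ($\lambda_1^{(2)}(\mathcal{L}(D^{1/2}))\leq\sqrt{n}\det(D)^{1/2n}$) and the spectral norm of $D^{-1/2}$; your bound $\|\tilde{d}\|_2^2\leq D_{11}/\lambda_{\min}(D)\leq\lambda_{\max}(D)/\lambda_{\min}(D)$ is more elementary and in fact sharper, and both land comfortably within $r^{n^{\mathcal{O}(1)}}$. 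Second, you flag and repair a real implementation subtlety that the paper glosses over: a rational positive definite $D$ need not admit a rational factorization $D=Q^TQ$ (already $D=(3)$ fails), so "compute a decomposition $D=Q^TQ$" and "set $w:=2\|d\|_2$" cannot be taken literally in exact rational arithmetic. Running the enumeration on the Gram matrix $D$ to minimize $\tilde{d}^TD\tilde{d}$ directly, comparing $w^2=4\,\tilde{d}^TD\tilde{d}$ with $n^2$, and locating $k_{\min},k_{\max}$ via the inequality $(k-\langle\tilde{d},c\rangle)^2\leq\tilde{d}^TD\tilde{d}$ is the right way to make the algorithm genuinely rational while preserving the operation count and the polynomial space bound. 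The paper's proof simply asserts these points; yours justifies them.
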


\begin{proof}
The correctness of the algorithm follows directly from Proposition \ref{prop_flatness_direction} and Theorem \ref{thm_flatness_ellipsoid}.\\
To see that  the size of each number computed by the algorithm is at most $r^{n^{\mathcal{O} ( 1 )}}$, we observe that the length of the flatness direction is at most
$$\| \tilde{d} \|_2 \leq n^{(n+2)/2} \size ( D )^{(n+1)/2} \leq n^{(n+2)/2} r^{(n+1)/2},$$
see Lemma \ref{lemma_appendix_change_repsize_mem_alg} in the appendix.\\
Hence, the only thing we need to take care of is that the numbers $k_{\min}, k_{\max} \in \mathbbm{Z}$ does not become too large.
By definition, they are at most
$\langle \tilde{d}, c \rangle + 2 \lambda_1^{(2)} ( \mathcal{L} ( Q ) )$.
Since the width of the ellipsoid $E$ is at most $n$, we obtain using the Cauchy-Schwarz inequality
$$k \leq n^{(n+2)/2} r^{(n+1)/2} \| c \|_2 + n \leq r^{n^{\mathcal{O} ( 1 )}}.$$
The number of arithmetic operations is dominated by the number of arithmetic operations needed to compute a shortest non-zero lattice vector in $\mathcal{L} ( Q )$ using Kannan's $\svp$-algorithm. By the analysis of Hanrot and Stehlé \cite{pp_hast07} the number of arithmetic operations can be bounded $n^{n/(2e)+o(n)}$ times some factor polynomial in the input size.
\end{proof}

\noindent In general, we are not able to compute a flatness direction of a bounded convex set. But we can approximate the convex set by an ellipsoid and in this way obtain a direction in which the convex set is relatively flat.
By the approximation of a bounded convex set with an ellipsoid, we understand an ellipsoid which is contained in $\mathcal{C}$.
The approximation factor is that factor, whereby we need to scale the ellipsoid such that the scaled ellipsoid contains the convex set.
By scaling an ellipsoid with a positive factor $r > 0$ we understand the ellipsoid obtained from $E$ by scaling it from its center by the factor $r$.
We denote this as $r \star E$.
Formally, if $E = E ( D , c )$, then
$r \star E := r \cdot E ( D , 0 ) + c$.
Alternatively, such a scaled ellipsoid can be characterized as follows:
For $r > 0$, we have
$r \star E = E ( r^2 \cdot D , c )$.
We call an ellipsoid, which approximates a bounded convex set an approximate Löwner-John ellipsoid.

\begin{definition}
Let $\mathcal{C} \subset \mathbbm{R}^n$ be a full-dimensional bounded convex set and $0 < \gamma < 1$.
An ellipsoid $E$ satisfying 
$E \subseteq \mathcal{C} \subseteq ( 1 / \gamma ) \star E$
is called $1 / \gamma$-approximate Löwner-John ellipsoid of $\mathcal{C}$.
We call $1 / \gamma$ the approximation factor of the Löwner-John ellipsoid.
\end{definition}

\noindent If we are able to compute approximate Löwner-John ellipsoids for a class of bounded convex sets, then there exists a flatness algorithm for this class:
Given an approximate Löwner-John ellipsoid $E$ of a full-dimensional bounded convex set $\mathcal{C}$, we can compute the width and a corresponding flatness direction $\tilde{d} \in \mathbbm{Z}^n$ of the ellipsoid.
If this width is larger than $n$, the ellipsoid and therefore the convex set $\mathcal{C}$ contain an integer vector.
Otherwise, we observe that the width of the circumscribed ellipsoid $(1 / \gamma) \star E$ is at most
$(1 / \gamma) \cdot w ( E ) \leq n / \gamma$ and that $\tilde{d} \in \mathbbm{Z}^n$ is also a flatness direction of the circumscribed ellipsoid. 
Hence, the vector $\tilde{d} \in \mathbbm{Z}^n$ satisfies that
$$\left| \max \big\{ \langle \tilde{d} , x \rangle | x \in (1 / \gamma) \star E \big\}
- \min \big\{ \langle \tilde{d} , x \rangle | x \in (1 / \gamma) \star E \big\} \right|
\leq n / \gamma.$$
Since the convex set $\mathcal{C}$ is contained in $( 1 / \gamma ) \star E$, the vector $\tilde{d}$ also satisfies that every hyperplane $H_{k,\tilde{d}}$ which has a non-empty intersection with $\mathcal{C}$ satisfies that
$\min \{ \langle \tilde{d} , x \rangle | x \in ( 1 / \gamma ) \star E \}
\leq k \leq 
\max \{ \langle \tilde{d} , x \rangle | x \in ( 1 / \gamma ) \star E \}$.\\

\noindent Now, we use this idea to obtain flatness algorithms for polytopes and $\ell_p$-bodies.
In what follows, we show that for these convex sets there exist polynomial-time algorithms that compute approximate Löwner-John ellipsoids.
The algorithms are modifications of the famous ellipsoid method from Shor and Khachiyan and are based on an idea due to Yudin and Nemirovskii and Goffin
see \cite{pp_shor_ellipsoid}, \cite{pp_ellipsoid_method}, \cite{pp_yn} and \cite{pp_goffin}.
In general, these algorithms are known as shallow cut ellipsoid methods.
For more information about the ellipsoid method and its modifications see \cite{bk_gls}, \cite{bk_algorithmictheory}, \cite{bk_schrijver} or \cite{bk_kv}.

		\subsection{A flatness algorithm for polytopes} \label{sec_flatness_algorithm_polytopes}

\noindent For polytopes, there exists a polynomial algorithm that computes an approximate Löwner-John ellipsoid. The following result is due to \cite{bk_schrijver}.

\begin{thm} 
There exists an algorithm that given a full-dimensional polytope
$P \subseteq \mathbbm{R}^n$
computes a $2n$-approximate Löwner-John ellipsoid for $P$
in time polynomially bounded by $n$ and the size of $P$.
\end{thm}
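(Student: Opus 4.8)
The plan is to derive this from the classical \emph{shallow-cut ellipsoid method}, using the facet inequalities of the polytope $P=\{x\in\mathbbm{R}^n\mid Ax\le\beta\}$ (with $A\in\mathbbm{Z}^{s\times n}$, $\beta\in\mathbbm{Z}^s$ of size at most $r$) as a trivial shallow separation oracle. Two a priori estimates for a full-dimensional integral polytope make the method run in polynomial time, and I would establish these first. By Hadamard's inequality and Cramer's rule, every vertex of $P$ has coordinates of absolute value at most $n^{(n+1)/2}r^n$, so $P\subseteq E_0:=\bar B_n^{(2)}(0,R_0)$ with $R_0:=n^{(n+2)/2}r^n$ (this is essentially Lemma~\ref{lemma_polyhedron_out_box}); I take $E_0=E(R_0^2 I_n,0)$ as the initial ellipsoid, noting $\vol(E_0)\le(2R_0)^n$. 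Since $P$ is full-dimensional it has $n+1$ affinely independent vertices; bounding the common denominator of their coordinates again by Cramer's rule shows the simplex they span, and hence $P$ itself, has volume at least $v_0$ for some $v_0>0$ with $\log_2(1/v_0)\le n^{\mathcal{O}(1)}\log_2 r$.

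The main loop maintains an ellipsoid $E_k=E(D_k,c_k)\supseteq P$, starting from $E_0$. At step $k$, I would compute $\sigma_i:=\sqrt{a_i^T D_k a_i}$ for every facet $i$ and test whether $\langle a_i,c_k\rangle+\tfrac1{n+1}\sigma_i\le\beta_i$. By the ellipsoid maximisation identity recorded earlier for ellipsoids, $\max\{\langle a_i,x\rangle\mid x\in\tfrac1{n+1}\star E_k\}=\langle a_i,c_k\rangle+\tfrac1{n+1}\sigma_i$, so all tests succeed precisely when $\tfrac1{n+1}\star E_k\subseteq P$; in that case the algorithm outputs $E:=\tfrac1{n+1}\star E_k$, and $E\subseteq P\subseteq E_k=(n+1)\star E\subseteq 2n\star E$ (the last inclusion holds since $n\ge 1$) exhibits $E$ as a $2n$-approximate L\"owner--John ellipsoid of $P$. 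If instead some test fails, say $\langle a_i,c_k\rangle+\tfrac1{n+1}\sigma_i>\beta_i$, then the normalized depth $(\beta_i-\langle a_i,c_k\rangle)/\sigma_i$ of that constraint with respect to $E_k$ is smaller than $\tfrac1{n+1}$, whence $P\subseteq E_k\cap\{x\mid\langle a_i,x-c_k\rangle\le\tfrac1{n+1}\sigma_i\}$, a shallow cut of $E_k$ of \emph{fixed} normalized depth $\tfrac1{n+1}$. I replace $E_k$ by the minimum-volume ellipsoid $E_{k+1}$ containing this intersection, given by the standard explicit shallow-cut update (see~\cite{bk_gls,bk_schrijver}); this keeps $P\subseteq E_{k+1}$ and forces $\vol(E_{k+1})\le e^{-1/(5n^2)}\vol(E_k)$. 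Since $v_0\le\vol(P)\le\vol(E_k)\le e^{-k/(5n^2)}(2R_0)^n$ for all $k$, the loop must terminate with an output ellipsoid after at most $5n^2\log\big((2R_0)^n/v_0\big)=n^{\mathcal{O}(1)}\log_2 r$ iterations, each performing $s\cdot n^{\mathcal{O}(1)}$ arithmetic operations; in particular the running time is polynomial in $n$ and the size of $P$.

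The step I expect to need the most care is the numerical bookkeeping, not the geometry. The quantities $\sigma_i$ are in general irrational and the exact shallow-cut update makes the bit lengths of $D_k$ and $c_k$ grow, so a naive implementation is neither polynomial time nor polynomial space. As in the ordinary ellipsoid method, I would carry out all arithmetic with rationals of bit size $n^{\mathcal{O}(1)}\log_2 r$ and, after each update, slightly inflate the computed ellipsoid so that the inclusion $P\subseteq E_{k+1}$ is still valid while the volume still shrinks by a factor $e^{-\Theta(1/n^2)}$. The delicate point is to choose the rounding precision and the inflation factor so that \emph{both} invariants---$P\subseteq E_k$ and the geometric decrease of $\vol(E_k)$---survive simultaneously; this is exactly the rounding analysis performed for the shallow-cut ellipsoid method in \cite{bk_gls} and \cite{bk_schrijver}, and importing it verbatim finishes the proof and also bounds the bit size of every number produced by $n^{\mathcal{O}(1)}\log_2 r$.
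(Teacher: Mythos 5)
Your proposal is correct and follows the same route the paper takes: the paper establishes this theorem by appeal to the shallow-cut ellipsoid method of \cite{bk_schrijver} (with the facet inequalities of $P$ serving as a trivial shallow separation oracle), which is exactly the argument you spell out, including the Cramer/Hadamard outer ball, the simplex-based volume lower bound, the $\tfrac{1}{n+1}$-shallow termination test giving an $(n+1)\le 2n$ approximation factor, and the deferral of the rounding analysis to \cite{bk_gls} and \cite{bk_schrijver}. One minor quantitative slip: a shallow cut at normalized depth $\tfrac{1}{n+1}$ shrinks the volume only by a factor of about $e^{-\Theta(1/n^{3})}$ (e.g.\ $e^{-1/(5n(n+1)^{2})}$ in \cite{bk_gls}), not $e^{-1/(5n^{2})}$, which merely changes the iteration count to a different $n^{\mathcal{O}(1)}\log_2 r$ bound and does not affect the conclusion.
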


\noindent For us, this result is not enough. We need more precise statements about the running time and the size of the ellipsoid, as stated in the following.
A complete description of the algorithm together with a proof of the following theorem appears in the full version of this paper.

\begin{thm} \label{statement_LJ_polyhedra}
(Rounding method for polytopes) There exists an algorithm that given a full-dimensional polytope
$P \subseteq \mathbbm{R}^n$
computes a $2n$-approximate Löwner-John ellipsoid given by a symmetric positive definite matrix $D \in \mathbbm{Q}^{n \times n}$ and a vector $c \in \mathbbm{Q}^n$.
The number of arithmetic operations is
$( n s )^{\mathcal{O} ( 1 )} \log_2 ( r )$,
where $s$ is the number of constraints defining the polytope and $r$ is an upper bound on its size.
Each number computed by the algorithm has size at most
$2^{\mathcal{O} ( n^4 )} r^{\mathcal{O} ( n )}$.
\end{thm}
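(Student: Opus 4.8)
The plan is to run a shallow-cut ellipsoid method, implementing the required oracle directly from the constraint description of $P$, and then to make the precision bookkeeping quantitative enough to extract the two stated bounds; apart from that bookkeeping this is the classical rounding algorithm of \cite{bk_gls}, \cite{bk_schrijver} specialized to an explicitly given polytope.

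I would maintain an ellipsoid $E_k = E(D_k,c_k)$ together with the invariant $P \subseteq E_k$. For $E_0$ I would use the vertex bound for integral polytopes (Lemma~\ref{lemma_polyhedron_out_box}): every vertex of $P$ lies in $\bar{B}_n^{(\infty)}(0,n^{(n+1)/2}r^n)$, so $E_0$ may be taken to be the Euclidean ball of radius $R_0 = n^{(n+2)/2}r^n$ about the origin, which has $\vol(E_0) \le r^{n^{\mathcal{O}(1)}}$. At step $k$, using the elementary identity $\max\{\langle a_i,x\rangle : x \in E(D_k,c_k)\} = \langle a_i,c_k\rangle + \sqrt{a_i^{T}D_k a_i}$, I would test for each of the $s$ rows $a_i$ of $A$ whether $\langle a_i,c_k\rangle + \tfrac{1}{n+1}\sqrt{a_i^{T}D_k a_i} \le \beta_i$, i.e.\ whether $\tfrac{1}{n+1}\star E_k \subseteq P$. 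If all $s$ tests succeed I stop and output $E := \tfrac{1}{2n}\star E_k$; then $E \subseteq \tfrac{1}{n+1}\star E_k \subseteq P \subseteq E_k = (2n)\star E$, so $E$ is a $2n$-approximate L\"owner-John ellipsoid (the slack between $\tfrac{1}{2n}$ and $\tfrac{1}{n+1}$ absorbs the numerical rounding introduced below). If the $i$-th test fails, then $P \subseteq E_k \cap \{x : \langle a_i,x\rangle \le \beta_i\}$ is a cut of $E_k$ of depth strictly less than $\tfrac{1}{n+1} < \tfrac1n$ (and of depth at least $-1$, since $P \neq \emptyset$), and I replace $E_k$ by the L\"owner-John ellipsoid $E_{k+1}$ of this slice via the standard shallow-cut update, which satisfies $\vol(E_{k+1}) \le e^{-\Omega(n^{-2})}\vol(E_k)$.

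For termination I would use the lower bound $\vol(P) \ge r^{-n^{\mathcal{O}(1)}}$, which follows from Cramer's rule: $P$ is full-dimensional, hence contains a simplex on $n+1$ of its vertices, and the volume of that simplex is a nonzero rational with denominator of size $r^{n^{\mathcal{O}(1)}}$. Since the invariant $P \subseteq E_k$ forces $\vol(E_k) \ge \vol(P)$ at every step, the algorithm must terminate after at most $\mathcal{O}(n^2)\ln(\vol(E_0)/\vol(P)) = n^{\mathcal{O}(1)}\log_2 r$ iterations. Each iteration performs $s$ oracle checks and one rank-one ellipsoid update, i.e.\ $\mathcal{O}(ns + n^2)$ arithmetic operations, for a total of $(ns)^{\mathcal{O}(1)}\log_2 r$.

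The hard part is the bit-size bound. Since the update formula contains $\sqrt{a_i^{T}D_k a_i}$, one must round: after each update I would round the entries of $D_{k+1}$ and $c_{k+1}$ to a fixed precision of $p$ bits and then slightly enlarge the ellipsoid (replace $D_{k+1}$ by $(1+2^{-\Theta(p)})D_{k+1}$, with the matching perturbation of $c_{k+1}$) so that $P \subseteq E_{k+1}$ is still guaranteed and the eigenvalue spread of $D_{k+1}$ stays controlled. As in \cite{bk_gls} and \cite{bk_schrijver}, one then verifies that (i) the total per-step rounding error, including the enlargement, is small enough that the volume still decreases by a factor $e^{-\Omega(n^{-2})}$ and the invariant survives, and (ii) the containment tests remain reliable; once this holds, every number produced has bit size at most $p$. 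Feeding in the concrete quantities above --- the condition number of $D_k$ is at most $\vol(E_0)/\vol(P) \le r^{n^{\mathcal{O}(1)}}$, and there are at most $n^{\mathcal{O}(1)}\log_2 r$ iterations --- and tracking the accumulated error lets one fix $p = \mathcal{O}(n^4 + n\log_2 r)$ bits, which is exactly the size bound $2^{\mathcal{O}(n^4)}r^{\mathcal{O}(n)}$ claimed. This quantitative precision analysis is the technical core; the rest is the textbook shallow-cut ellipsoid method applied to an explicit polytope.
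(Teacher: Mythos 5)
Your proposal follows the same route the paper indicates for this result: a shallow-cut ellipsoid (rounding) method in the style of \cite{bk_gls} and \cite{bk_schrijver}, initialized from the vertex bound of Lemma \ref{lemma_polyhedron_out_box}, terminated via a Cramer's-rule lower bound on $\vol(P)$, and made polynomial-space by rounding the iterates to a fixed precision of $\mathcal{O}(n^4 + n\log_2 r)$ bits. The paper itself defers the complete proof to its full version, so there is no detailed argument to compare against, but your sketch is the expected one and the containment/approximation bookkeeping ($\tfrac{1}{n+1}$-test, output $\tfrac{1}{2n}\star E_k$, slack absorbing the rounding) is correct. Two small remarks: the volume decrease for a shallow cut at depth $\tfrac{1}{n+1}$ is only $e^{-\Omega(n^{-3})}$ rather than $e^{-\Omega(n^{-2})}$ (this changes nothing in the stated $(ns)^{\mathcal{O}(1)}\log_2 r$ bound), and the precision analysis you correctly identify as the technical core is only asserted, not carried out --- which is acceptable here since the paper does exactly the same.
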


\noindent Using this algorithm, we can adapt the idea described above and we obtain a flatness algorithm for polytopes.
A complete description of the algorithm is given in Algorithm \ref{alg_flatness_polytopes}.

\begin{figure}[ht]
\framebox{
\begin{minipage}[b]{15.5cm} \small
\begin{alg} \label{alg_flatness_polytopes} {\bf Flatness Algorithm for Polytopes}\\
{\tt 
{\bf Input:}
	A full-dimensional polytope $P \subseteq \mathbbm{R}^n$ given by $A \in \mathbbm{Z}^{s \times n}$ and $\beta \in \mathbbm{Z}^s$.\\
{\bf Used Subroutine:} rounding method for polytopes, Kannan's $\svp$ algorithm
\begin{itemize}
	\item Apply the rounding method for polytopes with input $P$.\\
		The result is $D \in \mathbbm{Q}^{n \times n}$ symmetric positive definite and $c \in \mathbbm{Q}^n$.\\
		Compute a decomposition $D = Q^T Q$ of the matrix $D$.
	\item Compute a shortest non-zero lattice vector $d \in \mathcal{L} ( Q )$ using Kannan's $\svp$ algorithm.
		Let $\tilde{d} := Q^{-1} d \in \mathbbm{Z}^n$.
	\item Set $w := 2 \| d \|_2$.\\
		 {\bf If} $w \geq n$, output that $P$ contains an integer vector.\\
	 	 {\bf Otherwise} output $\tilde{d} \in \mathbbm{Z}^n$ together with
	 	$k_{\min} := \lceil \langle \tilde{d} , c \rangle - 2n \| d \|_2 \rceil$ and
	 	$k_{\max} := \lfloor \langle \tilde{d} , c \rangle + 2n \| d \|_2 \rfloor$.
\end{itemize}
}
\end{alg}
\end{minipage}}
\end{figure}

\begin{thm} (Theorem \ref{thm_flatness_algorithm_polytopes} restated)\\
Given a full-dimensional polytope $P \subseteq \mathbbm{R}^n$, the flatness algorithm for polytopes outputs one of the following:
\begin{itemize}
 \item Either it outputs that $P$ contains an integer vector or
 \item it outputs a vector $\tilde{d} \in \mathbbm{Z}^n$ and an interval $I_P$ of length at most $2n^2$ such that $P$ contains an integer vector if and only if there exists $k \in \mathbbm{Z} \cap I_P$ such that $P \cap H_{k,\tilde{d}}$ contains an integer vector.
\end{itemize}
The number of arithmetic operations of the algorithm is 
$s^{\mathcal{O} ( 1 )} \log_2 ( r ) 2^{\mathcal{O} ( n )}$
and each number computed by the algorithm has size at most $r^{n^{\mathcal{O} ( 1 )}}$, where $r$ is an upper bound on the size of the polytope and $s$ is the number of constraints defining the polytope.
\end{thm}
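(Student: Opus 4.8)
\noindent The plan is to analyze Algorithm~\ref{alg_flatness_polytopes}, which is essentially the ellipsoid flatness algorithm underlying Proposition~\ref{prop_flatness_alg_result}, run on an approximate L\"owner--John ellipsoid of $P$ and with the interval endpoints rescaled by the approximation factor $2n$. First I would invoke the rounding method for polytopes (Theorem~\ref{statement_LJ_polyhedra}) to obtain a $2n$-approximate L\"owner--John ellipsoid $E = E(D,c)$, so that $E \subseteq P \subseteq 2n \star E$. Fixing a decomposition $D = Q^T Q$ and letting $d$ be a shortest non-zero vector of $\mathcal{L}(Q)$, Proposition~\ref{prop_flatness_direction} yields that $\tilde d := Q^{-1} d$ lies in $\mathbbm{Z}^n \setminus \{0\}$ (writing $d = Q z$ with $z \in \mathbbm{Z}^n$ gives $\tilde d = z \neq 0$), that $\tilde d$ is a flatness direction of $E$, and that $w(E) = w_{\tilde d}(E) = 2\|d\|_2$.

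\noindent For correctness I would distinguish the two branches. If $w := 2\|d\|_2 \geq n$, the Flatness Theorem for Ellipsoids (Theorem~\ref{thm_flatness_ellipsoid}) guarantees that $E$, hence $P \supseteq E$, contains an integer vector, which is what the algorithm outputs. Otherwise $w < n$, and I would use that scaling does not move the flatness direction: for $\rho > 0$ the ellipsoid $\rho \star E = E(\rho^2 D, c)$ still has $\tilde d$ as a flatness direction and, by the displayed formulas of Proposition~\ref{prop_flatness_direction} applied with $\rho^2 D$ in place of $D$, its extreme values along $\tilde d$ are $\langle \tilde d, c\rangle \pm \rho\|d\|_2$. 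Taking $\rho = 2n$ and using $P \subseteq 2n \star E$, every integer vector in $P$ lies on a hyperplane $H_{k,\tilde d}$ with $k$ in $I_P = [\,\lceil \langle \tilde d, c\rangle - 2n\|d\|_2\rceil, \lfloor \langle \tilde d, c\rangle + 2n\|d\|_2\rfloor\,]$, whose length is at most $4n\|d\|_2 = 2n\,w(E) < 2n^2$; the reverse implication (if $P \cap H_{k,\tilde d}$ contains an integer vector for some $k \in \mathbbm{Z} \cap I_P$ then so does $P$) is trivial. This establishes the dichotomy with the $2n^2$ bound on $|I_P|$.

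\noindent The bounds on the numbers produced follow the proof of Proposition~\ref{prop_flatness_alg_result}. By Theorem~\ref{statement_LJ_polyhedra} every entry of $D$ and $c$ has size at most $2^{\mathcal{O}(n^4)} r^{\mathcal{O}(n)} \leq r^{n^{\mathcal{O}(1)}}$; the factor $Q$ and the shortest-vector computation stay within $r^{n^{\mathcal{O}(1)}}$; the output vector satisfies $\|\tilde d\|_2 \leq n^{(n+2)/2}\size(D)^{(n+1)/2} \leq r^{n^{\mathcal{O}(1)}}$ by Lemma~\ref{lemma_appendix_change_repsize_mem_alg}; and since $\|d\|_2 < n/2$ in the branch where $I_P$ is output, Cauchy--Schwarz gives $|k_{\min}|, |k_{\max}| \leq \|\tilde d\|_2\|c\|_2 + n^2 \leq r^{n^{\mathcal{O}(1)}}$. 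Because Kannan's $\svp$ algorithm runs in polynomial space, so does the whole procedure. For the running time, the rounding method uses $(ns)^{\mathcal{O}(1)}\log_2(r)$ operations and the remaining linear algebra only $n^{\mathcal{O}(1)}$; the cost is dominated by Kannan's shortest-vector computation on $\mathcal{L}(Q)$, which in the Hanrot--Stehl\'e analysis is $n^{n/(2e)+o(n)}$, and summing these gives the claimed bound.

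\noindent \textbf{The main obstacle} is not conceptual but bookkeeping on bit sizes: the L\"owner--John matrix $D$ coming out of the rounding method can itself have size as large as $2^{\mathcal{O}(n^4)} r^{\mathcal{O}(n)}$, and one must verify that taking a square root of $D$ and then running an exact $\svp$ solver on $\mathcal{L}(Q)$ does not inflate this past $r^{n^{\mathcal{O}(1)}}$ — which is precisely the estimate already carried out for the ellipsoid flatness algorithm in Proposition~\ref{prop_flatness_alg_result}. So the remaining work reduces to transporting that analysis, together with the $2n$ rescaling of the interval, to polytopes.
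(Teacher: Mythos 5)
Your proposal is correct and follows essentially the same route as the paper: compute a $2n$-approximate L\"owner--John ellipsoid via the rounding method, apply the ellipsoid flatness machinery (Proposition~\ref{prop_flatness_direction}, Theorem~\ref{thm_flatness_ellipsoid}) to $E$ and its $2n$-scaling to get the dichotomy and the $2n^2$ interval bound, and transport the size and running-time estimates from Proposition~\ref{prop_flatness_alg_result}. Note only that your final operation count $s^{\mathcal{O}(1)}\log_2(r)\,n^{n/(2e)+o(n)}$ agrees with the paper's own proof and with the original statement of Theorem~\ref{thm_flatness_algorithm_polytopes}, while the restated version's $2^{\mathcal{O}(n)}$ appears to be an inconsistency in the paper itself rather than a gap in your argument.
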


\begin{proof}
As we have seen in Proposition \ref{prop_flatness_direction}, the value $w$ is the width of an approximate Löwner-John ellipsoid $E$ of the polytope $P$.
The algorithm computes this value and distinguishes between two cases:\\
If $w \geq n$, it is guaranteed by the flatness theorem that $E$ and therefore $P$ contain an integer vector, see Theorem \ref{thm_flatness_ellipsoid} \\
If $w < n$, the algorithm outputs a vector $\tilde{d} \in \mathbbm{Z}^n$ together with an interval  $I_P = [k_{\min}, k_{\max}]$. This interval contains all integers $k \in \mathbbm{Z}$ such that the hyperplane $H_{k,\tilde{d}}$ intersects the ellipsoid $2n \star E ( D , c )$.
It follows from $P \subseteq 2n \star E ( D , c )$ that it also contains all integers $k \in \mathbbm{Z}$ such that the hyperplane $H_{k,\tilde{d}}$ intersects the polytope $P$.
Since the width of the ellipsoid $E ( D , c )$ along the vector $\tilde{d}$ is at most $n$, the width of the ellipsoid $2n \star E ( D , c )$ along $\tilde{d}$, which is an upper bound on the length of the interval, is at most $2n^2$.\\

\noindent According to Theorem \ref{statement_LJ_polyhedra}, the size of an approximate Löwner-John ellipsoid of the polytope $P$ computed by the rounding method is at most
$$2^{\mathcal{O} ( n^4 )} \size ( P )^{\mathcal{O} ( n )} \leq 2^{\mathcal{O} ( n^4 )} r^{\mathcal{O} ( n )}.$$
In fact, the flatness algorithm for polytopes combines the flatness algorithm for ellipsoids for the ellipsoid $E ( D , c )$ and the ellipsoid $2 n \star E ( D , c )$.
Hence, it follows from Proposition \ref{prop_flatness_alg_result} that the size of each number computed by the algorithm is at most
$$\left( 2^{\mathcal{O} ( n^4 )} r^{\mathcal{O} ( n )} \right)^{n^{\mathcal{O} ( 1 )}} = r^{n^{\mathcal{O} ( 1 )}}.$$
The number of arithmetic operations is dominated by the number of arithmetic operations of the rounding method for polytopes and Kannan's $\svp$ algorithm.
Hence, we obtain
$$\left( n \cdot s \right)^{\mathcal{O} ( 1 )} \log_2 ( r ) + n^{n/(2e)+o(n)}
= s^{\mathcal{O} ( 1 )} \log_2 ( r ) n^{n/(2e)+o(n)}.$$
\end{proof}
	
\subsection{A flatness algorithm for $\ell_p$-Bodies} \label{sec_flatness_algorithm_lp}
	
\noindent To obtain a flatness algorithm for $\ell_p$-bodies we need to be able to compute approximate Löwner-John ellipsoids for $\ell_p$-bodies.

		\subsubsection{Computation of Löwner-John Ellipsoids for $\ell_p$-Bodies, $1 < p < \infty$} \label{subsub_LJ_LpBodies}

The algorithm that computes approximate Löwner-John ellipsoids for $\ell_p$-bodies is based on a variant of the shallow cut ellipsoid method due to \cite{bk_gls}.
This algorithm computes in polynomial time a $\sqrt{n} ( n + 1 )$-approximate Löwner-John ellipsoid for any well-formed convex body given by a separation oracle.
That means, we assume that the algorithm has access to an oracle that decides for a given vector whether it is contained in the convex set or not.
If the vector is not contained in the convex set, it provides a hyperplane that strictly separates this vector from the convex body.
To obtain an approximation factor linear in $n$ rather than of the norm $n^{3/2}$ as in \cite{bk_gls}, we combine the result of \cite{bk_gls} with an idea of \cite{pp_hk10} and \cite{pp_kochol_94}.
Unlike the original approach of \cite{bk_gls}, this approach leads to an algorithm whose number of arithmetic operations is single exponential in the dimension, but in our situation this is irrelevant.

\begin{thm} \label{statement_approx_LJ_oracle}
There exists an algorithm which satisfies the following properties:
Given a full-dimensional bounded convex set $\mathcal{C} \subseteq \mathbbm{R}^n$ by a separation oracle together with $r_{in} , R_{\out} > 0$ and $c_{\out} \in \mathbbm{R}^n$ such that
$\mathcal{C} \subseteq \bar{B}_n^{(2)} ( c_{\out} , R_{\out} )$
and 
$\vol_n ( \mathcal{C} ) \geq r_{in}^n \vol_n ( B_n^{(2)} ( 0 , 1 ) )$,
the algorithm computes a $4 n$-approximate Löwner-John ellipsoid.
The number of arithmetic operations of the algorithm is dominated by the number of calls to the oracle, which is at most
$\log_2 ( R_{\out} / r_{in} )^{\mathcal{O} ( 1 )} 2^{\mathcal{O} ( n )}.$
The algorithm requires polynomial space and each number computed by the algorithm has size at most
$2^{\mathcal{O} ( n^4 )} ( R_{\out} / r_{in} )^{\mathcal{O} ( 1 )}$.
\end{thm}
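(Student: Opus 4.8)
The plan is to follow the shallow-cut ellipsoid method of Grötschel, Lovász and Schrijver \cite{bk_gls}, modified so that its membership subroutine, instead of merely certifying the $\mathcal{O}(n^{3/2})$-factor of \cite{bk_gls}, certifies an inscribed ellipsoid of approximation factor $4n$; this is where the refinements of \cite{pp_hk10,pp_kochol_94} enter. I initialise $E_0 := \bar{B}_n^{(2)}(c_{\out}, R_{\out})$, so the invariant $\mathcal{C} \subseteq E_k = E(D_k, c_k)$ holds initially. At iteration $k$ I compute a decomposition $D_k = Q_k^T Q_k$ and query the separation oracle of $\mathcal{C}$ at the center $c_k$ and at the $2n$ endpoints $c_k \pm \tfrac{1}{\beta}Q_k^T e_i$, $1\le i\le n$, of the semi-axes of the shrunk ellipsoid $\tfrac{1}{\beta}\star E_k$, for a scaling $\beta$ calibrated below.

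If all $2n$ endpoints lie in $\mathcal{C}$, then by convexity the cross-polytope spanned by them is contained in $\mathcal{C}$; since a cross-polytope whose $2n$ vertices are the $\pm\tfrac{1}{\beta}$-scaled semi-axis endpoints of $E_k$ contains the concentric ellipsoid $\tfrac{1}{\beta\sqrt n}\star E_k$ (because $\|\cdot\|_1 \le \sqrt n\,\|\cdot\|_2$ in the affine frame in which $E_k$ is the unit ball), we obtain $\tfrac{1}{\beta\sqrt n}\star E_k \subseteq \mathcal{C} \subseteq E_k$. Taking $\beta := 4\sqrt n$, so that $\beta\sqrt n = 4n$, and outputting $E := \tfrac{1}{4n}\star E_k = E\!\left(\tfrac{1}{16n^2}D_k, c_k\right)$ gives $E \subseteq \mathcal{C} \subseteq (4n)\star E$, the desired $4n$-approximate Löwner--John ellipsoid.

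Otherwise the oracle returns, for some queried point $y\notin\mathcal{C}$, a vector $a\neq 0$ with $\langle a,x\rangle<\langle a,y\rangle$ for all $x\in\mathcal{C}$, so that $\mathcal{C}\subseteq\{x:\langle a,x\rangle\le\langle a,y\rangle\}$. If $y=c_k$ this is a central cut of $E_k$; otherwise $y$ is a scaled semi-axis endpoint, and (using $c_k\in\mathcal{C}$, which we may assume since $c_k\notin\mathcal{C}$ already yields a central cut) the halfspace separates $c_k$ from $y$ and hence cuts $E_k$ at bounded relative depth. In either case I replace $E_k$ by the minimum-volume ellipsoid $E_{k+1}\supseteq E_k\cap\{x:\langle a,x\rangle\le\langle a,y\rangle\}$ via the usual ellipsoid update; a volume-reduction estimate for this update gives $\vol_n(E_{k+1})\le q_n\vol_n(E_k)$ with $q_n<1$ depending only on $n$. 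Since $\mathcal{C}\subseteq E_k$ forces $\vol_n(E_k)\ge\vol_n(\mathcal{C})\ge r_{in}^n\vol_n(B_n^{(2)}(0,1))$ while $\vol_n(E_0)=R_{\out}^n\vol_n(B_n^{(2)}(0,1))$, the algorithm reaches the terminating case within $\mathcal{O}\!\left(n\ln(R_{\out}/r_{in})/\ln(1/q_n)\right)$ iterations; with $\beta$ calibrated to yield factor $4n$ this bound, and hence the number of oracle calls (up to a factor $2n+1$ per iteration), is $\log_2(R_{\out}/r_{in})^{\mathcal{O}(1)}2^{\mathcal{O}(n)}$.

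To stay in polynomial space, $E_{k+1}$ is not kept exactly but rounded to $\mathcal{O}(n^4)$-bit rationals and slightly enlarged, exactly as in \cite{bk_gls}, so that both the enclosure $\mathcal{C}\subseteq E_{k+1}$ and the per-step volume decrease survive; propagating this through the iterations yields the size bound $2^{\mathcal{O}(n^4)}(R_{\out}/r_{in})^{\mathcal{O}(1)}$ and bit sizes polynomial in $n$ and $\log_2(R_{\out}/r_{in})$. I expect the main obstacle to be the joint calibration of $\beta$: it must be small enough ($\beta\approx 4\sqrt n$) that the cross-polytope test certifies approximation factor $4n$, yet the separating hyperplane produced when that test fails must still remove enough of $E_k$ for a genuine volume decrease --- reconciling these two requirements is precisely the point that depresses the per-iteration progress and accounts for the single-exponential rather than polynomial iteration count, and it is here that the ideas of \cite{pp_hk10,pp_kochol_94} are needed. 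The remaining ingredients --- the explicit ellipsoid update, the volume estimate for shallow cuts, the rounding-with-enlargement lemma of \cite{bk_gls}, and the bound $\tfrac{1}{\beta\sqrt n}\star E_k\subseteq\conv\{c_k\pm\tfrac{1}{\beta}Q_k^Te_i\}\subseteq E_k$ for the inscribed cross-polytope --- are standard and recalled in the full version of this paper.
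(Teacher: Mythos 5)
Your overall architecture (shallow-cut ellipsoid iteration, cross-polytope certification of an inscribed ellipsoid, volume argument for termination, GLS-style rounding for polynomial space) matches the approach the paper intends, but there is a genuine gap at exactly the point you flag as ``the main obstacle'', and it is not a calibration issue that can be reconciled within your scheme --- it is where the actual contribution of \cite{pp_hk10} and \cite{pp_kochol_94} lives. With only the $2n$ cross-polytope vertices as test points, you must place them at distance $1/\beta$ with $\beta \leq 4\sqrt{n}$ (in the norm in which $E_k$ is the unit ball) for the inscribed cross-polytope to contain $\frac{1}{4n} \star E_k$. But then a violated test point $y$ yields a half-space $\{x \mid \langle a , x \rangle \leq \langle a , y \rangle\}$ whose signed depth relative to $E_k$ can be as large as $+1/\beta = 1/(4\sqrt{n})$, which exceeds the threshold $1/n$ once $n \geq 17$; and for a cut of depth $\gamma \geq 1/n$ the minimum-volume ellipsoid containing $E_k \cap \{x \mid \langle a, x\rangle \leq \langle a, y\rangle\}$ is $E_k$ itself, so the update makes no progress whatsoever. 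There is no $q_n < 1$; the volume-reduction estimate you invoke is false in this regime, and ``exponentially many iterations with depressed progress'' cannot rescue it. Your own accounting betrays the problem: with $2n+1$ oracle calls per iteration and a genuine per-iteration volume decrease, the total number of calls would be polynomial, not $2^{\mathcal{O}(n)}$.

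The missing idea is to replace the cross-polytope by a polytope with $2^{\mathcal{O}(n)}$ vertices on the boundary of $\frac{1}{\beta} \star E_k$ that approximates this ellipsoid to within a \emph{constant} factor (rather than the factor $\sqrt{n}$ achieved by the cross-polytope), e.g.\ the convex hull of a constant-resolution net of the sphere in the $E_k$-norm. One can then take $\beta$ of order $n$, say $\beta = 2n$: if all $2^{\mathcal{O}(n)}$ test points lie in $\mathcal{C}$, their convex hull contains $\frac{1}{2\beta} \star E_k = \frac{1}{4n} \star E_k$, certifying the factor $4n$; if one fails, the resulting cut has depth at most $1/(2n) < 1/n$, so the standard shallow-cut volume decrease applies and the iteration count is $(n \log_2(R_{\out}/r_{in}))^{\mathcal{O}(1)}$. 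This is precisely what produces the $2^{\mathcal{O}(n)}$ factor in the oracle-call bound of the theorem (it counts test points per iteration, not iterations), and it is the content of the ideas from \cite{pp_hk10} and \cite{pp_kochol_94} that you defer to without using. The remaining ingredients of your sketch (initialization, the containment via $\|\cdot\|_1 \leq \sqrt{n}\,\|\cdot\|_2$, the rounding-with-enlargement step and the resulting size bound $2^{\mathcal{O}(n^4)}(R_{\out}/r_{in})^{\mathcal{O}(1)}$) are fine once this is in place.
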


\noindent To apply Theorem \ref{statement_approx_LJ_oracle} to $\ell_p$-bodies, we need to realize a separation oracle for this class of convex sets.
Additionally, we need to determine parameters $R_{out}$, $r_{in} > 0$ and a vector $c_{out} \in \mathbbm{R}^m$ such that an $\ell_p$-body $B_{m,n}^{(p,V)} ( t , \alpha )$ is contained in an $\ell_2$-ball with radius $R_{out}$ centered at $c_{out}$ and such that the volume of an $\ell_p$-body $B_{m,n}^{(p,V)} ( t , \alpha )$ is at least $r_{in}^m$ times the volume of the $m$-dimensional Euclidean unit ball.
We have already seen in Lemma \ref{lemma_circumscribed_convex_body} how we can construct for an $\ell_p$-body a circumscribed $\ell_2$-ball.
Now, we prove a lower bound on the volume of an $\ell_p$-body provided that it contains an integer vector. The lower bound depends on the shape of the convex set, that means on the parameters defining it, and on the radius of a circumscribed Euclidean ball.
For the proof of the lower bound, we consider a special representation of the $\ell_p$-body.
If we consider the following convex function,
\begin{equation} \label{eq_def_function_lower_bound_volume}
F : \mathbbm{R}^m \to \mathbbm{R}, ~ x \mapsto \alpha_d^p \| V^{-1} ( ( x^T , 0^{n-m} )^T - t ) \|_p^p - \alpha_n^p,
\end{equation}
where $V \in \mathbbm{R}^{n \times n}$ nonsingular, $t \in \mathbbm{R}^n$ and $\alpha_n, \alpha_d \in \mathbbm{N}$.
Then we have 
$B_{m,n}^{(p,V)} ( t , \alpha )
= \{ x \in \mathbbm{R}^m | F ( x ) < 0 \}$
with $\alpha := \alpha_n / \alpha_d$.\\

\noindent To illustrate the main idea of the proof, which is due to Heinz \cite{pp_heinz05}, we imagine that the function $F$ is in addition differentiable and  that we know an upper bound $M$ on the length of its gradient $\nabla F ( x )$, $x \in \mathbbm{R}^m$, i. e., $\| \nabla F ( x ) \|_2 \leq M$ for all $x \in \mathbbm{R}^m$.
Further, we assume that we know some parameter $\epsilon > 0$ such that there exists a vector $\hat{x} \in \mathbbm{R}^n$ with $F ( \hat{x} ) \leq - \epsilon < 0$.\\

\noindent Since for every convex function the first-order Taylor approximation is a global underestimator of the function (first-order convexity condition), we obtain for all $x \in \mathbbm{R}^m$ that
$$F( \hat{x} ) \geq F ( x ) + \nabla F ( x )^T ( \hat{x} - x ).$$
Using the Cauchy-Schwarz inequality this yields the upper bound
$$F( x ) \leq F( \hat{x} ) + \nabla F ( x )^T ( x - \hat{x} )
\leq - \epsilon + M \| x - \hat{x} \|_2.$$
Hence, if a vector $x \in \mathbbm{R}^m$ satisfies $\| x - \hat{x} \|_2 \leq \epsilon / M$, then $F( x ) < 0$ and it is contained in the set $B_{m,n}^{(p,V)} ( t , \alpha )$.
This shows that $B_{m,n}^{(p,V)} ( t , \alpha )$
contains an Euclidean ball with radius $\epsilon / M$ centered around $\hat{x}$ and that the volume of
$B_{m,n}^{(p,V)} ( t , \alpha )$ is at least $( \epsilon / M )^m \vol_m ( B_m^{(2)} ( 0 , 1 ) )$.\\

\noindent For the function $F$ defined in (\ref{eq_def_function_lower_bound_volume}) we can compute such a parameter $\epsilon$ since we can show that there exists an integer $K$ such that for all $x \in \mathbbm{Z}^m$ there exists an integer $K' \leq K$ such that $K \cdot F ( x ) \in \mathbbm{Z}$.
That means for every integer vector $x \in \mathbbm{Z}^m$, $F ( x )$ is a rational number with denominator at most $K$.
Hence, if $B_{m,n}^{(p,V)} ( t , \alpha )$ contains an integer vector $\hat{x} \in \mathbbm{Z}^m$, then $F ( \hat{x} ) \leq - 1 / K < 0$.
In the following claim, we give an upper bound on the number $K$. A proof of it appears in the appendix.

\begin{claim} \label{claim_upper_bound_size}
Let $F : \mathbbm{R}^m \to \mathbbm{R}$ be a function defined as in (\ref{eq_def_function_lower_bound_volume}) given by a non-singular matrix $V \in \mathbbm{Q}^{n \times n}$, a vector $t \in \mathbbm{Q}^n$ and $\alpha_n, \alpha_d \in \mathbbm{N}$.
Let $S$ be an upper bound on the size of $V^{-1}$, $t$, $\alpha_n$ and $\alpha_d$.
Then, there exists an integer $K \leq S^{2n^2 p}$ such that
$K \cdot F ( x ) \in \mathbbm{Z}$ for all $x \in \mathbbm{Z}^m$.
\end{claim}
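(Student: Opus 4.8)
The plan is a direct denominator‑clearing argument, making precise the informal sketch that precedes the claim. Write every entry of $V^{-1}$ and of $t$ as a reduced fraction; by the definition of the size parameter, each numerator and each denominator occurring has absolute value at most $S$. Let $L$ be the product of all these denominators — $n^{2}$ of them coming from $V^{-1}$ and $n$ from $t$ — so $1\le L\le S^{n^{2}}\cdot S^{n}\le S^{2n^{2}}$, using $n\ge 1$.

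Next I would check that $L$ clears denominators at integer points. Fix $x\in\mathbbm{Z}^{m}$ and set $y:=(x^{T},0^{n-m})^{T}\in\mathbbm{Z}^{n}$. For each coordinate,
\[
\bigl(V^{-1}(y-t)\bigr)_{i}=\sum_{j=1}^{n}(V^{-1})_{ij}\,y_{j}-\sum_{j=1}^{n}(V^{-1})_{ij}\,t_{j}.
\]
In the first sum every summand has denominator dividing that of $(V^{-1})_{ij}$, hence dividing $L$; in the second sum every summand $(V^{-1})_{ij}t_{j}$ has denominator dividing the product of the denominators of $(V^{-1})_{ij}$ and of $t_{j}$, which also divides $L$ because $L$ contains both of these as separate factors. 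Hence the difference, and therefore $z_{i}:=L\cdot\bigl(V^{-1}(y-t)\bigr)_{i}$, lies in $\mathbbm{Z}$ for every $i$ and every $x\in\mathbbm{Z}^{m}$.

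Now put $K:=L^{p}$. Since $p$ is a positive integer, $|z_{i}|^{p}\in\mathbbm{Z}$, so from $L^{p}\,\|V^{-1}(y-t)\|_{p}^{p}=\sum_{i}|z_{i}|^{p}$ we obtain
\[
K\cdot F(x)=\alpha_d^{\,p}\sum_{i=1}^{n}|z_{i}|^{p}-\alpha_n^{\,p}L^{p}\in\mathbbm{Z},
\]
because $\alpha_d,\alpha_n,L$ are integers; and $K=L^{p}\le(S^{2n^{2}})^{p}=S^{2n^{2}p}$, which is the asserted bound. The computation is routine; the one point that needs a moment's care is that $L$ must be recognised as a common denominator of the \emph{mixed} products $(V^{-1})_{ij}t_{j}$ and not merely of the individual entries, which is why $L$ is taken to be the full product of all denominators of $V^{-1}$ and of $t$. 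The real constraint lies not in the proof but in the hypotheses: the statement is meaningful only for integral $p$ — for non‑integral $p$ neither $\alpha_n^{\,p}$ nor the $p$‑th powers $|z_{i}|^{p}$ is rational — so $F$ as written, and hence the claim, are to be read in the integral‑$p$ regime, with the general range $1<p<\infty$ handled separately.
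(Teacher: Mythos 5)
Your proof is correct and follows essentially the same route as the paper's: clear the denominators of the entries of $V^{-1}$ and $t$ with their product $L\le S^{2n^2}$, observe that $L\cdot V^{-1}((x^T,0^{n-m})^T-t)$ is then integral, and take $K=L^p\le S^{2n^2p}$; your bookkeeping of the mixed products $(V^{-1})_{ij}t_j$ is in fact slightly more careful than the paper's. Your closing caveat that the claim only makes sense for integral $p$ (since otherwise $|z_i|^p$ and $\alpha_n^p$ need not be rational) is a fair observation that applies equally to the paper's own proof, which tacitly treats $p$ as an integer.
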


\noindent Now, the main remaining problem is that the function $F$ defined in (\ref{eq_def_function_lower_bound_volume}) is not differentiable.
Hence, we cannot apply the idea of Heinz directly.
We need to modify the idea described above and work with the subgradient instead of the gradient. 
We start with a short overview about subgradients.

\begin{definition}
Let $f : \mathbbm{R}^n \to \mathbbm{R}$ be a convex function and $x \in \mathbbm{R}^n$.
A vector $g \in \mathbbm{R}^n$ is called a subgradient of $f$ at $x$, if the following holds,
\begin{equation} \label{eq_subgradient_inequality}
f ( z ) \geq f ( x ) + \langle g , z - x \rangle \mbox{ for all } z \in \mathbbm{R}^n.
\end{equation}
\end{definition}

\noindent The inequality (\ref{eq_subgradient_inequality}) is called subgradient inequality.
Geometrically, this inequality means that the graph of the affine function
$z \mapsto f ( x ) + \langle g , z - x \rangle$
is a supporting hyperplane of the epigraph of $f$ at $( x ,f ( x ) )$.
If $f$ is differentiable, then the subgradient is unique and it is simply the gradient of $f$ at $x$.
For a more detailed introduction into subgradients see \cite{bk_rockafellar} and \cite{bk_polya}.\\

\noindent Using the subgradient inequality, we can prove a lower bound on the volume of the set $B_{m,n}^{(p,V)} ( t , \alpha )$
under the assumption that for all $R > 0$ and $y \in B_m^{(2)} ( 0 , R )$, the length of a corresponding subgradient is bounded.

\begin{lemma} \label{lemma_lower_bound_volume_general}
Let $B_{m,n}^{(p,V)} ( t , \alpha )$ be an $\ell_p$-body given by $V \in \mathbbm{Q}^{n \times n}$ nonsingular, $t \in \mathbbm{Q}^n$, $\alpha = \alpha_n / \alpha_d > 0$ and $1 < p < \infty$.
Let $F : \mathbbm{R}^m \to \mathbbm{R}$ be a function defined as in (\ref{eq_def_function_lower_bound_volume}).
Let $S$ be an upper bound on the size of $B_{m,n}^{(p,V)} ( t , \alpha )$.
Let $R > 0$ such that $B_{m,n}^{(p,V)} ( t , \alpha )$ is contained in an Euclidean ball with radius $R$ centered at the origin.
Assume that there exists $M \in \mathbbm{R}$ such that the following holds:
For all $y \in B_m^{(2)} ( 0 , R )$ there exists a subgradient $g \in \mathbbm{R}^m$ of $F$ at $y$ which satisfies $\| g \|_2 \leq M$.\\
If $B_{m,n}^{(p,V)} ( t , \alpha )$ contains an integer vector $\hat{x} \in \mathbbm{Z}^m$, then
$$\vol_m ( B_{m,n}^{(p,V)} ( t , \alpha ) )
> ( S^{2n^2p} M )^{-m} \cdot \vol_m ( B_m^{(2)} ( 0 , 1 ) ).$$
\end{lemma}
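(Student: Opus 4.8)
The plan is to run the subgradient version of Heinz's argument that is outlined just before the lemma: inscribe a Euclidean ball of radius $1/(KM)$ in $B_{m,n}^{(p,V)}(t,\alpha)$, where $K$ is the integer supplied by Claim~\ref{claim_upper_bound_size}, and then compare volumes. \textbf{Step 1 (a quantitative lower bound on $-F(\hat{x})$).} Since $\hat{x}\in\mathbbm{Z}^m$ lies in the \emph{open} set $B_{m,n}^{(p,V)}(t,\alpha)=\{x\in\mathbbm{R}^m : F(x)<0\}$, we have $F(\hat{x})<0$. Applying Claim~\ref{claim_upper_bound_size} to the function $F$ of~(\ref{eq_def_function_lower_bound_volume}) yields an integer $K\le S^{2n^2p}$ with $K\,F(x)\in\mathbbm{Z}$ for all $x\in\mathbbm{Z}^m$; hence $K\,F(\hat{x})$ is a negative integer and $F(\hat{x})\le -1/K$. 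Note also that $F$ is convex and finite on all of $\mathbbm{R}^m$ (it is an affine substitution into $\|\cdot\|_p^p$), hence continuous and equipped with a subgradient at every point.

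\textbf{Step 2 (the inscribed ball).} Put $\rho:=1/(KM)$; I would show $B_m^{(2)}(\hat{x},\rho)\subseteq B_{m,n}^{(p,V)}(t,\alpha)$. Fix $y$ with $\|y-\hat{x}\|_2<\rho$ and suppose, for contradiction, $F(y)\ge 0$. Along the segment $[\hat{x},y]$ the value of $F$ passes from negative to nonnegative, so by continuity there is a first point $y^\ast$ on it with $F(y^\ast)=0$; every point strictly between $\hat{x}$ and $y^\ast$ lies in the body, hence in $B_m^{(2)}(0,R)$ by the containment hypothesis, and thus carries a subgradient of norm at most $M$. Passing to a limit of these subgradients produces a subgradient $g^\ast$ of $F$ at $y^\ast$ with $\|g^\ast\|_2\le M$. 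The subgradient inequality~(\ref{eq_subgradient_inequality}) at $y^\ast$ evaluated at $z=\hat{x}$, together with $F(y^\ast)=0$ and Cauchy--Schwarz, gives $F(\hat{x})\ge\langle g^\ast,\hat{x}-y^\ast\rangle\ge-M\|\hat{x}-y^\ast\|_2$, and combining this with $F(\hat{x})\le-1/K$ forces $\|\hat{x}-y^\ast\|_2\ge 1/(KM)=\rho$, contradicting $\|\hat{x}-y^\ast\|_2\le\|\hat{x}-y\|_2<\rho$. Hence $F(y)<0$, which proves the claim.

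\textbf{Step 3 (volume comparison).} From Step~2 and $K\le S^{2n^2p}$ one obtains $\vol_m(B_{m,n}^{(p,V)}(t,\alpha))\ge\vol_m(B_m^{(2)}(\hat{x},\rho))=(KM)^{-m}\vol_m(B_m^{(2)}(0,1))\ge(S^{2n^2p}M)^{-m}\vol_m(B_m^{(2)}(0,1))$. Strictness is easy: if $K<S^{2n^2p}$ the last inequality is strict, and if $K=S^{2n^2p}$ then the open convex set $B_{m,n}^{(p,V)}(t,\alpha)$ properly contains the closed ball $\bar{B}_m^{(2)}(\hat{x},\rho)$ (its inradius exceeds $\rho$, since $M$ is a Lipschitz bound over a ball of radius $R$ far larger than $\rho$), so the first inequality is strict.

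\textbf{Expected main obstacle.} The only genuinely delicate point is the boundary case in Step~2: the stopping point $y^\ast$ may lie on the sphere $\|\cdot\|_2=R$, where the hypothesis does not directly supply a bounded subgradient. I expect to settle this via closedness of the subdifferential of the globally finite, continuous convex function $F$ (so that a limit of norm-$\le M$ subgradients at interior points of the segment is again a norm-$\le M$ subgradient at $y^\ast$); alternatively one simply takes $R$ slightly larger than a circumscribing radius such as the one furnished by Lemma~\ref{lemma_circumscribed_convex_body}. Everything else is the routine Heinz-type estimate.
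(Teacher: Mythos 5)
Your proposal is correct and follows essentially the same route as the paper: Claim \ref{claim_upper_bound_size} gives $F(\hat{x})\le -1/K$ with $K\le S^{2n^2p}$, the bounded subgradient plus the subgradient inequality and Cauchy--Schwarz inscribe a Euclidean ball of radius $1/(KM)$ about $\hat{x}$ in the body, and the volume bound follows. The only difference is that the paper applies the subgradient inequality directly at each point $y$ near $\hat{x}$ (yielding $F(y)\le -S^{-2n^2p}+M\|y-\hat{x}\|_2<0$ immediately), so your contradiction argument at a zero-crossing $y^\ast$ with a limit of subgradients is more elaborate than needed, though not wrong.
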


\begin{proof}
Let $g \in \mathbbm{R}^m \backslash \{ 0 \}$ be a subgradient of $F$ at the vector $y \in B_m^{(2)} ( 0 , R )$ which satisfies $\| g \|_2 \leq M$.
Then it follows from the subgradient inequality (\ref{eq_subgradient_inequality}) for $\hat{x} \in \mathbbm{Z}^m$ that
$$F ( \hat{x} ) \geq F ( y ) + \langle g , \hat{x} - y \rangle.$$
As we have seen in Claim \ref{claim_upper_bound_size}, $F ( \hat{x} )$ is a rational number with denominator at most $S^{2n^2p}$.
Since $F ( \hat{x} ) < 0$, we obtain using the Cauchy-Schwarz inequality
$$F ( y ) \leq F ( \hat{x} ) + \langle g , y - \hat{x} \rangle
\leq - S^{-2n^2p} + \| g \|_2 \cdot \| y - \hat{x} \|_2
\leq - S^{-2n^2p} + M \| y - \hat{x} \|_2$$
which shows that every vector $y \in B_m^{(2)} ( 0 , R )$ with $\| y - \hat{x} \|_2 \leq S^{-2n^2p} / M$ satisfies $F ( y ) < 0$ and is contained in $B_{m,n}^{(p,V)} ( t , \alpha )$.\\
Hence, the $\ell_p$-body $B_{m,n}^{(p,V)} ( t , \alpha )$ contains a ball with radius
$(S^{2n^2p} M )^{-1}$ centered at $\hat{x}$ and the claimed lower bound for the volume follows directly.
\end{proof}

\noindent To obtain a lower bound on the volume of $B_{m,n}^{(p,V)} ( t , \alpha )$ we need to compute for every vector $y \in B_m^{(2)} ( 0 , R )$ an upper bound on the length of a corresponding subgradient of $F$ which depends only on the parameter $R$.
For this, we need to develop an explicit expression of a subgradient of the function $F$ defined in (\ref{eq_def_function_lower_bound_volume}).
We start with the computation of the subgradient of the following simple function.

\begin{lemma} \label{lemma_subgradient_l_p_potenziert}
Let $y \in \mathbbm{R}^n$ and $1 < p < \infty$.
Then a subgradient $g \in \mathbbm{R}^n$ of the function
$F_p : \mathbbm{R}^n \to \mathbbm{R}$,
$x  \mapsto \sum_{i=1}^n | x_i |^p$
at the vector $y$ is given by $g = (g_1, \hdots, g_n)^T$, where
$g_i := \sign (y_i) \cdot | y_i |^{p-1}$.
\end{lemma}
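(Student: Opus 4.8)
The plan is to verify directly that the proposed vector $g$ satisfies the subgradient inequality
$F_p(z) \geq F_p(y) + \langle g, z - y \rangle$ for all $z \in \mathbbm{R}^n$.
Since $F_p(x) = \sum_{i=1}^n |x_i|^p$ is a \emph{separable} sum, it suffices to prove the one-dimensional statement: for each coordinate, the number $g_i = \sign(y_i)\,|y_i|^{p-1}$ is a subgradient at $y_i$ of the convex function $\varphi_p : \mathbbm{R} \to \mathbbm{R}$, $\varphi_p(s) = |s|^p$; summing the $n$ resulting scalar inequalities then yields the claim. So the real content is the single-variable inequality
$|s|^p \geq |r|^p + p\,\sign(r)\,|r|^{p-1}\,(s - r)$ for all $r, s \in \mathbbm{R}$ and $1 < p < \infty$.

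First I would note that $\varphi_p$ is differentiable on $\mathbbm{R} \setminus \{0\}$ with $\varphi_p'(r) = p\,\sign(r)\,|r|^{p-1}$, and that $\varphi_p$ is convex (it is a composition/standard fact for $p \geq 1$). For a differentiable convex function the gradient is the unique subgradient, which settles the case $r \neq 0$ directly from the first-order convexity condition. For $r = 0$ the proposed subgradient is $g_i = 0$, and the inequality reduces to $|s|^p \geq 0$, which is trivial; note that at $r=0$ the function $\varphi_p$ is actually differentiable with derivative $0$ when $p>1$, so there is no subtlety here (the non-smooth point only appears for $p=1$, which is excluded). Carrying these cases out and reassembling gives: for every $z \in \mathbbm{R}^n$,
\[
F_p(z) = \sum_{i=1}^n |z_i|^p \geq \sum_{i=1}^n \left( |y_i|^p + g_i (z_i - y_i) \right) = F_p(y) + \langle g, z - y \rangle,
\]
which is exactly the subgradient inequality, so $g$ is a subgradient of $F_p$ at $y$.

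I do not anticipate a genuine obstacle here; this is a routine verification. The only point requiring a moment's care is the behaviour at coordinates where $y_i = 0$: one should observe that the formula $g_i = \sign(0)\cdot 0^{p-1} = 0$ is consistent with $\varphi_p'(0) = 0$ for $p > 1$, so no appeal to the full subdifferential set is needed and the chosen $g$ is in fact the (unique) gradient of $F_p$ at $y$ whenever $F_p$ is differentiable there. The statement as given asserts only that $g$ \emph{is a} subgradient, which is what the above establishes; the wording "a subgradient" rather than "the subgradient" is presumably chosen to cover later generalizations (the function $F$ in \eqref{eq_def_function_lower_bound_volume}, and the composition with an affine map), where genuine non-differentiability can occur.
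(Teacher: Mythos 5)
Your reduction to the one-dimensional case by separability is the right first move, but there is a factor-of-$p$ mismatch between what you prove and what the lemma asserts. The lemma claims that $g_i = \sign(y_i)\,|y_i|^{p-1}$ is a subgradient of $F_p$ at $y$, whereas the scalar inequality you identify as ``the real content'' and then verify is $|s|^p \geq |r|^p + p\,\sign(r)\,|r|^{p-1}(s-r)$, i.e.\ the first-order condition for the \emph{derivative} $\varphi_p'(r) = p\,\sign(r)\,|r|^{p-1}$. These are different numbers for $p \neq 1$, and your own (correct) observation that the subdifferential of a convex function at a point of differentiability is the singleton consisting of the derivative shows that the unscaled $g_i$ cannot be a subgradient at any $y_i \neq 0$. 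Indeed the lemma as written is false: for $n=1$, $p=2$, $y=1$, $z=1/2$ one gets $F_p(z) = 1/4 < 1/2 = F_p(y) + g\,(z-y)$. So your argument is a sound proof of the corrected statement with $g_i = p\,\sign(y_i)\,|y_i|^{p-1}$, but it does not establish the statement as given; the final ``reassembling'' display silently uses the $p$-scaled coefficients while the conclusion refers to the unscaled ones.

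For what it is worth, the paper's own proof targets the unscaled version and takes a different route: it bounds the secant slope $\lambda^{-1}\left( F_p(y+\lambda(z-y)) - F_p(y) \right)$ from below using the factorization $b^m - a^m = (b-a)\sum_{i=0}^{m-1} b^{m-1-i}a^i$ and then keeps only the single term $|y|^{p-1}$ of the sum. That argument only makes sense for integer $p$, and the step that discards the remaining terms reverses the inequality when $|y+\lambda(z-y)| < |y|$, which is consistent with the statement itself being off by the factor $p$. The clean fix is to insert the factor $p$ into the lemma (and propagate it through Lemma \ref{lemma_subgradient_Fp} and Lemma \ref{statement_upper_bound_coordinate_subgradient}, where it only costs a harmless factor $p$ in the bound $M$); with that correction your differentiability argument is complete and is simpler and more general than the paper's.
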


\noindent The proof consists of showing that the vector $g$ satisfies the subgradient inequality. For completeness, it appears in the appendix, see Section \ref{sec_appendix_technical_flatness}.
\noindent To compute the subgradient of the function $F$ defined as in (\ref{eq_def_function_lower_bound_volume}), we combine this result with the following lemma, which shows how the subgradient changes if we consider an affine transformation of the variables or the function.

\begin{lemma} \label{lemma_subgradient_affine_transformation}
Let $f : \mathbbm{R}^n \to \mathbbm{R}$ be a convex function.
\begin{itemize}
	\item  Let $h_1: \mathbbm{R}^n \to \mathbbm{R}$ defined by
$h_1 ( x ) := f ( A x + b )$,
where $A \in \mathbbm{R}^{n \times n}$ is a nonsingular matrix and $b \in \mathbbm{R}^n$.
Let $g_1 \in \mathbbm{R}^n$ be a subgradient of $f$ at the vector $A y + \beta$.
Then, the vector $A^T g_1$ is a subgradient of $h_1$ at the vector $y$.
	\item Let $h_2 : \mathbbm{R}^n \to \mathbbm{R}$ defined by
		$h_2 ( x ) := a \cdot f ( x ) + \beta$, where $a \in \mathbbm{R} \backslash \{ 0 \}$ and $\beta \in \mathbbm{R}$.
		Let $g_2 \in \mathbbm{R}^n$ be a subgradient of $f$ at the vector $y \in \mathbbm{R}^n$.
		Then $a g_2$ is a subgradient of $h_2$ at the vector $y$.
\end{itemize}
\end{lemma}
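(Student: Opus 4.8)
The plan is to prove both parts of Lemma~\ref{lemma_subgradient_affine_transformation} directly from the subgradient inequality~(\ref{eq_subgradient_inequality}); no convexity machinery beyond the definition of a subgradient is needed, and in particular the differentiability discussion preceding the lemma plays no role here. In each case I would fix an arbitrary test point $z \in \mathbbm{R}^n$, write down the subgradient inequality for $f$ at the appropriate point, and transform it algebraically into the subgradient inequality for $h_1$, respectively $h_2$, at $y$.

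For the first part, I would instantiate the subgradient inequality for $f$ at the point $Ay+b$, evaluated at $Az+b$, obtaining $f(Az+b) \geq f(Ay+b) + \langle g_1, (Az+b)-(Ay+b)\rangle$. Since $(Az+b)-(Ay+b)=A(z-y)$ and $\langle g_1, A(z-y)\rangle = \langle A^T g_1, z-y\rangle$, and since by definition $h_1(z)=f(Az+b)$ and $h_1(y)=f(Ay+b)$, this reads $h_1(z) \geq h_1(y) + \langle A^T g_1, z-y\rangle$ for all $z$, which is exactly the assertion that $A^T g_1$ is a subgradient of $h_1$ at $y$. Note that nonsingularity of $A$ is not used in this computation; it is only needed so that $h_1$ and its affine substitution remain well behaved in the way they are invoked elsewhere in the paper. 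For the second part, I would start from $f(z) \geq f(y) + \langle g_2, z-y\rangle$, multiply through by $a$, and add $\beta$, which yields $a f(z) + \beta \geq a f(y) + \beta + \langle a g_2, z-y\rangle$, i.e.\ $h_2(z) \geq h_2(y) + \langle a g_2, z-y\rangle$; hence $a g_2$ is a subgradient of $h_2$ at $y$.

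I do not expect any genuine obstacle. The only point that deserves a word of care is the sign of the scalar $a$ in the second part: multiplication of the subgradient inequality by $a$ preserves it only when $a>0$ (or $a\geq 0$), so the natural hypothesis is $a>0$ rather than $a\neq 0$; since the lemma is applied only with $a=\alpha_d^p>0$ in~(\ref{eq_def_function_lower_bound_volume}), I would simply record this. Finally, I would remark how the two parts are used together: applying the first part to the affine substitution $x\mapsto V^{-1}\big(( x^T,0^{n-m})^T - t\big)$ and then the second part with $a=\alpha_d^p$, $\beta=-\alpha_n^p$ converts the explicit subgradient of $F_p$ from Lemma~\ref{lemma_subgradient_l_p_potenziert} into an explicit subgradient of the function $F$ of~(\ref{eq_def_function_lower_bound_volume}), which is precisely what is required to bound $\|g\|_2$ and invoke Lemma~\ref{lemma_lower_bound_volume_general}.
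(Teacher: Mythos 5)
Your proof is correct and follows exactly the route the paper indicates (the paper omits the details, noting only that one verifies the subgradient inequality for $A^T g_1$ and $a g_2$, which is precisely what you do). Your observation about the sign of $a$ is also well taken: as stated with $a \in \mathbbm{R} \backslash \{0\}$ the second part fails for $a < 0$ (the inequality reverses, and $h_2$ need not even be convex), so the hypothesis should indeed be $a > 0$, which suffices for the only application with $a = \alpha_d^p$.
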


\noindent The proof of this lemma is straightforward since we need to show that the vectors $A^T g_1$  and $a g_2$ satisfy the subgradient inequality.
If we apply this result with $A = V^{-1}$, $\beta = - V^{-1} t$ and $a = \alpha_d^p$, $b = \alpha_n^p$, and restrict the subgradient to its first $m$ coordinates we are able to give an explicit expression of the subgradient of the function $F$.

\begin{lemma} \label{lemma_subgradient_Fp}
For $ m , n \in \mathbbm{N}$, $m \leq n$, a subgradient at the vector $y \in \mathbbm{R}^m$ of the function
$F : \mathbbm{R}^m \to \mathbbm{R}$,
$x \mapsto \alpha_d^p \| V^{-1} ( ( x^T , 0^{n-m} )^T - t ) \|_p^p - \alpha_n^p$,
where $V \in \mathbbm{R}^{n \times n}$ is nonsingular, $t \in \mathbbm{R}^n$ and $1 < p < \infty$,
is given by the vector $\alpha_d^p g \in \mathbbm{R}^m$ defined by
$g = ( V^{-1})^T \bar{g}_{\{1, \hdots, m \}}$,
where $\bar{g} \in \mathbbm{R}^n$ is defined by 
$\bar{g}_{i}
= \sign ( [V^{-1} ( y - t ) ]_i ) \cdot | [V^{-1} ( y - t ) ]_i |^p$.
\end{lemma}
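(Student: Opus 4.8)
\noindent\emph{Proof plan.} The idea is to build $F$ out of the elementary convex function $F_p\colon\mathbbm{R}^n\to\mathbbm{R}$, $F_p(z)=\sum_{i=1}^n|z_i|^p$, of Lemma~\ref{lemma_subgradient_l_p_potenziert} by three operations, and to track a subgradient through each of them. Write $\iota\colon\mathbbm{R}^m\to\mathbbm{R}^n$ for the coordinate embedding $\iota(x)=(x^T,0^{n-m})^T$, so that
\[
F(x)=\alpha_d^p\,F_p\!\big(V^{-1}(\iota(x)-t)\big)-\alpha_n^p .
\]
First I would note that $F$ is convex --- it is a positive multiple of the convex function $F_p$ composed with an affine map, plus a constant --- so that the notion of subgradient is meaningful here.

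\noindent Next I would carry out the computation one dimension up. Put $\widetilde F\colon\mathbbm{R}^n\to\mathbbm{R}$, $\widetilde F(u):=\alpha_d^p\,F_p\big(V^{-1}(u-t)\big)-\alpha_n^p$, so that $F=\widetilde F\circ\iota$. Let $\bar g\in\mathbbm{R}^n$ be the subgradient of $F_p$ at the point $V^{-1}(\iota(y)-t)$ supplied by Lemma~\ref{lemma_subgradient_l_p_potenziert}. Applying the first item of Lemma~\ref{lemma_subgradient_affine_transformation} with the nonsingular matrix $A=V^{-1}$ and translation $b=-V^{-1}t$ shows that $(V^{-1})^T\bar g$ is a subgradient of $u\mapsto F_p\big(V^{-1}(u-t)\big)$ at $\iota(y)$; applying the second item with $a=\alpha_d^p\neq 0$ and additive constant $-\alpha_n^p$ then shows that $\alpha_d^p\,(V^{-1})^T\bar g\in\mathbbm{R}^n$ is a subgradient of $\widetilde F$ at $\iota(y)$.

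\noindent It remains to descend from $\widetilde F$ to $F=\widetilde F\circ\iota$, that is, from $\mathbbm{R}^n$ back to $\mathbbm{R}^m$. Since $\iota$ is linear but not a bijection of $\mathbbm{R}^n$, this step is not an instance of Lemma~\ref{lemma_subgradient_affine_transformation}, so I would verify it directly: if $h\in\mathbbm{R}^n$ is any subgradient of $\widetilde F$ at $\iota(y)$, then for every $z\in\mathbbm{R}^m$ the subgradient inequality for $\widetilde F$ at the point $\iota(z)$ yields
\[
F(z)=\widetilde F(\iota(z))\ \ge\ \widetilde F(\iota(y))+\langle h,\,\iota(z)-\iota(y)\rangle=F(y)+\langle h_{\{1,\dots,m\}},\,z-y\rangle ,
\]
because $\iota(z)-\iota(y)=\iota(z-y)$ has vanishing coordinates $m+1,\dots,n$; hence the restriction $h_{\{1,\dots,m\}}$ of $h$ to its first $m$ coordinates is a subgradient of $F$ at $y$. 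Applying this with $h=\alpha_d^p\,(V^{-1})^T\bar g$ from the previous step gives that $\alpha_d^p\,g$ is a subgradient of $F$ at $y$, where $g$ is the vector formed by the first $m$ coordinates of $(V^{-1})^T\bar g$ --- exactly the expression asserted in the lemma.

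\noindent The argument is essentially bookkeeping, and every step except the last is immediate from Lemmas~\ref{lemma_subgradient_l_p_potenziert} and~\ref{lemma_subgradient_affine_transformation}. The one point that calls for a little care --- and the only ingredient not already supplied by those two lemmas --- is the last step: because the ambient dimension drops from $n$ to $m$, one cannot invoke the affine change-of-variables rule for subgradients and must instead read the restricted subgradient off directly from the subgradient inequality for the embedding $\mathbbm{R}^m\hookrightarrow\mathbbm{R}^n$.
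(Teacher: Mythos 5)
Your proposal is correct and follows essentially the same route as the paper, which likewise obtains the result by applying Lemma~\ref{lemma_subgradient_l_p_potenziert} to the base function, then Lemma~\ref{lemma_subgradient_affine_transformation} with $A=V^{-1}$, $b=-V^{-1}t$, $a=\alpha_d^p$, and finally restricting to the first $m$ coordinates. Your explicit verification of that last restriction step (which is not covered by the affine change-of-variables lemma, since the embedding $\mathbbm{R}^m\hookrightarrow\mathbbm{R}^n$ is not bijective) is a welcome addition the paper leaves implicit.
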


\noindent Using this explicit expression of the subgradient, we are able to give an upper bound on its length. The proof of the following lemma appears in the appendix, see Section \ref{sec_appendix_technical_flatness}.

\begin{lemma} \label{statement_upper_bound_coordinate_subgradient}
For $ m , n \in \mathbbm{N}$, $m \leq n$, a subgradient at the vector $y \in \mathbbm{R}^m$ of the function
$F : \mathbbm{R}^m \to \mathbbm{R}$,
$x \mapsto \alpha_d^p \| V^{-1} ( ( x^T , 0^{n-m} )^T - t ) \|_p^p - \alpha_n^p$,
where $V \in \mathbbm{R}^{n \times n}$ is nonsingular, $t \in \mathbbm{R}^n$ and $1 < p < \infty$,
is given by the vector $\alpha_d^p g \in \mathbbm{R}^m$ defined by
$g = ( V^{-1})^T \bar{g}_{\{1, \hdots, m \}}$,
where $\bar{g} \in \mathbbm{R}^n$ is defined by 
$\bar{g}_{i}
= \sign ( [V^{-1} ( y - t ) ]_i ) \cdot | [V^{-1} ( y - t ) ]_i |^p$.\\
If $y \in \bar{B}_m^{(2)} ( 0 , R ) \subseteq \mathbbm{R}^m$, then 
$$\| \alpha_d^p g \|_2 \leq m \cdot \left( \alpha_d n S^2 R \right)^{p+1},$$
where $S$ is an upper bound on the size of $V^{-1}$ and $t$.
\end{lemma}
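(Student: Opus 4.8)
The plan is to treat the first half of the statement---that $\alpha_d^p g$ is actually a subgradient of $F$---as already established, since it is verbatim Lemma~\ref{lemma_subgradient_Fp} (obtained by feeding Lemma~\ref{lemma_subgradient_l_p_potenziert} into Lemma~\ref{lemma_subgradient_affine_transformation}), so that only the norm estimate remains. I would set $z := V^{-1}\bigl((y^T,0^{n-m})^T - t\bigr) \in \mathbbm{R}^n$, so that $\bar g_i = \sign(z_i)\,|z_i|^p$ and $\alpha_d^p g$ is obtained from $(V^{-1})^T\bar g$ by keeping its first $m$ coordinates and scaling by $\alpha_d^p$. The whole estimate then reduces to controlling $\|z\|_\infty$.

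First I would bound $\|z\|_\infty$ by the triangle inequality. Since $S$ upper-bounds the size of $V^{-1}$ and of $t$, every entry of $V^{-1}$ and every coordinate of $t$ has absolute value at most $S$; together with $|y_i| \le \|y\|_2 \le R$ this makes every coordinate of $(y^T,0^{n-m})^T - t$ at most $R+S$ in absolute value, whence $|z_i| \le \sum_{j} |(V^{-1})_{ij}|\,\bigl|\bigl((y^T,0^{n-m})^T-t\bigr)_j\bigr| \le nS(R+S)$ for every $i$. Normalising harmlessly so that $S,R \ge 1$, this gives $\|z\|_\infty \le 2nS^2R$.

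Next I would push this through the two remaining operations. From $|\bar g_i| = |z_i|^p \le \|z\|_\infty^{p} \le (2nS^2R)^p$ and the same entrywise bound $|(V^{-1})_{ij}| \le S$, each coordinate of $(V^{-1})^T\bar g$ has absolute value at most $nS\,(2nS^2R)^p$; discarding all but the first $m$ coordinates and passing from the $\infty$-norm to the Euclidean norm costs only a factor $\sqrt m$. Scaling by $\alpha_d^p$ and collecting the powers of $\alpha_d$, $n$, $S$, $R$ then yields a bound of the form $\alpha_d^p\,\sqrt m\,nS\,(2nS^2R)^p$, and the remaining step is pure bookkeeping: absorbing the stray factor $2^p$ and lifting the exponents of $\alpha_d$ and $R$ from $p$ to $p+1$ (using $\alpha_d,R,m\ge 1$) to bring it into the stated closed form $m\,(\alpha_d\,n\,S^2\,R)^{p+1}$. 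I do not expect a genuine obstacle here---the lemma is a routine quantitative consequence of the explicit subgradient formula; the only thing that needs care is keeping the chain of crude inequalities (the coordinatewise replacement for $\|V^{-1}\|\le nS$, the step $R+S\le 2SR$, and the final collection of exponents) consistent with the slightly generous right-hand side.
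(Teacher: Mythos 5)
Your route is the same as the paper's: take the explicit subgradient formula from Lemma~\ref{lemma_subgradient_Fp} as given, bound the coordinates of $V^{-1}\bigl((y^T,0^{n-m})^T-t\bigr)$ entrywise via the triangle inequality, raise to the $p$-th power, push through $(V^{-1})^T$ entrywise again, and collect exponents. The one place your write-up does not actually close is the final step you call ``pure bookkeeping'': after using $R+S\le 2SR$ you carry a factor $2^p$, and the inequality you then need, $\sqrt{m}\,\alpha_d^p\,nS\,(2nS^2R)^p\le m\,(\alpha_d nS^2R)^{p+1}$, reduces to $2^p\le \sqrt{m}\,\alpha_d\,S\,R$, which is false in general (e.g.\ $m=\alpha_d=1$, $S=R=2$, $p=3$): an exponential-in-$p$ factor cannot be absorbed into a slack that is only linear in $\alpha_d$, $S$, $R$. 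The paper avoids this by using the sharper estimate $R+S\le RS$ (valid once $R,S\ge 2$, which holds in every use of the lemma), so that $|[V^{-1}(y-t)]_k|\le nS(R+S)\le nRS^2$ with no factor of $2$; then each coordinate of $g$ is at most $nS\,(nRS^2)^p\le (nS^2R)^{p+1}$, and the crude bound $\|g\|_2\le m\max_i|g_i|$ together with $\alpha_d^p\le\alpha_d^{p+1}$ finishes. So: same approach, and your argument is repaired simply by replacing $R+S\le 2SR$ with $R+S\le RS$.
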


\noindent Using this upper bound together with Lemma \ref{lemma_lower_bound_volume_general} and the upper bound of a radius of a circumscribed Euclidean ball, we get the following lower bound on the volume of $B_{m,n}^{(p,V)} ( t , \alpha )$.

\begin{lemma} \label{cor_lower_bound_volume_final}
Let $B_{m,n}^{(p,V)} ( t , \alpha )$ be an $\ell_p$-body, where $t \in \mathbbm{R}^n$, $V \in \mathbbm{Q}^{n \times n}$ is nonsingular, $\alpha \in \mathbbm{Q}^+$ and $1 < p < \infty$.
If $B_{m,n}^{(p,V)} ( t , \alpha )$ contains an integer vector, then its volume is at least
\begin{align*}
\vol_m ( B_{m,n}^{(p,V)} ( t , \alpha ) )
\geq 
\left( S^{2(n^2+2)} m^2 n^2 \| V \| \right)^{-m(p+1)} \cdot \vol_m ( B_m^{(2)} ( 0 , 1 ) ),
\end{align*}
where $S$ is an upper bound on the size of $V^{-1}$ and $t$.
\end{lemma}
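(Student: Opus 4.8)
The plan is to synthesise the three volume/subgradient estimates already established above: Heinz's volume argument, transported to the non-differentiable setting via subgradients, is exactly the content of Lemmas~\ref{lemma_lower_bound_volume_general} and~\ref{statement_upper_bound_coordinate_subgradient}, so the only thing left to do is to feed these into each other together with the circumscribed-ball estimate of Lemma~\ref{lemma_circumscribed_convex_body} and then collect constants. Throughout I write $\alpha=\alpha_n/\alpha_d$ with $\alpha_n,\alpha_d\in\mathbbm{N}$ and take $S$ to bound the size of the whole $\ell_p$-body (in particular $\alpha_n,\alpha_d\le S$ and the entries of $V^{-1}$ and of $t$ have size at most $S$); recall that $B_{m,n}^{(p,V)}(t,\alpha)=\{x\in\mathbbm{R}^m : F(x)<0\}$ for the convex function $F$ of~(\ref{eq_def_function_lower_bound_volume}).

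First I would produce an \emph{origin-centered} Euclidean ball containing the body. By Lemma~\ref{lemma_circumscribed_convex_body}, $B_{m,n}^{(p,V)}(t,\alpha)$ lies in the $m$-dimensional ball of radius $\alpha\sqrt n\,\|V\|$ about $\bar t:=(t_1,\dots,t_m)$; since $\|\bar t\|_2\le\sqrt m\,S\le\sqrt n\,S$ and $\alpha\le S$, it follows that
$$B_{m,n}^{(p,V)}(t,\alpha)\subseteq \bar B_m^{(2)}\big(0,\,R\big),\qquad R:=\sqrt n\,S\,(\|V\|+1).$$
Next, Lemma~\ref{statement_upper_bound_coordinate_subgradient} guarantees that for every $y\in\bar B_m^{(2)}(0,R)$ the function $F$ has a subgradient of Euclidean norm at most $M:=m\,(\alpha_d\, n\, S^2 R)^{p+1}$. (If one needs this bound on a marginally larger ball, so as to contain the small ball around the integer point used in the proof of Lemma~\ref{lemma_lower_bound_volume_general}, one simply replaces $R$ by $R+1$ here; this affects only the absolute constants.) Feeding this $R$ and $M$ into Lemma~\ref{lemma_lower_bound_volume_general}, and using the hypothesis that $B_{m,n}^{(p,V)}(t,\alpha)$ contains an integer vector, yields
$$\vol_m\big(B_{m,n}^{(p,V)}(t,\alpha)\big) > \big(S^{2n^2p}\,M\big)^{-m}\,\vol_m\big(B_m^{(2)}(0,1)\big).$$

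The remaining step is to check $S^{2n^2p}\,M\le\big(S^{2(n^2+2)} m^2 n^2\|V\|\big)^{p+1}$, which immediately gives the stated bound. Substituting $M=m(\alpha_d n S^2 R)^{p+1}$ and $R=\sqrt n\,S(\|V\|+1)$, bounding $\alpha_d\le S$, and controlling the additive $1$ in $\|V\|+1$ through $\|V\|\cdot\|V^{-1}\|\ge\|V V^{-1}\|=1$ together with $\|V^{-1}\|\le n\,S$ (so that $1+\|V\|^{-1}\le 1+nS$), one is reduced to a comparison of explicit powers of $S$, $m$, $n$ and $\|V\|$. I expect \textbf{this final bookkeeping to be the only genuine obstacle}: in particular, recentering the circumscribed ball at the origin and keeping the joint dependence on $\|V\|$ and $S$ under control so that the exponents match. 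The constants $S^{2(n^2+2)}$, $m^2$ and $n^2$ in the statement are chosen precisely to absorb the various $\sqrt n$ factors, the $2^{p+1}$, and the surplus created by $\|V\|+1$, so the verification should go through routinely once these reductions are made; everything conceptual has already been supplied by the three cited lemmas.
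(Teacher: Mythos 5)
Your proposal follows exactly the paper's route: it combines the circumscribed-ball estimate of Lemma \ref{lemma_circumscribed_convex_body} (recentered at the origin), the subgradient bound of Lemma \ref{statement_upper_bound_coordinate_subgradient}, and the general volume bound of Lemma \ref{lemma_lower_bound_volume_general}, leaving only the comparison of constants — which is precisely what the paper's own (even terser) proof does. The argument is correct and the same in substance.
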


\begin{proof}
It follows from Lemma \ref{lemma_circumscribed_convex_body}, that
the convex body $B_{m,n}^{(p,V)} ( t , \alpha )$ is contained in an Euclidean ball centered at the origin, whose radius is at most
$\alpha \sqrt{n} \| V \| + m S$.
Hence, if we choose
$R := \alpha \sqrt{n} m \| V \| \cdot S$,
then the Euclidean ball $B_m^{(2)} ( 0 , R )$
contains $B_{m,n}^{(p,V)} ( t , \alpha )$.
Combining this with the result from Lemma \ref{statement_upper_bound_coordinate_subgradient}, the statement follows directly from Lemma \ref{lemma_lower_bound_volume_general}.
\end{proof}

\noindent To compute approximate Löwner-John ellipsoids, we need to be able to compute separating hyperplanes. The following result gives a relation between this problem and the computation of subgradients.

\begin{lemma}
Let $f : \mathbbm{R}^n \to \mathbbm{R}$ be a convex function and 
$\mathcal{C}_{\alpha} := \{ x \in \mathbbm{R}^n | f ( x ) < \alpha \}$
for some $\alpha > 0$ be the corresponding convex body.
Let $y \in \mathbbm{R}^n$ with $y \not\in \mathcal{C}_{\alpha}$.
Then, any subgradient $g \in \mathbbm{R}^n$ of $f$ at $y$ defines a hyperplane that separates $y$ from $\mathcal{C}_{\alpha}$,
i.e., $\langle g , x \rangle \leq \langle g , y \rangle$
for all $x \in \mathcal{C}_{\alpha}$.
\end{lemma}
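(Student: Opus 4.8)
The plan is a direct application of the subgradient inequality (\ref{eq_subgradient_inequality}); no machinery beyond convexity is needed. First I would extract the only consequence of the hypothesis $y \notin \mathcal{C}_{\alpha}$ that is actually used: since $\mathcal{C}_{\alpha} = \{ x \mid f(x) < \alpha \}$ by definition, the statement $y \notin \mathcal{C}_{\alpha}$ is \emph{equivalent} to $f(y) \geq \alpha$.

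Next, fix an arbitrary $x \in \mathcal{C}_{\alpha}$, so $f(x) < \alpha$. Applying the subgradient inequality for $g$ at $y$ with $z := x$ gives
\begin{equation*}
f(x) \geq f(y) + \langle g , x - y \rangle .
\end{equation*}
Chaining this with the bounds $f(x) < \alpha$ and $\alpha \leq f(y)$ yields
\begin{equation*}
f(y) + \langle g , x - y \rangle \leq f(x) < \alpha \leq f(y) ,
\end{equation*}
and cancelling $f(y)$ between the extreme terms forces $\langle g , x - y \rangle < 0$, i.e. $\langle g , x \rangle < \langle g , y \rangle$. Since $x \in \mathcal{C}_{\alpha}$ was arbitrary, every point of $\mathcal{C}_{\alpha}$ lies strictly on one side of the hyperplane $\{ z \mid \langle g , z \rangle = \langle g , y \rangle \}$ through $y$, which is exactly the claimed separation (in fact with strict inequality, slightly stronger than the stated $\langle g , x \rangle \leq \langle g , y \rangle$). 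Note also that this strict inequality, applied to any single $x \in \mathcal{C}_{\alpha}$, shows $g \neq 0$ whenever $\mathcal{C}_{\alpha} \neq \emptyset$, so the hyperplane is genuinely nondegenerate.

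There is essentially no obstacle. The one point worth flagging is that the strictness of the sublevel set ($f(x) < \alpha$ rather than $\leq$) is precisely what turns the soft inequality in the conclusion into a strict one; this is consistent with the use of $\mathcal{C}_{\alpha}$ as an \emph{open} convex body, and it is this strict separation property that is needed when the subgradient is handed to the ellipsoid method as a separation oracle.
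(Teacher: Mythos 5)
Your proof is correct and is exactly the argument the paper intends: the paper itself states only that the lemma ``follows directly from the subgradient inequality,'' and your chain $f(y)+\langle g, x-y\rangle \leq f(x) < \alpha \leq f(y)$ is precisely that direct application. The observations about strictness and nondegeneracy of $g$ are valid bonuses but not needed for the stated claim.
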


\noindent The proof of this lemma follows directly from the subgradient inequality (\ref{eq_subgradient_inequality}).
Hence, Lemma \ref{lemma_subgradient_Fp} yields to an efficient realization of a separation oracle for an $\ell_p$-body.
Together with the results from Lemma \ref{lemma_circumscribed_convex_body} and Lemma \ref{cor_lower_bound_volume_final}, this shows that we can use the algorithm from Theorem \ref{statement_approx_LJ_oracle}. To obtain an algorithm that computes an approximate Löwner-John ellipsoid for $\ell_p$-bodies.

\begin{thm} \label{statement_LJ_lpbody} (Rounding method for $\ell_p$-bodies)
Let $B_{m,n}^{(p,V)} ( t , \alpha )$ be an $\ell_p$-body given by a nonsingular matrix $V \in \mathbbm{Q}^{n \times n}$, $t \in \mathbbm{Q}^n$, $\alpha > 0$ and $1 < p < \infty$.
Then, there exists an algorithm that given such an $\ell_p$-body outputs one of the following:
\begin{itemize}
			\item Either it outputs that $B_{m,n}^{(p,V)} ( t , \alpha )$ does not contain an integer vector, or
			\item it outputs a $4 m$-approximate Löwner-John ellipsoid in form of a positive definite matrix $D \in \mathbbm{Q}^{m \times m}$ and a vector $c \in \mathbbm{Q}^m$.
				In this case, the size of the ellipsoid is at most $2^{\mathcal{O} ( n^4 )} r^{\mathcal{O} ( n^3 p )}$.
		\end{itemize}
The algorithm uses polynomial space and its number of arithmetic operations is at most $p ( n \log_2 ( r ) )^{\mathcal{O} ( 1 )} 2^{\mathcal{O} ( m )}$.
Here, $r$ is an upper bound on the size of the $\ell_p$-body.
\end{thm}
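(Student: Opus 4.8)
The plan is to reduce Theorem~\ref{statement_LJ_lpbody} to the shallow-cut ellipsoid algorithm of Theorem~\ref{statement_approx_LJ_oracle}. To run that algorithm on the $\ell_p$-body $\mathcal{B}:=B_{m,n}^{(p,V)}(t,\alpha)$, viewed as a full-dimensional convex set in $\mathbbm{R}^m$, I must supply three things: a separation oracle for $\mathcal{B}$; a centre $c_{\out}\in\mathbbm{Q}^m$ and radius $R_{\out}>0$ with $\mathcal{B}\subseteq\bar{B}_m^{(2)}(c_{\out},R_{\out})$; and a radius $r_{in}>0$ witnessing $\vol_m(\mathcal{B})\geq r_{in}^m\vol_m(B_m^{(2)}(0,1))$. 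The third item is the subtle one, because such an $r_{in}$ is available only \emph{conditionally}: the idea is to pick $r_{in}$ so that, if $\mathcal{B}$ contains an integer vector, the volume bound holds (Lemma~\ref{cor_lower_bound_volume_final}); then, if the ellipsoid algorithm fails to produce an ellipsoid -- necessarily because the enclosing ellipsoid's volume has dropped below $r_{in}^m\vol_m(B_m^{(2)}(0,1))$ -- the contrapositive of Lemma~\ref{cor_lower_bound_volume_final} tells us $\mathcal{B}$ has no integer vector, and we report that.

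For the oracle I would use the representation $\mathcal{B}=\{x\in\mathbbm{R}^m : F(x)<0\}$, with $F$ the convex function of~(\ref{eq_def_function_lower_bound_volume}). On a rational query $y$, the oracle evaluates $F(y)$; if $F(y)<0$ it answers ``$y\in\mathcal{B}$'', and otherwise it returns the subgradient $\alpha_d^p g$ of Lemma~\ref{lemma_subgradient_Fp}, which by the separating-hyperplane lemma stated just before Theorem~\ref{statement_LJ_lpbody} yields a hyperplane separating $y$ from $\mathcal{B}$. Since $p$ need not be an integer, the $p$-th powers in $F$ and in this subgradient are computed only approximately, to $p\cdot n^{\mathcal{O}(1)}\log_2(r)$ bits of precision; as the ellipsoid method only requires a weak separation oracle this is harmless, and each oracle call costs a number of arithmetic operations polynomial in $n$ and $\log_2 r$ and runs in polynomial space.

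For the outer ball I invoke Lemma~\ref{lemma_circumscribed_convex_body}, which puts $\mathcal{B}$ inside the $m$-dimensional Euclidean ball of radius $\alpha\sqrt{n}\,\|V\|$ centred at the orthogonal projection of $t$ onto $\Span(e_1,\dots,e_m)$; since $\size(V^{-1})\leq r$ and $r\geq n$, Cramer's rule bounds the spectral norm $\|V\|$ of $V=(V^{-1})^{-1}$ by $r^{\mathcal{O}(n)}$, giving $c_{\out}\in\mathbbm{Q}^m$ and $R_{\out}\leq r^{\mathcal{O}(n)}$. For $r_{in}$ I apply Lemma~\ref{cor_lower_bound_volume_final} with $S=r$: if $\mathcal{B}$ contains an integer vector then $\vol_m(\mathcal{B})\geq r_{in}^m\vol_m(B_m^{(2)}(0,1))$ with $r_{in}=(S^{2(n^2+2)}m^2n^2\|V\|)^{-(p+1)}$, so $r_{in}^{-1}\leq r^{\mathcal{O}(n^2p)}$ and $\log_2(R_{\out}/r_{in})=\mathcal{O}(p\,n^{\mathcal{O}(1)}\log_2 r)$. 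Feeding $(\text{oracle},c_{\out},R_{\out},r_{in})$ to the algorithm of Theorem~\ref{statement_approx_LJ_oracle}, applied in dimension $m$, produces exactly the two cases of Theorem~\ref{statement_LJ_lpbody}: either a $4m$-approximate L\"owner-John ellipsoid $(D,c)$, which we output unchanged, or a volume certificate, from which the contrapositive of Lemma~\ref{cor_lower_bound_volume_final} lets us output ``no integer vector''. Plugging the parameter bounds into Theorem~\ref{statement_approx_LJ_oracle} (and using that the shallow-cut ellipsoid method makes a number of oracle calls linear in $\log_2(R_{\out}/r_{in})$ up to a $2^{\mathcal{O}(m)}$ factor) gives the operation count $p\,(n\log_2 r)^{\mathcal{O}(1)}2^{\mathcal{O}(m)}$, while its size bound $2^{\mathcal{O}(m^4)}(R_{\out}/r_{in})^{\mathcal{O}(1)}$ becomes $2^{\mathcal{O}(n^4)}r^{\mathcal{O}(n^3p)}$; polynomial space is inherited from Theorem~\ref{statement_approx_LJ_oracle} and from the oracle.

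I expect the main obstacle to be the soundness of the ``no integer vector'' branch. Theorem~\ref{statement_approx_LJ_oracle} only guarantees an ellipsoid \emph{under} the assumed volume lower bound, which I cannot verify in advance; so I must argue that the shallow-cut ellipsoid method genuinely \emph{detects} a violation of that hypothesis. Concretely, each iteration either shrinks the enclosing ellipsoid's volume by a fixed factor or halts with the desired inscribed ellipsoid, and after the allotted number of iterations without halting the enclosing ellipsoid's volume has fallen below $r_{in}^m\vol_m(B_m^{(2)}(0,1))$; since $\mathcal{B}$ remains inside that ellipsoid throughout, this forces $\vol_m(\mathcal{B})<r_{in}^m\vol_m(B_m^{(2)}(0,1))$, which is precisely the negation of the hypothesis of Lemma~\ref{cor_lower_bound_volume_final}. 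A secondary, bookkeeping-type point is the non-integral $p$: all $p$-th powers occurring in $F$, its subgradient, the bound on $\|V\|$, and $r_{in}$ must be carried as rationals of polynomially bounded bit length, and one has to check that this rounding never flips a ``clearly inside'' versus ``clearly outside'' verdict of the weak oracle -- routine for ellipsoid methods, but needed here to keep the space polynomial.
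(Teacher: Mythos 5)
Your proposal follows exactly the route the paper takes: realize a separation oracle from the subgradient of $F$ (Lemma \ref{lemma_subgradient_Fp} plus the separating-hyperplane lemma), take $R_{\out}$ from Lemma \ref{lemma_circumscribed_convex_body} and the conditional $r_{in}$ from Lemma \ref{cor_lower_bound_volume_final}, and feed these to the shallow-cut ellipsoid algorithm of Theorem \ref{statement_approx_LJ_oracle}, interpreting a failure to produce an ellipsoid as a certificate that no integer vector exists. Your explicit treatment of the conditional volume bound and of the precision issues for non-integral $p$ is a correct elaboration of what the paper leaves implicit, and the parameter bookkeeping matches the claimed bounds.
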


		\subsubsection{Description and Analysis of the flatness algorithm for $\ell_p$-bodies}

\noindent Using this result, we obtain a flatness algorithm for $\ell_p$-bodies in the same way as we obtain the flatness algorithm for polytopes, see Algorithm \ref{alg_flatness_lp}.

\begin{figure}[!ht]
\framebox{
\begin{minipage}[b]{15.5cm} \small
\begin{alg} \label{alg_flatness_lp} {\bf Flatness Algorithm for $\ell_p$-bodies}\\
{\tt 
{\bf Input:}
	An $\ell_p$-body $B_{m,n}^{(p,V)} ( t , \alpha )$, where $V \in \mathbbm{Q}^{n \times n}$ nonsingular, $t \in \mathbbm{Q}^n$, $\alpha > 0$, $1 < p < \infty$.\\
{\bf Used Subroutine:} Rounding method for $\ell_p$-bodies, Kannan's $\svp$ algorithm.\\[0.25cm]
Apply the rounding method for $\ell_p$-bodies with input $B_{m,n}^{(p,V)} ( t , \alpha )$.\\
{\bf If} it outputs that $B_{m,n}^{(p,V)} ( t , \alpha )$ does not contain an integer vector, then output this.\\
{\bf Otherwise,} the result is $D \in \mathbbm{Q}^{m \times m}$ symmetric positive definite and $c \in \mathbbm{Q}^{m}$.
	\begin{itemize}	
	\item Compute a decomposition $D = Q^T Q$ of the matrix $D$.
	\item Compute a shortest lattice vector $d \in \mathcal{L} ( Q ) \backslash \{ 0 \}$ using Kannan's $\svp$-algorithm.
		Let $\tilde{d} := Q^{-1} d \in \mathbbm{Z}^m$.
	\item Set $w := 2 \| d \|_2$.\\
	{\bf If} $w \geq m$, output that $B_{m,n}^{(p,V)} ( t , \alpha )$ contains an integer vector.\\
	{\bf Otherwise} output $\tilde{d} \in \mathbbm{Z}^n$ together with
	 	$k_{\min} := \lceil \langle \tilde{d} , c \rangle - 4 m  \| d \|_2 \rceil$ and
	 	$k_{\max} := \lfloor \langle \tilde{d} , c \rangle + 4 m \| d \|_2 \rfloor$.
\end{itemize}

}
\end{alg}
\end{minipage}}
\end{figure}

\begin{thm} (Theorem \ref{thm_flatness_algorithm_lp} restated)
Given as input an $\ell_p$-body $B_{m,n}^{(p,V)} ( t , \alpha)$, the flatness algorithm for $\ell_p$-bodies outputs one of the following:
\begin{itemize}
 \item Either it outputs that $B_{m,n}^{(p,V)} ( t , \alpha )$ does not contain an integer vector, or
 \item it outputs that $B_{m,n}^{(p,V)} ( t , \alpha )$ contains an integer vector, or
 \item it outputs a vector $\tilde{d} \in \mathbbm{Z}^m$ and an interval $I_{\mathcal{B}}$ of length at most $4m^2$ such that $B_{m,n}^{(p,V)} ( t , \alpha )$ contains an integer vector if and only if there exists $k \in \mathbbm{Z} \cap I_{\mathcal{B}}$ such that $B_{m,n}^{(p,V)} ( t , \alpha ) \cap H_{k,\tilde{d}}$ contains an integer vector.
\end{itemize}
The number of arithmetic operations of the algorithm is
 		$$p \cdot ( n \log_2 ( r ) )^{\mathcal{O} ( 1 )} m^{m/(2e)+o(m)}$$
 		and each number computed by the algorithm has size at most
 		$r^{p n^{\mathcal{O} ( 1 )}}$,
 		where $r$ is an upper bound on the size of the $\ell_p$-body.
\end{thm}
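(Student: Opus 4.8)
The plan is to mirror the proof of the flatness algorithm for polytopes, replacing the rounding method for polytopes by the rounding method for $\ell_p$-bodies from Theorem \ref{statement_LJ_lpbody} and working throughout in the ambient dimension $m$ into which the $\ell_p$-body is identified. First I would run the rounding method on $B_{m,n}^{(p,V)}(t,\alpha)$. If it reports that the body contains no integer vector, we may output this directly by the correctness of Theorem \ref{statement_LJ_lpbody}. Otherwise it returns a $4m$-approximate L\"owner--John ellipsoid $E=E(D,c)\subseteq\mathbbm{R}^m$ with $E\subseteq B_{m,n}^{(p,V)}(t,\alpha)\subseteq 4m\star E$, and this is the case that requires the flatness argument.

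Next I would compute a decomposition $D=Q^TQ$, a shortest nonzero vector $d\in\mathcal{L}(Q)$ via Kannan's $\svp$-algorithm, and set $\tilde d:=Q^{-1}d\in\mathbbm{Z}^m$. By Proposition \ref{prop_flatness_direction} the vector $\tilde d$ is a flatness direction of $E$ and $w(E)=w_{\tilde d}(E)=2\|d\|_2=:w$. If $w\ge m$, then the flatness theorem for ellipsoids (Theorem \ref{thm_flatness_ellipsoid}, with $n$ replaced by $m$, applicable since $E$ is a genuine $m$-dimensional ellipsoid) guarantees that $E$, hence $B_{m,n}^{(p,V)}(t,\alpha)\supseteq E$, contains an integer vector, so the algorithm outputs this correctly. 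If $w<m$, I would use that scaling an ellipsoid from its center preserves its flatness directions and multiplies its width along a fixed direction by the scaling factor (both immediate from Proposition \ref{prop_flatness_direction} via $r\star E=E(r^2D,c)$): thus $\tilde d$ is also a flatness direction of $4m\star E$ and $w_{\tilde d}(4m\star E)=4m\,w<4m^2$. By the explicit formula for $\max/\min$ of a linear functional over an ellipsoid, the interval $I_{\mathcal B}=[k_{\min},k_{\max}]$ returned by the algorithm is exactly the range of $x\mapsto\langle\tilde d,x\rangle$ on $4m\star E$ truncated to the integers, and has length at most $8m\|d\|_2=4m\,w<4m^2$. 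Since $B_{m,n}^{(p,V)}(t,\alpha)\subseteq 4m\star E$, every hyperplane $H_{k,\tilde d}$ meeting the $\ell_p$-body has $k\in I_{\mathcal B}$, giving the required equivalence: $B_{m,n}^{(p,V)}(t,\alpha)$ contains an integer vector if and only if $B_{m,n}^{(p,V)}(t,\alpha)\cap H_{k,\tilde d}$ does for some $k\in\mathbbm{Z}\cap I_{\mathcal B}$.

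For the quantitative claims I would bound the number of arithmetic operations by the sum of the cost of the rounding method, namely $p\,(n\log_2 r)^{\mathcal{O}(1)}2^{\mathcal{O}(m)}$ by Theorem \ref{statement_LJ_lpbody}, and the cost of Kannan's $\svp$-algorithm on a rank-$m$ lattice, namely $m^{m/(2e)+o(m)}$ times a factor polynomial in the input size by the Hanrot--Stehl\'e analysis used in Proposition \ref{prop_flatness_alg_result}; the $2^{\mathcal{O}(m)}$ term is absorbed into $m^{m/(2e)+o(m)}$, yielding the stated bound. For the bit sizes, Theorem \ref{statement_LJ_lpbody} bounds the size of $D$ and $c$ by $2^{\mathcal{O}(n^4)}r^{\mathcal{O}(n^3p)}$, and then the same reasoning as in Proposition \ref{prop_flatness_alg_result} (decomposing $D$, solving $\svp$ in $\mathcal{L}(Q)$, bounding $\|\tilde d\|_2$ and $k_{\min},k_{\max}$ via Cauchy--Schwarz) shows every number produced has size at most $\bigl(2^{\mathcal{O}(n^4)}r^{\mathcal{O}(n^3p)}\bigr)^{n^{\mathcal{O}(1)}}=r^{p\,n^{\mathcal{O}(1)}}$, and polynomial space follows since each subroutine runs in polynomial space.

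I expect the only genuinely delicate point to be the bookkeeping that licenses invoking the ellipsoid flatness theorem and Proposition \ref{prop_flatness_direction} \emph{in dimension $m$}: one must check that identifying $B_{m,n}^{(p,V)}(t,\alpha)$ with a full-dimensional convex set in $\mathbbm{R}^m$ is consistent with the rounding method's output ellipsoid being $m$-dimensional and with ``integer vector'' meaning the same thing on both sides, so that the scaled ellipsoid $4m\star E$ genuinely circumscribes the body and the width bound transfers. Everything else --- the width/scaling identity and the time and size estimates --- is routine once the statements of Theorems \ref{statement_LJ_lpbody}, \ref{thm_flatness_ellipsoid} and Proposition \ref{prop_flatness_direction} are available.
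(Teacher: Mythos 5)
Your proposal is correct and follows essentially the same route as the paper's proof: run the rounding method of Theorem \ref{statement_LJ_lpbody}, pass the resulting $4m$-approximate L\"owner--John ellipsoid to the ellipsoid flatness machinery (Proposition \ref{prop_flatness_direction} and Theorem \ref{thm_flatness_ellipsoid} in dimension $m$), bound the interval length by $4m\,w(E)<4m^2$ via the circumscribed ellipsoid $4m\star E$, and combine the cost and size bounds of the rounding method with those of Proposition \ref{prop_flatness_alg_result}. The quantitative estimates you give match the paper's, so nothing further is needed.
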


\begin{proof}
Obviously, we can assume that the rounding method computes an approximate Löwner-John ellipsoid.
For this ellipsoid $E$, the algorithm computes a flatness direction as well as its width $w$, see Proposition \ref{prop_flatness_direction}.
Then the algorithm distinguishes between two cases:\\
If $w \geq m$, it is guaranteed by the flatness theorem that $E$ and therefore $B_{m,n}^{(p,V)} ( t , \alpha )$ contain an integer vector, see Theorem \ref{thm_flatness_ellipsoid}.\\
Otherwise, we have $w < m$ and the algorithm outputs a vector $\tilde{d} \in \mathbbm{Z}^m$ together with an interval $I_{\mathcal{B}} = [ k_{\min} , k_{\max} ]$.
This interval contains all integers $k \in \mathbbm{Z}$ such that the hyperplane $H_{k,\tilde{d}}$ intersects the ellipsoid $4m \star E ( D , c )$.
Since $B_{m,n}^{(p,V)} ( t , \alpha ) \subseteq 4 m \star E ( D , c )$, 
this interval contains also all integers $k \in \mathbbm{Z}$ such that $H_{k,\tilde{d}}$ intersects the $\ell_p$-body.
Since $w ( E ) < m$, the length of the interval is at most $4 m^2$.\\

\noindent According to Theorem \ref{statement_LJ_lpbody}, the size of an approximate Löwner-John ellipsoid computed by the rounding method is at most
$2^{\mathcal{O} ( n^4 )} r^{\mathcal{O} ( n^2 p )}$.
Since the flatness algorithm is a combination of the flatness algorithm for ellipsoids applied with the inscribed ellipsoid $E ( D , c )$ and the circumscribed ellipsoid $4 m^{2} \star E$, it follows from Proposition \ref{prop_flatness_alg_result} that the size of each number computed by the algorithm is at most
$$\left ( 2^{\mathcal{O} ( n^4 )} r^{\mathcal{O} (n^2 p )} \right)^{n^{\mathcal{O} ( 1 )}}
= r^{p \cdot n^{\mathcal{O} ( 1 )}}.$$
The number of arithmetic operations is dominated by the number of arithmetic operations of the rounding method for $\ell_p$-bodies and by the number of arithmetic operations required by Kannan's $\svp$ algorithm.
Hence, it is upper bounded by
$$p ( n \cdot \log_2 ( r ) )^{\mathcal{O} ( 1 )} 2^{\mathcal{O} ( m ) } + m^{m/(2e)+o(m)}
= p ( n \log_2 ( r ) )^{\mathcal{O} ( 1 )} m^{m/(2e)+o(m)}.$$
\end{proof}

\noindent Using this rounding method, we obtain a flatness algorithm for $\ell_p$-bodies.
Hence, our assumptions made in Section \ref{sec_LM_alg_lp} are satisfied and there exists a deterministic algorithm that solves $\lmp$ for balls generated by an $\ell_p$-norm, $1 < p < \infty$.
As stated in Theorem \ref{statement_cvp_self_lp} this leads to a deterministic algorithm that solves $\cvp$ with respect to an $\ell_p$-norm with $1 < p < \infty$.
In the same way, we obtain a deterministic algorithm that solves $\lmp$ for polytopes and a deterministic algorithm for $\cvp$ for all polyhedral norms, e.g. the $\ell_1$-norm and the $\ell_{\infty}$-norm.

\paragraph{Acknowledgment.} We thank Friedrich Eisenbrand for several stimulating discussions that greatly benefited the paper.


\bibliography{../../../../bib}
\bibliographystyle{alpha}

\appendix

\section{Appendix}

\subsection{Projection for non-Euclidean norms} \label{appendix_counterexample}

\noindent If we consider other norms than the Euclidean norm, we had to differ between two types of norms on the vector space $\mathbbm{R}^n$: The norms which are induced by a scalar product or inner product and the norms which are not.
A norm on $\mathbbm{R}^n$ is induced by a scalar product, if for all $x \in \mathbbm{R}^n$, $\| x \| = \sqrt{\langle x , x \rangle}$, where $\langle \cdot , \cdot \rangle : \mathbbm{R}^n \times \mathbbm{R}^n \to \mathbbm{R}$ denotes a scalar product.
Particularly, all $\ell_p$-norms with $p \not= 2$ are not induced by a scalar product. 
The norms on $\mathbbm{R}^n$ which are induced by an scalar product are exactly that norms whose unit ball is an ellipsoid. For such norms the solution of the closest vector problem can be easily reduced to the solution of the closest vector problem with respect to the Euclidean norm using the fact that each ellipsoid is the image of the Euclidean unit ball under a bijective affine transformation.\\
If the norm is not induced by a scalar product it does not seem to be possible to use projections for algorithmic solution of the closest vector problem.\\

\noindent We start with a description of the situation and show how we can use projections if we consider the closest lattice vector problem with respect to a norm induced by a scalar product.
Then, we give a counterexample why dimension reduction it does not seem to work for norms which are not induced by a scalar product.\\

\noindent In the following, we assume that we are given a vector space $\Span ( b_1, \hdots, b_n )$, where $b_1, \hdots, b_n$ are linearly independent,
a target vector $t \in \Span ( b_1, \hdots, b_n )$ and
a lattice $L = \mathcal{L} ( b_1, \hdots, b_{n-1} )$.
We are searching for the lattice vector in $L$, which is closest to $t$, see Figure \ref{cvp_situation} for an illustration.\\

\begin{figure} \label{cvp_situation}
\begin{center}
\begin{tikzpicture}[scale=0.25]
	\draw[thick,dotted] (0,0) -- (17,0) -- (22,7) -- (6,7) -- (0,0);
	\draw[thick,->,color=blau]	(7,4) -- (19,14);
	\draw[dashed,color=blau]		(19,4) -- (19,14);
	\draw[thick,->]	(7,4) -- (12,11);
	\draw[thick,->] (7,4) -- (7,11);
	\draw (22.5,3) node[right=-0.5cm] {$\Span ( b_1, \hdots, b_{n-1} )$};
	\draw (7,11) node[right=2pt] {$b_n^\dagger$};
	\draw (12,11) node[right=2pt] {$b_n$};
	\draw (18.5,4) node[] {\color{blau} $\bar{t}_{\perp}$};
	\draw (19,14) node[right=2pt] {\color{blau} $t$};
\end{tikzpicture}
\end{center}
\caption{{\bf Projection in a subspace.}
The vector $t$ lies in $\Span ( b_1 , \hdots, b_n )$. The vector $\bar{t}_{\perp}$ denotes the orthogonal projection of $t$ in $\Span ( b_1, \hdots, b_{n-1} )$.}
\end{figure}

\noindent In this situation, the distance between the target vector and the lattice can be arbitrarily large.
In order to handle this problem, we consider the orthogonal projection of $t$ in $\Span ( b_1, \hdots, b_{n-1} )$, which is given by
\begin{equation} \label{eq_orthogonal_projection}
\bar{t}_{\perp} = t - \pi_n ( t ) = t - \frac{\langle t , b_n^\dag \rangle}{\langle b_n^\dag, b_n^\dag \rangle} b_n^\dag,
\end{equation}
where $b_n^\dag$ is a vector orthogonal to $\Span ( b_1, \hdots, b_{n-1} )$ with respect to the corresponding scalar product, for example the $n$-th Gram-Schmidt-vector of the basis $[b_1, \hdots, b_n]$.\\

\noindent If we are searching for a solution of the closest vector problem with respect to a norm that is induced by a scalar product, it is easy to prove the following:

\begin{prop} \label{projection_euclid}
Let $\| \cdot \|$ be a norm on $\mathbbm{R}^n$ induced by a scalar product. 
The vector $v \in L$ is a closest lattice vector to $t$
if and only if
$v$ is a lattice vector in $L$ closest to the projection $t'$ of $t$ in $\Span ( b_1, \hdots, b_{n-1} )$.
\end{prop}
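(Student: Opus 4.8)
The plan is to reduce everything to the Pythagorean identity attached to the scalar product $\langle\cdot,\cdot\rangle$ that induces $\|\cdot\|$. Write $t' = \bar{t}_{\perp}$ for the orthogonal projection of $t$ onto $\Span(b_1,\dots,b_{n-1})$ as in (\ref{eq_orthogonal_projection}); the point here is that this projection is taken with respect to the \emph{same} scalar product that induces $\|\cdot\|$, so by construction $t - t'$ is orthogonal to every vector of $\Span(b_1,\dots,b_{n-1})$.

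First I would fix an arbitrary $v \in L$. Since $L = \mathcal{L}(b_1,\dots,b_{n-1}) \subseteq \Span(b_1,\dots,b_{n-1})$ and $t' \in \Span(b_1,\dots,b_{n-1})$ as well, the difference $t' - v$ lies in $\Span(b_1,\dots,b_{n-1})$ and is therefore orthogonal to $t - t'$. Decomposing $t - v = (t - t') + (t' - v)$ and expanding the squared norm,
$$\|t - v\|^2 = \|t - t'\|^2 + 2\langle t - t', t' - v\rangle + \|t' - v\|^2 = \|t - t'\|^2 + \|t' - v\|^2,$$
because the cross term vanishes.

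Then I would observe that $\|t - t'\|^2$ is a constant independent of the choice of $v \in L$. Hence for all $v, w \in L$ we have $\|t - v\| \le \|t - w\|$ if and only if $\|t' - v\| \le \|t' - w\|$; in particular $v$ minimizes the distance to $t$ over $L$ exactly when it minimizes the distance to $t'$ over $L$, which is the assertion of the proposition.

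The only delicate point — and the closest thing to an obstacle — is insisting that the projection in (\ref{eq_orthogonal_projection}) is orthogonal with respect to the very scalar product inducing $\|\cdot\|$, so that the cross term $2\langle t - t', t' - v\rangle$ genuinely cancels. For a norm not arising from a scalar product there is no such orthogonality relation available, the identity above breaks, and the distance to the projected target carries no information about the distance to $t$; this is precisely the phenomenon the counterexample in the remainder of this section is designed to exhibit.
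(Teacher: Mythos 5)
Your proof is correct and is essentially the paper's own argument: both decompose $t - v$ into the component $t - t'$ orthogonal to $\Span(b_1,\dots,b_{n-1})$ and the in-span component $t' - v$, expand the squared norm so the cross term vanishes, and observe that $\|t - t'\|^2$ is independent of $v$. Your write-up is if anything slightly cleaner, since the paper phrases the same expansion through the explicit Gram--Schmidt formula for the projection.
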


\begin{proof}
 Let $y \in L \subset \Span ( b_1, \hdots, b_{n-1} )$ be the closest lattice vector to $t'$.
 Since the norm is induced by a scalar product, we have 
 $\| t - y \|^2 = \langle t - y , t - y \rangle$,
 where
 $t - y = \bar{t}_{\perp} + \langle t , b_n^\dagger \rangle / \langle b_n^\dagger, b_n^\dagger \rangle b_n^\dagger$.
 Hence, 
	\begin{equation*}
	\| t - y \|^2 	
 	= \langle \bar{t}_{\perp} - y , \bar{t}_{\perp} - y \rangle + 2 \frac{\langle t , b_n^\dagger \rangle}{\langle b_n^\dagger, b_n^\dagger \rangle} \langle b_n^\dagger, \bar{t}_{\perp} - y \rangle + \langle \frac{\langle t , b_n^\dagger \rangle}{\langle b_n^\dagger, b_n^\dagger \rangle} b_n^\dagger, \frac{\langle t , b_n^\dagger \rangle}{\langle b_n^\dagger, b_n^\dagger \rangle} b_n^\dagger \rangle.
 	\end{equation*}
 	Since $b_n^\dagger$ is orthogonal to $\bar{t}_{\perp} - y \in \Span ( L )$, we get
 	$$\| t - y \|^2 = \| \bar{t}_{\perp} - y \|^2 + \| \frac{\langle t , b_n^\dagger \rangle}{\langle b_n^\dagger, b_n^\dagger \rangle} b_n^\dagger \|^2,$$
 	where the term
 	$\| \langle t , b_n^\dag \rangle / \langle b_n^\dag, b_n^\dag \rangle b_n^\dag \|^2$ is independent of the choice of $y$.
 	Hence, we see that
 	$\| t - v \|$ is minimized over $v \in L$
 	if and only if $\| \bar{t}_{\perp} - v \|$ is minimized over $L$.
\end{proof}

\noindent To show that this statement is not true if the norm is not induced by a scalar product, we give a counterexample.
Additionally, we show that this statement is not true, if we consider the corresponding norm projection instead of the orthogonal projection:
As the norm projection of a vector in a subspace we understand that vector in the subspace with minimal distance with respect to the corresponding norm, i. e., we consider the vector in $\Span ( L )$ which is closest to $t$ with respect to the corresponding norm:
\begin{equation} \label{eq_arbitrary_projection}
 \bar{t}_{\min} \in \Span ( L ) \mbox{ with } \min_{ \bar{x} \in \Span ( L )} \| t - \bar{t} \|.
\end{equation}
Mangasarian gave an explicit closed form for this projection, (see \cite{pp_Mangasarian99}).
\noindent If we consider a norm induced by a scalar product then the norm projection and the orthogonal projection coincide. 
Additionally, we need to observe that if the norm is not strictly convex, then the norm projection might not be uniquely determined!\\

\noindent The following counterexample considers the closest vector problem with respect to the $\ell_1$-norm, which is very descriptive.
But there exists also counterexamples for norms, which are not strictly convex, for example for the $\ell_3$-norm.
They will appear in the full version of this paper.\\

\noindent We consider the $\mathbbm{R}^2$ and the lattice spanned by the vector
$b_1 = ( 4 , 7 )^T \in \mathbbm{R}^2$.
Additionally, we consider the target vector
$t = ( 0 , 5 ) ^T \in \mathbbm{R}^2$,
which is not contained in the subspace $\Span ( b_1 )$.
We are searching for a lattice vector in $\mathcal{L} (b_1)$ which is closest to $t$ with respect to the $\ell_1$-norm,
see Figure \ref{zeichnung_gegenbsp_l1} for an illustration.
\begin{figure} \label{zeichnung_gegenbsp_l1}
\begin{center}
\begin{tikzpicture}[scale=0.5]
	\draw[->,thick] (-6,0) -- (6,0) node[right] {$x$};
	\draw[->,thick] (0,-1) -- (0,11) node[right] {$y$};
	\foreach \x in {-5,-4,-3,-2,-1,1,2,3,4,5}
		\draw (\x,-.1) -- (\x,.1); 
	\foreach \x in {-4,-2,2,4}
		\draw (\x,-.1) -- (\x,.1) node[below=4pt] {$\scriptstyle\x$};
	\foreach \y in {1,2,3,4,5,6,7,8,9,10}
		\draw (-.1,\y) -- (.1,\y); 
	\foreach \y in {2,4,6,8,10}
		\draw (-.1,\y) -- (.1,\y) node[left=4pt] {$\scriptstyle\y$};
	\draw[blau,domain=-1:5] plot (\x,{1.75*\x}) node[right=0.0cm] {$\Span ( b_1 )$};
	\draw[grey] (0,0) -- (5,5) -- (0,10) -- (-5,5) -- (0,0);
	\draw[fill=blau] (0,0) circle (1ex) node[right=0.25cm,below] {$0$};
	\draw[fill=blau] (4,7) circle (1ex) node[right=0.1cm] {$b_1$};
	\draw[fill=orange!75] (0,5) circle (1ex) node[right] {$t$};
	\draw[fill=hellblau] ({20/7},5) circle (1ex) node[right=0.1cm] {$\bar{t}_{\min}$};
	\draw[fill=violett] ({28/13},{49/13}) circle (1ex) node[right=0.1cm] {$\bar{t}_{\perp}$};
\end{tikzpicture}
\end{center}
\caption{{\bf Counterexample for projections with respect to the $\ell_1$-norm.}
We consider the lattice spanned by the vector $b_1$, together with the target vector $t$.
The vector $\bar{t}_{\perp}$ is the orthogonal projection of $t$ in $\Span ( b_1 )$, $\bar{t}_{\min}$ is the $\ell_1$-projection.}
\end{figure}

\begin{claim}
The vector $v = 0$ is the closest lattice vector to $t$ in $\mathcal{L} (b_1)$ with respect to the $\ell_1$-norm.
\end{claim}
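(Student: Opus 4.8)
The plan is to reduce the claim to a one-parameter optimization over $\mathbbm{Z}$. Every vector of the lattice $\mathcal{L}(b_1)$ has the form $k b_1 = (4k, 7k)^T$ for some $k \in \mathbbm{Z}$, and $t - k b_1 = (-4k,\, 5 - 7k)^T$, so it suffices to show that
\[
\varphi(k) := \| t - k b_1 \|_1 = 4\,|k| + |5 - 7k|
\]
attains its minimum over $k \in \mathbbm{Z}$ at $k = 0$, where $\varphi(0) = 5$.

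The key observation is that $\varphi$, viewed as a function of the real variable $k$, is convex and piecewise linear (a sum of absolute values of affine functions), so its minimum over $\mathbbm{R}$ is attained between two consecutive breakpoints and its minimum over $\mathbbm{Z}$ is attained at one of the two integers bracketing the real minimizer. Concretely I would split into the three linear pieces: for $k \le 0$ we have $\varphi(k) = 5 - 11 k$, which is non-increasing in $k$; for $0 \le k \le 5/7$ we have $\varphi(k) = 5 - 3 k$; and for $k \ge 5/7$ we have $\varphi(k) = 11 k - 5$, which is non-decreasing. Hence the real minimizer is $k = 5/7$, and the only integer candidates are $k = 0$ with $\varphi(0) = 5$ and $k = 1$ with $\varphi(1) = 4 + 2 = 6$. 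Therefore $\min_{k \in \mathbbm{Z}} \varphi(k) = \varphi(0) = 5$, so $v = 0$ is a closest lattice vector to $t$ in $\mathcal{L}(b_1)$ with respect to the $\ell_1$-norm; moreover it is the unique closest vector, since every other lattice vector lies strictly farther from $t$.

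There is essentially no obstacle here: once $\mathcal{L}(b_1)$ is parametrized the whole argument is an elementary finite computation. The only point requiring a word of care is the passage from ``all $k \in \mathbbm{Z}$'' to ``finitely many $k$'', which the convexity / piecewise-linearity remark handles by ruling out that some lattice vector with large coefficient could beat $0$. This computation is the setup for the actual purpose of the example in the surrounding text: neither the orthogonal projection $\bar{t}_{\perp}$ nor the $\ell_1$-projection $\bar{t}_{\min}$ of $t$ onto $\Span(b_1)$ admits $0$ as its closest lattice vector --- for each of them the vector $b_1$ is strictly closer --- so the analogue of Proposition~\ref{projection_euclid} fails for the $\ell_1$-norm.
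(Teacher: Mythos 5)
Your proof is correct and follows essentially the same route as the paper: parametrize the lattice as $k b_1$, reduce to minimizing $4|k|+|5-7k|$ over $k\in\mathbbm{Z}$, and observe the minimum is at $k=0$. You merely supply the piecewise-linear/convexity justification that the paper leaves implicit, which is a harmless (and welcome) amount of extra detail.
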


\begin{proof}
Every lattice vector $v \in \mathcal{L} (b_1)$ is of the form
$v = v_1 b_1 = (4 v_1 , 7 v_1 )^T$
with $v_1 \in \mathbbm{Z}$.
With this representation, the distance between $t$ and a lattice vector is given by 
$\| t - v_1 b_1 \|_1 =  4 | v_1 | + | 5 - 7 v_1 |$
and it becomes minimal over $\mathbbm{Z}$, if $v_1 = 0$.
\end{proof}

\noindent Now we consider the orthogonal projection $\bar{t}_{\perp}$ of $t$ in $\Span (b_1)$ with respect to the Euclidean norm, (see (\ref{eq_orthogonal_projection})). The vector $(-7,4)^T$ is orthogonal to $b_1$. Hence, $\bar{t}_{\perp}$ is given by
\begin{equation*}
\bar{t}_{\perp} 
= t - \frac{\langle t , \left( \begin{array}{c} -7 \\ 4 \end{array} \right) \rangle}{\langle \left( \begin{array}{c} -7 \\ 4 \end{array} \right), \left( \begin{array}{c} -7 \\ 4 \end{array} \right) \rangle} \left( \begin{array}{c} -7 \\ 4 \end{array} \right)
= \frac{7}{13} \left( \begin{array}{c} 4 \\ 7 \end{array} \right).
\end{equation*}
Now, we are searching for the closest lattice vector to $\bar{t}_{\perp}$ with respect to the $\ell_1$-norm. Obviously, in a lattice of rank 1, we get the closest lattice vector by rounding. Hence,

\begin{claim}
The vector $b_1$ is a closest lattice vector to $\bar{t}_{\perp} = 7 / 13 \cdot ( 4 , 7 )^T$ in $L = \mathcal{L} ( b_1 )$ with respect to the $\ell_1$-norm.
\end{claim}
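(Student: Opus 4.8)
The plan is to exploit that $L = \mathcal{L}(b_1)$ has rank $1$, so that every lattice vector has the form $v = v_1 b_1$ with $v_1 \in \mathbbm{Z}$, and the $\ell_1$-distance from $\bar{t}_{\perp}$ to such a vector factors completely. First I would use $\bar{t}_{\perp} = \tfrac{7}{13} b_1$, the positive homogeneity of the $\ell_1$-norm, and $\| b_1 \|_1 = |4| + |7| = 11$ to write
$$\| \bar{t}_{\perp} - v_1 b_1 \|_1 = \left\| \left( \tfrac{7}{13} - v_1 \right) b_1 \right\|_1 = \left| \tfrac{7}{13} - v_1 \right| \cdot \| b_1 \|_1 = 11 \left| \tfrac{7}{13} - v_1 \right|.$$

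Next I would minimize the right-hand side over $v_1 \in \mathbbm{Z}$. Since the factor $11$ is a fixed positive constant, this reduces to finding the integer closest to $\tfrac{7}{13}$. As $\tfrac{7}{13} \approx 0.538 > \tfrac12$, the two nearest candidates are $v_1 = 0$, giving value $11 \cdot \tfrac{7}{13}$, and $v_1 = 1$, giving the strictly smaller value $11 \cdot \tfrac{6}{13}$; any other integer is strictly farther from $\tfrac{7}{13}$. Hence the minimum over $\mathbbm{Z}$ is attained exactly at $v_1 = 1$, i.e. at the lattice vector $b_1$, which proves the claim.

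The only point demanding a moment of care — essentially the sole ``obstacle'' in an otherwise immediate argument — is the rounding step: one must verify $\tfrac{7}{13} > \tfrac12$ so that $1$, and not $0$, is the nearest integer; everything else is a direct consequence of the homogeneity of the norm. I would also remark that, together with the preceding claim that $0$ is a closest lattice vector to $t$ itself, this yields the desired counterexample, since $b_1 \neq 0$, so the closest lattice vector to $t$ and the closest lattice vector to the orthogonal projection $\bar{t}_{\perp}$ genuinely differ for the $\ell_1$-norm.
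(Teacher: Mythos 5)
Your proof is correct and matches the paper's (terse) argument exactly: the paper simply observes that in a rank-one lattice the closest vector is obtained by rounding the coefficient, which is precisely your computation $\|\bar{t}_{\perp}-v_1 b_1\|_1 = 11\,|7/13 - v_1|$ minimized at $v_1 = 1$ since $7/13 > 1/2$. No issues.
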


\noindent Hence, this is an example where the lattice vector which is closest to $t$ is not the lattice vector which is closest to the orthogonal projection of $t$ in the lattice.
Now we consider the vector $\bar{t}_{\min} \in \Span (b_1 )$ which is closest to $t$ with respect to the $\ell_1$-norm, as defined in (\ref{eq_arbitrary_projection}).\\

\noindent The $\ell_1$-projection of a point $t$ onto a subspace $S$ depends of the orientation of the subspace. In $\mathbbm{R}^2$, when the angle $\theta$ is different from $\pi/4$, the projection is unique but directly along the $y$-axis or the $x$-axis. When $\theta = \pi/4$, the projection is a segment and it includes the points along both unit directions.\\

\noindent In our example, we obtain 
$$\min_{\bar{x} \in \Span (b_1)} \| t - \bar{x} \|_1 = \min_{x_1 \in \mathbbm{R}} \| \left( \begin{array}{c} 0 \\5 \end{array} \right) - x_1 \left( \begin{array}{c} 4 \\ 7 \end{array} \right) \|_1 = \min_{x_1 \in \mathbbm{R}} 4 |x_1| + | 5 - 7 x_1 |.$$
This value becomes minimal, if $x_1 = 5 / 7$. Hence,
$\bar{t}_{\min} = \frac 5 7 \cdot ( 4 , 7 )^T$.
Obviously, we get

\begin{claim} \label{claim_1}
The vector $b_1 = ( 4 ,7 )^T$ is the closest lattice vector to $\bar{t}_{\min}$ in $\mathcal{L} (b_1)$ with respect to the $\ell_1$-norm.
\end{claim}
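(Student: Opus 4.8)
The plan is to exploit that $\mathcal{L}(b_1)$ has rank one, so every lattice vector is of the form $v_1 b_1$ with $v_1 \in \mathbbm{Z}$, and finding the closest lattice vector along the line $\Span(b_1)$ reduces to rounding a single rational coefficient. First I would record, from the computation just above, that $\bar{t}_{\min} = \frac{5}{7} b_1 = (20/7,5)^T$, and then compute for an arbitrary $v_1 \in \mathbbm{Z}$
$$\| \bar{t}_{\min} - v_1 b_1 \|_1 = \left\| \left( \tfrac{5}{7} - v_1 \right) b_1 \right\|_1 = \left| \tfrac{5}{7} - v_1 \right| \cdot \| b_1 \|_1 = 11 \cdot \left| \tfrac{5}{7} - v_1 \right|,$$
using positive homogeneity of the $\ell_1$-norm and $\| b_1 \|_1 = |4| + |7| = 11$.

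Next I would minimize the right-hand side over $v_1 \in \mathbbm{Z}$. Since $x \mapsto |x|$ attains its minimum at $0$, the quantity $|5/7 - v_1|$ is minimized by choosing $v_1$ to be an integer nearest to $5/7$; because $5/7 > 1/2$, the unique nearest integer is $v_1 = 1$, yielding distance $11 \cdot \frac{2}{7} = \frac{22}{7}$. This is strictly smaller than the value $\frac{55}{7}$ obtained at $v_1 = 0$ and than the values obtained for every other integer, so $v = 1 \cdot b_1 = b_1$ is the (unique) closest lattice vector to $\bar{t}_{\min}$ with respect to the $\ell_1$-norm, which is the assertion of the claim.

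The argument is entirely elementary and there is no genuine obstacle; the one point deserving a moment's attention — and in fact the whole reason this example is instructive — is simply that rounding $5/7$ goes \emph{up} to $1$ rather than down to $0$. Thus the closest lattice vector to the $\ell_1$-projection $\bar{t}_{\min}$ is $b_1$, while by the preceding claim the closest lattice vector to $t$ itself is $0$; this mismatch is exactly what demonstrates that the norm-projection dimension-reduction strategy breaks down for the $\ell_1$-norm.
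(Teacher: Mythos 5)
Your proof is correct and follows exactly the route the paper takes (and merely leaves implicit): in a rank-one lattice the closest vector is obtained by rounding the single coefficient, and since $\bar{t}_{\min} = \tfrac{5}{7} b_1$ with $5/7 > 1/2$, the nearest integer multiple is $1 \cdot b_1 = b_1$. The explicit computation $\|\bar{t}_{\min} - v_1 b_1\|_1 = 11\,|5/7 - v_1|$ is a nice touch that the paper omits, but the argument is the same.
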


\noindent Hence, this is additionally an example, where a lattice vector that is closest to $t$ is not closest to the target vector $\bar{t}_{\min}$ which is the $\ell_1$-projection of $t$ in $\Span ( L )$.\\

\subsection{Technical Stuff}

\noindent To prove the statements in the appendix, we use some facts about the change of the representation size under basic matrix operations. We state them in the following. For a proof of results of this type see for example \cite{bk_gls} or \cite{bk_schrijver}.

\begin{claim} \label{facts_representation_size} $ $
\begin{itemize}
 \item Let $A \in \mathbbm{Q}^{n \times n}$. Then,
 		$\size ( A^{-1} ) \leq n^{n/2} \size ( A )^{n(n-1)}$.
 \item Let $A \in \mathbbm{Z}^{m \times n}$, $x,y \in \mathbbm{Z}^n$. Then,
 		$\size ( x + y ) \leq \size ( x ) + \size ( y )$ and
 		$\size ( A x ) \leq n \cdot \size ( A ) \cdot \size ( x )$.
 \item Let $A \in \mathbbm{Q}^{m \times n}$, $x,y \in \mathbbm{Q}^n$. Then,
 		$\size ( x + y ) \leq 2 \size ( x ) \cdot \size ( y )$ and
 			$\size ( A x ) \leq ( 2 \cdot \size ( A ) \cdot \size ( x ) )^n$.
 \item Let $P \subseteq \mathbbm{R}^n$ be a full-dimensional polytope centered about the origin and $x \in \mathbbm{R}^n$.
 			Then, $\| x \|_P \leq n \cdot \size ( P ) \cdot \size ( x )$.
\end{itemize}
\end{claim}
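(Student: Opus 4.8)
The plan is to treat all four inequalities as routine size-propagation estimates, proved by elementary rational arithmetic together with Cramer's rule and Hadamard's inequality; nothing from the body of the paper is needed. I would organise the argument around the one nontrivial item, the bound on $\size(A^{-1})$, and dispatch the other three quickly.

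For the inverse I would first reduce the rational case to the integer case by clearing denominators: write $A=\tfrac1c\,C$ with $C\in\mathbbm{Z}^{n\times n}$ and $c\in\mathbbm{Z}$ a common denominator of the entries, so that $A^{-1}=c\,C^{-1}=\tfrac{c}{\det C}\,\mathrm{adj}(C)$. Each entry of $\mathrm{adj}(C)$ is, up to sign, an $(n-1)\times(n-1)$ minor of $C$, so Hadamard's inequality bounds it by $(n-1)^{(n-1)/2}\|C\|_\infty^{\,n-1}$, while $\det C$ is a nonzero integer with $1\le|\det C|\le n^{n/2}\|C\|_\infty^{\,n}$. Putting the Cramer fraction in lowest terms and combining the numerator and denominator bounds then gives an estimate of the promised shape $n^{n/2}\size(A)^{n(n-1)}$; the step that needs care is controlling the size of $c$ and of the cancellations so that the exponent comes out as stated, for which I would follow the corresponding estimates in \cite{bk_gls} or \cite{bk_schrijver}.

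The remaining three items are immediate. For $x,y\in\mathbbm{Z}^n$ the coordinate $x_i+y_i$ is an integer of absolute value at most $\size(x)+\size(y)$, and $(Ax)_i=\sum_jA_{ij}x_j$ is an integer of absolute value at most $n\,\size(A)\,\size(x)$, both with denominator $1$. For $x,y\in\mathbbm{Q}^n$, writing $x_i=p_1/q_1$ and $y_i=p_2/q_2$ in lowest terms and using $x_i+y_i=(p_1q_2+p_2q_1)/(q_1q_2)$ gives $\size(x+y)\le 2\,\size(x)\,\size(y)$; then, using that the size of a product of two rationals is at most the product of their sizes, each summand $A_{ij}x_j$ has size at most $\size(A)\,\size(x)$, and iterating the addition bound over the $n$ summands yields $\size((Ax)_i)\le(2\,\size(A)\,\size(x))^n$. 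Finally, for a full-dimensional $P=\{y\mid Ay\le\beta\}$ symmetric about the origin with integral $A,\beta$, the origin is interior, so $\beta_i\ge1$, and $\|x\|_P=\max_i(Ax)_i/\beta_i\le\max_i|(Ax)_i|\le n\,\size(P)\,\size(x)$.

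The main obstacle is purely the constant-tracking in the inverse bound: isolating the right common denominator when clearing fractions and verifying that reducing the Cramer fraction to lowest terms does not reintroduce a larger denominator, so that the exponent is exactly $n(n-1)$ rather than a crude $O(n^2)$. This is the kind of bookkeeping carried out in \cite{bk_gls} and \cite{bk_schrijver}, which is why I would cite them for the sharp version. The other three estimates involve no difficulty beyond the triangle inequality and $\size(\alpha\beta)\le\size(\alpha)\,\size(\beta)$.
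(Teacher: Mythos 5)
The paper never proves this claim: it states it and points to \cite{bk_gls} and \cite{bk_schrijver} for ``results of this type,'' so your proposal supplies strictly more than the source does. Your arguments for the second and third items are complete and correct, and the fourth is fine as well under the integrality convention the paper uses for polytopes (if $\beta$ is merely rational one only gets $\beta_i\geq 1/\size(P)$ and hence an extra factor $\size(P)$; this is immaterial downstream).

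The one substantive issue is the first item, and it sits in the claim rather than in your method. Your route (clear denominators, Cramer/adjugate, Hadamard) is the standard one, but the bookkeeping you defer to \cite{bk_gls} cannot land on the exponent $n(n-1)$: scaling each row of a size-$s$ rational matrix by the product of its denominators produces an integer matrix with entries up to $s^{n}$, so the adjugate entries are at most $(n-1)^{(n-1)/2}s^{n(n-1)}$, but undoing the row scaling multiplies the numerators by another factor up to $s^{n}$, giving $s^{n^{2}}$ overall. Indeed the stated bound fails for rational matrices: for $A=\left(\begin{smallmatrix}1/2&1/3\\1/5&1/7\end{smallmatrix}\right)$ one has $\size(A)=7$, $\det A = 1/210$ and $A^{-1}=\left(\begin{smallmatrix}30&-70\\-42&105\end{smallmatrix}\right)$, so $\size(A^{-1})=105>98=n^{n/2}\size(A)^{n(n-1)}$. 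So you should not expect the cited references to rescue the exact exponent; what your argument genuinely proves is $\size(A^{-1})\leq n^{\mathcal{O}(n)}\size(A)^{\mathcal{O}(n^{2})}$, which is all the paper ever needs, since every place the claim is invoked only requires bounds of the form $r^{n^{\mathcal{O}(1)}}$.
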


	 \subsection{Selfreducibility of the Closest Vector Problem} \label{sec_appendix_selfreducibility}

\noindent Since lattices are discrete objects, for the proof of Theorem \ref{statement_cvp_self_lp} it is enough to show that there exists a polynomial reduction from the closest vector problem to the decisional vector problem.
In the decisional closest vector problem, we are given a lattice $L$ and some target vector $t \in \Span ( L )$ together with some parameter $r > 0$ and we need to decided whether the distance from the target vector to the lattice is at most $r$.\\

\noindent The reduction from the closest vector problem to the decisional closest vector problem uses as a intermediate problem the optimization variant of the closest vector problem.
In the optimization closest vector problem ($\optcvp^{(\| \cdot \|)}$), we are given a lattice $L$ and some target vector $t \in \Span ( L )$ and we are asked to compute the minimal distance from this target vector to the lattice.\\
The reduction from the closest vector problem to the decisional closest vector problem consists of a reduction from the optimization closest vector problem to the decisional closest vector problem, which we will present in Section \ref{subsec_reduction_opt_to_dec}, and a reduction from the optimization closest vector problem to the closest vector problem, (see Section \ref{subsec_reduction_search_to_opt}).

\paragraph{Reduction of the Optimization Closest Vector Problem to the Decisional Closest Vector Problem} \label{subsec_reduction_opt_to_dec}

\noindent The reduction from the optimization variant to the decision variant of the closest vector problem is based on binary search.
\noindent This binary search is performed on the set of all possible values which can be achieved by the norm of an integer vector, if the norm lies in some certain interval.
Hence, we need to ensure that we are able to enumerate all these values
and we need an upper bound on the cardinality of such a set - depending on the size of the interval.
To guarantee all that, we consider special norms which we call enumerable.
In general, we call a function enumerable, if it maps every integer vector to a discrete enumerable set.

\begin{definition} \label{def_enumerable_function}
A function $f: \mathbbm{R}^n \to \mathbbm{R}_0$ is called $(k,K)$-enumerable for parameters $k,K \in \mathbbm{N}$, or simply enumerable, if there exists $\tilde{K} \in \mathbbm{N}$, $\tilde{K} \leq K$, such that
$$\tilde{K} \cdot f(x)^k  \in \mathbbm{N}_0 \mbox{ for all } x \in \mathbbm{Z}^n.$$
\end{definition}

\noindent Obviously, every $\ell_p$-norm, $1 \leq p \leq \infty$, is $(k,1)$-enumerable with $k = p$ for $1 \leq p < \infty$ and $k = 1$ for $p = \infty$.
Later, we will show that also all polyhedral norms are enumerable.

\begin{prop} \label{prop_cvp_dec_to_opt}
Let $\| \cdot\|$ be a $(k,K)$-enumerable norm on $\mathbbm{R}^n$.
Assume that there exists an algorithm $\mathcal{A}_{\dec}$ that for all lattices $\mathcal{L} ( B' ) \subset \mathbbm{Z}^n$ of rank $m$, all target vectors $t' \in \Span ( B' ) \cap \mathbbm{Z}^n$ and all $r > 0$ solves the decisional closest vector problem in time $T_{m,n}^{(\| \cdot \|)} ( S', r )$, 
where $S'$ is an upper bound on the size of the basis $B'$ and the target vector $t'$.\\
Then there exists an algorithm that solves the optimization closest vector problem for all lattices $L = \mathcal{L} ( B ) \subseteq \mathbbm{Z}^n$, $B = [b_1, \hdots, b_m]$, and all target vectors $t \in \Span ( L ) \cap \mathbbm{Z}^n$ in time
$$\mathcal{O} \left( k \cdot \log_2 ( \frac m 2 \cdot \max_j \| b_j \| ) + \log_2 ( K ) \right) \cdot n^{\mathcal{O} ( 1 )} \cdot T_{m,n}^{(\| \cdot \|)} ( S , \frac m 2  \cdot \max_j \| b_j \| ),$$
where $S$ is an upper bound on the representation size of the basis $B$ and the target vector $t$.
\end{prop}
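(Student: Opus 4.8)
The plan is to reduce $\optcvp$ to $\deccvp$ via binary search on the value of the distance function, using the enumerability of the norm to bound the number of search steps and to guarantee that the search actually terminates at the exact optimal value. First I would fix an easy a priori upper bound on the optimal distance: since $t \in \Span(L)$, writing $t = \sum_j \mu_j b_j$ with $\mu_j \in \mathbbm{R}$ and rounding each $\mu_j$ to the nearest integer yields a lattice vector within distance at most $\tfrac{1}{2}\sum_j \|b_j\| \leq \tfrac{m}{2}\max_j \|b_j\|$ of $t$ by the triangle inequality. Call this bound $D := \tfrac{m}{2}\max_j \|b_j\|$. So the optimal distance lies in the interval $[0, D]$, and every call to $\mathcal{A}_{\dec}$ we ever make will use a radius parameter at most $D$, which gives the radius argument $\tfrac{m}{2}\max_j \|b_j\|$ in the stated running time; the size parameter stays $S$ throughout since we never change the lattice or target vector.

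Next I would exploit $(k,K)$-enumerability. The optimal distance equals $\|t - u\|$ for some $u \in L$, hence equals $\|v\|$ where $v = t-u \in \mathbbm{Z}^n$; by enumerability there is $\tilde K \le K$ with $\tilde K \cdot \|v\|^k \in \mathbbm{N}_0$. So the optimal distance $d_{\mathrm{opt}}$ satisfies $\tilde K \, d_{\mathrm{opt}}^k \in \mathbbm{N}_0$, i.e. $d_{\mathrm{opt}}^k \in \tfrac{1}{K!}\mathbbm{Z}$ (or more carefully, $d_{\mathrm{opt}}^k$ is a rational with denominator dividing some integer $\le K$). The $k$-th powers of attainable distances in $[0,D]$ therefore form a discrete set contained in a set of the form $\{ j / \tilde K : 0 \le j \le \tilde K D^k \}$, whose size is at most $K \cdot D^k + 1$. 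I would then run binary search on the quantity $d^k$ over the real interval $[0, D^k]$: at each step, having an interval $[\ell, u]$ known to contain $d_{\mathrm{opt}}^k$, query $\mathcal{A}_{\dec}$ with radius $r = ((\ell+u)/2)^{1/k}$ to decide whether $d_{\mathrm{opt}} \le r$, and halve the interval accordingly. After $\mathcal{O}(\log_2(K D^k)) = \mathcal{O}(k \log_2 D + \log_2 K)$ iterations the interval has length less than the minimal gap $1/K$ between distinct attainable values of $d^k$, so it contains exactly one such value, which must be $d_{\mathrm{opt}}^k$; from it we read off $d_{\mathrm{opt}}$. Each iteration is one oracle call plus $n^{\mathcal{O}(1)}$ arithmetic for computing the midpoint, taking $k$-th roots to the required precision, and forming the query, which accounts for the $n^{\mathcal{O}(1)}$ factor in the bound.

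The main obstacle I anticipate is bookkeeping around exactness and precision: $\mathcal{A}_{\dec}$ is stated for rational inputs, but $r = ((\ell+u)/2)^{1/k}$ is generally irrational, so I cannot literally pass it to the oracle. The fix is to never work with $r$ directly — instead, I would observe that the decision "is $d_{\mathrm{opt}} \le r$?" is equivalent to "does $\bar B_n^{(\|\cdot\|)}(t, r)$ contain a lattice point?", and since $d_{\mathrm{opt}}^k$ is rational with bounded denominator I only ever need to distinguish rational candidate values of $d^k$; I would phrase the oracle queries using rational radii $r$ chosen among the finitely many candidates $j/\tilde K$, or equivalently run the binary search over the integers $j \in \{0, \dots, \lfloor \tilde K D^k\rfloor\}$ rather than over a continuum, querying at radius $r = (j/\tilde K)^{1/k}$ but feeding the oracle an equivalent rational description (the oracle in Theorem~\ref{statement_cvp_self_lp} is for $B_n^{(\|\cdot\|)}(t',r)$ and, as the remark after that theorem notes, it does not matter whether open or closed balls are used, and the radius itself can be supplied as the rational $j/\tilde K$ together with the exponent $k$). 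Carrying $\tilde K$ itself requires knowing $K$, which is part of the enumerability data and hence available. A secondary routine point is verifying that the representation size of all intermediate numbers stays polynomially bounded, which follows from $D \le \tfrac{m}{2}\max_j\|b_j\|$ having size polynomial in $S$ and from the facts on representation-size growth collected in Claim~\ref{facts_representation_size}. Once these precision issues are handled, correctness is immediate from the discreteness argument and the running time is exactly as claimed.
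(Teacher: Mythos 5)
Your proposal is correct and follows essentially the same route as the paper: bound the optimal distance by $\tfrac{m}{2}\max_j\|b_j\|$, use $(k,K)$-enumerability to discretize the attainable values of the $k$-th power of the distance, and binary search with $\mathcal{O}(k\log_2(\tfrac{m}{2}\max_j\|b_j\|)+\log_2 K)$ calls to $\mathcal{A}_{\dec}$ (the paper stops at interval length $1/K^2$, the gap between rationals of denominator at most $K$, rather than your $1/\tilde{K}\geq 1/K$, but this changes nothing asymptotically). Your discussion of how to pose the oracle queries with rational data is in fact more careful than the paper's own treatment, which leaves that point implicit.
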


\begin{proof}
Let $B = [b_1, \hdots, b_m] \subseteq \mathbbm{Z}^{n \times m}$ be a lattice basis of the lattice $L$ and $t \in \Span ( L ) \cap \mathbbm{Z}^n$ a target vector.
Without loss of generality, we assume that $t \not\in L$, i.e., $\mu^{( \| \cdot \|)} ( t , L ) > 0$.\\
As an upper bound for the distance between $t$ and the lattice, we can choose 
$$R := \frac m 2  \max \{ \| b_j \| | 1 \leq j \leq m \},$$
(see \cite{bk_cassel}), since $t \in \Span ( L )$.
We have $L \subseteq \mathbbm{Z}^n$ and $t \in \mathbbm{Z}^n$.
Hence, the distance vector of $t$ and its closest lattice vector is an integer vector.
Using that $\| \cdot \|$ is a $(k,K)$-enumerable norm, we obtain that the distance is of the form
$$\mu^{( \| \cdot \|)} ( t , L ) = \sqrt[k]{\frac p q}, \mbox{ where } p , q \in \mathbbm{N} \mbox{ with } \gcd ( p , q ) = 1 \mbox{ and } 1 \leq q \leq K.$$
Hence, we are able to perform a binary search using the algorithm $\mathcal{A}_{\dec}$ to find $\mu^{( \| \cdot \|)} ( t , L )$. 
The number of calls to the algorithm $\mathcal{A}_{\dec}$ is at most $\mathcal{O} ( \log_2 ( R^k \cdot K^2 ) )$, since we are finished if the length of the current interval is less than $1 / K^2$.
As a consequence, the running time to solve $\optcvp$ is
$$\mathcal{O} \left( k \cdot \log_2 ( R ) + 2 \log_2 ( K ) \right) \cdot n^{\mathcal{O} (1)} \cdot 
T_{m,n}^{(\| \cdot \|)} ( S , R ).$$
\end{proof}

\paragraph{Reduction of the Closest Vector Problem to the Optimization Closest Vector Problem} \label{subsec_reduction_search_to_opt}
 
\noindent Now, we will present a reduction from the search variant to the optimization variant of the closest vector problem. The running time of this reduction depends on the knowledge of non-decreasing functions $c, C : \mathbbm{N} \to \mathbbm{R}^{>0}$ such that $c(n) \cdot \| x \|_2 \leq \| x \| \leq C(n) \cdot \| x \|_2$ for all $x \in \mathbbm{R}^n$.
In what follows, if the parameter $n$ is obvious by the context, we will omit it and we will write $c$ or $C$ instead of $c ( n )$ or $C ( n )$.
Geometrically, these functions can be interpreted as the radius of an inscribed or circumscribed Euclidean ball of the unit ball of the norm $\| \cdot \|$.

\begin{prop} \label{prop_cvp_search_to_opt}
Let $\| \cdot\|$ be a norm on $\mathbbm{R}^n$ and $c,C : \mathbbm{N} \to \mathbbm{R}^{>0}$ be non-decreasing functions such that $c(n) \cdot \| x \|_2 \leq \|x \| \leq C(n) \cdot \| x \|_2$ for all $x \in \mathbbm{R}^n$.\\
Assume that there exists an algorithm $\mathcal{A}_{\opt}$, that for all lattices $\mathcal{L} ( B' ) \subset \mathbbm{Z}^n$ of rank $m$ and all target vectors $t' \in \Span ( B' ) \cap \mathbbm{Z}^n$ solves $\optcvp^{(\| \cdot\|)}$ in time
$T_{m,n}^{(\| \cdot \|)} ( S')$, 
where $S'$ is an upper bound on the size of the basis $B'$ and the target vector $t'$.\\
Then there exists an algorithm $\mathcal{A}'$ that solves the closest vector problem for all lattices $\mathcal{L} ( B ) \subset \mathbbm{Z}^n$ of rank $m$ and target vectors $t \in \Span ( B ) \cap \mathbbm{Z}^n$ in time
$$2m \cdot \log_2 \left( m \sqrt{n} \cdot  (C \cdot c^{-1}) \cdot S \right) \cdot
T_{m,n}^{(\| \cdot \|)} ( 16 m^3 n \cdot ( C \cdot c^{-1} )^2 S^3 ),$$
where $S$ is an upper bound on the size of the basis $B$ and the target vector $t$.
\end{prop}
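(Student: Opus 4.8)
The plan is to first compute the minimal distance $d := \mu^{(\|\cdot\|)}(t,L)$ by a single call to $\mathcal{A}_{\opt}$ on the instance $(L,t)$, and then to recover an actual closest lattice vector coordinate by coordinate, answering each ``yes/no'' question by a suitably modified call to $\mathcal{A}_{\opt}$. The first thing I would record is an a priori box for the answer: since $0\in L$ and $t\in\Span(L)$ we have $d\le R:=\tfrac m2\max_j\|b_j\|$, so for any closest vector $v$ the integer vector $t-v$ satisfies $\|t-v\|_2\le \|t-v\|/c(n)\le R/c(n)$, whence $\|v\|_\infty\le \|t\|_\infty+R/c(n)\le P$ for an explicit bound $P\le m\sqrt n\,(C(n)/c(n))\,S$ (using $R\le\tfrac m2 C(n)\sqrt n\,S$). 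Thus every closest vector lies in the box $[-P,P]^n\cap L$.

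Next I would reconstruct a closest vector. Fix once and for all a set $I\subseteq\{1,\dots,n\}$ of $m$ row indices for which the submatrix $B_I$ is invertible; it suffices to determine the coordinates $v_\mu$, $\mu\in I$, of one closest vector, since then $v=B B_I^{-1}v_I$ is forced and lies in $L$. Enumerate $I=\{\mu_1,\dots,\mu_m\}$ and maintain the invariant that after stage $k$ there is a closest vector $v$ with $v_{\mu_i}=v_{\mu_i}^{\ast}$ for $i\le k$. At stage $k{+}1$ I would determine $v_{\mu_{k+1}}^{\ast}\in[-P,P]$ bitwise: if some closest vector consistent with $v_{\mu_1}^{\ast},\dots,v_{\mu_k}^{\ast}$ has $\mu_{k+1}$-coordinate $\equiv\rho\pmod{2^{\ell}}$, test the two lifts $\rho'\in\{\rho,\rho+2^{\ell}\}$ as follows. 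Let $K$ be the (trivially written) lattice of integer vectors whose $\mu_i$-coordinate is $\equiv 0\pmod{2^{L^{\ast}}}$ for $i\le k$ and whose $\mu_{k+1}$-coordinate is $\equiv 0\pmod{2^{\ell+1}}$, where $2^{L^{\ast}}>2P$ (so that a congruence mod $2^{L^{\ast}}$ together with membership in $[-P,P]$ forces equality); put $\Lambda:=L\cap K$, pick a short representative $w_{\rho'}\in L$ of the coset of $\Lambda$ cut out by the corresponding \emph{inhomogeneous} congruences, and query $\mathcal{A}_{\opt}$ on $(\Lambda,\,t-w_{\rho'})$. Since $\Lambda$ has finite index in $L$ it has the same span, so $t-w_{\rho'}\in\Span(\Lambda)\cap\mathbbm{Z}^n$ and the query is legitimate; its value equals $d$ exactly when there is a closest vector consistent with the already-fixed coordinates whose $\mu_{k+1}$-coordinate is $\equiv\rho'\pmod{2^{\ell+1}}$. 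After $\lceil\log_2(2P)\rceil+1$ such steps the residue together with the box $[-P,P]$ pins down $v_{\mu_{k+1}}^{\ast}$ and the invariant is restored; at the end I output $v=B B_I^{-1}v_I^{\ast}$, the unique lattice vector agreeing with $v_I^{\ast}$ on $I$, which is a closest vector by the invariant.

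The counting is then immediate: $1+2m(\lceil\log_2(2P)\rceil+1)=\mathcal{O}\bigl(m\log_2(m\sqrt n\,(C(n)/c(n))\,S)\bigr)$ calls to $\mathcal{A}_{\opt}$. The step I expect to be the main obstacle is controlling the \emph{size} of the instances handed to $\mathcal{A}_{\opt}$, namely showing that a basis of $\Lambda=L\cap K$ and the shifted target $t-w_{\rho'}$ have size polynomial in $n,m,S,C(n)/c(n)$ (matching $16 m^{3} n (C(n)/c(n))^{2} S^{3}$), even though the index of $\Lambda$ in $L$ may be exponential in $m$. The way around this is not to compute $\Lambda$ through its index but to use that $\Lambda$ is the intersection of two lattices, $L$ and $K$, each with a polynomially bounded basis; working inside $\Span(L)$ and passing to duals (a polynomially bounded generating set of $L^{\ast}+K^{\ast}$ reduces to a polynomially bounded basis, and dualizing once more keeps everything polynomial, using the size facts of Claim~\ref{facts_representation_size}) yields a polynomially bounded basis of $\Lambda$ in polynomial time, and a coset representative $w_{\rho'}$ can likewise be made small by one more reduction step. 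Carrying the explicit constants through Claim~\ref{facts_representation_size} and through the choices of $P$ and $L^{\ast}$ gives the stated bounds. The remaining points — that at each bit at least one lift $\rho'$ returns distance exactly $d$ (take $z=v$ with $v$ the closest vector supplied by the invariant, so $v-w_{\rho'}\in\Lambda$ and $w_{\rho'}+\Lambda\subseteq L$ forces distance $\ge d$), and that $d\le R$ for $t\in\Span(L)$ — are routine.
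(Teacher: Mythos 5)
Your proposal is correct in substance and is a variant of the paper's own strategy: both arguments recover a closest vector bit by bit, from the least significant bit up, by querying $\mathcal{A}_{\opt}$ on cosets of $2$-power-index sublattices of $L$ and comparing the returned value with the precomputed distance $\mu$. The difference is where the bits live. The paper determines the parities of the \emph{coefficients} of a closest vector in the basis $B$: it doubles one basis vector at a time, so its sublattices are $\mathcal{L}(2b_1,\dots,2b_j,b_{j+1},\dots,b_m)$ and eventually $2^iB$, whose bases and shifted targets have trivially bounded size, and it terminates with a rounding lemma for the thin lattice $2^{i_{\max}}\mathbbm{Z}^n$ (Lemma \ref{lemma_solving_cvp_thin_lattice}). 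You determine the bits of the \emph{coordinates} $v_{\mu_i}$ directly, so your sublattices are intersections $\Lambda=L\cap K$ with congruence lattices, and you terminate via the a priori box $[-P,P]^n$; the call count comes out the same up to constants. What your route costs is exactly the step you flag: one must exhibit a basis of $\Lambda$ and a coset representative $w_{\rho'}$ whose \emph{size} (not merely bit size) fits under $16m^3n(C\cdot c^{-1})^2S^3$. This does go through, and more directly than by passing to duals: writing $\Lambda=B\Lambda'$ with $\Lambda'=\{a\in\mathbbm{Z}^m\mid (Ba)_{\mu_i}\equiv 0 \ (\mathrm{mod}\ 2^{e_i})\ \mbox{for all}\ i\}$, one has $2^{L^{\ast}}\mathbbm{Z}^m\subseteq\Lambda'$, so every diagonal entry of the Hermite normal form of $\Lambda'$ divides $2^{L^{\ast}}=\mathcal{O}(P)$ and all entries are reduced below it; multiplying by $B$ and reducing $w_{\rho'}$ modulo $2^{L^{\ast}}L$ then gives instances within the stated bound after carrying the constants through Claim \ref{facts_representation_size}. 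The paper's choice of sublattices makes this bookkeeping unnecessary, which is the main reason its proof is shorter; your version is equally valid once that size analysis is actually written out.
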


\noindent The idea of the reduction is to modify the lattice basis such that the lattice becomes thinner and thinner. Simultaneously, the distance between the target vector and the lattice remains the same. We repeat this until the lattice is so thin such that we are able to compute the closest lattice vector in polynomial time.\\
Before proving Proposition \ref{prop_cvp_search_to_opt}, we will show that a closest lattice vector can be computed efficiently if the lattice is thin enough.
That means, we consider special $\cvp^{(\| \cdot\|)}$-instances, where the distance between the target vector and the lattice is small compared with the minimum distance of the lattice.

\begin{lemma} \label{lemma_solving_cvp_thin_lattice}
Let $i \in \mathbbm{N}$. Let $B \subseteq 2^i \mathbbm{Z}^{n \times m}$ be a lattice basis of rank $m$ and $t \in \Span ( B ) \cap \mathbbm{Z}^n$ a target vector.
Let $\| \cdot\|$ be a norm on $\mathbbm{R}^n$ and $c : \mathbbm{N} \to \mathbbm{R}^{>0}$ be a non-decreasing function  such that $\| x \| \geq c(n) \cdot \| x \|_2$ for all $x \in \mathbbm{R}^n$.
Let
$$i > 1 + \log_2 ( \mu^{(\| . \|)} ( t , \mathcal{L} ( B ) ) ) - \log_2 ( c ).$$
If we consider the following representation of $t = \sum_{j=1}^n \beta_j 2^i e_j$, then the vector $v := \sum_{j=1}^n \lfloor \beta_j \rceil 2^i e_j$ is the closest lattice vector to $t$ in $\mathcal{L} ( B )$ with respect to norm $\| \cdot\|$.
Especially, the closest lattice vector to $t$ in $\mathcal{L} ( B )$ with respect to the norm $\| \cdot\|$ can be computed in polynomial time.
\end{lemma}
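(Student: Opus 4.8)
The plan is to reduce everything to a coordinate-wise rounding argument that exploits the fact that $\mathcal{L}(B) \subseteq 2^i\mathbbm{Z}^n$. First I would dispose of the degenerate case $t \in \mathcal{L}(B)$: then $\mu^{(\|\cdot\|)}(t,\mathcal{L}(B)) = 0$, the hypothesis on $i$ holds vacuously, each $t_j$ already lies in $2^i\mathbbm{Z}$, so every $\beta_j$ is an integer and $v = t$, as required. So from now on assume $t \notin \mathcal{L}(B)$, put $\mu := \mu^{(\|\cdot\|)}(t,\mathcal{L}(B)) > 0$, and --- using that a lattice is discrete and closed and $\|\cdot\|$ is a norm --- fix a genuine closest lattice vector $u^* \in \mathcal{L}(B)$ with $\|t - u^*\| = \mu$.

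The heart of the argument is then to show $u^* = v$. Since every generator (column of $B$) lies in $2^i\mathbbm{Z}^n$, so does $u^*$; hence $u^*_j / 2^i \in \mathbbm{Z}$ for every $j$. From $i > 1 + \log_2(\mu) - \log_2 c$ (with $c = c(n)$) I would extract $\mu < 2^{i-1} c$, and then bound, coordinate by coordinate, using $\|x\| \ge c \cdot \|x\|_2$,
\[
|t_j - u^*_j| \;\le\; \|t - u^*\|_2 \;\le\; \frac{\|t-u^*\|}{c} \;=\; \frac{\mu}{c} \;<\; 2^{i-1}.
\]
Dividing by $2^i$ and recalling $t_j = \beta_j 2^i$, this says $|\beta_j - u^*_j/2^i| < 1/2$; since $u^*_j/2^i$ is an integer and the nearest integer to a real number is unique as soon as that number is not a half-integer, the strict inequality forces $u^*_j/2^i = \lfloor \beta_j \rceil$. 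Summing over $j$ gives $u^* = v$, which at once proves $v \in \mathcal{L}(B)$ and that $v$ is a closest lattice vector to $t$. The ``especially'' clause then follows immediately: $v$ is computed from $t$ by $n$ nearest-integer roundings of numbers of polynomial bit size.

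I do not anticipate a real obstacle, but two points need care. The only genuinely non-obvious one is that the rounded vector $v$ is not visibly an element of $\mathcal{L}(B)$ --- the coordinate-wise formula pays no attention to $\Span(B)$ --- so the argument must obtain membership indirectly, by first invoking the existence of an honest closest vector $u^*$ and only then identifying $v$ with it; phrasing the steps in this order is what makes the proof work. The second point is the strictness of $\mu < 2^{i-1}c$: it is precisely this strictness that excludes the ambiguous case where some $\beta_j$ is a half-integer, and hence makes $\lfloor\beta_j\rceil$ well defined and equal to $u^*_j/2^i$.
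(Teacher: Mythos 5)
Your proof is correct and uses essentially the same argument as the paper: both exploit $\mathcal{L}(B)\subseteq 2^i\mathbbm{Z}^n$ together with the coordinate-wise bound obtained from $\|x\|\ge c\,\|x\|_2$ and the inequality $2^{i-1}c>\mu$. The only cosmetic difference is direction --- the paper shows every vector of $2^i\mathbbm{Z}^n$ other than $v$ lies at distance strictly greater than $\mu$ from $t$, while you show directly that a true closest vector $u^*$ must round coordinatewise to $v$; these are the same computation.
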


\begin{proof}
To prove the lemma, we consider the lattice $2^i \mathbbm{Z}^n$ and show that there exists exactly one vector, whose distance to $t$ is at most $\mu^{( \| \cdot \|)} ( t , \mathcal{L} ( B ) )$, namely the vector $v$.
Since $\mathcal{L} ( B )$ is a sublattice of $2^i \mathbbm{Z}^n$, the statement follows.
We show that $v \in 2^i \mathbbm{Z}^n$ is the only vector in $2^i \mathbbm{Z}^n$ whose distance to $t$ is at most $\mu^{( \| \cdot \| )} ( t , \mathcal{L} ( B ) )$ by showing that the distance of every lattice vector in $2^i \mathbbm{Z}^n \backslash \{  v \}$ is greater than $\mu^{( \| \cdot \| )} ( t , \mathcal{L} ( B ) ) \geq \mu^{(\| \cdot \|)} ( t , 2^i \mathbbm{Z}^n )$.\\
We consider a lattice vector $u \in 2^i \mathbbm{Z}^n \backslash \{ v \}$, together with its representation as a linear integer combination of the standard basis of the lattice $2^i \mathbbm{Z}^n$, $u = \sum_{j=1}^n u_j 2^i e_j$ with $u_j \in \mathbbm{Z}$, $1 \leq j \leq n$. Since $u \not= v$, there exists an index $k$, $1 \leq k \leq n$, where the coefficient $u_k$ is not the nearest integer of the coefficient $\beta_j$, i.e., $u_k \not= \lfloor \beta_k \rceil$.
Using the function $c$, we can show that this coefficient is responsible, that the distance between the target vector $t$ and this lattice vector is larger than $\mu^{(\| \cdot \|)} ( t , \mathcal{L} ( B ) )$:
\begin{align*}
\| u - t \|^2 \geq c^2 \cdot \| \sum_{j=1}^n ( u_j - \lfloor \beta_j \rceil ) 2^i e_j \|_2^2
 \geq c^2 \cdot | u_k - \lfloor \beta_k \rceil |^2 2^{2i} = \frac{c^2}{4} \cdot 2^{2i}.
\end{align*}
Since the value $i$ satisfies $i > 1 + \log_2 ( \mu^{( \| \cdot \| )} ( t , \mathcal{L} ( B ) ) ) - \log_2 ( c )$, we obtain
$2^{i-1} c > \mu^{(\| \cdot\|)} ( t , \mathcal{L} ( B ) )$,
which shows that $\| u -t \| > \mu^{( \| \cdot \| )} ( t , \mathcal{L} ( B ) )$.
\end{proof}

\noindent Now we are able to give a reduction from the closest vector problem to the optimization closest vector problem.
In the reduction, we will transform the given $\cvp$-instance into a new \cvp-instance, which 
satisfies the assumptions from Lemma \ref{lemma_solving_cvp_thin_lattice}. Additionally, both $\cvp$-instances will have the same distance between the target vector and the lattice. Hence, we are able to conclude from the solution of the new \cvp-instance to the solution of the original instance.

\begin{proof} (of Proposition \ref{prop_cvp_search_to_opt})\\
We are given a lattice basis $B \in \mathbbm{Z}^{n \times m}$ and a target vector $t \in \Span ( B )$. Using the algorithm $\mathcal{A}_{\opt}$ with input $B$ and $t$, we can compute
$$\mu := \mu^{(\| \cdot\|)} ( t , \mathcal{L} ( B ) ).$$
Without loss of generality, we can assume that $\mu \not= 0$, i.e., $t \not\in \mathcal{L} ( B )$.

\noindent Assume that we are able to construct a sequence of \cvp-instances
$$(B_i, t_i), ~ 0 \leq i \leq i_{\max} := \lceil \log_2 ( m \cdot \max_j \| b_j \| ) + 2 - \log_2 ( c ) \rceil,$$
where each tuple satisfies the following properties:
\begin{equation} \label{eq_condition_1}
B_i = 2^i B \subseteq 2^i \mathbbm{Z}^n, ~
t_i - t \in \mathcal{L} ( B ) \mbox{ and } \mu^{(\| \cdot\|)} ( t_i , \mathcal{L} ( B_i ) ) = \mu.
\end{equation}
Since the distance between the target vector $t$ and the lattice $\mathcal{L} ( B )$ is at most $m \cdot \max \{ \| b_j \| | 1 \leq j \leq m \}$,
each index $i \in \mathbbm{N}$ with $i \geq i_{\max}$ satisfies that
$i \geq \log_2 ( \mu ) + 1 - \log_2 ( c )$.
Hence, the $\cvp$-instance $(B_{i_{\max}}, t_{i_{\max}} )$ satisfies the assumptions of lemma \ref{lemma_solving_cvp_thin_lattice} and a closest lattice vector to $t_{i_{\max}}$ in the lattice $\mathcal{L} ( B_{i_{\max}} )$ can be found efficiently.
Using a solution of this $\cvp$-instance, we are able to compute a closest lattice vector to $t$:
If $x_{i_{\max}} \in \mathcal{L} ( B_{i_{\max}} )$ is a solution of this \cvp-instance, then the vector
$x := x_{i_{\max}} - ( t_{i_{\max}} - t ) \in \mathcal{L} ( B )$
is a solution of the \cvp-instance $(B,t)$.\\

\noindent Hence, it remains to show how to construct a sequence of \cvp-instances $(B_i,t_i)$ satisfying the properties stated in (\ref{eq_condition_1}):
As initialization, we set
$B_0 := B$ and $t_0 := t$.
Then, we continue inductively. For simplicity, we describe the construction only for $(B_1,t_1)$.\\
The basis $B_1 = 2 B_0$ is constructed in $m$ steps and in each step we construct a $\cvp$-instance $( \tilde{B}_j , \tilde{t}_j )$, $0 \leq j \leq m$ such that $(\tilde{B}_0 , \tilde{t}_0 ) = ( B_0 , t_0 )$ and $( \tilde{B}_m , \tilde{t}_m ) = ( B_1 , t_1 )$.\\
The construction is done in that way that each constructed instance satisfies the stated properties: We have $\tilde{t}_j - t \in \mathcal{L} ( B )$ and $\mu^{( \| \cdot \| )} ( \tilde{t}_j , \mathcal{L} ( \tilde{B}_j ) ) = \mu = \mu^{( \| \cdot \| )} ( t  ,\mathcal{L} ( B ) )$ for all $0 \leq j \leq m$.\\
Each lattice vector is a linear integer combination of the basis vectors. The idea of the construction is to fix in each step on e basis vector $b_j$, $1 \leq j \leq m$, and to check whether there exists a closest lattice vector to $t$, whose representation uses the vector $b_j$ an even number of times.
The closest lattice vector to $t$ is a linear integer combination of the basis vectors.
In each step, we fix one basis vector $b_j$, $1 \leq j \leq m$, and check whether the above representation uses this basis vector $b_j$ an odd or an even number of times.\\
This is done as follows: We consider the lattice which consists of all lattice vectors of the original lattice, which have a basis representation which uses the vector $b_j$ an even number of times.
If the distance of the target vector to this lattice is the same as its distance to the original lattice, this is the case and we do not change the target vector.
This can be checked using the algorithm $\mathcal{A}_{\opt}$.\\
In the other case, we construct a new target vector by $t - b_j$.
Obviously, this new target vector has the same distance to the new lattice as the original target vector to the original lattice.\\
But we need to be aware of the following: It is not possible to make the decisions described above independently:
If there exists several lattice vectors which are closest to the target vector, then in general they have a different representation as a linear combination of the basis vectors. Here different is meant with respect to the parity of the coefficients.
Hence, the construction need to be done sequentially.
For a detailed description and an illustration of the construction see Figure \ref{figure_cvp_selfreduction}.\\

\begin{figure}[t]
\begin{center}
\framebox{
\begin{minipage}[b]{13.25cm}
\tt \small
\emph{Construction:}\\
 Input: $\cvp$-instance $(B_0 , t_0 )$\\
 Set $\tilde{B}_0 := B_0$ and $\tilde{t}_0 := t_0$.\\
 For $1 \leq j \leq m$:
		\begin{itemize}
 			\item Start $\mathcal{A}_{\opt}$ with input $(\tilde{B}_j, \tilde{t}_{j-1}$),
 			where $\tilde{B}_j = [ 2 b_1, \hdots, 2 b_j, b_{j+1} , \hdots, b_m ]$.\\
 			The algorithm computes $\mu ( \tilde{B}_j , \tilde{t}_{j-1} )$.
 \item If $\mu ( \tilde{B}_j , \tilde{t}_{j-1} ) = \mu$, then $\tilde{t}_{j+1} := \tilde{t}_j$.\\
 			Otherwise $\tilde{t}_{j+1} := \tilde{t}_j - b_j$. 
\end{itemize}
Output: $\cvp$-instance $(\tilde{B}_m , \tilde{t}_m )$.
\end{minipage}
}
\end{center}
\caption{Construction of a new $\cvp$-instance}
\label{figure_cvp_selfreduction}
\end{figure}

\noindent It is easy to see that the $\cvp$-instance $(B_1,t_1)$ satisfies the properties stated in (\ref{eq_condition_1}).
This proves the correctness of the construction and at the same time the correctness of the algorithm for the closest vector problem.
It remains to show that the algorithm has the claimed running time.\\

\noindent As described, for the construction of the instance $(B_i,t_i)$ from the instance $(B_{i-1},t_{i-1})$ we need $m$ calls to the algorithm $\mathcal{A}_{\opt}$. Hence, the total number of calls to the algorithm $\mathcal{A}_{\opt}$ is
$m \cdot i_{\max},$
where $i_{\max}$ depends on the length of the basis vectors of $B$.
Using the knowledge of the function $C$, we obtain that
\begin{equation} \label{eq_repsize_upper_bound}
\max_{1 \leq j \leq m} \| b_j \| \leq C \cdot \max_{1 \leq j \leq m} \| b_j \|_2 \leq C \cdot \sqrt{n} \cdot S
\end{equation}
using Lemma \ref{facts_representation_size}.
Hence, we get the following upper bound for the number of calls to the algorithm $\mathcal{A}_{\opt}$,
\begin{align*}
 m \cdot
\left(
	\log_2 ( m \sqrt{n} \cdot C \cdot S ) + 2 - \log_2 ( c )
\right)
\leq ~ 
2m \cdot \log_2 ( m \sqrt{n}  ( C \cdot c^{-1} ) \cdot S ).
\end{align*}
Finally, we need to care about the magnitude of the representation size of the instances:
We apply the algorithm $\mathcal{A}_{\opt}$ to lattice bases $\tilde{B} \in \mathbbm{Z}^{n \times m}$, where each basis vector is the original basis vector multiplied with a factor $2^i$, where $i \leq i_{\max}$. Hence,
$$\size ( \tilde{B} ) \leq \size ( 2^{i_{\max}} \size ( B ) ) \leq 2^{i_{\max}} \cdot S.$$
The corresponding target vector $\tilde{t}$ is of the form $t - v$, where $v$ is a summand of at most $m \cdot i_{\max}$ basis vectors. Hence, if $b$ is the basis vector of $B$ with $\size ( b ) = \size ( B )$, then
\begin{equation*}
\size( \tilde{t} )
\leq \size ( t + \sum_{j=1}^{m \cdot i_{\max}} 2^{i_{\max}} b )
= \size ( t + m \cdot i_{\max} 2^{i_{\max}} b ).
\end{equation*}
Since all vectors are integer vectors, we obtain
\begin{equation*}
\size ( \tilde{t} ) 
 \leq \size ( t ) + \size ( m \cdot i_{\max} \cdot 2^{i_{\max}} b )
 \leq \size ( t ) + m \cdot i_{\max} \cdot 2^{i_{\max}} \size ( B ).
 \end{equation*}
 The parameter $S$ is an upper bound on the representation size of the basis $B$ and the vector $t$. Hence, we have
 \begin{equation*}
\size ( \tilde{t} ) \leq 2 m \cdot i_{\max} \cdot 2^{i_{\max}} \cdot S
\end{equation*}
and the size of each instance is at most $2m \cdot i_{\max} 2^{i_{\max}} S \leq m \cdot 2^{2 i_{\max}} S$.
Using the definition of $i_{\max}$, this is upper bounded by
\begin{equation*}
m \cdot 2^{2 \cdot \left( \log_2 ( m \cdot \max_j \| b_j \| ) + 2 - \log_2 ( c ) \right)} \cdot S
\leq m \cdot ( m \cdot \max_j \| b_j \| )^2 \cdot 2^4 \cdot c^{-2} \cdot S
= 16 m^3 \max_j \| b_j \|^2 \cdot c^{-2} \cdot S.
\end{equation*}
Using the upper bound  (\ref{eq_repsize_upper_bound}) for the length of the basis vectors, this is at most
\begin{equation*}
16 m^3 \cdot C^2 \cdot n \cdot S^2 \cdot c^{-2} \cdot S
\leq 16 \cdot m^3 \cdot n \cdot ( C \cdot c^{-1} )^2 \cdot S^3.
\end{equation*}
Hence, the running time of the algorithm to solve $\cvp$-Search is at most
\begin{equation*}
2m \cdot \left(
	\log_2 ( m \sqrt{n} \cdot ( C \cdot c^{-1} ) \cdot S ) 
\right)
 \cdot T_{m,n}^{(\|.\|)} ( 16 m^3 n ( C \cdot c^{-1} )^2 S^3 ).
 \end{equation*}
 \end{proof}
 
\begin{thm} \label{state_selfreduction_cvp_general}
Let $\| \cdot \|$ be a $(k,K)$-enumerable norm on $\mathbbm{R}^n$ and $c, C : \mathbbm{N} \to \mathbbm{R}^{>0}$ be non-decreasing functions such that
$c(n) \cdot \| x \|_2 \leq \| x \| \leq C ( n ) \cdot \| x \|_2$
for all $x \in \mathbbm{R}^n$.\\
Assume that there exists an algorithm $\mathcal{A}_{\dec}$ that for all lattices $\mathcal{L} ( B' ) \subseteq \mathbbm{Z}^n$ of rank $m$, all target vectors $t ' \in \Span ( B' ) \cap \mathbbm{Z}^n$ and all $r > 0$ solves the decisional closest vector problem in time $T_{m,n}^{(\| \cdot \|)} ( S' , r )$, where $S'$ is an upper bound on the size of the basis $B'$ and the target vector $t'$.\\
Then, there exists an algorithm $\mathcal{A}'$, that solves the closest vector problem for all lattices $\mathcal{L} ( B ) \subseteq \mathbbm{Z}^n$ of rank $m$ and target vectors $t \in \Span ( B ) \cap \mathbbm{Z}^n$ in time
$$n^{\mathcal{O} ( 1 )} \log_2 ( ( C \cdot c^{-1} ) S )
\cdot \left( k \cdot \log_2 ( \max_j \| b_j \| ) + \log_2 ( K ) \right)
\cdot T ( 16 m^3 n ( C \cdot c^{-1} ) S^3 , m \cdot \max_j \|b_j \| ),$$
where $S$ is an upper bound on the size of the basis $B$ and the target vector $t$.
\end{thm}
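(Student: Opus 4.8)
The plan is to obtain $\mathcal{A}'$ by composing the two reductions developed in this section. First I would invoke Proposition~\ref{prop_cvp_dec_to_opt}: since $\|\cdot\|$ is $(k,K)$-enumerable, the decisional oracle $\mathcal{A}_{\dec}$ yields an algorithm $\mathcal{A}_{\opt}$ for $\optcvp^{(\|\cdot\|)}$ that, on an instance $(B',t')$ of size $S'$, performs $O\!\left(k\log_2(\tfrac{m}{2}\max_j\|b'_j\|)+\log_2 K\right)\cdot n^{O(1)}$ calls to $\mathcal{A}_{\dec}$, each on an instance of size $S'$ and with radius parameter $\tfrac{m}{2}\max_j\|b'_j\|$. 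Then I would feed this $\mathcal{A}_{\opt}$ into Proposition~\ref{prop_cvp_search_to_opt}, using the norm-equivalence functions $c,C$, to get an algorithm $\mathcal{A}'$ for $\cvp^{(\|\cdot\|)}$ which makes $2m\log_2\!\big(m\sqrt n\,(Cc^{-1})\,S\big)$ calls to $\mathcal{A}_{\opt}$, each on a lattice/target instance of size at most $16m^3 n (Cc^{-1})^2 S^3$. Correctness of $\mathcal{A}'$ is then immediate from the correctness of the two propositions.

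Second, I would substitute one bound into the other and collect parameters. The total number of $\mathcal{A}_{\dec}$ calls is the product of the outer count $2m\log_2(m\sqrt n (Cc^{-1})S)$, the inner count $O(k\log_2(\cdots)+\log_2 K)\cdot n^{O(1)}$, and the cost $T_{m,n}^{(\|\cdot\|)}$ of a single $\mathcal{A}_{\dec}$ call. The instances passed to $\mathcal{A}_{\dec}$ inherit the size of the instances passed to $\mathcal{A}_{\opt}$ (which $\mathcal{A}_{\opt}$ does not increase), so their size is $O(m^3 n (Cc^{-1})^2 S^3)$. For the logarithmic factors I would use that the scaled bases $B'=2^i B$ occurring inside Proposition~\ref{prop_cvp_search_to_opt} have $i\le i_{\max}=O(\log_2(m\max_j\|b_j\|/c))$, so $\log_2(\max_j\|b'_j\|)=O(\log_2\max_j\|b_j\|)+O(\log_2(m/c))$; combined with $\max_j\|b_j\|\le C\sqrt n\, S$ (Lemma~\ref{facts_representation_size}), all such extra logarithms are absorbed into the stated factors $n^{O(1)}\log_2((Cc^{-1})S)$ and $k\log_2(\max_j\|b_j\|)+\log_2 K$. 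Finally, since $T_{m,n}^{(\|\cdot\|)}$ is nondecreasing in both of its arguments, I can upper bound the actual arguments by $\big(16m^3 n (Cc^{-1})S^3,\ m\max_j\|b_j\|\big)$, which yields the claimed running time.

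The main obstacle is precisely this two-level bookkeeping, and in particular controlling the radius argument handed to $\mathcal{A}_{\dec}$: the lattices fed to $\mathcal{A}_{\opt}$ inside Proposition~\ref{prop_cvp_search_to_opt} are scaled up by a factor $2^i$, so the naive covering-radius bound $\tfrac{m}{2}\max_j\|b'_j\|$ used by the generic $\mathcal{A}_{\opt}$ of Proposition~\ref{prop_cvp_dec_to_opt} grows with $i$. To keep the radius at $m\max_j\|b_j\|$ one exploits that the reduction of Proposition~\ref{prop_cvp_search_to_opt} preserves the target-to-lattice distance: every scaled instance $(B',t')$ still has $t'$ within distance $\tfrac{m}{2}\max_j\|b_j\|$ of $\mathcal{L}(B')$ (the bound for the \emph{original} basis), so the binary search inside $\mathcal{A}_{\opt}$ may always be initialized from the unscaled upper bound. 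Everything else — multiplying the call counts, checking that no intermediate number exceeds the claimed sizes (via Lemma~\ref{facts_representation_size} and the size bounds of the two propositions), and the termination count of the binary search through $(k,K)$-enumerability — is routine.
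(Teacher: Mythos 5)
Your proposal is correct and follows exactly the route the paper intends: the theorem is obtained by composing Proposition~\ref{prop_cvp_dec_to_opt} with Proposition~\ref{prop_cvp_search_to_opt}, and you correctly isolate the one non-trivial point, namely that the scaled instances $(2^iB,t_i)$ preserve the target-to-lattice distance so the binary search radius stays at $\tfrac m2\max_j\|b_j\|$ rather than growing with $i$. The only blemish, the factor $(C\cdot c^{-1})$ versus $(C\cdot c^{-1})^2$ in the size argument of $T$, is inherited from the paper's own statement and not a flaw in your argument.
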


\noindent The corresponding result for all $\ell_p$-norms follows directly from a special case of Hölder's inequality, which we stated on page \pageref{eq_Hölder}. It provides also a proof for Theorem \ref{statement_cvp_self_lp} in the case of $\ell_p$-norms.

\begin{cor} \label{appendix_statement_cvp_self_lp}
For all $\ell_p$-norms, $1 \leq p \leq \infty$, assume that there exists an algorithm $\mathcal{A}_{\dec}$ that for all lattices $\mathcal{L} ( B' ) \subseteq \mathbbm{Z}^n$ of rank $m$, all target vectors $t' \in \Span ( B' ) \cap \mathbbm{Z}^n$ and all $r > 0$ solves the decisional closest vector problem in time $T_{m,n}^{(\| \cdot \|)} ( S' , r )$, where $S'$ is an upper bound on the size of the basis $B'$ and the target vector $t'$.\\
Then, there exists an algorithm $\mathcal{A}'$, that solves the closest vector problem for all lattices $\mathcal{L} ( B ) \subseteq \mathbbm{Z}^n$ in time
$$k \cdot n^{\mathcal{O} ( 1 )} \log_2 ( S )^2 T ( 16 m^3 n^2 S^3 , m \cdot n S ),$$
where $k = p$ for $1 \leq p < \infty$ and $k = 1$ for $p = \infty$.
Here, $S$ is an upper bound on the size of the basis $B$ and the target vector $t$.
\end{cor}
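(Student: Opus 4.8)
The plan is to derive this as an immediate corollary of the general self-reducibility theorem, Theorem~\ref{state_selfreduction_cvp_general}, by instantiating its three parameters for the $\ell_p$-norm: the enumerability constants $(k,K)$ and non-decreasing functions $c,C:\mathbbm{N}\to\mathbbm{R}^{>0}$ with $c(n)\,\|x\|_2\le\|x\|_p\le C(n)\,\|x\|_2$ for all $x\in\mathbbm{R}^n$. For the enumerability I would simply invoke the remark following Definition~\ref{def_enumerable_function}: every $\ell_p$-norm is $(k,1)$-enumerable with $k=p$ for $1\le p<\infty$ and $k=1$ for $p=\infty$, so in particular $K=1$ and the $\log_2(K)$ summand in the running-time bound of Theorem~\ref{state_selfreduction_cvp_general} vanishes.

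For the comparison functions I would read the constants off the special case of H\"older's inequality relating the $\ell_2$- and $\ell_p$-norms stated in Section~\ref{section_basics}, splitting into the three regimes: for $1\le p\le 2$ take $c(n)=1$, $C(n)=n^{1/p-1/2}$; for $2<p<\infty$ take $c(n)=n^{1/p-1/2}$, $C(n)=1$; and for $p=\infty$ take $c(n)=n^{-1/2}$, $C(n)=1$. The only thing to check is that these choices admit a uniform polynomial bound: in every case $C(n)\cdot c(n)^{-1}\le\sqrt{n}$ and $C(n)\le\sqrt{n}$. Consequently, for an integral basis $B=[b_1,\dots,b_m]$ of size at most $S$ one gets $\|b_j\|\le C(n)\,\|b_j\|_2\le\sqrt{n}\cdot\sqrt{n}\,S=nS$, hence $m\cdot\max_j\|b_j\|\le mnS$.

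Plugging these estimates into the conclusion of Theorem~\ref{state_selfreduction_cvp_general}: the instances handed to $\mathcal{A}_{\dec}$ have size at most $16m^3n\,(C\cdot c^{-1})\,S^3\le 16m^3n^2S^3$ and distance parameter at most $mnS$; the number of oracle calls is at most $n^{\mathcal{O}(1)}\log_2((C\cdot c^{-1})S)\,\big(k\log_2(\max_j\|b_j\|)+\log_2 K\big)\le n^{\mathcal{O}(1)}\log_2(\sqrt{n}\,S)\,k\log_2(nS)$. Absorbing the $\log_2 n$ contributions into the $n^{\mathcal{O}(1)}$ factor and using $\log_2(\sqrt{n}\,S)\log_2(nS)\le n^{\mathcal{O}(1)}\log_2(S)^2$ gives the claimed bound $k\cdot n^{\mathcal{O}(1)}\log_2(S)^2\,T(16m^3n^2S^3,mnS)$. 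There is essentially no obstacle here; the one place asking for a little care is keeping the three-way case distinction clean so that $C(n)c(n)^{-1}\le\sqrt{n}$ holds uniformly for all $1\le p\le\infty$, and remembering to use $k=1$ (not $k=p$) when $p=\infty$ so that the dependence on $p$ in the final running time is linear.
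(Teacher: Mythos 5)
Your proposal is correct and follows essentially the same route as the paper: instantiate Theorem \ref{state_selfreduction_cvp_general} with the $(k,1)$-enumerability of $\ell_p$-norms and the H\"older comparison constants, then bound $\max_j\|b_j\|$ by $nS$ and absorb the logarithmic factors. The only cosmetic difference is that you track $C(n)c(n)^{-1}\le\sqrt{n}$ explicitly (the paper settles for $\le n$), which if anything makes the size bound $16m^3n^2S^3$ easier to justify.
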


\begin{proof}
Hence, we can apply Theorem \ref{state_selfreduction_cvp_general} with parameter $c, C$ such that $C ( n ) \cdot c (n )^{-1} \leq n$.
Additionally, every $\ell_p$-norm is $(k,1)$-enumerable with $k = p$ for $1 \leq p < \infty$ and $k = 1$ for $p = \infty$.
Hence, we obtain that there exists an algorithm for the closest vector problem whose running time is at most
$$n^{\mathcal{O} ( 1 )} \log_2 ( n S ) ( k \cdot \log_2 ( \max_j \| b_j \| ) \cdot T_{m,n}^{(p)} ( 16m^3 n^2 S^3 , m \cdot \max_j \| b_j \| ).$$
Since the length of all basis vectors $b_i$, $1 \leq i \leq n$, is upper bounded by
$\| b_i \|_p \leq \sqrt[p]{n} \size ( b_i ) \leq \sqrt[p]{n} \size ( B ) \leq n \cdot S$
for $1 \leq p < \infty$ and $\| b_i \|_{\infty} \leq \size ( B ) \leq S$,
see for example \cite{bk_gls}, we obtain the claimed result.
\end{proof}

\noindent To get the corresponding result for polyhedral norms, we need to show that all polyhedral norms are enumerable. This is done in the following lemma.

\begin{lemma} \label{lemma_polyhedral_norms_enumerable}
Let $P \subset \mathbbm{R}^n$ be a full-dimensional polytope symmetric about the origin with $F$ facets.
Let $P$ be given by a set $H_P = \{ h_1, \hdots, h_{F/2} ) \subset \mathbbm{Z}^n$ and a set of parameters $\{ \beta_1, \hdots, \beta_{F/2} \} \subset \mathbbm{N}$, i.e.,
$$P = \{ x \in \mathbbm{R}^n | \langle x , h_i \rangle \leq \beta_i \mbox{ and } \langle x , -h_i \rangle \leq \beta_i \mbox{ for all } 1 \leq i \leq F / 2 \}.$$ 
Then $\| \cdot \|_P$ is a $(1 , \prod_{j=1}^{F/2} \beta_j )$-enumerable norm.
\end{lemma}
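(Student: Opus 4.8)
The plan is to show that for any integer vector $x \in \mathbbm{Z}^n$, the value $\|x\|_P$ is the $k$-th root of a rational with a bounded denominator, with $k=1$; that is, we want to exhibit an integer $\tilde K \leq \prod_{j=1}^{F/2}\beta_j$ such that $\tilde K \cdot \|x\|_P \in \mathbbm{N}_0$ for all $x \in \mathbbm{Z}^n$. The starting point is the formula for the gauge function of $P$: since $P$ is the intersection of the slabs $|\langle x, h_i\rangle| \leq \beta_i$, for $x \neq 0$ we have
$$\|x\|_P = p_P(x) = \max_{1 \leq i \leq F/2} \frac{|\langle x, h_i\rangle|}{\beta_i}.$$
This is immediate from the definition $p_P(x) = \inf\{\lambda \geq 0 \mid x \in \lambda P\}$: scaling $P$ by $\lambda$ scales each right-hand side $\beta_i$ by $\lambda$, so $x \in \lambda P$ iff $|\langle x, h_i\rangle| \leq \lambda\beta_i$ for all $i$, iff $\lambda \geq |\langle x,h_i\rangle|/\beta_i$ for all $i$.

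First I would record that formula and note that for $x \in \mathbbm{Z}^n$ each numerator $|\langle x, h_i\rangle|$ is a nonnegative integer (since $h_i \in \mathbbm{Z}^n$), so $\|x\|_P$ is the maximum of finitely many nonnegative rationals $a_i/\beta_i$ with $a_i \in \mathbbm{N}_0$. Then I would set $\tilde K$ to be the maximum over $i$ of $\beta_i \cdot \prod_{j \neq i} 1$... more precisely, the cleanest choice: the maximizing index $i^\ast = i^\ast(x)$ depends on $x$, and for that index $\|x\|_P = a_{i^\ast}/\beta_{i^\ast}$, so $\beta_{i^\ast}\cdot\|x\|_P \in \mathbbm{N}_0$. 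Since $\beta_{i^\ast} \leq \prod_{j=1}^{F/2}\beta_j$ (all $\beta_j \in \mathbbm{N}$, hence $\geq 1$), the Definition of $(k,K)$-enumerability is met with $k=1$, $K = \prod_{j=1}^{F/2}\beta_j$, and the per-vector witness $\tilde K := \beta_{i^\ast(x)} \leq K$. This is exactly the flexibility the definition allows: $\tilde K$ may depend on $x$ as long as it is bounded by $K$.

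The only genuinely delicate point — and the one I would state carefully rather than wave at — is verifying the gauge-function formula in the degenerate directions, i.e. making sure $P$ being full-dimensional and symmetric guarantees $p_P$ is a genuine norm (finite, positive on $x \neq 0$, positively homogeneous, subadditive), so that "the maximum of the $a_i/\beta_i$" is actually attained and positive for $x \neq 0$; full-dimensionality of $P$ forces the $h_i$ to span $\mathbbm{R}^n$, which rules out $\langle x, h_i\rangle = 0$ for all $i$ when $x \neq 0$. Everything else is bookkeeping: the hypotheses $h_i \in \mathbbm{Z}^n$ and $\beta_i \in \mathbbm{N}$ are precisely what make each $|\langle x,h_i\rangle|/\beta_i$ a rational with denominator dividing $\beta_i$. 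I expect no real obstacle; the slight subtlety is purely in phrasing the enumerability certificate so that the index-dependence of $\tilde K$ is handled cleanly within the bound $\prod_j \beta_j$.
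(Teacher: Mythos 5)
Your proof is correct in substance and follows essentially the same route as the paper: the norm value of an integer vector is attained at a tight facet, so $\| x \|_P = |\langle x , h_{i^\ast} \rangle| / \beta_{i^\ast}$ is a rational whose denominator divides some $\beta_{i^\ast}$. The one point to fix is your reading of Definition \ref{def_enumerable_function}: the witness $\tilde{K}$ is quantified \emph{before} ``for all $x \in \mathbbm{Z}^n$'', so it must be a single integer working uniformly for every integer vector, not a per-vector choice $\beta_{i^\ast(x)}$; this is repaired in one line by taking $\tilde{K} = \prod_{j=1}^{F/2} \beta_j$ (a common multiple of all the $\beta_i$, since each $\beta_j \geq 1$), which is exactly what the paper does.
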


\begin{proof}
Given an integer vector $x \in \mathbbm{Z}^n \backslash \{ 0 \}$, its polyhedral norm has value $r$ if the following two properties are satisfied:
\begin{itemize}
	\item The vector $x$ is contained in the scaled polytope $r \cdot P$, that means $\langle x , h_i \rangle \leq r \cdot \beta_i$ and $\langle x , -h_i \rangle \leq \beta_i$ for all $1 \leq i \leq F/2$.
	\item There exists at least one inequality defining the polytope which is satisfied with equality.
		Let $j \in \mathbbm{N}$, $1 \leq j \leq F/2$, be such an index.
		Without loss of generality, we assume that $\langle x , h_j \rangle = r \cdot \beta_j$.
		Since $\langle x , h_j \rangle \in \mathbbm{Z}$, we have $r = \langle x , h_j \rangle / \beta_j \in \mathbbm{Q}$.
		That means, there exists $p,q\in \mathbbm{N}$ with $\gcd( p , q ) = 1$ such that $r = p / q$.
		Additionally, we know that $\beta_j$ is divisible by $q$.
\end{itemize}
That means, that each value, which can be achieved by the norm $\| \cdot \|_P$ of an integer vector, is a rational of the form $p / q$ with $p,q \in \mathbbm{N}$, $\gcd ( p , q ) = 1$ and there exists an index $j$, $1 \leq j \leq F/2$, such that $q$ divides $\beta_j$.
Hence, for each vector $x \in \mathbbm{Z}^n$, we obtain that $(\prod_{j=1}^{F/2} \beta_j ) \cdot \| x \|_P \in \mathbbm{N}_0$.
\end{proof}

\noindent Additionally, we need to compute the radius of an in- and circumscribed Euclidean ball.

\begin{lemma} \label{statement_polyhedron_innervolume}
Let $P \subset \mathbbm{R}^n$ be a full-dimensional polytope symmetric about the origin,
$$P = \{ x \in \mathbbm{R}^n | \langle x , h_i \rangle \leq 1 \mbox{ and } \langle x , - h_i \rangle \leq 1 \mbox{ for all } 1 \leq i \leq F/2\}.$$
Define $P$ contains an Euclidean ball with radius 
$h: = \min \{ \| h_i \|_2^{-1} | 1 \leq i \leq F / 2 \}$
centered at the origin.
\end{lemma}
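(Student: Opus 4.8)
The plan is simply to verify the claimed inclusion pointwise. First I would fix an arbitrary vector $x \in \mathbbm{R}^n$ with $\| x \|_2 \leq h$, where $h = \min \{ \| h_i \|_2^{-1} \mid 1 \leq i \leq F/2 \}$, and an arbitrary index $i$ with $1 \leq i \leq F/2$. By the Cauchy--Schwarz inequality together with $h \leq \| h_i \|_2^{-1}$,
\[
| \langle x , h_i \rangle | \leq \| x \|_2 \cdot \| h_i \|_2 \leq h \cdot \| h_i \|_2 \leq \| h_i \|_2^{-1} \cdot \| h_i \|_2 = 1 .
\]
Hence both $\langle x , h_i \rangle \leq 1$ and $\langle x , -h_i \rangle = - \langle x , h_i \rangle \leq 1$. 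Since this holds for every $i$, the vector $x$ satisfies all defining inequalities of $P$, i.e. $x \in P$; and since $x$ was an arbitrary vector of Euclidean norm at most $h$, this gives $\bar{B}_n^{(2)} ( 0 , h ) \subseteq P$, which is the assertion.

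There is no real obstacle here: the estimate is a one-line application of Cauchy--Schwarz, and the symmetry of $P$ about the origin is exactly what makes the single bound $|\langle x , h_i \rangle| \leq 1$ simultaneously handle the inequality $\langle x , h_i \rangle \leq 1$ and its mirror $\langle x , -h_i \rangle \leq 1$. The only auxiliary remark worth including is that $h$ is well-defined and strictly positive: since $P$ is full-dimensional, no normal vector $h_i$ is zero, so each $\| h_i \|_2^{-1}$ is finite, and as there are finitely many constraints the minimum over $i$ is attained and positive.
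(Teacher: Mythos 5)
Your proof is correct and follows essentially the same route as the paper's: both reduce the inclusion to the single estimate $|\langle x , h_i \rangle| \leq \| x \|_2 \cdot \| h_i \|_2 \leq 1$ via Cauchy--Schwarz; yours merely writes out the calculation that the paper declares ``obvious.'' Your added remark that $h$ is well-defined and positive is a harmless bonus.
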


\begin{proof}
Let $x \in B_n^{(2)} ( 0 , h ) = h \cdot B_m^{(2)} (0, 1 )$. Then this vector is of the form
$x = h \cdot x'$,
where $x' \in \mathbbm{R}^n$ with $\| x' \|_2 \leq 1$.
Using the Cauchy-Schwarz inequality, it is obvious that this vector $x$ satisfies all inequalities defining the polytope.
\end{proof}

\begin{lemma} \label{lemma_polyhedron_approx_ellipsoid_initialization}
Let $P \subseteq \mathbbm{R}^n$ be a full-dimensional polytope given by a vector $A \in \mathbbm{Z}^{m \times n}$ and a vector $\beta \in \mathbbm{Z}^m$.
Let $r$ be an upper bound on the representation size of $P$.\\
Then $P$ ist contained in an Euclidean ball with radius  
$R_{out} = \sqrt{n} \left( n^{n/2} r^n \right)$
centered at the origin.
\end{lemma}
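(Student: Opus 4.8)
The plan is to reduce the statement to a bound on the coordinates of the vertices of $P$. Since $P$ is a polytope, it is bounded, hence equals the convex hull $\conv\{v_1,\dots,v_N\}$ of its finitely many vertices; as a Euclidean ball centred at the origin is convex, it suffices to show that every vertex of $P$ lies in $\bar B_n^{(2)}(0,R_{out})$. First I would recall the standard characterisation of vertices: if $v$ is a vertex of $P=\{x\in\mathbbm{R}^n\mid Ax\le\beta\}$, then the rows of $A$ corresponding to the inequalities tight at $v$ span $\mathbbm{R}^n$, so one may select an $n\times n$ submatrix $A'$ of $A$ with $\det A'\neq 0$ and the corresponding subvector $\beta'$ of $\beta$ such that $A'v=\beta'$, i.e.\ $v=(A')^{-1}\beta'$.

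Next I would bound the coordinates of $v$ via Cramer's rule. For each $1\le i\le n$ we have $v_i=\det(A'_i)/\det(A')$, where $A'_i$ is obtained from $A'$ by replacing its $i$-th column by $\beta'$. As $A'$ is an integer matrix with nonzero determinant, $|\det(A')|\ge 1$. Every column of $A'_i$ is either a column of $A$ or the vector $\beta'$, hence consists of integers of absolute value at most $r$, so its Euclidean length is at most $\sqrt{n}\,r$; Hadamard's inequality then gives $|\det(A'_i)|\le (\sqrt{n}\,r)^n=n^{n/2}r^n$. Consequently $\|v\|_\infty\le n^{n/2}r^n$, and therefore $\|v\|_2\le\sqrt{n}\,\|v\|_\infty\le \sqrt{n}\cdot n^{n/2}r^n=n^{(n+1)/2}r^n=R_{out}$.

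Finally, since every $x\in P$ is a convex combination of vertices, all of which lie in the closed ball $\bar B_n^{(2)}(0,R_{out})$, convexity of the ball yields $P\subseteq\bar B_n^{(2)}(0,R_{out})$, which is the assertion. I do not anticipate a genuine obstacle here: the only points requiring a little care are invoking the vertex characterisation (so that the chosen subsystem has a square, full-rank coefficient matrix, using that $P$ is bounded and full-dimensional so that vertices exist) and using Hadamard's bound rather than the coarser $n!\,r^n$ estimate, so that the constant comes out exactly as $R_{out}=n^{(n+1)/2}r^n$.
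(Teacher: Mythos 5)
Your proof is correct and follows essentially the same route the paper relies on: the paper cites Lemma 3.1.33 of Gr\"otschel--Lov\'asz--Schrijver here, but its own proof of the companion box bound (Lemma \ref{lemma_polyhedron_out_box}) is exactly your vertex/Cramer/Hadamard argument, and the ball statement then follows from $\|v\|_2\le\sqrt{n}\,\|v\|_\infty$ together with convexity, just as you write. The only cosmetic imprecision is that the columns of $A'_i$ are subvectors of columns of $A$ (restricted to the chosen rows) rather than full columns, which does not affect the entrywise bound by $r$ or the Hadamard estimate.
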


\noindent For a proof of this statement see Lemma 3.1.33 in \cite{bk_gls}.

\begin{cor} \label{cor_cvp_search_to_opt_polyhedron}
Let $P \subseteq \mathbbm{R}^n$ be a full-dimensional polytope symmetric about the origin with $F$ facets.
Assume that there exists an algorithm $\mathcal{A}_{\dec}$ that for all lattices $\mathcal{L} ( B' ) \subset \mathbbm{Z}^n$ of rank $m$ and all target vectors $t \in \Span ( B') \cap \mathbbm{Z}^n$ solves the decisional closest vector problem with respect to the polyhedral norm $\| \cdot \|_P$ in time $T_{m,n}^{(P)} ( S' )$,
where $S'$ is an upper bound on the size of the basis $B'$ and the target vector $t'$.\\
Then there exists an algorithm $\mathcal{A}'$ that solves the closest vector problem with respect to the polyhedral norm $\| \cdot \|_P$ for all lattices $\mathcal{L} ( B ) \subset \mathbbm{Z}^n$ of rank $m$ and target vectors $t \in \Span ( B ) \cap \mathbbm{Z}^n$ in time
\begin{equation*}
F \cdot n^{\mathcal{O} ( 1 )} \log_2 ( P \cdot S ) 
\cdot T_{m,n}^{(P)} ( 16 \cdot m^3 n^{n+2}\size ( P )^{n+1} 
\cdot S^3 , n \cdot m \cdot S \cdot \size ( P )),
\end{equation*}
where $S$ is an upper bound on the size of the basis $B$ and the target vector $t$.
\end{cor}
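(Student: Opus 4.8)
The plan is to obtain this from Theorem~\ref{state_selfreduction_cvp_general} by feeding it the polyhedral norm $\|\cdot\|_P$; the only real work is to supply the three ingredients that theorem requires -- the enumerability constant, the comparison functions $c,C$, and a bound on $\max_j\|b_j\|_P$ -- and then to simplify the resulting running time.

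By Lemma~\ref{lemma_polyhedral_norms_enumerable}, $\|\cdot\|_P$ is $(1,\prod_{j=1}^{F/2}\beta_j)$-enumerable, so in Theorem~\ref{state_selfreduction_cvp_general} we set $k=1$ and $K=\prod_{j=1}^{F/2}\beta_j\le\size(P)^{F/2}$, giving $\log_2 K\le(F/2)\log_2\size(P)$. For the comparison functions, observe that since every $\beta_j\ge1$ the polytope $P$ contains the normalized polytope $\{x\in\mathbbm{R}^n\mid|\langle x,h_i\rangle|\le1\ \mbox{for all }i\}$, which by Lemma~\ref{statement_polyhedron_innervolume} contains the Euclidean ball of radius $h=\min_i\|h_i\|_2^{-1}$ about the origin; scaling this containment shows $\|x\|_P\le h^{-1}\|x\|_2$, so we may take $C(n)=\max_i\|h_i\|_2\le\sqrt{n}\,\size(P)$. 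On the other hand, $P$ (described by the constraints $\pm h_i$ with right-hand sides $\beta_i$) is contained in $\bar{B}_n^{(2)}(0,R_{out})$ with $R_{out}=\sqrt{n}\,n^{n/2}\size(P)^n$ by Lemma~\ref{lemma_polyhedron_approx_ellipsoid_initialization}; since $x/\|x\|_P\in P$ for $x\ne0$, this gives $\|x\|_2\le R_{out}\|x\|_P$, so $c(n)=R_{out}^{-1}$ works. Multiplying, $C(n)c(n)^{-1}\le n^{(n+2)/2}\size(P)^{n+1}\le n^{n+1}\size(P)^{n+1}$.

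Next I would bound $\max_j\|b_j\|_P$: by the last item of Claim~\ref{facts_representation_size}, $\|b_j\|_P\le n\,\size(P)\,\size(b_j)\le n\,\size(P)\,S$, so $m\max_j\|b_j\|_P\le n\,m\,S\,\size(P)$ and $k\log_2(\max_j\|b_j\|_P)\le\log_2(n\,m\,S\,\size(P))$. Substituting $k=1$ together with these estimates for $\log_2 K$, $C c^{-1}$ and $\max_j\|b_j\|_P$ into the running time of Theorem~\ref{state_selfreduction_cvp_general}: the first argument $16m^3n(Cc^{-1})S^3$ of $T$ becomes at most $16m^3n^{n+2}\size(P)^{n+1}S^3$, the second argument $m\max_j\|b_j\|$ becomes at most $n\,m\,S\,\size(P)$, and the prefactor $n^{\mathcal{O}(1)}\log_2((Cc^{-1})S)\bigl(k\log_2(\max_j\|b_j\|)+\log_2 K\bigr)$ collapses to $F\cdot n^{\mathcal{O}(1)}\log_2(\size(P)\cdot S)$, the factor $F$ coming from $\log_2 K$ and the remaining logarithmic, $\sqrt{n}$ and $n^{n/2}$ contributions being absorbed into $n^{\mathcal{O}(1)}\log_2(\size(P)\cdot S)$. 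This is exactly the claimed bound.

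I do not expect a genuine obstacle. The only points that need care are the bookkeeping of the polynomial- and exponential-in-$n$ factors when forming the product $C c^{-1}$ and when folding the various logarithmic factors into $n^{\mathcal{O}(1)}\log_2(\size(P)\cdot S)$, and the small but essential remark that $\beta_j\ge1$ is precisely what allows Lemma~\ref{statement_polyhedron_innervolume}, stated for the normalized polytope, to be applied to $P$ itself.
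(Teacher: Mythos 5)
Your proof is correct and follows essentially the same route as the paper: instantiate Theorem~\ref{state_selfreduction_cvp_general} using Lemma~\ref{lemma_polyhedral_norms_enumerable} for the enumerability constant, Lemmas~\ref{statement_polyhedron_innervolume} and~\ref{lemma_polyhedron_approx_ellipsoid_initialization} for the inscribed and circumscribed Euclidean balls giving $c$ and $C$, and Claim~\ref{facts_representation_size} for $\max_j\|b_j\|_P$. If anything, your derivation is slightly more careful than the paper's (you make explicit why $\beta_j\ge 1$ lets the normalized-polytope lemma apply to $P$, and how the ball containments translate into the norm comparisons $c,C$), but the argument and the resulting bounds are the same.
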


\noindent This corollary provides also a proof of Theorem \ref{statement_cvp_self_polyhedron} in the case of polytopes.

\begin{proof}
Assume that $P$ is given by a set $H_P = \{ h_1, \hdots, h_{F/2} \} \subset \mathbbm{Z}^n$ and a set of parameters $\{ \beta_1, \hdots, \beta_{F/2} \} \subset \mathbbm{N}$, i.e.,
$$P = \{ x \in \mathbbm{R}^n | \langle x , h_i \rangle \leq \beta_i \mbox{ and } \langle x , -h_i \rangle \leq \beta_i \mbox{ for all } 1 \leq i \leq F/2 \}.$$
As we have seen in Lemma \ref{statement_polyhedron_innervolume}, $P$ contains an Euclidean ball with radius $\min_i \{ 1 / \| h_i \|_2 \}$. 
The radius $\min \{ 1 / \| h_i \|_2 | 1 \leq i \leq F / 2 \}$ is at least $\sqrt{n} / \size ( P )$, since we have
$$\| h_i \|_2 \leq \sqrt{n} \size (h_i ) \leq \sqrt{n} \size ( P )$$
for all $1 \leq i \leq F / 2$, using the result from Lemma \ref{facts_representation_size}.
That means, we have $P \subset B_n^{(2)} (0 , (\sqrt{n} \cdot \size ( P ))^{-1})$.\\
Additionally, we have seen in Lemma \ref{lemma_polyhedron_approx_ellipsoid_initialization} that $P$ is contained in a ball with radius $\sqrt{n} \cdot  n^{n/2} \cdot \size ( P )^n$.
Using these results, the relation between the in- and circumscribed unit ball is at most
$$\frac{\sqrt{n} \cdot n^{n/2} \size( P )^n}{(\sqrt{n} \cdot \size ( P ))^{-1}}
= n^{n/2 + 1} \size ( P )^{n+1}
\leq ( n \cdot \size ( P ) )^{n+1}.$$
Now, it follows from Theorem \ref{state_selfreduction_cvp_general}, that there exists an algorithm $\mathcal{A}'$, that solves the closest vector problem with respect to the norm defined by the polytope $P$.\\
Additionally, we have seen In Lemma \ref{lemma_polyhedral_norms_enumerable} that the norm $\| \cdot \|_P$ defined by the polytope $P$ is $( 1, \prod_{j=1}^{F/2} \beta_j )$-enumerable.
Since the parameters $\beta_j$, $1 \leq j \leq F/2$, are integers, we have
$$\prod_{j=1}^{F/2} \beta_j \leq \size ( P )^{F/2}.$$
with Lemma \ref{facts_representation_size}, we see that the length of each basis vector $b_i$, $1 \leq i \leq m$, with respect to the norm $\| \cdot \|_P$ is at most
$$\| b_i \|_P \leq n \cdot \size ( B ) \cdot \size ( P ).$$
Hence, we obtain that
$$\max \{ \| b_j \|_P | 1 \leq j \leq m \} \leq n \cdot S \cdot \size ( P ).$$
Now, it follows from Theorem \ref{state_selfreduction_cvp_general}, that the running time of the algorithm $\mathcal{A}'$ is at most
\begin{equation*}
F \cdot n^{\mathcal{O} ( 1 )} \log_2 ( P \cdot S ) 
\cdot T_{m,n}^{(P)} ( 16 \cdot m^3 n^{n+2}\size ( P )^{n+1} 
\cdot S^3 , n \cdot m \cdot S \cdot \size ( P )),
\end{equation*}
\end{proof}

	\subsection{Technical lemmata for the lattice membership algorithm}

\begin{lemma} \label{lemma_polyhedron_out_box}
Let $P \subseteq \mathbbm{R}^n$ be a full-dimensional bounded polyhedron given by $m$ integral inequalities $\langle a_i , x \rangle \leq \beta_i$ where $a_i \in \mathbbm{Z}^n$, $\beta_i \in \mathbbm{Z}$ for $1 \leq i \leq m$, i.e.,
$$P = \{ x \in \mathbbm{R}^n | \langle a_i , x \rangle \leq \beta_i \mbox{ for } 1 \leq i \leq m \} = \{ x \in \mathbbm{R}^n | A^T x \leq \beta \},$$
where $A$ is the matrix which contains of the columns $a_i$. Then
$$P \subset \{ x \in \mathbbm{R}^n | - t \leq x ( i ) \leq t \} \mbox{ with } t = n^{n/2} r^n,$$
where $r$ is the representation size of the polyhedron.
\end{lemma}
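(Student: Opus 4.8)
The plan is to reduce the claim to a bound on the vertices of $P$ and then invoke Cramer's rule together with Hadamard's inequality. First I would observe that, since $P$ is full-dimensional and bounded, it is a polytope, hence the convex hull of its finitely many vertices. Because the box constraint $-t \le x(i) \le t$ is preserved under convex combinations, it suffices to prove $|v(i)| \le t$ for every vertex $v$ of $P$ and every coordinate $i \in \{1,\dots,n\}$.

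Next I would use the standard characterization of vertices: $v \in P$ is a vertex if and only if there is a subset $S \subseteq \{1,\dots,m\}$ with $|S| = n$ such that the rows $a_i^T$, $i \in S$, are linearly independent and $v$ is the unique solution of the system $a_i^T v = \beta_i$, $i \in S$. Writing $A_S \in \mathbbm{Z}^{n \times n}$ for the matrix with rows $a_i^T$, $i \in S$, and $\beta_S \in \mathbbm{Z}^n$ for the corresponding right-hand side, Cramer's rule gives $v(i) = \det\big((A_S)_i\big) / \det(A_S)$, where $(A_S)_i$ is $A_S$ with its $i$-th column replaced by $\beta_S$.

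Then I would bound numerator and denominator separately. Since $A_S$ is integral with linearly independent rows, $\det(A_S)$ is a nonzero integer, so $|\det(A_S)| \ge 1$. For the numerator I would apply Hadamard's inequality: $|\det((A_S)_i)|$ is at most the product of the Euclidean norms of the columns of $(A_S)_i$; each such column has $n$ entries, each of which is a coordinate of $A$ or of $\beta$ and hence an integer of absolute value at most $r = \size(P)$, so its Euclidean norm is at most $\sqrt{n}\, r$. Multiplying over the $n$ columns yields $|\det((A_S)_i)| \le (\sqrt{n}\, r)^n = n^{n/2} r^n = t$, and therefore $|v(i)| \le t$ for every vertex $v$. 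Since every $x \in P$ is a convex combination of vertices, $-t \le x(i) \le t$ for all $x \in P$, which is the claim.

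The main obstacle is essentially bookkeeping rather than a genuine difficulty: one must argue carefully that every coordinate of $A$ and of $\beta$ is an integer of absolute value at most $r$ (so that the per-column bound $\sqrt{n}\,r$ is legitimate), and one must invoke full-dimensionality and boundedness to guarantee that $P$ actually has vertices and that each vertex has the stated determinantal form. Beyond Hadamard's inequality and the estimate $|\det| \ge 1$ for nonzero integer matrices, no further computation is needed.
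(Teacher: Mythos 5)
Your proposal is correct and follows essentially the same route as the paper: bound the vertices via Cramer's rule, use integrality to get $|\det| \ge 1$ for the denominator, and Hadamard's inequality for the numerator, then pass to all of $P$ by convexity. You are in fact slightly more explicit than the paper about the reduction to vertices and the per-column entry bounds, but the argument is the same.
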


\begin{proof}
Let $v \in P$ be an arbitrary vertex of the polyhedron. Then there exists a $n \times n$ submatrix $C$ of $A^T$ such that $C \cdot v = d$, where $d$ is the column vector which consists of the corresponding coefficients of $b$. Using Cramer's Rule, the coefficients $v_i$ of the vertex $v$ are given by
$$v_i = \frac{\det ( C_i )}{\det ( C )}.$$
Here $C_i$ is the $n \times n$ matrix $C$ where the $i$-th column is replaced by $d$. Since $A^T$ is a matrix with integral coefficients, $|\det ( C ) | \geq 1$ and we get for all coefficients
$$| v_i | \leq | \det ( C_i )| \leq n^{n/2} \size ( C )^n,$$
where the last inequality can be shown using Hadamard's inequality.
This proves the lemma.
\end{proof}

\subsection{Technical lemmata for the flatness algorithm} \label{sec_appendix_technical_flatness}

\begin{thm} (Theorem \ref{thm_flatness_ellipsoid} restated)\\
Let $E \subset \mathbbm{R}^n$ be an ellipsoid.
If the width of the ellipsoid is at least $n$,
$w ( E ) \geq n$,
then the ellipsoid contains an integer vector.\end{thm}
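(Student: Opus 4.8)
The plan is to transform $E$ into the Euclidean unit ball by a linear change of coordinates, thereby reducing the existence of an integer point in $E$ to a covering-radius statement about a lattice, and then to control that covering radius by combining Banaszczyk's transference bound with the description of $w(E)$ from Proposition \ref{prop_flatness_direction}.

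Concretely, write $E = E(D,c)$ and fix a decomposition $D = Q^T Q$ with $Q \in \mathbb{R}^{n \times n}$ nonsingular (such a $Q$ exists since $D$ is symmetric positive definite, and it is the one used in Proposition \ref{prop_flatness_direction}). First I would check that the affine map $x \mapsto (Q^T)^{-1}(x-c)$ carries $E$ bijectively onto the closed Euclidean unit ball $\bar{B}_n^{(2)}(0,1)$, using that $(x-c)^T D^{-1}(x-c) = \|(Q^T)^{-1}(x-c)\|_2^2$. Consequently $E$ contains a vector of $\mathbb{Z}^n$ if and only if the point $c' := (Q^T)^{-1}c$ lies within Euclidean distance $1$ of the lattice $\Lambda := (Q^T)^{-1}\mathbb{Z}^n = \mathcal{L}((Q^T)^{-1})$; equivalently, it suffices to show that the covering radius $\mu(\Lambda)$ satisfies $\mu(\Lambda) \le 1$. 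A short computation identifies the dual lattice: $\Lambda^* = \mathcal{L}(Q)$, since $\langle y, (Q^T)^{-1}k\rangle \in \mathbb{Z}$ for all $k \in \mathbb{Z}^n$ holds exactly when $Q^{-1}y \in \mathbb{Z}^n$.

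Now Proposition \ref{prop_flatness_direction} gives $\lambda_1^{(2)}(\Lambda^*) = \lambda_1^{(2)}(\mathcal{L}(Q)) = \tfrac{1}{2} w(E)$, so the hypothesis $w(E) \ge n$ yields $\lambda_1^{(2)}(\Lambda^*) \ge n/2$. Applying Banaszczyk's transference inequality \cite{pp_banaszczykbounds}, which states $\mu(\Lambda)\,\lambda_1^{(2)}(\Lambda^*) \le n/2$ for every full-rank lattice $\Lambda \subset \mathbb{R}^n$, we obtain $\mu(\Lambda) \le (n/2)/\lambda_1^{(2)}(\Lambda^*) \le 1$. Hence $\dist(c',\Lambda) \le \mu(\Lambda) \le 1$, so there is $z \in \mathbb{Z}^n$ with $\|(Q^T)^{-1}(z-c)\|_2 \le 1$, i.e.\ $z \in E$, as required.

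The linear-algebra bookkeeping (that the coordinate change sends $E$ to the unit ball, and that $\Lambda^* = \mathcal{L}(Q)$) is routine; the one point that needs care is matching constants. The argument uses the \emph{closed} ellipsoid $\{x : (x-c)^T D^{-1}(x-c)\le 1\}$, for which $\mu(\Lambda)\le 1$ — not the strict inequality — already forces an integer point; this is exactly why the threshold in the statement can be $w(E)\ge n$ rather than $w(E)>n$. The only genuine input beyond Proposition \ref{prop_flatness_direction} is Banaszczyk's bound, so the main (modest) obstacle is to invoke it with the correct lattice (the \emph{dual} $\Lambda^* = \mathcal{L}(Q)$) and the correct constant $n/2$.
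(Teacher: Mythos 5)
Your proposal is correct and follows essentially the same route as the paper: decompose $D=Q^TQ$, map $E$ to the unit ball so that an integer point in $E$ corresponds to $\dist((Q^T)^{-1}c,\mathcal{L}((Q^T)^{-1}))\le 1$, identify $\mathcal{L}((Q^T)^{-1})^*=\mathcal{L}(Q)$, and combine Banaszczyk's bound $\mu(\Lambda)\lambda_1^{(2)}(\Lambda^*)\le n/2$ with $w(E)=2\lambda_1^{(2)}(\mathcal{L}(Q))$ from Proposition \ref{prop_flatness_direction}. The only cosmetic difference is that the paper argues by contraposition while you argue directly, and your remark about the closed ellipsoid making the non-strict threshold $w(E)\ge n$ suffice is a correct and welcome clarification.
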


\begin{proof}
We prove the contraposition:
If the ellipsoid $E ( D , c )$ does not contain an integer vector, then every integer vector $x \in \mathbbm{Z}^m$ satisfies
$(x-c)^T D^{-1} (x-c) > 1$.
Since
$$( x - c )^T D^{-1} ( x - c )
= ( x - c )^T ( Q^T Q )^{-1} ( x - c )
= \| ( Q^T )^{-1} x - ( Q^T )^{-1} c \|_2^2,$$
the distance from the vector $(Q^T)^{-1} c$ to the lattice $\mathcal{L} ( (Q^T)^{-1})$ is greater than 1.
This implies that the covering radius of the lattice $\mathcal{L} ( (Q^T )^{-1} )$ is greater than 1,
$\mathcal{L} ( ( Q^T )^{-1})$.
Since $\mathcal{L} (  ( Q^T )^{-1} ) = \mathcal{L} ( Q )^*$, we obtain from the transference bound due to Banaszczyk \cite{pp_banaszczykbounds} that
$$\frac n 2 \geq \mu^{(2)} ( \mathcal{L} ( Q )^* ) \cdot \lambda_1^{(2)} ( \mathcal{L} ( Q ) ) > \lambda_1^{(2)} ( \mathcal{L} ( Q ) ).$$
Since we have seen in Proposition \ref{prop_flatness_direction} that the width of the ellipsoid is exactly $2 \lambda_1^{(2)} ( \mathcal{L} ( Q ) )$, it follows that
$w ( E ( D , c ) ) < n$.
\end{proof}

\begin{claim} (Claim \ref{claim_upper_bound_size} restated)
Let $F : \mathbbm{R}^m \to \mathbbm{R}$ be a function defined as in (\ref{eq_def_function_lower_bound_volume}) given by a non-singular matrix $V \in \mathbbm{Q}^{n \times n}$, a vector $t \in \mathbbm{Q}^n$ and $\alpha_n, \alpha_d \in \mathbbm{N}$.
Let $S$ be an upper bound on the representation size of $V^{-1}$, $t$, $\alpha_n$ and $\alpha_d$.
Then, there exists an integer $K \leq S^{2n^2 p}$ such that
$K \cdot F ( x ) \in \mathbbm{Z}$ for all $x \in \mathbbm{Z}^m$.
\end{claim}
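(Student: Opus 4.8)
The plan is to exhibit a single positive integer $K$, independent of $x$, that clears all the denominators of $F(x)$ simultaneously, and then to bound $K$ by $S^{2n^2p}$. I would begin by introducing, for $x\in\mathbbm{Z}^m$, the rational vector $w := V^{-1}((x^T,0^{n-m})^T - t)$, so that $F(x) = \alpha_d^p\sum_{i=1}^n|w_i|^p - \alpha_n^p$. Since the lifted vector $(x^T,0^{n-m})^T$ is integral whenever $x$ is, the only sources of non-integrality in $w$ are the rational matrix $V^{-1}$ and the rational vector $t$.

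The next step is to produce a common denominator for the coordinates of $w$ that works uniformly in $x$. Writing $w = V^{-1}(x^T,0^{n-m})^T - V^{-1}t$, I would let $D_1$ be the product of the denominators of the (at most $n^2$) entries of $V^{-1}$, so that $D_1V^{-1}$ is integral and $D_1\leq S^{n^2}$, and let $D_2$ be the product of the denominators of the $n$ entries of $t$, so that $D_2t$ is integral and $D_2\leq S^n$; then $c := D_1D_2\leq S^{n^2+n}\leq S^{2n^2}$. A coordinatewise check shows $c\,w\in\mathbbm{Z}^n$, the two terms of $w$ contributing $D_2\cdot(D_1V^{-1})(x^T,0^{n-m})^T$ and $(D_1V^{-1})(D_2t)$, both integral. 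Now I would use that $p$ is a positive integer: from $cw_i\in\mathbbm{Z}$ it follows that $|cw_i|\in\mathbbm{Z}_{\geq 0}$, hence $c^p|w_i|^p = |cw_i|^p\in\mathbbm{Z}$, and summing over $i$ gives $c^p\|w\|_p^p\in\mathbbm{Z}$. Since $\alpha_d,\alpha_n\in\mathbbm{N}$ and $p\in\mathbbm{N}$, both $\alpha_d^p$ and $c^p\alpha_n^p$ are integers, so $c^pF(x) = \alpha_d^p(c^p\|w\|_p^p) - c^p\alpha_n^p\in\mathbbm{Z}$; taking $K := c^p\leq S^{2n^2p}$ finishes the proof.

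The hard part here is only the bookkeeping about denominators, so I do not expect a real obstacle. The two points that need a little care are that the denominators of $V^{-1}$ and of $t$ must be cleared \emph{together} --- because $w$ contains the product $V^{-1}t$ --- and the repeated use of the integrality of $p$: once to pass from $cw_i\in\mathbbm{Z}$ to $c^p|w_i|^p\in\mathbbm{Z}$ (so that the absolute values, and a possibly odd $p$, do no harm), and once to keep $\alpha_d^p$ and $\alpha_n^p$ integral. Replacing the products $D_1,D_2$ by least common multiples would shrink the constant slightly, but this is not needed for the stated bound.
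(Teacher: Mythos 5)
Your argument is correct and follows essentially the same route as the paper's own proof: clear the denominators of $V^{-1}$ and $t$ by a common integer multiplier of size at most $S^{2n^2}$, then raise it to the $p$-th power to handle the $p$-th power of the $\ell_p$-norm, giving $K\leq S^{2n^2p}$. Your version is in fact slightly more careful in separating the two denominators $D_1,D_2$ and in making explicit the (shared, implicit in the paper) use of $p\in\mathbbm{N}$, so no issues.
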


\begin{proof}
Since $\alpha_n, \alpha_d \in \mathbbm{N}$, we observe that $F ( x ) \in \mathbbm{Z}$ if all coefficients of the matrix $V^{-1}$ and the vector $t$ are integers.
If $V^{-1} = ( v_{ij} ) \in \mathbbm{Q}^{n \times n}$ and $t = ( t_i ) \in \mathbbm{Q}^n$, then the coefficients of the vector $V^{-1} t $ are rationals of the form $\sum_{j=1}^n v_{ij} t_j$.
That means, each coefficient is the sum of $n$ rational numbers whose denominators are at most $S^2$.\\
Hence, the multiplication of this vector with the product of these denominators yields an integer vector.
The multiplication of $V^{-1}$ with the same number yields an integer matrix.\\
Hence, there exists a number, which is at most
$(S^2)^{n^2} = S^{2n^2}$
such that $V^{-1} ( ( x^T , 0^{n-m} )^T - t )$ becomes an integer if multiplied with this number.
Since $F$ consists of the $p$-th power of an $\ell_p$-norm, there exists a number which is at most
$(S^{2n^2})^p = S^{2n^2p}$
such that $F ( x )$ becomes an integer if multiplied with this number.
\end{proof}

\begin{lemma} (Lemma \ref{lemma_subgradient_l_p_potenziert} restated)
Let $y \in \mathbbm{R}^n$ and $1 < p < \infty$.
Then a subgradient $g \in \mathbbm{R}^n$ of the function
\begin{equation*}
F_p : \mathbbm{R}^n \to \mathbbm{R}, ~
x  \mapsto \sum_{i=1}^n | x_i |^p
\end{equation*}
at the vector $y$ is given by $g = (g_1, \hdots, g_n)^T$, where
$$g_i := \sign (y_i) \cdot | y_i |^{p-1}.$$
\end{lemma}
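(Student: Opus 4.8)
The plan is to verify directly that the proposed vector $g$ satisfies the subgradient inequality~(\ref{eq_subgradient_inequality}), i.e. that $F_p(z) \ge F_p(y) + \langle g , z - y \rangle$ holds for every $z \in \mathbbm{R}^n$. Since $F_p$ is separable, $F_p(x) = \sum_{i=1}^n \phi(x_i)$ with $\phi : \mathbbm{R} \to \mathbbm{R}$ given by $\phi(t) = |t|^p$, and the candidate $g$ is assembled coordinatewise by the same rule ($g_i = \sign(y_i)|y_i|^{p-1}$), it suffices to establish the one-variable statement: for each $y \in \mathbbm{R}$ the scalar $\sign(y)\,|y|^{p-1}$ satisfies $\phi(z) \ge \phi(y) + \sign(y)\,|y|^{p-1}\,(z - y)$ for all $z \in \mathbbm{R}$. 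Summing the $n$ resulting scalar inequalities, one for each coordinate with $z,y$ replaced by $z_i,y_i$, then yields the inequality for $F_p$; equivalently this records the elementary fact that for a separable convex function the subdifferential is the product of the one-dimensional subdifferentials, so $g \in \partial F_p(y)$.

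For the one-variable estimate I would split on the value of $y$. If $y = 0$ the candidate subgradient is $0$ and the inequality collapses to $\phi(z) = |z|^p \ge 0$, which is trivial. If $y \neq 0$, then near $y$ the function $\phi$ coincides with $t \mapsto t^p$ (for $y > 0$) or with $t \mapsto (-t)^p$ (for $y < 0$), so it is smooth there; a short chain-rule computation on the relevant half-line, combined with the supporting-line property of convex functions, produces the bound $\phi(z) \ge \phi(y) + \sign(y)\,|y|^{p-1}\,(z-y)$. Convexity of $\phi$, which is what makes every supporting line a global under-estimator, follows because $\phi$ is the composition of the convex, nondecreasing function $s \mapsto s^p$ on $[0,\infty)$ (using $p > 1$) with the convex function $t \mapsto |t|$. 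One should also check, when $y \neq 0$, that the inequality survives as $z$ ranges over values of the opposite sign, i.e. that the line built from the local behaviour at $y$ still under-estimates $\phi$ globally; this is immediate from global convexity once the correct slope and sign are in hand.

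I expect the only genuinely delicate point to be the sign bookkeeping in the case $y < 0$ (and the analogous care needed if one prefers to avoid differentiability altogether and argue instead with difference quotients of $\phi$ and a limiting argument), so that the supporting-line slope comes out with exactly the stated value and sign rather than its negative. Everything else is elementary one-variable convexity together with the separability of $F_p$; no lattice-theoretic input is required here, and this lemma then feeds, via Lemma~\ref{lemma_subgradient_affine_transformation} and Lemma~\ref{lemma_subgradient_Fp}, into the explicit subgradient formula underlying the separation oracle for $\ell_p$-bodies.
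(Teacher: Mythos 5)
Your reduction to one variable via separability is fine and matches the paper's own first step, but the one-dimensional claim at the heart of your outline is exactly where the argument breaks. For $p>1$ the function $\phi(t)=|t|^p$ is differentiable everywhere (including at $0$), and the chain-rule computation you invoke gives $\phi'(y)=p\,\sign(y)\,|y|^{p-1}$, not $\sign(y)\,|y|^{p-1}$. Since a differentiable convex function has the singleton subdifferential $\partial\phi(y)=\{\phi'(y)\}$, the scalar $\sign(y)|y|^{p-1}$ is \emph{not} a subgradient at any $y\neq 0$, and the supporting-line inequality you plan to derive is simply false: take $n=1$, $p=2$, $y=1$, $z=\tfrac12$; then $F_p(z)=\tfrac14$ while $F_p(y)+g\,(z-y)=1-\tfrac12=\tfrac12$. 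So the step ``a short chain-rule computation \dots{} produces the bound $\phi(z)\ge\phi(y)+\sign(y)|y|^{p-1}(z-y)$'' cannot be carried out; what your plan actually proves, once the slope is computed correctly, is the lemma with $g_i=p\,\sign(y_i)\,|y_i|^{p-1}$.

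To be fair, the defect sits in the statement itself and not only in your outline. The paper's proof also reduces to $n=1$ and then writes $|b|^p-|y|^p=(|b|-|y|)\sum_{i=0}^{p-1}|b|^{p-1-i}|y|^{i}$ with $b=y+\lambda(z-y)$ and bounds the sum from below by $|y|^{p-1}$; multiplying by $|b|-|y|$ preserves that bound only when $|b|\ge|y|$, and in the opposite case the inequality reverses, which is precisely where the missing factor $p$ hides (the counterexample above defeats the paper's displayed inequality as well, and the factorization itself presupposes an integer exponent). The discrepancy is harmless downstream: in Lemma \ref{lemma_subgradient_Fp} and the volume bound the subgradient enters only through its direction (for the separating hyperplane) and through an upper bound $M$ on its norm, both of which tolerate an extra factor $p$. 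But if you execute your proposal you should state and prove the corrected formula $g_i=p\,\sign(y_i)\,|y_i|^{p-1}$ (or weaken the claim to saying that a positive multiple of the stated $g$ is a subgradient), rather than the formula as printed.
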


\begin{proof}
Since $F_p$ is a nonnegative combination of the functions $x \mapsto | x_i |^p$, it is enogugh to consider the case, where $n = 1$.\\
We will show that the vector $g$ defined as above satisfies the subgradient inequality (\ref{eq_subgradient_inequality}).
For all $z \in \mathbbm{R}$ and $0 < \lambda \leq 1$ it follows from the convexity of the function $F_p$ that
$$F_p ( y + \lambda ( z - y ) ) \leq ( 1 - \lambda ) F_p ( y ) + \lambda F_p ( z )$$
or
$$F_p ( z ) \geq \frac{1}{\lambda} \left( F_p ( y + \lambda ( z -y ) ) - ( 1 - \lambda ) F_p ( y ) \right)
= F_p ( y ) + \frac{1}{\lambda} \left( F_p ( y + \lambda ( z - y ) ) - F_p ( y ) \right).$$
Hence, it remains to show that
$$F_p ( y + \lambda ( z - y ) ) - F_p ( y )
\geq \lambda \sign ( y ) \cdot | y |^{p-1} ( z - y ).$$
By definition of $F_p$, we have 
$F_p ( y + \lambda ( z - y ) ) - F_p ( y ) = | y + \lambda ( z - y )|^p - |y|^p$.
Since for all $a,b \in \mathbbm{R}$, $m \in \mathbbm{N}$, it holds that
$b^m - a^m = ( b - a ) \cdot \sum_{i=0}^{m-1} b^{m-1-i} a^i$, 
we see that
\begin{align*}
| y + \lambda ( z - y ) |^p - | y |^p
& = \left( | y + \lambda ( z - y ) | - | y | \right) \cdot \sum_{i=0}^{p-1} | y + \lambda ( z - y ) |^{p-1-i} \cdot | y |^i\\
& \geq \left( | y + \lambda ( z - y ) | - | y | \right) | y |^{p-1}.
\end{align*}
Since for all $a,b \in \mathbbm{R}$,
$| a |  - | b | \geq \sign ( b ) \cdot ( a - b )$,
this is at least
$\lambda \cdot \sign ( y ) ( z -y ) \cdot | y |^{p-1}$.
\end{proof}

\begin{lemma} (Lemma \ref{statement_upper_bound_coordinate_subgradient} restated)
For $ m , n \in \mathbbm{N}$, $m \leq n$, a subgradient at the vector $y \in \mathbbm{R}^m$ of the function
$F : \mathbbm{R}^m \to \mathbbm{R}$,
$x \mapsto \alpha_d^p \| V^{-1} ( ( x^T , 0^{n-m} )^T - t ) \|_p^p - \alpha_n^p$,
where $V \in \mathbbm{R}^{n \times n}$ is nonsingular, $t \in \mathbbm{R}^n$ and $1 < p < \infty$,
is given by the vector $\alpha_d^p g \in \mathbbm{R}^m$ defined by
$g = ( V^{-1})^T \bar{g}_{\{1, \hdots, m \}}$,
where $\bar{g} \in \mathbbm{R}^n$ is defined by 
$\bar{g}_{i}
= \sign ( [V^{-1} ( y - t ) ]_i ) \cdot | [V^{-1} ( y - t ) ]_i |^p$.
If $y \in \bar{B}_m^{(2)} ( 0 , R ) \subseteq \mathbbm{R}^m$, then 
$\| \alpha_d^p g \|_2 \leq m \cdot \left( \alpha_d n S^2 R \right)^{p+1}$,
where $S$ is an upper bound on the size of $V^{-1}$ and $t$.
\end{lemma}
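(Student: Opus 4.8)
The explicit form of the subgradient asserted in the lemma is exactly the statement of Lemma~\ref{lemma_subgradient_Fp} (obtained by applying Lemma~\ref{lemma_subgradient_l_p_potenziert} to $u\mapsto\sum_i|u_i|^p$ and then the two transformation rules of Lemma~\ref{lemma_subgradient_affine_transformation} with $A=V^{-1}$, $b=-V^{-1}t$ and $a=\alpha_d^p$); I would just invoke it, so that the only remaining task is the size estimate for $\|\alpha_d^p g\|_2$. Abbreviate $z:=V^{-1}\bigl((y^T,0^{n-m})^T-t\bigr)\in\mathbbm{R}^n$, so that $\bar{g}_i=\sign(z_i)\,|z_i|^p$ for $1\le i\le n$ and $g$ consists of the first $m$ coordinates of $(V^{-1})^T\bar{g}$. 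The plan is a three-step chain of elementary inequalities: bound $\|z\|_\infty$, then $\|\bar{g}\|_\infty$, then $\|g\|_2$, and finally multiply through by $\alpha_d^p$.

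First I would bound $\|z\|_\infty$: every entry of $V^{-1}$ and of $t$ has absolute value at most $S$ (a rational is bounded in absolute value by its size), and $y\in\bar{B}_m^{(2)}(0,R)$ forces $|y_j|\le R$ for each coordinate, so each coordinate of $(y^T,0^{n-m})^T-t$ is at most $R+S$ in absolute value and hence $|z_i|\le nS(R+S)$ for all $i$, since $z_i$ is a sum of $n$ products of such quantities. Raising to the $p$-th power (read as $\exp(p\ln|\cdot|)$ on $\mathbbm{R}_{\ge 0}$, legitimate since $p>1$) gives $|\bar{g}_i|\le\bigl(nS(R+S)\bigr)^p$. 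Next, each coordinate $g_i=\sum_{j=1}^n(V^{-1})_{ji}\,\bar{g}_j$ is a sum of $n$ terms bounded by $S\cdot\bigl(nS(R+S)\bigr)^p$, so $|g_i|\le nS\bigl(nS(R+S)\bigr)^p$ and therefore $\|\alpha_d^p g\|_2\le\sqrt{m}\,\alpha_d^p\,nS\bigl(nS(R+S)\bigr)^p$. It then remains to coarsen this into the stated closed form: using $m\le n$, that $\alpha_d,S,n\ge 1$ (and, in the regime relevant for the application in Lemma~\ref{cor_lower_bound_volume_final}, $R\ge 1$), one bounds $R+S$ by a constant multiple of $SR$ and absorbs $\sqrt{m}$, the constant $2^p$, and the spare factors $\alpha_d$, $n$, $S$, $R$ into the raised exponent, obtaining $\|\alpha_d^p g\|_2\le m(\alpha_d nS^2R)^{p+1}$. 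Since every step is uniform in $y$, the bound holds for all $y\in\bar{B}_m^{(2)}(0,R)$.

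I do not expect any genuine difficulty here; the proof is almost entirely bookkeeping of size bounds through the composition $V^{-1}$, coordinatewise $p$-th power, $(V^{-1})^T$. The two points that warrant mild care are that $p$ need not be an integer, so the map $t\mapsto|t|^p$ must be handled as a monotone function rather than a polynomial identity, and that the ``$+1$'' appearing in the exponent $p+1$ of the target bound comes from the second multiplication by $V^{-1}$ (and the deliberately generous absorption of constants), not from the $p$-th power itself. I would present it compactly, since only elementary inequalities are involved.
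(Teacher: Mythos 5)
Your proposal is correct and follows essentially the same route as the paper: invoke Lemma \ref{lemma_subgradient_Fp} for the explicit form of the subgradient, bound each coordinate of $V^{-1}((y^T,0^{n-m})^T-t)$ by $nS(R+S)\le nRS^2$, raise to the $p$-th power to bound $\|\bar g\|_\infty$, multiply by $(V^{-1})^T$ to pick up another factor $nS$, and pass to the $\ell_2$-norm with a factor of (at most) $m$ before absorbing $\alpha_d^p$ into $\alpha_d^{p+1}$. The only differences are cosmetic (you use $\sqrt{m}$ where the paper uses $m$, and you make the implicit assumption $R,S\ge 1$ needed for the absorption explicit, which the paper leaves tacit).
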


\begin{proof}
The correctness of the construction follows directly from Lemma \ref{lemma_subgradient_Fp}.
Since
$\| g \|_2 \leq m \cdot \max \{ | g_i | | 1 \leq i \leq m \}$, 
it is enough to compute an upper bound on the coefficient of the vector $g$.\\
If $V^{-1} = ( v_{ij} )_{i,j} \in \mathbbm{Q}^{n \times n}$ and $t = ( t_i )_i \in \mathbbm{Q}^n$, the $k$-th coefficient, $1 \leq k \leq n$, of the vector $V^{-1} ( y - t )$ is given by
$$| [ V^{-1} ( y - t ) ]_k | \leq \sum_{j=1}^n | v_{kj} \cdot ( y_j - t_j ) |.$$
Since the coefficients of $V^{-1}$ and $t$ are at most $S$ and since each coefficient of $y$ is at most $R$ (in absolute values), we obtain
$$| [ V^{-1} ( y - t ) ]_k | \leq n \cdot S ( R + S ) \leq n R S^2.$$
Hence, each coefficient of the vector $\bar{g}$ is at most
$$| g_i | \leq ( n R S^2 )^p.$$
With the same argumentation, we obtain that each coefficient of the vector $g$ is at most
$$| g | \leq n \cdot S ( n R S^2 )^p \leq ( n S^2 R )^{p+1}.$$
\end{proof}


\begin{lemma} \label{lemma_appendix_change_repsize_mem_alg}
Let $D \in \mathbbm{Q}^{n \times n}$ be a symmetric positive definite matrix.
Let $\tilde{d} \in \mathbbm{Z}^n$ be the flatness direction of the ellipsoid defined by the matrix $D$.
Then 
$$\| \tilde{d} \|_2 \leq n^{(n+2)/2} \cdot \size ( D )^{(n+1)/2}.$$
\end{lemma}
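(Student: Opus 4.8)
The plan is to reduce the claim to a lower bound on the smallest eigenvalue of $D$. Fix any decomposition $D = Q^T Q$. By Proposition \ref{prop_flatness_direction}, the flatness direction $\tilde d \in \mathbbm{Z}^n$ is precisely the vector for which $d := Q\tilde d$ is a shortest non-zero vector of $\mathcal{L}(Q)$, so that $\tilde d^T D \tilde d = \|Q\tilde d\|_2^2 = \lambda_1^{(2)}(\mathcal{L}(Q))^2$.

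First I would bound this quadratic form from above. Since $e_1, \dots, e_n \in \mathbbm{Z}^n \setminus \{0\}$, we get $\lambda_1^{(2)}(\mathcal{L}(Q))^2 = \min_{z \in \mathbbm{Z}^n \setminus \{0\}} z^T D z \le \min_{1 \le i \le n} D_{ii} \le \size(D)$, the last step because each diagonal entry $D_{ii}$, written in lowest terms as $p/q$, is positive and satisfies $p/q \le |p| \le \size(D)$. On the other hand $\tilde d^T D \tilde d \ge \lambda_{\min}(D)\,\|\tilde d\|_2^2$, so altogether $\|\tilde d\|_2^2 \le \size(D)/\lambda_{\min}(D)$, and it remains to show a bound of the form $\lambda_{\min}(D) \ge n^{-\mathcal{O}(1)} \size(D)^{-\mathcal{O}(n)}$, with the constants chosen so that the final estimate collapses to $n^{(n+2)/2}\size(D)^{(n+1)/2}$.

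For the lower bound on $\lambda_{\min}(D)$ I would argue as follows. Writing $\lambda_{\min}(D) = 1/\lambda_{\max}(D^{-1}) \ge 1/\operatorname{tr}(D^{-1})$, Cramer's rule expresses each diagonal entry of $D^{-1}$ as a quotient of an $(n-1)\times(n-1)$ principal minor of $D$ by $\det D$; the minor is the determinant of a positive semidefinite matrix with entries of size at most $\size(D)$, hence bounded above via Hadamard's inequality. A lower bound on $\det D$ then follows by clearing denominators one row at a time: multiplying row $i$ by the least common multiple of the denominators occurring in it turns $D$ into an integer matrix, so $\det D$ is a positive rational whose denominator is controlled by $\size(D)$ and $n$, and hence $\det D$ is bounded below by a negative power of $\size(D)$. (Alternatively one may simply invoke $\size(D^{-1}) \le n^{n/2}\size(D)^{n(n-1)}$ from Claim \ref{facts_representation_size} to bound $\lambda_{\max}(D^{-1})$ directly.) Substituting these estimates back into $\|\tilde d\|_2^2 \le \size(D)/\lambda_{\min}(D)$ yields the asserted bound on $\|\tilde d\|_2$.

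The only genuinely delicate point is the lower bound on $\det D$, equivalently on $\lambda_{\min}(D)$, in terms of $\size(D)$; everything else is the routine size arithmetic collected in Claim \ref{facts_representation_size}. In particular, care is needed so that the interaction between the numerators and the denominators of the entries of $D$ is accounted for correctly when clearing denominators, since a careless bound on the common denominator of all of $D$ would inflate the exponent of $\size(D)$.
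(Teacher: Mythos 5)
Your opening moves coincide in substance with the paper's: the identity $\tilde d^T D\tilde d=\|Q\tilde d\|_2^2=\lambda_1^{(2)}(\mathcal{L}(Q))^2$ from Proposition \ref{prop_flatness_direction}, and the passage $\|\tilde d\|_2^2\le \lambda_1^{(2)}(\mathcal{L}(Q))^2/\lambda_{\min}(D)$, are exactly the paper's $\|\tilde d\|_2\le\|Q^{-1}\|\cdot\|Q\tilde d\|_2$ with $\|Q^{-1}\|^2=\lambda_{\min}(D)^{-1}$. Your bound $\lambda_1^{(2)}(\mathcal{L}(Q))^2\le\min_i D_{ii}\le\size(D)$, obtained by testing the unit vectors, is a clean elementary substitute for the paper's appeal to Minkowski's theorem ($\lambda_1^{(2)}\le\sqrt{n}\det(D)^{1/2n}$ together with $\det(D)\le(n\,\size(D))^n$), and is if anything sharper. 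The problem is the step you yourself single out as delicate: the lower bound on $\lambda_{\min}(D)$. Neither of the two routes you propose delivers anything close to what is needed.

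Concretely, your target requires $\lambda_{\min}(D)\ge \size(D)^{-n}$ up to polynomial factors, since you must turn $\|\tilde d\|_2^2\le\size(D)/\lambda_{\min}(D)$ into $\|\tilde d\|_2^2\le n^{n+2}\size(D)^{n+1}$. Clearing denominators row by row only gives $\det(D)\ge\prod_i M_i^{-1}$ with each row--lcm $M_i\le\size(D)^{n}$, hence $\det(D)\ge\size(D)^{-n^2}$ and $\lambda_{\min}(D)\ge\det(D)/\lambda_{\max}(D)^{\,n-1}\ge \size(D)^{-n^2}\,(n\,\size(D))^{-(n-1)}$; the alternative via $\operatorname{tr}(D^{-1})$ and $\size(D^{-1})\le n^{n/2}\size(D)^{n(n-1)}$ from Claim \ref{facts_representation_size} is of the same order. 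Either way you end with $\|\tilde d\|_2\le\size(D)^{\mathcal{O}(n^2)}$, overshooting the claimed exponent $(n+1)/2$ by a factor of order $n$. (If $D$ were integral your argument would close, since then $\det(D)\ge1$ gives $\lambda_{\min}(D)\ge(n\,\size(D))^{-(n-1)}$; but $D$ here is the rational output of the rounding method, and for rational symmetric positive definite matrices $\det(D)$ genuinely can be as small as a $\Theta(n^2)$-th power of $\size(D)^{-1}$.) The paper instead invokes the pointwise eigenvalue bound $\lambda_{\min}(D)\ge 1/\size(D)$, citing \cite{pp_ye}, which is not a consequence of any determinant or denominator-clearing estimate; without that ingredient (or at least an eigenvalue lower bound of the form $\size(D)^{-\mathcal{O}(1)}$ per eigenvalue, i.e.\ $n^{-\mathcal{O}(1)}\size(D)^{-\mathcal{O}(1)}$), your computation does not collapse to $n^{(n+2)/2}\size(D)^{(n+1)/2}$, so the proof as proposed has a real quantitative gap at its central step.
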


\noindent In the proof of this lemma, we use that for every symmetric positive definite matrix $A$, there exists a uniquely determined symmetric positive definite matrix $X$ such that
$A = X^T \cdot X = X \cdot X$.
We call $X$ the square root of $A$, denoted by $A^{1/2}$ (see \cite{bk_matrixanalysis}).

\begin{proof}
To prove an upper bound on the length of the vector $\tilde{d}$,
we observe that $\tilde{d} = Q^{-1} v$, where $v$ is a shortest non-zero lattice vector in $\mathcal{L} ( Q )$ and that the length of $d$ is the same as the length of a shortest vector in the lattice $\mathcal{L} ( D^{1/2} )$,
$$\lambda_1^{(2)} ( \mathcal{L} ( Q ) ) = \lambda_1^{(2)} ( D^{1/2} ),$$
as we have seen in Proposition \ref{prop_flatness_direction}.
Especially, the length of the vector $\tilde{d} \in \mathbbm{Z}^n$ is at most
\begin{equation} \label{eq_change_repsize_1}
\| \tilde{d} \|_2 = \| Q^{-1} d \|_2
\leq \| Q^{-1} \| \cdot \| d \|_2
= \| Q^{-1} \| \cdot \lambda_1^{(2)} ( D^{1/2} ).
\end{equation}
Using Minkowski's Theorem, see for example \cite{bk_cassel}, the minimum distance of the lattice $\mathcal{L} ( D^{1/2} )$ is at most
\begin{equation} \label{eq_change_repsize_2}
\lambda_1^{(2)} ( D^{1/2} ) \leq \sqrt{n} \det ( D^{1/2} )^{1/n}
= \sqrt{n} \det ( D )^{1/2n}.
\end{equation}
Since the decomposition of a symmetric positive definite matrix in $D = Q^T Q$ is unique up to multiplication with an orthogonal matrix, there exists an orthogonal matrix $O \in \mathbbm{R}^{n \times n}$ such that $O \cdot Q = D^{1/2}$.
From this, one can show that the matrices $Q^{-1} = D^{-1/2} \cdot O$ and $D^{-1/2}$ have the same spectral norm:
$$\| Q^{-1} \| = \sqrt{\eta_n ( O^T D^{-1} O)}
= \sqrt{\eta_n ( D^{-1} )}
= \sqrt{\eta_n ( ( D^{-1/2} )^T D^{-1/2} )}
= \| D^{-1/2} \|,$$
where $\eta_n$ denotes the larget eigenvalue of the matrix.
Combining this with (\ref{eq_change_repsize_1}) and (\ref{eq_change_repsize_2}), we obtain the following upper bound for the length of the vector $\tilde{d}$,
$$\| \tilde{d} \|_2 \leq \sqrt{n} \| D^{-1/2} \| \cdot \det ( D )^{1/2n}.$$

\noindent The spectral norm of the matrix $\| D^{-1/2} \|$ is given by square root of the spectral norm of $D^{-1}$,
$$\| D^{-1/2} \|
= \sqrt{\eta_n ( D^{-1} )}
= \| D^{-1} \|^{1/2},$$
where the spectral norm of $D^{-1}$ is the inverse of an eigenvalue of $D$.
It is easy to see, that each eigenvalue of the symmetric positive definite matrix is at least $1 / \size ( D )$, see for example \cite{pp_ye}.
Hence, we obtain that
$$\| D^{-1} \|_2^{1/2} \leq \size ( D )^{1/2}$$
and respectively the following upper bound for the length of $\tilde{d}$,
$$\| \tilde{d} \|_2
\leq \sqrt{n} \det ( D )^{\frac 1 2 ( 1 + \frac 1 n)}.$$

\noindent The determinant $\det ( D )$ is the product of its eigenvalues (see \cite{bk_strang}) and the size of each eigenvalue of the symmetric positive definite matrix is at most $n \cdot \size ( D )$, see again \cite{pp_ye}.
We obtain that
$$\det ( D ) \leq (  n \cdot \size ( D ) )^{n}.$$
Hence, the length of the vector $\tilde{d}$ is at most
$$\| \tilde{d} \|_2 \leq \sqrt{n} ( n \cdot \size ( D ) )^{\frac n 2 ( 1 + \frac 1 n )}
= \sqrt{n} ( n \cdot \size ( D ) )^{(n+1)/2}.$$
\end{proof}

\end{document}